\definecolor{ForestGreen}{rgb}{0.1333,0.5451,0.1333}
\definecolor{DarkRed}{rgb}{0.65,0,0}
\definecolor{Red}{rgb}{1,0,0}
\declaretheorem[numberwithin=section]{theorem}
\declaretheorem[numberlike=theorem,name=Lemma]{lem}
\declaretheorem[numberlike=theorem]{fact}
\declaretheorem[numberlike=theorem,name=Proposition]{prop}
\declaretheorem[numberlike=theorem]{claim}
\crefname{algorithm}{Algorithm}{Algorithms}
\Crefname{algorithm}{Algorithm}{Algorithms}
\theoremstyle{definition}
\declaretheorem[numberlike=theorem]{definition}
\newcommand{\ot}{\tilde{O}}
\newcommand{\Otil}{\tilde{O}}
\newcommand{\poly}{\operatorname{poly}} 
\newcommand{\polylog}{\operatorname{polylog}}
\newcommand{\lo}{{\rm low}}
\newcommand{\eat}[1]{}
\newcommand{\sketch}{\operatorname{sk}}
\newcommand{\AMS}{\ell_2}
\newcommand{\textout}{\operatorname{out}}
\newcommand{\textflow}{\operatorname{flow}}
\newcommand{\textin}{\operatorname{in}}
\newcommand{\textdeg}{\operatorname{deg}}
\renewcommand{\Pr}{\mathbb{P}}    
\global\long\def\ltil{\widetilde{\ell}}
\global\long\def\low{\mathrm{low}}
\global\long\def\defeq{\triangleq}
\global\long\def\rlx{\mathrm{relax}}
\global\long\def\R{\mathbb{R}}
\global\long\def\ltwo{\ell_{2}}
\global\long\def\one{\mathds{1}}
\global\long\def\bad{\mathrm{bad}}
\global\long\def\Ntil{\widetilde{N}}
\global\long\def\Ztil{\widetilde{Z}}
\global\long\def\Etil{\widetilde{E}}
\global\long\def\Ftil{\widetilde{F}}
\global\long\def\outneigh{\textsc{OutNeighbor}}
\global\long\def\queue{\textsc{Queue}}
\global\long\def\countList{\textsc{CountList}}
\title{Vertex Connectivity in Poly-logarithmic Max-flows}
\author{
        Jason Li\thanks{\texttt{jmli@cs.cmu.edu}. Carnegie Mellon University, USA}
        \and
        Danupon Nanongkai\thanks{\texttt{danupon@gmail.com}. University of Copenhagen, Denmark and KTH Royal Institute of Technology, Sweden}
        \and 
        Debmalya Panigrahi\thanks{\texttt{debmalya@cs.duke.edu}. Duke University, USA}
        \and
        Thatchaphol Saranurak\thanks{\texttt{thsa@umich.edu}. University of Michigan, USA}
        \and
        Sorrachai Yingchareonthawornchai\thanks{\texttt{sorrachai.yingchareonthawornchai@aalto.fi}. Aalto University, Finland}
}
\date{}   
\begin{document}  
  
\maketitle
        \pagenumbering{gobble}
        \begin{abstract}
The vertex connectivity of an $m$-edge $n$-vertex undirected graph is the smallest number of vertices whose removal disconnects the graph, or leaves only a singleton vertex. In this paper, we give a reduction from the vertex connectivity problem to a set of maxflow instances. Using this reduction, we can solve vertex connectivity in $\ot(m^{\alpha})$ time for any $\alpha \ge 1$, if there is a $m^{\alpha}$-time maxflow algorithm. Using the current best maxflow algorithm that runs in $m^{4/3+o(1)}$ time (Kathuria, Liu and Sidford, FOCS 2020), this yields a $m^{4/3+o(1)}$-time vertex connectivity algorithm. This is the first improvement in the running time of the vertex connectivity problem in over 20 years, the previous best being an $\ot(mn)$-time algorithm due to Henzinger, Rao, and Gabow (FOCS 1996). Indeed, no algorithm with an $o(mn)$ running time was known before our work, {\em even if we assume an $\ot(m)$-time maxflow algorithm}.

Our new technique is robust enough to also improve the best $\Otil(mn)$-time bound for  \emph{directed} vertex connectivity to $mn^{1-1/12+o(1)}$ time
\end{abstract}
 
        \newpage        
    	\setcounter{tocdepth}{2}  
    	\tableofcontents        
        \newpage
        
        \pagenumbering{arabic}       
        \section{Introduction}

The {\em vertex connectivity} of an undirected graph is the size of the minimum vertex cut, defined as the minimum number of vertices whose removal disconnects the graph (or becomes a singleton vertex). %
Finding the vertex connectivity of a graph is a fundamental problem in combinatorial optimization, and has been extensively studied since the 1960s. It is well-known that the related problem of an $s$-$t$ vertex mincut, defined as the minimum vertex cut that disconnects a specific pair of vertices $s$ and $t$, can be solved using an $s$-$t$ maxflow algorithm. This immediately suggests a natural starting point for the vertex connectivity problem, namely use $O(n^2)$ maxflow calls to obtain the $s$-$t$ vertex mincuts for all pairs of vertices, and return the smallest among them. It is against this baseline that we discuss the history of the vertex connectivity problem below. Following the literature, we use $m$, $n$, and $k$ to respectively denote the number of edges, vertices, and the size of the vertex mincut in the input graph.

In the 60s and 70s, several algorithms~\cite{Kleitman1969methods,Podderyugin1973algorithm,EvenT75}, showed that for constant values of $k$, only $O(n)$ maxflow calls suffice, thereby improving the running time for this special case. The first unconditional improvement over the baseline algorithm was obtained by Becker {\em et al.}~\cite{BeckerDDHKKMNRW82}, when they used $O(n \log n)$ maxflow calls to solve the vertex connectivity problem. The following simple observation underpinned their new algorithm: if one were able to identify a vertex $s$ that is not in the vertex mincut, then enumerating over the remaining $n-1$ vertices as $t$ in the $s$-$t$ maxflow calls is sufficient. They showed that they could obtain such a vertex $s$ whp\footnote{with high probability} by a random sampling of vertices.

The next round of improvement was due to Linial, Lov\'asz, and Wigderson (LLW)~\cite{LinialLW88} who used an entirely different set of techniques based on matrix multiplication to achieve a running time bound of $O((n^{\omega} + nk^{\omega})\log n)$, which is $O(n^{1+\omega}\log n)$ in the worst case of $k = \Theta(n)$; here, $\omega\approx 2.37$ is the matrix multiplication exponent.
To compare this with the maxflow based algorithms, we note that the maxflow instances generated by the vertex connectivity problem are on unit vertex-capacity graphs, for which an $O(m\sqrt{n})$ algorithm has been known since the celebrated work of Dinic using blocking flows in the 70s~\cite{Dinic70}. Therefore, LLW effectively improved the running time of vertex connectivity from $O(n^{7/2})$ in the worst case to $O(n^{1+\omega})$.

A decade after LLW's work, Henzinger, Rao, and Gabow (HRG)~\cite{HenzingerRG00} improved the running time further to $O(mn\log n)$ by reverting to combinatorial flow-based techniques. They built on the idea of computing $O(n)$ maxflows suggested by Becker {\em et al.}~\cite{BeckerDDHKKMNRW82}, but with a careful use of preflow push techniques~\cite{GoldbergT88} in these maxflow subroutines, they could amortize the running time of these maxflow calls (similar to, but a more refined version of, what Hao and Orlin had done for the edge connectivity problem a few years earlier~\cite{HaoO94}). The HRG algorithm remained the fastest unconditional vertex connectivity algorithm before our work. 

We also consider the vertex connectivity problem on directed graphs. Here, the goal is to find a smallest set of vertices whose removal ensures that the remaining graph is not strongly connected. The HRG bound of $O(mn\log n)$~\cite{HenzingerRW17} generalizes to digraphs, and sets the current record for this problem as well.

In concluding our tour of vertex connectivity algorithms, we note that there has also been a large volume of work focusing on faster algorithms for the special case of {\em small $k$}. Nearly-linear time algorithms are known only when $k \le 2$ \cite{Tarjan72,HopcroftT73,KanevskyR91,NagamochiI92,CheriyanT91,Georgiadis10} until recently when \cite{NanongkaiSY19,ForsterNYSY20} give an $\ot(mk^2)$-time algorithm\footnote{$\ot(f(n)) = O(\polylog(n)f(n)).$} for both undirected and directed graphs, which is  nearly-linear for $k = \polylog (n)$.
Similarly, the question of {\em approximating} the vertex connectivity of a graph efficiently has received some attention, and a $(1+\epsilon)$-approximation is known in $\ot(\min\{mk/\epsilon,n^{\omega}/\epsilon^2\})$ time~\cite{NanongkaiSY19,ForsterNYSY20} while a worse approximation factor of $O(\log n)$ can be achieved in near-linear time~\cite{Censor-HillelGK14}. These two lines of work are not directly related to our paper.

\subsection{Our Results}

In this paper, we give the following result:

\begin{theorem}[Main]
\label{thm:main}
Given an undirected graph on $m$ edges, there is a randomized, Monte Carlo vertex connectivity algorithm that makes $s$-$t$ maxflow calls on unit capacity graphs that cumulatively contain $\ot(m)$ vertices and $\ot(m)$ edges, and runs in $\ot(m)$ time outside these maxflow calls.
\end{theorem}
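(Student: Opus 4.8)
The plan is to reduce vertex connectivity to a small collection of $s$-$t$ vertex mincut problems, each of which is solved by a single unit-capacity maxflow call via the standard vertex-splitting gadget. The first ingredient is the classical reduction showing that if we are promised a vertex $s$ outside some minimum vertex cut, then running $s$-$t$ vertex maxflow for all $t$ in $V \setminus \{s\}$ — and the symmetric $t$-$s$ versions — finds the global minimum vertex cut; moreover, by the idea of Becker \emph{et al.}\ \cite{BeckerDDHKKMNRW82}, a random sample of $O((n/k)\log n)$ vertices hits the complement of the mincut whp, so conditioned on a guess of $k$ one can afford to try $s$ over such a sample. The trouble is that a naive implementation makes $\Theta(n)$ maxflow calls on graphs of total size $\Theta(mn)$, which is exactly the bound we want to beat; the heart of the argument is to cut both the \emph{number} of calls and the \emph{total size} of the instances down to $\ot(m)$.

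The key idea is to exploit the structure of small vs.\ large vertex cuts separately. When the mincut $k$ is small (say $k \le \sqrt m$ or some polynomial threshold), we invoke the local/isolating-cut machinery — an $\ot(mk^2)$- or $\ot(m\cdot\poly(k))$-type algorithm built from a polylogarithmic number of maxflow calls whose instances have total size $\ot(m)$ — in the spirit of \cite{NanongkaiSY19,ForsterNYSY20}. When $k$ is large, the mincut is dense, so its two sides $L,R$ each have at most $n$ vertices but together the separator plus both sides cover the vertex set; here we use the fact that a large mincut means most vertices lie on the "big" side, so a constant number of random seed vertices $s$ land on the big side whp, and for each such $s$ we must localize the search around $s$ rather than run a global $O(n)$-sized flow. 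The workhorse is a ball-growing / local-flow primitive: starting from $s$, grow a BFS-like region until either the region is certified to be cut off by few vertices (giving a candidate cut directly) or the region is large enough that we can charge its exploration cost to edges, and in the latter case the relevant maxflow instance is supported only on the explored region, whose sizes across all invocations telescope to $\ot(m)$ by a standard "at most $\log$ doubling levels, each level total size $O(m)$" accounting.

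Concretely, the steps I would carry out are: (1) guess $k$ up to a factor of $2$ (losing only a $\log n$ factor), and in each guess bucket handle the two regimes above; (2) in the small-$k$ regime, black-box an isolating-cuts subroutine, verifying its maxflow instances are unit-capacity and total size $\ot(m)$; (3) in the large-$k$ regime, pick $O(1)$ (or $\polylog$) random seeds, and for each seed run the local vertex-flow primitive with a capacity/volume budget tied to the current guess of $k$, so that each individual maxflow instance has $\ot(m/ (\text{something}))$ size and the doubling-search over budgets multiplies the count by only $O(\log n)$; (4) argue correctness — whp some seed is on the correct side of the true mincut and the local search with the right budget recovers a cut of size $\le k$ — and that the total vertex/edge count summed over \emph{all} maxflow instances in \emph{all} guesses and \emph{all} seeds is $\ot(m)$; (5) bound the non-maxflow work (BFS, sampling, bookkeeping, vertex-splitting gadget construction) by $\ot(m)$ per instance, hence $\ot(m)$ overall.

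The main obstacle is step (3)–(4): designing the local vertex-connectivity primitive so that its maxflow instance is genuinely of size proportional only to the explored region (and not to $n$), while still being \emph{guaranteed} to find a cut no larger than $k$ whenever the seed is correctly placed. Vertex cuts are subtler than edge cuts here because the separator vertices themselves do not "belong" cleanly to either side, so the ball-growing stopping rule and the way the sink side is contracted must be chosen carefully to avoid both missing the true cut and blowing up the instance size; getting the volume/congestion trade-offs to telescope correctly across the $O(\log n)$ doubling levels and the $O(\log n)$ guesses of $k$ is where the real work lies, and is presumably the bulk of the body of the paper.
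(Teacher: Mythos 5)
Your proposal gets the broad shape of a reduction-to-maxflow approach — sampling seed vertices, guessing $k$, and partitioning into cases — but the case split you propose and, more importantly, the workhorse primitive you lean on for the hard case do not match what the paper actually does, and would not give the claimed bound.

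First, the dichotomy is not ``small $k$'' vs.\ ``large $k$.'' The paper never invokes anything like the $\ot(mk^2)$ local algorithms of \cite{NanongkaiSY19,ForsterNYSY20} (which are not maxflow-based in the sense required here, and whose cost already exceeds $\ot(m)$ once $k$ is even mildly superpolylogarithmic). Instead, the cases are classified by the size of the \emph{small side} $L$ of the cut $(L,S,R)$ relative to $k$, and by how many \emph{low-degree} separator vertices $S_{\low}=\{v\in S: \deg(v)\le 8k\}$ there are. The easy cases — $|L|$ large, or $|L|$ small but $|S_{\low}|$ also small — are handled by an adaptation of the Li--Panigrahi isolating-cuts lemma to vertex cuts (\Cref{lem:isolator}), which hinges on submodularity of vertex neighborhoods, not on local DFS/flow. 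The hard case (a ``$k$-scratch'': $|L|\le k/(100\log n)$ and $|S_{\low}|\ge 300|L|\ln n$) is where the new machinery lives.

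Second, the ``ball-growing / local-flow'' primitive you propose for the hard case does not work, and the paper explicitly identifies why: even just listing the neighbors of a single vertex in $N[x]$ already costs $\Omega(k)$ time and there are $\Omega(k)$ of them, so any BFS-style exploration that touches edges one at a time pays $\Omega(k^2)$, which is already too much when $|L|$ is tiny. The paper's actual main technical contribution — \emph{sublinear-time kernelization} via linear sketching — is entirely absent from your proposal. The idea is to precompute sparse-recovery and $\ell_2$-norm-estimation sketches of each vertex's neighborhood indicator vector, so that, given a candidate $x\in L$ and a random sample $T$, one can list $N(v)\setminus N[x]$ in $\ot(|L|)$ time per vertex (not $\ot(\deg(v))$) and test membership in $N(T_x)$ in $\ot(1)$ time. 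This ``sketchy search'' builds a kernel graph $G_{x,T}$ with only $\ot(k|L|)$ edges on which a single $(x,t_x)$-maxflow recovers the cut; the structural bound $|F'|=O(k)$ that makes the kernel small is exactly where the $S_{\low}$ condition is used. Also missing is the $k$-connectivity sparsifier (Nagamochi--Ibaraki) applied before everything to guarantee $m\le nk$. Your step (3)--(4), which you flag as ``where the real work lies,'' is indeed where the work lies, but the tool you reach for (volume-budgeted ball growing with doubling accounting) is the wrong one; the right tool is sketching, and without it the $\ot(m)$ bound on the cumulative maxflow instance size is not obtained.
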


\noindent
In other words, if maxflow can be solved in $m^\alpha$ time on unit capacity graphs, for any $\alpha\geq 1$, then we can solve the vertex connectivity problem in $\tilde O(m^\alpha)$ time. In particular, using the current fastest maxflow algorithm on unit capacity graphs (Kathuria, Liu and Sidford~\cite{KathuriaLS20}), we get a vertex connectivity algorithm for undirected graphs that runs in $m^{4/3+o(1)}$ time, which strictly improves on the previous best time complexity of $\ot(mn)$ achieved by the HRG algorithm. Even more ambitiously, if maxflow is eventually solved in $\ot(m)$ time, as is often conjectured, then our theorem will automatically yield an $\ot(m)$ algorithm for the vertex connectivity problem, which would resolve the long standing open question by Aho, Hopcroft and Ullman \cite{AhoHU74} since 1974 up to polylogarithmic factors. In contrast, even with an $\ot(m)$-time maxflow algorithm, no previous vertex connectivity algorithm achieves an $o(mn)$ running time bound.

We remark that the reduction in the theorem generates instances of the $s$-$t$ vertex connectivity problem, i.e., a maximum set of vertex-disjoint paths between $s$ and $t$ in an undirected graph, which are solved by a maxflow call via a standard reduction. Also, we note that our algorithm is randomized (Monte Carlo) even if the maxflow subroutines are not. It is an interesting open question to match the running time bounds of this theorem using a deterministic algorithm, or even a Las Vegas one.

We also generalize our new technique to work \emph{directed}
graphs and obtain a significant improvement upon the fastest $\ot(mn)$-time algorithm by HRG for the directed vertex connectivity problem
\begin{restatable}{theorem}{directedConn}
	\label{thm:main directed intro}
	Given a directed graph with $m$ edges and $n$ vertices, there are
	randomized Monte Carlo vertex connectivity algorithms with 
	\begin{itemize}
		\item $mn^{1-1/12+o(1)}$ time, or
		\item $\ot(n^{2})$ time assuming that max flow can be solved in near-linear
		time.
	\end{itemize}
\end{restatable}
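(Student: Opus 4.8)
Our plan is to run, over digraphs, the same reduction scheme that proves Theorem~\ref{thm:main}: express $\kappa(G)$ as the outcome of a batch of directed $s$-$t$ vertex-mincut computations --- each a single unit-capacity maxflow on an $O(n)$-vertex, $O(m)$-edge graph after vertex-splitting --- whose total cost is small. Two elementary facts are used throughout: any $s$-$t$ vertex cut is also a vertex set whose deletion destroys strong connectivity, hence has size at least $\kappa(G)$; and we may assume $k := \kappa(G)$ is known up to a factor of $2$, since we can run the whole procedure for each guess $\hat k \in \{1,2,4,\dots,n\}$ and return the smallest certified cut, at a $\polylog$ overhead. Fix an optimal cut, i.e.\ a partition $V = L \sqcup S \sqcup R$ with $|S| = k$, no arc from $L$ to $R$, and $1 \le |L| \le |R|$. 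We pick a threshold $\nu$ on $|L|$ (and scale parameters inside each case) and balance the two regimes below; the exponent $1/12$ is precisely the result of that optimization.

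\emph{Small-side regime ($|L| \le \nu$).} Here we localize. Guessing $|L| \in [\mu,2\mu]$ for $\mu \in \{1,2,\dots,\nu\}$ and sampling vertices at rate $\Theta(1/\mu)$, with $\polylog$ repetitions, yields whp a sample $T$ that meets $L$ exactly once, say at $v$, and meets $S \cup R$. The minimum vertex cut that keeps $v$ on its source side and separates $v$ from $T \setminus \{v\}$ then has value exactly $k$ and a source side of size $O(\mu)$; provided $k$ is not too large, this cut is witnessed inside an $\ot(\mu k)$-size neighborhood of $v$, so it can be extracted by a \emph{local} maxflow call of cost $f(\ot(\mu k))$ rather than $f(\ot(m))$, where $f(\cdot)$ denotes the maxflow running time. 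Iterating over the $\ot(n/\mu)$ sampled sources (in both arc directions) and over all scales $\mu \le \nu$ gives the total cost of this regime. With a near-linear-time $f$ the per-scale cost telescopes to $\ot(nk)$, hence $\ot(n^2)$ overall.

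\emph{Balanced regime ($|L| > \nu$).} Now both sides are large, so $n - k = |L| + |R| > 2\nu$ and a uniform random vertex avoids $S$ with probability $> 2\nu/n$; sampling $\ot(n/\nu)$ sources, computing for each its vertex maxflow to (and from) appropriate targets, and taking the minimum therefore certifies $k$ whp. Doing this with a full single-source-to-all-targets computation per source merely recovers the HRG strategy and costs $\ot(mn)$ --- no gain. To go below this we reuse the amortization and sparsification ingredients behind Theorem~\ref{thm:main}: sparsify each instance and, more importantly, reduce the number of distinct targets that must be examined per source to $\polylog$, so that the per-source cost falls well below $\ot(m)$ and the regime total falls below $\ot(mn)$. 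Substituting a near-linear-time $f$ and the sparsification here again gives $\ot(n^2)$.

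\emph{Main obstacle.} The crux --- and the reason the directed bound is $mn^{1-1/12+o(1)}$ rather than something as clean as Theorem~\ref{thm:main} --- is the balanced, large-$k$ regime: digraphs admit neither the uncrossing of vertex cuts nor the vertex-connectivity sparsifiers that let the undirected algorithm drive its total maxflow size down to $\ot(m)$, and the local-flow saving of the small-side regime evaporates once $k$ is comparable to $n$, because then the ``local'' neighborhood is already the whole graph. Extracting any improvement in this regime, and then trading it off against the small-side regime through the single threshold $\nu$, is exactly what yields the $1/12$. Finally, the algorithm is one-sided (Monte Carlo): every case rests on a sampling event that holds whp, and a union bound over the $O(\log^2 n)$ guesses of $(\hat k, \mu)$ and the $\polylog$ independent samples keeps the overall failure probability inverse-polynomial.
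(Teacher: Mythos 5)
Your high-level split into a small-$|L|$ ``localized'' regime and a balanced ``sampling pairs'' regime is the right skeleton, and you correctly diagnose that the lack of directed uncrossing/sparsifiers is what separates the directed bound from the clean undirected one. But the proposal has several genuine gaps that keep it from being a proof.

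\emph{Avoiding $S$.} In the small-side regime you sample $T$ at rate $\Theta(1/\mu)$ with $\mu\approx|L|$, require that $T$ hit $L$ exactly once and ``meet $S\cup R$,'' and then assert that a $(v,T\setminus\{v\})$-min-separator has value exactly $k$. That does not follow. For $(L,S,R)$ to be a $(v,T\setminus\{v\})$-separator you need $T\setminus\{v\}$ to avoid $S$ entirely, and with $|S|=k\gg\mu$ a $\Theta(1/\mu)$-rate sample hits $S$ with overwhelming probability. The paper's escape hatch, which you omit, is to use $T_x := T\setminus N^{\operatorname{out}}[x]$ and the observation (Proposition~\ref{prop:outside Nx directed}) that $|(L\cup S)\setminus N^{\operatorname{out}}[x]|<|L|$: after removing the out-neighborhood of $x$, only $<|L|$ elements of $L\cup S$ remain, and those are avoided with constant probability at rate $1/\mu$. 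Without this, the small-side regime's correctness claim fails.

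\emph{Kernel size.} You claim the cut is witnessed in an $\ot(\mu k)$-size neighborhood, but that is the \emph{undirected} kernel bound. In the directed setting the paper only achieves kernels of size $\ot(n\mu)$ (Lemma~\ref{lem:kernel directed}), because the argument bounding $|F|=O(k)$ in the undirected case relies on the $|S_{\low}|\ge 300|L|\ln n$ condition that has no clean directed analogue. If you actually had $\ot(\mu k)$ directed kernels, you would beat the theorem's stated bound; you need to either supply that argument (it is not known to work) or switch to the $\ot(n\mu)$ bound, which changes the optimization.

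\emph{The $1/12$ and what drives it.} You say the exponent ``is precisely the result'' of tuning the threshold $\nu$, but the trade-off you describe (local maxflow of size $\ot(\mu k)$ vs.\ $\ot(n/\nu)$ global maxflows) does not produce $1/12$. In the paper the $1/12$ comes from a third ingredient you don't mention: the $\ot(mk^2)$ algorithm of \cite{ForsterNYSY20} used to handle $\kappa\le n^{0.5-\epsilon}$, so that afterward one may assume $m\ge nk\ge n^{1.5-\epsilon}$. Only with that preprocessing, plus the $\ot(n\ell)$-edge kernels, plus the balanced-regime pair sampling, plus a case split on $m$ versus $n^{1.5}$ with different choices of $\ell$ ($\ell=n^{1/8}$ or $\ell=m^{3/4}/n$), does the optimization land on $\epsilon=1/24$, i.e.\ $mn^{1-2\epsilon}=mn^{1-1/12}$. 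You also omit the ``extreme'' case $k\ge n/2$, $|L|\le n/10$, which the paper handles separately via the kernel lemma with $a=n/10$.

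\emph{Balanced regime.} Your claim that one can ``reduce the number of distinct targets that must be examined per source to $\polylog$'' in the balanced directed regime is not substantiated and is not what the paper does; there it simply samples $\ot(n/\ell)$ random ordered pairs and runs one maxflow per pair, accepting the $\ot((n/\ell)\cdot t_{\textflow}(m,n))$ cost. The isolating-cuts machinery you gesture at is used only in the undirected algorithm.

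In short: the regime split is the right instinct, but the correctness of your small-side case is broken without the $T_x=T\setminus N^{\operatorname{out}}[x]$ device, your kernel-size claim is the undirected one, and the actual route to $1/12$ (the $\ot(mk^2)$ preprocessing and the resulting constraint $m\ge nk$, plus the two-parameter optimization over $\ell$ and $\epsilon$) is absent.
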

As the result on directed graphs is obtained by using our new technique in a less efficient way and does not give additional insight, we discuss it in the Appendix.

        \subsection{Technical Overview}
\label{sec:overview}

Our main technical contribution is a new technique that we call {\em sublinear-time kernelization} for vertex connectivity. Namely, we show that under certain technical conditions, we can find a subgraph whose size is sublinear in $n$ and preserves the vertex connectivity of the original graph. We use sketching techniques to construct such a subgraph in sublinear time. 
To the best of our knowledge, all previous techniques require $\Omega(n^{3})$
time even in the extremely unbalanced case when the vertex mincut have size $\Omega(n)$, and the smaller side of the mincut contains $O(1)$ vertices. In contrast, sublinear-time kernelization allows us to reduce the problem in this
case to maxflow calls of total size $\ot(m)$ in $\ot(m)$ time.
Below, we elaborate on this new technique and discuss how it fits into the entire vertex connectivity algorithm.

Suppose the vertex mincut of the input graph $G$ is denoted by $(L, S, R)$, where $|L| \le |R|$ are the two sides of the cut, and $|S| = k$ is the set of vertices whose removal disconnects $L$ from $R$. For intuitive purposes, let us assume that we know the values of $|L|$ and $|R|$ and $|R|=\Omega(n)$.
This allows us to obtain a vertex in $R$ using just $O(\log n)$ samples.
From now, we assume that we know a vertex $r\in R$.
If we were also able to find a vertex $x\in L$, then we can simply compute an $x$-$r$ maxflow to obtain a vertex mincut. But, in general, $|L|$ can be small, and obtaining a vertex in $L$ whp requires $\ot(n/|L|)$ samples. Recall that we promised that the total number of edges in all the maxflow instances that we generate will be $\ot(m)$. One way of ensuring this would be to run each of the $\ot(n/|L|)$ maxflow calls on a graph containing only $\ot(k|L|)$ edges; then, the total number of edges in the max flow instances is $\ot((n/|L|)\cdot (k |L|)) = \ot(nk) = \ot(m)$ since the degree of every vertex is at least $k$. %
At first glance, this might sound impossible because the number of edges incident to $N(x)$ is already $\Omega(k^2)$.  Nevertheless, our main technical contribution is in showing that in certain cases we can construct a graph $H$ with just $\ot(k|L|)$ edges that gives us information about the vertex connectivity of $G$. We call such graph a {\em kernel}.
In achieving this property, we need additional conditions on $L$ and $S$, specifically on their relative sizes and the degrees of vertices in $S$.
If these conditions do not hold, we give a different algorithm that uses a recent tool called the isolating cut lemma used in the edge connectivity problem~\cite{LiP20deterministic} (we adapt the tool to vertex connectivity). 
More specifically, we consider three cases depending on the sizes of $L$ and $S_\lo$, where 
$$S_\lo=\{v\in S \mid \deg(v)\leq 8k\}.$$ 
It might not be intuitive now why we need $S_\lo$. Distinguishing cases using $S_\lo$ is a crucial idea that makes everything fits together. Its role will be more clear in the discussion below. The use of our kernelization is in the last case (Case 3). We now discuss all the cases.

\paragraph{Case 1: Large $L$ (details in \Cref{sec:using isolating}).}
We first consider the easier case when $L$ is not too small compared to $S$, i.e. $|L| > k/\polylog(n)$.
Consider the vertex set $T$ where each vertex is included in $T$ with probability $1/k$. Then, with probability at least $1/\polylog(n)$, $T$ contains {\em exactly one} vertex from $L$ (call it $x$), no vertex from $S$, and the remaining vertices are from $R$. Assume that it is the case by repeating $\polylog(n)$ times. Observe that a vertex mincut separating $x$ from $T\setminus \{x\}$, denoted by $(x,T\setminus\{x\})$-vertex mincut, is a (global) vertex mincut of $G$. 

The {\em isolating cut lemma} was recently introduced by Li and Panigrahi~\cite{LiP20deterministic} for solving the edge connectivity problem deterministically. It says that in an undirected graph, given a set of terminal vertices $T$, we can make maxflow calls to graphs of total size $O(m \log |T|)$ and return for each terminal $t\in T$, the smallest \emph{edge} cut separating $t$ from $T\setminus \{t\}$. In particular, it returns us a  $(x,T\setminus\{x\})$-edge mincut.
If this lemma worked for vertex cuts, it would return us a $(x,T\setminus\{x\})$-vertex mincut and we would be done. 
It turns out that the isolating cuts lemma can be adapted to work for vertex connectivity, due to the submodularity property of vertex cuts. 

\paragraph{Case 2: Small $L$, small $S_\lo$ (details in \Cref{sec:using isolating}).}
From now on, we assume that $|L| < k/\polylog(n)$. 
Note that for every vertex $x \in L$, all neighbors of $x$ are inside $L \cup S$ and so $\deg(x)\le |L|+k < 2k$. Let $V_\low$ be all vertices whose degrees are less than $8k$. We know that $L\subseteq V_\low$ and $S_\lo = S\cap V_\lo$ by definition. It is also easy to show that $|R\cap V_\lo|\geq |L|$ (see \Cref{claim:helper low}).
So, if $|S_\lo| < |L| \cdot \polylog(n)$, then, by sampling from $V_\lo$ instead of $V$ with probability $1/(|L|\polylog(n))$, we can obtain a random sample that includes exactly one vertex from $L$, some vertices from $R$, and none from $S$, as in the previous case. In this case, we again can apply the isolating cuts lemma.

\paragraph{Case 3: Small $L$, large $S_\lo$ (details in \Cref{sec:kernel}).}
The above brings us to the crux of our algorithm, where the isolating cuts lemma is no longer sufficient. Namely, $L$ is much smaller than the cut $S$ and $S$ contains many vertices with low degree, i.e. 
\begin{align}
|L| < k/\polylog(n)\mbox{ and } |S_\lo| > |L|\cdot \polylog(n).\label{eq:overview:case3}
\end{align}
Let us first sample $\ot(n/|L|)$ vertices; at least one of these vertices is in $L$ whp. 
Now, for each vertex $x$ in the sample, we will invoke a maxflow instance on $\ot(k |L|)$ edges that returns the vertex mincut if $x\in L$.
This suffices because $\ot((n/|L|)\cdot (k |L|)) = \ot(nk)$ can be bounded by $\ot(m)$, noting that the degree of every vertex is at least $k$. 
Thus, we can reduce our problem to the following goal: 
\begin{quote}
{\em Given a vertex $x\in L$, describe a procedure to create a maxflow instance on $\ot(k |L|)$ edges that returns a vertex mincut.}
\end{quote}
In other words, assuming that we have a vertex $x\in L$, we want to construct a small graph $H$ and two vertices $s$ and $t$ in $H$ such that the $(s,t)$-maxflow in $H$ tells us about the vertex mincut in the original input graph. The graph $H$ corresponds to the concept of {\em kernel} in parameterized algorithms. 
A challenge is that it is not clear if a small kernel exists for vertex connectivity; it is not even clear if it is possible to reduce the number of edges at all. The entire description below aims to show that it is possible to reduce the number of edges to $\ot(k|L|)$. We ignore the time complexity for this process for a moment.

The key step is to define the following set $T_x$. First, let $T$ be a set such that every vertex is in $T$ with probability $1/|L|$. 
Then, $T_x$ is defined from $T$ by {\em excluding} $x$ and its neighbors, i.e. $T_x=T\setminus N_G[x]$, where $N_G[x]=N_G(x)\cup \{x\}$ and $N_G(v)$ denotes the set of neighbors of $v$. (We drop $G$ when the context is clear).
We exploit a few properties of $T_x$.
First, we claim that $T_x\subseteq R$  with $\Omega(1)$ probability. 
To see this, note that $N[x]\subseteq L\cup S$ for any  $x\in L$. Since $|N[x]|>k$ but $|L\cup S|\le|L|+k$ , it must be the case that 
\begin{align}
|(L\cup S)\setminus N[x]|<|L|. \label{eq:overview:small S_x}
\end{align}
Now, $T_{x}\subseteq R$ iff none of vertices from $(L\cup S)\setminus N[x]$ is sampled to $T$. As $|(L\cup S)\setminus N[x]|<|L|$ and the sampling probability is $1/|L|$, so $T_{x}\subseteq R$ with $\Omega(1)$ probability.

From now we assume that $T_x\subseteq R$. Consider contracting vertices in $T_x$ into a single node $t_x$. %
Since $T_x\subseteq R$, an $(x, t_x)$-maxflow call would return a vertex mincut of the original graph. However, the contracted graph might still contain too many edges. 
To resolve this issue, we make the following important observations: 
\begin{enumerate}[noitemsep]
    \item any vertex $v$ neighboring to both $x$ and $t_x$ must be in $S$, and 
    \item there exists a collection of $k$ vertex disjoint paths between $x$ and $t_x$ where each path contains exactly one neighbor of $x$ and exactly one neighbor of $t_x$. %
\end{enumerate}
The observations above simply follow from the fact that $x$ and $t_x$ are on the different side of the vertex mincut.
The first observation allows us to remove all common neighbors of $x$ and $t_x$ and add them back to the vertex mincut later. The second observation allows us to remove all edges between neighbors of $x$ and all edges between neighbors of $t_x$ without changing the $(x, t_x)$ vertex connectivity. Further, after all these removals, neighbors of $t_x$ of degree one (i.e. they are adjacent only to $t_x$) can be removed without changing the $(x,t_x)$ vertex connectivity. 
Interestingly, these removals are already enough for us to show that there are $\tilde O(k|L|)$ vertices and edges left!

\paragraph{Small kernel.} We call the remaining graph from above a {\em kernel} and denote it by $H$. We now show that $H$ contains $\tilde O(k|L|)$ edges whp.
Note that $H$ consists of the terminals $x$ and $t_x$, disjoint sets $N_x\defeq N_H(x)$ and $N_t\defeq N_H(t_x)$, and all other vertices in a set that we call $F'$ (for ``far''). We illustrate this in \Cref{fig:overview}. 

\begin{figure}[!h] 
\centering
\includegraphics[width=0.35\textwidth ]{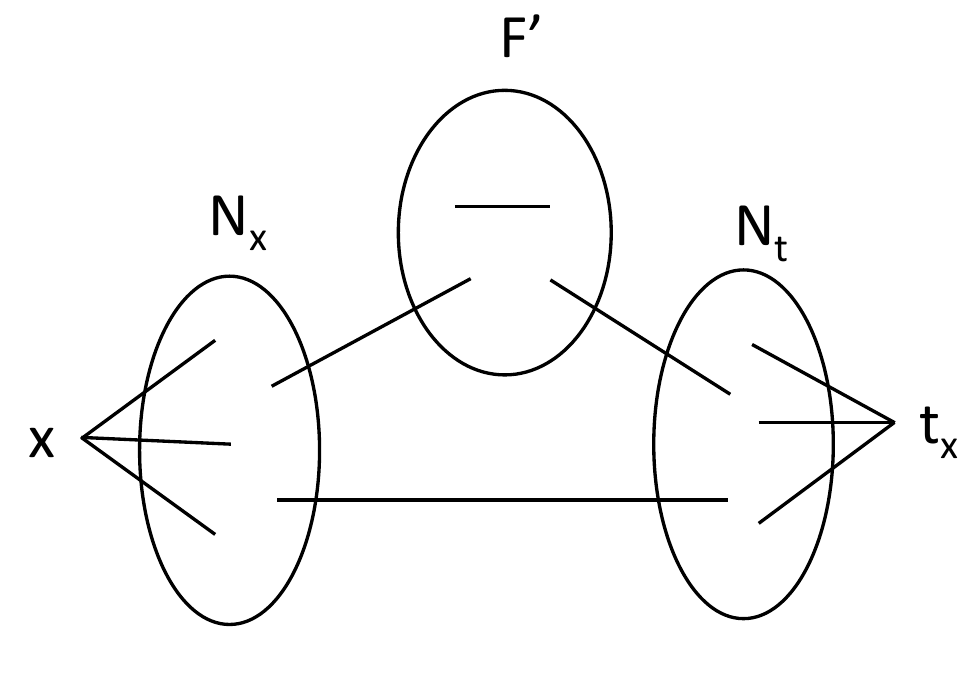} 
\caption{Kernel $H$ of graph $G=(V,E)$. } 
\label{fig:overview}
\end{figure}
Recall that we have already discarded all internal edges in $N_x$ and $N_t$; hence, we have three types of edges in $H$:
\begin{itemize}[noitemsep]
    \item[(E1)] edges in $N_x\times (F'\cup N_t)$, i.e. edges with  one endpoint in $N_x$ and the other in $F'$ or $N_t$, 
    \item[(E2)] edges in $F'\times (F'\cup N_t)$, i.e. edges with one endpoint in $F'$ and the other in $F'$ or $N_t$, and
    \item[(E3)] edges incident to terminals $x$ and $t_x$. 
\end{itemize}
We count the number of edges in (E1) and (E2) by charging them to its endpoint in $N_x$ and $F'$ respectively. We will show that there are $\tilde O(k|L|)$ such edges in total. It then follows that there are  $\tilde O(k|L|)$ edges in (E3), since there are at most $k+|L|$ edges incident to $x$ and each vertex in $N_t$ must be incident to some edge in (E1) or (E2) (otherwise, we would have already deleted such vertex). 
The claimed $\tilde O(k|L|)$ bound on the number of edges in (E1) or (E2) follows immediately once we show that whp
\begin{itemize}[noitemsep]
    \item[(a)] every vertex in $N_x\cup F'$ is charged by $\ot(L)$ edges, and 
    \item[(b)] there are $O(k)$ vertices in $N_x\cup F'$. 
\end{itemize}
To prove (a), consider any vertex $v\neq x$ in $G$ with $\deg_{G\setminus N[x]}(v)>|L| \cdot \polylog(n)$, i.e. $v$ has many neighbors outside $N_G[x]$, the neighborhood of $x$. Then, one of these neighbors must have been sampled to $T$ whp, and would be retained in $T_x$. This implies that such $v$ is in $N_t$ whp. This implies further that every vertex $v\in N_x\cup F'$ has at most $|L|\cdot \polylog(n)$ edges to vertices in $F'\cup N_t$ (since the latter vertices are all outside of $N_G[x]$). 
This establishes (a).

To prove (b), first note that $|N_x| < |L| + |S| < 2k$ since  $N_x \subseteq N_G(x)\subseteq L\cup S$; so, it is left to show that $|F'|=O(k)$. 
The key statement that we need is that
\begin{align}
\mbox{$\Omega(|S_\lo|)$ neighbors of every vertex $v\in F'$ are in $S_\lo$.}\label{eq:overview:bound F}
\end{align}
Given this, as we know that vertices in $S_\lo$ are incident to $O(k|S_\lo|)$ edges in total, we have $|F'| = O(\frac{k |S_\lo|}{|S_\lo|}) = O(k)$ as desired. 
To prove \eqref{eq:overview:bound F}, we essentially use the following facts (for precise quantities, see \Cref{fig:small-F}). 
\begin{itemize}[noitemsep]
    \item[(b1)] There are less than $k+|L|$ vertices in $N_x$, i.e. $|N_x|<k+|L|$ (we just proved this above).
    \item[(b2)] All but $|L|$ of vertices in $S_\lo$ are in $N_x$. This follows from \eqref{eq:overview:small S_x}.
    \item[(b3)] Whp, every vertex in $F'$ has at least $k-|L|\polylog n$ neighbors in $N_x$. This follows from the argument in the proof of (a).  
\end{itemize}
This means that each $v\in F'$ has at least the following number of neighbors in $S_\lo$:
$$k-|L|\polylog(n)-(|N_x|-(|S_\lo|-|L|))\geq k-|L|\polylog(n)-(k+|L|)+(|S_\lo|-|L|)=\Omega(|S_\lo|)$$
 where the last equality holds as $k$'s cancel each other and $|S_{\low}|$ dominates other terms. 
This is the crucial place where we need that $S_\low$ is large as stated in \eqref{eq:overview:case3}. Without this guarantee, we could not have bounded the size of $H$ and this explains the reason why we need to introduce Case 2 above.
 This completes the proof of (b).

\begin{figure}[!h] 
\centering
\includegraphics[width=0.5\textwidth ]{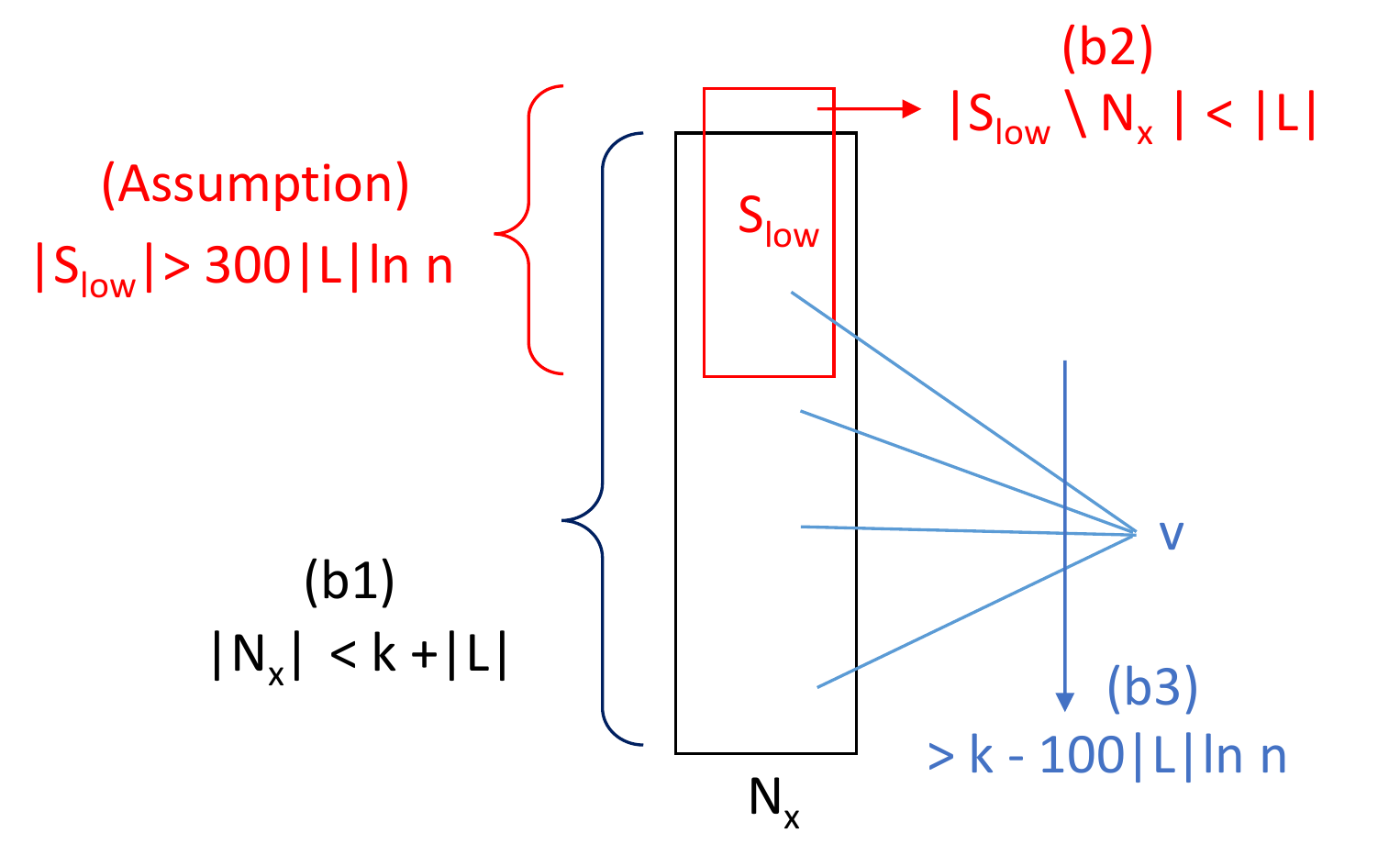} 
\caption{Facts (b1), (b2) and (b3).} 
\label{fig:small-F}
\end{figure}

\paragraph{Building kernels in sublinear time.}
So far, we only bound the size of the kernel $H$. Below, we discuss how to actually build it in sublinear time. Note that we will end up building a {\em subgraph} of $H$ instead of $H$.

Consider the following BFS-like process: Initialize the queue of the BFS with
vertices in $N_{x}$. Whenever $v$ is visited, if $v\notin N_{t}$,
we add $N(v)\setminus N[x]$ into the queue. This process will
explore the ``relevant'' subgraph of $H\setminus\{x,t_{x}\}$ because
the part that is not even reached from $N_{x}$ cannot be relevant
to $(x,t_{x})$-vertex connectivity in $H$ and so we ignore it. The kernel
graph that our algorithm actually constructs is obtained by adding $E(x,N_{x})$
and $E(N_{t},t_{x})$ into the above explored subgraph of $H$.
Our goal is to implement this process in $\ot(k|L|)$ time. There
are two main challenges. 
\begin{enumerate}[noitemsep]
    \item[(c1)] For all $O(k)$ visited vertices $v\notin N_{t}$, we must list $N(v)\setminus N[x]$ in $\ot(|L|)$ time. Note that simply listing neighbors of $v$ already takes $\deg(v)\ge k$ time which is too expensive.
    \item[(c2)] For all $\ot(k|L|)$ visited vertices $v$, we must test if $v\in N_{t}$ (i.e.~if its neighborhood in $G$ overlaps with $T_x$) in $\polylog(n)$ time.
\end{enumerate}

We address both challenges by implementing our BFS-like process based
on linear sketches from the streaming algorithm community, and so
we call our technique \emph{sketchy search}. The key technique for (c1) is \emph{sparse recovery
sketches}: An $s$-sparse recovery sketch  \emph{linearly} maps
a vector $\vec{a}\in\mathbb{Z}^{n}$ to a smaller vector $\sketch_{s}(\vec{a})\in\mathbb{Z}^{\ot(s)}$
in $\ot(\|\vec{a}\|_{0})$ time so that, if $\vec{a}$ has at most
$s$ non-zero entries, then we can recover $\vec{a}$ from $\sketch_{s}(\vec{a})$
in $\ot(s)$ time. 
For any vertex $v$, let $\one_{N(v)}$ and $\one_{N[v]}$ be the indicator
vectors of $N(v)$ and $N[v]$ respectively. We observe two things:
(1) non-zero entries in $\one_{N(v)}-\one_{N[x]}$ correspond to the symmetric
difference $N(v)\triangle N[x]$, and (2) $|N(v)\triangle N[x]|=\Theta(|N(v)\setminus N[x]|+|L|)$ (formally proved in \eqref{eq:sym}). 

This motivates the following algorithm. Set $s\gets|L|\polylog(n)$
and precompute $\sketch_{s}(\one_{N(v)})$ and $\sketch_{s}(\one_{N[v]})$
for all vertices $v$. This takes $\sum_v\ot(\deg(v))=\ot(m)$ time. Now, given
any $v$, we can compute in $\ot(s)$ time $\sketch_{s}(\one_{N(v)})-\sketch_{s}(\one_{N[x]})=\sketch_{s}(\one_{N(v)}-\one_{N[x]})$,
where the equality is because the map is linear. If $v\notin N_{t}$, then we have argued previously that $|N(v)\setminus N[x]|\le|L|\polylog(n)$
and so $\one_{N(v)}-\one_{N[x]}$ has at most $s$ non-zero entries. Thus,
from $\sketch_{s}(\one_{N(v)}-\one_{N[x]})$ we can obtain $N(v)\triangle N[x]$
which contains the desired set $N(v)\setminus N[x]$ in $\ot(s)=\ot(|L|)$
time.

To address (c2), recall that if $|N(v)\setminus N[x]|\ge|L|\polylog(n)$,
then  $v\in N_{t}$. This condition can be checked in
$O(\log n)$ time using another linear sketch (called \emph{norm estimation}) for estimating $\|\one_{N(v)}-\one_{N[x]}\|_{2}$
which is proportional to $|N(v)\setminus N[x]|+|L|$. 
However, there
can still be some $v\in N_{t}$ but $|N(v)\setminus N[x]|\le|L|\polylog(n)$.
Fortunately, there are only $O(k)$ such vertices in $N_{t}$ (using
the same argument that bounds $|F|=O(k)$ in the proof of (b)). For those vertices, we
have enough time to list $N(v)\setminus N[x]$ in $\ot(|L|)$ time using
the sparse recovery sketches and check if $t_{x}\in N(v)\setminus N[x]$,
which holds iff $v\in N_{t}$.

\paragraph{Remarks:} 
Note that all we have established is that in any one of the many invocations of the sampling processes being used, we will return a vertex mincut. For the sake of correctness, we carefully argue later that in all the remaining calls, i.e., when sampling does not give us the properties we desire, we actually return some vertex cut in the graph. This allows us to distinguish the vertex mincut from the other cuts returned, since it has the fewest vertices.

        \section{Preliminaries}

Let $G=(V,E)$ be an undirected graph. For any set $T$ of vertices, we let $N_{G}(T)=\{v\notin T\mid\exists u\in T$
and $(u,v)\in E\}$ and $N_{G}[T]=T\cup N_{G}(T)$. If $T=\{v\}$,
we also write $N(v)$ and $N[v]$. The set $E_{G}(A,B)$ denote the
edges with one endpoint in $A$ and another in $B$. If $A=\{v\}$,
we write $E_{G}(v,B)$. We usually omit the subscript when $G$ refers
to the input graph. For any graph $H$, we use $V(H)$ to denote the set
of vertices of $H$, and $E(H)$ to denote the set of edges of $H$.  %
Whenever we contract a set of vertices in a graph, we remove all
parallel edges to keep the graph simple. This is because parallel
edges does not affect vertex connectivity. 

A \emph{vertex cut} $(L,S,R)$ of a graph $G=(V,E)$ is a partition
of $V$ such that $L,R\neq\emptyset$ and $E_{G}(L,R)=\emptyset$.
We call $S$ the corresponding \emph{separator} of $(L,S,R)$. The
size of a vertex cut is the size of its separator $|S|$. A vertex
cut $(L,S,R)$ is an $(s,t)$-vertex cut if $s\in L$ and $t \in R$.  A vertex mincut is a vertex cut with minimum size. An $(s,t)$-vertex
mincut is defined analogously. If $(L,S,R)$ is an $(s,t)$-vertex
mincut, we say that $S$ is an $(s,t)$-min-separator.  
For disjoint subsets $A,B \subset V$, a vertex cut $(L,S,R)$ is an $(A,B)$-vertex cut if $A \subseteq L$ and $B \subseteq R$. $(A,B)$-separator and $(A,B)$-min-separator are defined analogously.
Throughout the
paper, we assume wlog that $$|L|\le |R|.$$

In \Cref{sec:Fast Kernel}, we will employ the following standard linear
sketching techniques. We state the known results in the form which
is convenient for us below. We prove them in the Appendix. In both
theorems below, an input vector $v$ is represented in a sparse
representation, namely a list of (index,value) of non-zero
entries. The number of non-zero entries of $v$ is denoted as $\norm{v}_0$.
\begin{theorem} 
[Norm Estimation]\label{thm:estimate}For any number $n$, there
is an algorithm that preprocesses in $\ot(n)$ time and then, given
any vector $v\in\R^{n}$, return a sketch $\sketch_{\AMS}(v)\in\R^{O(\log n)}$
in $\ot(\|v\|_{0})$ time such that $\|v\|_{2}\le\|\sketch_{\AMS}(v)\|_{2}\le1.1\|v\|_{2}$
whp. Moreover, the sketch is \emph{linear}, i.e.~$\sketch_{\ltwo}(u+v)=\sketch_{\ltwo}(v)+\sketch_{\ltwo}(u)$
for any $u,v\in\R^{n}$. 
\end{theorem}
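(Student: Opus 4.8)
# Proof Proposal for the Norm Estimation Theorem

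\textbf{Proof proposal.} The plan is to take $\sketch_{\AMS}$ to be a (rescaled) dense Johnson--Lindenstrauss random linear map; since $\ot(n)$ preprocessing is allowed, we can afford to sample and store the full map. Fix a small absolute constant $\epsilon$ (say $\epsilon = 1/40$) and let $d = C\log n$ for a sufficiently large absolute constant $C$. During preprocessing, draw a random matrix $\Pi\in\R^{d\times n}$ whose entries are i.i.d., each equal to $\tfrac{1}{\sqrt d}$ times a standard Gaussian, and store $\Pi$ explicitly; this costs $O(nd)=\ot(n)$ time and space. Define $\sketch_{\AMS}(v)\defeq \tfrac{1}{1-\epsilon}\,\Pi v$. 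Linearity is immediate from linearity of matrix--vector multiplication: $\sketch_{\AMS}(u+v)=\tfrac{1}{1-\epsilon}\Pi(u+v)=\sketch_{\AMS}(u)+\sketch_{\AMS}(v)$ for all $u,v\in\R^n$.

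For the query time, observe that to compute $\Pi v$ we only touch the nonzero coordinates of $v$: initialize the output vector in $\R^d$ to zero and, for each nonzero entry $v_i$, add $v_i$ times the $i$-th column of $\Pi$ to the running sum. Each update costs $O(d)$, so computing $\Pi v$ takes $O(d\cdot\norm{v}_0)=O(\norm{v}_0\log n)=\ot(\norm{v}_0)$ time, and the final rescaling by $\tfrac{1}{1-\epsilon}$ adds only $O(d)$. This matches the claimed $\ot(\norm{v}_0)$ bound.

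Correctness is the distributional Johnson--Lindenstrauss lemma. For any fixed $v\in\R^n$, $\norm{\Pi v}_2^2$ has expectation $\norm{v}_2^2$ and (being a scaled chi-squared-type quantity) concentrates, so there is an absolute constant $c>0$ with $\Pr\big[(1-\epsilon)\norm{v}_2 \le \norm{\Pi v}_2 \le (1+\epsilon)\norm{v}_2\big] \ge 1 - 2e^{-c\epsilon^2 d}$; choosing $C$ large enough makes this at least $1-n^{-10}$ (or any desired inverse polynomial). On that event, $\norm{\sketch_{\AMS}(v)}_2 = \tfrac{1}{1-\epsilon}\norm{\Pi v}_2 \ge \norm{v}_2$ and $\norm{\sketch_{\AMS}(v)}_2 \le \tfrac{1+\epsilon}{1-\epsilon}\norm{v}_2 \le 1.1\norm{v}_2$, using $\epsilon = 1/40$. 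This gives exactly the stated per-vector guarantee, and since all vectors to which the algorithm applies $\sketch_{\AMS}$ are determined by the input graph (there are only $\poly(n)$ of them), a union bound preserves the guarantee simultaneously whp.

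There is essentially no hard step here; this is the textbook AMS/$\ell_2$-sketch. The only points that need a little care are cosmetic: converting the symmetric distortion $1\pm\epsilon$ into the one-sided window $[1,1.1]$ via the explicit rescaling by $\tfrac{1}{1-\epsilon}$, and pinning down the constant $C$ (hence the sketch dimension $d=O(\log n)$) so that the failure probability is inverse-polynomial and survives the union bound over all sketch evaluations performed by the overall algorithm. If one wanted to reduce the stored randomness, one could instead use rows of $4$-wise independent $\pm1$ entries with a median-of-means estimator of $\norm{v}_2^2$, but with $\ot(n)$ preprocessing available the fully independent construction above is the simplest.
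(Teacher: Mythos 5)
Your proof is correct, but takes a genuinely different route from the paper. The paper simply cites the AMS $F_2$-estimation sketch of Alon, Matias, and Szegedy, which uses $4$-wise independent $\pm 1$ sign vectors with a median-of-means estimator; you instead use a dense Gaussian Johnson--Lindenstrauss matrix $\Pi \in \R^{d\times n}$ with $d = O(\log n)$, stored explicitly in preprocessing. Both are standard and both satisfy the theorem. Your approach has the virtue of matching the theorem's literal phrasing more directly: the estimate really is $\|\sketch_{\AMS}(v)\|_2$, with the asymmetric window $[\|v\|_2, 1.1\|v\|_2]$ obtained by the explicit $\tfrac{1}{1-\epsilon}$ rescaling, and linearity is manifest. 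The tradeoff is that you store $\Theta(n\log n)$ random reals rather than $O(\log n)$ hash seeds, but the theorem explicitly budgets $\ot(n)$ preprocessing, so this is immaterial. Your union-bound remark is the right thing to say: the vectors of interest are determined by the input graph and not by $\Pi$, so a union bound over $\poly(n)$ fixed vectors is valid even though the algorithm uses its own randomness to decide which ones to query. One cosmetic caveat you glossed over (and the paper glosses over too) is bit precision of the Gaussian entries; if you want to be pedantic, replace the Gaussians by independent $\pm 1/\sqrt{d}$ signs, for which the same distributional JL concentration holds by Achlioptas' result, and the arithmetic is then exact over $O(\log n)$-bit rationals.
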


\begin{theorem}[Sparse Recovery]\label{thm:recovery}For any numbers $n$ and
$s$, there is an algorithm that preprocesses in $\ot(s)$ time and
then, given any vector $v\in \{-1,0,1\}^{n}$, return a sketch $\sketch_{s}(v)\in\mathbb{Z}^{\ot(s)}$
in $\ot(\|v\|_{0})$ time and guarantees the following whp (as long as
the number of recovery operations is $\poly(n)$).\footnote{The algorithm works for larger range $[-\poly(n),\poly(n)]$ of integers, but the range $\{-1,0,1\}$ is sufficient for our purpose.} 
\begin{itemize}
\item If $\|v\|_{0}\le s$, then we can recover $v$ from $\sketch_{s}(v)$
in $\ot(s)$ time. (More specifically, we obtain all non-zero entries
of $v$ together with their indices). 
\item Otherwise, if $\|v\|_{0}>s$, then the algorithm returns $\bot$.
\end{itemize}
Moreover, the sketch is \emph{linear}, i.e.~$\sketch_{s}(u+v)=\sketch_{s}(v)+\sketch_{s}(u)$
for any $u,v\in \mathbb{Z}^{n}$. 
\end{theorem}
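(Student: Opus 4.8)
I would realize the sketch with the classical ``hash-into-$O(s)$-buckets plus $1$-sparse recovery'' primitive (Count-Sketch / Invertible Bloom Lookup Tables), augmented by one polynomial fingerprint per bucket that serves both to certify singleton buckets and to detect dense inputs. Concretely: fix a prime $p=\Theta(n^{2})$ and parameters $B=\Theta(s)$, $L=\Theta(\log n)$; in preprocessing I would draw $L$ independent pairwise-independent hash functions $h_{1},\dots,h_{L}\colon[n]\to[B]$ (each specified by $O(\log n)$ bits, e.g.\ $h_\ell(i)=((a_\ell i+b_\ell)\bmod p')\bmod B$ for a fixed prime $p'\ge n$), draw independent points $r_{1},\dots,r_{L}\in\mathbb{Z}_{p}$, and allocate an all-zero array holding, for each level $\ell$ and bucket $b$, a triple of word-sized cells. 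This touches only $O(sL)=\ot(s)$ memory and $O(L\log n)$ random bits --- crucially it never enumerates $[n]$ --- so preprocessing runs in $\ot(s)$ time as the theorem demands. I would then define
$$\sketch_{s}(v)_{\ell,b}\;=\;\Bigl(\textstyle\sum_{h_\ell(i)=b}v_{i},\;\;\sum_{h_\ell(i)=b}i\,v_{i},\;\;\sum_{h_\ell(i)=b}v_{i}\,r_\ell^{\,i}\bmod p\Bigr),$$
which is manifestly a linear function of $v$ (over $\mathbb{Z}$, resp.\ over $\mathbb{Z}_{p}$), so the claimed linearity comes for free; and it is computable by scanning the nonzero entries of $v$ and updating $L$ buckets per entry, using fast exponentiation for $r_\ell^{\,i}$, in $O(\|v\|_{0}\log^{2}n)=\ot(\|v\|_{0})$ time.

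For decoding I would call a bucket \emph{good} if its first cell $\sigma^{(0)}\in\{-1,1\}$, the quotient $j:=\sigma^{(1)}/\sigma^{(0)}$ is an integer in $[n]$, and the fingerprint identity $\sigma^{(0)}r_\ell^{\,j}\equiv\sigma^{(2)}\pmod p$ holds, in which case it decodes to $(j,\sigma^{(0)})$. A bucket holding exactly one nonzero coordinate is good and decodes correctly, deterministically. A bucket holding two or more nonzero coordinates either fails the first two tests or yields a candidate for which the fingerprint identity reads $W(r_\ell)\equiv0\pmod p$ for a nonzero degree-$<n$ polynomial $W$ over $\mathbb{F}_{p}$ (its monomials do not all cancel, since among the $\ge2$ bucket indices at least one differs from the decoded index and contributes a coefficient $\pm1$), which by Schwartz--Zippel fails the test except with probability $\le n/p$. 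To recover when $\|v\|_{0}\le s$: for a fixed nonzero coordinate $i$, pairwise independence makes it collide with some other nonzero coordinate in a fixed level with probability $\le 2s/B\le\tfrac12$, hence $i$ is \emph{isolated} in at least one of the $L$ levels except with probability $2^{-L}=n^{-\Omega(1)}$, and whenever $i$ is isolated in level $\ell$ the bucket $(\ell,h_\ell(i))$ is good and decodes to $(i,v_{i})$. So I would simply make one pass over all $O(sL)=\ot(s)$ buckets, harvest the good ones, and deduplicate by index; thanks to the $\Theta(\log n)$ independent levels this recovers the whole support in a single pass with no iterative peeling, in $\ot(s)$ time.

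To return $\bot$ precisely when $\|v\|_{0}>s$, I would make the decoder \emph{certifying}: run the pass above, immediately return $\bot$ if more than $s$ distinct indices are harvested, otherwise assemble the candidate $\hat v\in\{-1,0,1\}^{n}$ from the harvested pairs, \emph{re-encode} it to obtain $\sketch_{s}(\hat v)$ in $\ot(s)$ time, and output $\hat v$ iff $\sketch_{s}(\hat v)=\sketch_{s}(v)$ holds exactly on every cell (else output $\bot$). If $\|v\|_{0}\le s$ then whp $\hat v=v$ by the previous step, so the check passes and we output $v$. If $\|v\|_{0}>s$ but the check nonetheless passes for some $\hat v$ with $\|\hat v\|_{0}\le s$, then linearity gives $\sketch_{s}(v-\hat v)=0$; summing the fingerprint cells within each level forces $W_{v}(r_\ell)\equiv0\pmod p$, where $W_{v}(X)=\sum_{i}(v_{i}-\hat v_{i})X^{i}$ is a nonzero polynomial over $\mathbb{F}_{p}$ of degree $<n$ (nonzero since $v\neq\hat v$ and its coefficients have absolute value $\le2<p$), an event of probability $\le(n/p)^{L}$. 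A union bound over the $\le n$ nonzero coordinates, the $O(sL)$ buckets, and the $\poly(n)$ recovery operations, with $p$ chosen a sufficiently large polynomial in $n$, makes every invoked event hold simultaneously whp, which completes the argument.

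\textbf{Where the difficulty lies.} The one delicate point is getting a \emph{high}-probability guarantee in both directions at once --- recovering all $\le s$ coordinates \emph{and} soundly rejecting every dense input --- while the sketch's randomness is fixed in $\ot(s)$ preprocessing and reused across $\poly(n)$ queries. The two design choices that resolve this are (i) taking $L=\Theta(\log n)$ independent hash levels, which boosts per-coordinate isolation to probability $1-n^{-\Omega(1)}$ and conveniently removes any need for iterative peeling, and (ii) attaching a degree-$<n$ polynomial fingerprint to each bucket together with a re-encode-and-compare verification, which turns the decoder into a certifying procedure so that a dense vector cannot masquerade as an $s$-sparse one. Everything else is routine bookkeeping of word-sized integers, and the $\ot(s)$-time/space accounting goes through precisely because no part of the construction ever scans all of $[n]$.
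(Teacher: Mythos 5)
The paper's own ``proof'' of this statement is essentially a citation: it invokes the $s$-sparse recovery construction of Cormode and Firmani~\cite{CormodeF14} and only explains how to feed the nonzero entries of $v$ as a stream of $\|v\|_0$ updates so that building the sketch costs $\ot(\|v\|_0)$ time. Your proposal instead reconstructs the primitive from scratch, but the machinery you build --- pairwise-independent hashing into $\Theta(s)$ buckets, $\Theta(\log n)$ independent levels, per-bucket $1$-sparse recovery cells $(\sigma^{(0)},\sigma^{(1)},\sigma^{(2)})$ with a polynomial fingerprint certifying singleton buckets, and a re-encode-and-compare step to detect dense inputs --- is exactly the same family of algorithms being cited, down to the same three per-bucket cells. (The paper's source even contains a commented-out appendix sketching precisely this $1$-sparse primitive with $\sigma,\eta,\tau$ and a random $z\in\mathbb{Z}_p$.) Your only structural deviation is to use $\Theta(\log n)$ independent hash levels so that every surviving coordinate is isolated \emph{somewhere} whp and hence no iterative peeling is needed; this trades a $\log n$ factor in space (still $\ot(s)$, so harmless) for a simpler one-pass decoder, which is a reasonable and correct simplification.

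One step in your rejection argument needs to be patched. You write that if the re-encode check passes with $\hat v\neq v$, then $W_v(r_\ell)\equiv 0 \pmod p$ for $W_v(X)=\sum_i(v_i-\hat v_i)X^i$, and bound this by Schwartz--Zippel. But $\hat v$ is produced by a decoder that \emph{reads} the fingerprint cells, so $\hat v$ --- and hence $W_v$ --- is a random variable that depends on the $r_\ell$'s; Schwartz--Zippel does not apply to a polynomial chosen after looking at the evaluation point. The fix is to condition first on the ``no false-positive bucket'' event (which you already bound correctly, since the polynomial arising in a single bucket is determined by $v$ and the hash functions alone, not by $r_\ell$). On that event, $\hat v$ coincides with the deterministic ``singleton restriction'' $\hat v_0$ determined by $v$ and the hash functions only, so $W_{v,\hat v_0}$ is fixed once the hashes are drawn and Schwartz--Zippel over the independently drawn $r_\ell$'s gives the $(n/p)^L$ bound. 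With that conditioning inserted, your argument is correct; as written, it silently treats $\hat v$ as independent of $r_\ell$. Also note your initial choice $p=\Theta(n^2)$ is too small to survive the union bound over $\poly(n)$ queries and $O(sL)$ buckets; your later remark that $p$ should be ``a sufficiently large polynomial in $n$'' is the one that should be used throughout.

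Finally, a small point on linearity: your third cell lives in $\mathbb{Z}_p$ while the theorem states $\sketch_s(v)\in\mathbb{Z}^{\ot(s)}$ and linearity over $\mathbb{Z}^n$. This is fine as long as the arithmetic on that coordinate of the sketch is understood modulo $p$ (which is how it is used in the paper, where only differences of sketches are formed and then decoded); it is worth stating this convention explicitly so that ``$\sketch_s(u+v)=\sketch_s(u)+\sketch_s(v)$'' is unambiguous.
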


\section{Using Sublinear-time Kernelization}\label{sec:kernel}

We say that a vertex cut $(L,S,R)$ is a \emph{$k$-scratch} if $|S|<k$, $|L|\le k/(100\log n)$ and $|S_{\low}|\ge300|L|\ln n$
where $|S_{\low}|=\{v\in S\mid\deg(v)\le 8k\}$. This kind of cuts is considered in Case 3 of \Cref{sec:overview}.
In this section, we show that if a graph has $k$-scratch, then we can return some vertex cut of size less than $k$.

\begin{lem}
\label{lem:main k scratch}There is an algorithm that, given an undirected
graph $G$ with $n$ vertices and $m$ edges and a parameter $k$,
returns a vertex cut $(L,S,R)$ in $G$. If $G$ has a $k$-scratch, then $|S|<k$ w.h.p. The algorithm makes $s$-$t$ maxflow calls on unit-vertex-capacity graphs with $\ot(m)$ total number of vertices and edges
and takes $\ot(m)$ additional time.
\end{lem}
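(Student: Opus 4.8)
The plan is to turn the intuition of Case 3 from the technical overview into an algorithm with provable guarantees, being careful that it always outputs \emph{some} vertex cut (so that the outer algorithm can take the minimum over all cases and sampling repetitions). First I would set up the sampling parameters: repeat the whole procedure $\Theta(\polylog n)$ times, and in each repetition sample a set $X$ of $\Theta((n/\ell)\log n)$ vertices for a guessed value $\ell$ of $|L|$ (we try all powers of two for $\ell$, so assume $\ell = \Theta(|L|)$), and independently a set $T$ where each vertex is included with probability $1/\ell$. If $G$ has a $k$-scratch $(L,S,R)$, then whp some $x \in X$ lies in $L$; condition on that $x$. For each $x \in X$ we will build a small graph and call maxflow, and then output the smallest separator found over everything; if none of the calls certify a cut of size $<k$ we fall back to outputting some fixed vertex cut of $G$ (e.g. $(\{v\}, N(v), V \setminus N[v])$ for a minimum-degree vertex $v$), so the algorithm always returns a valid vertex cut.

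The core of the argument is, for a fixed $x \in L$: define $T_x = T \setminus N[x]$, and establish (i) $T_x \subseteq R$ with $\Omega(1)$ probability, using $|(L\cup S)\setminus N[x]| < |L|$ since $|N[x]| > k \ge |S|$ and $N[x] \subseteq L \cup S$; and (ii) $T_x \ne \emptyset$ whp, since $|R| \ge |L|$ forces many vertices outside $N[x]$ to be sampled. Then I would define the kernel $H$ by contracting $T_x$ to a single terminal $t_x$, deleting common neighbors of $x$ and $t_x$ (they lie in $S$, so record them to add back), deleting edges within $N_H(x)$ and within $N_H(t_x)$, and deleting degree-one neighbors of $t_x$; argue via submodularity/vertex-cut structure (as in the isolating-cut adaptation) that the $(x,t_x)$-vertex mincut in $H$ plus the recorded common neighbors is a vertex mincut of $G$ — this is valid because $x$ and $t_x$ are genuinely separated by $S$. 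Next, bound $|E(H)| = \ot(k\ell)$: partition edges into types (E1), (E2), (E3) as in the overview; show every $v \in N_x \cup F'$ has $\ot(\ell)$ edges to $F' \cup N_t$ because otherwise one of $v$'s $>\ell\polylog n$ neighbors outside $N[x]$ would be sampled into $T_x$, forcing $v \in N_t$ (contradiction); and show $|F'| = O(k)$ using the $k$-scratch hypothesis $|S_{\low}| \ge 300\ell\ln n$ — every $v \in F'$ has $\Omega(|S_{\low}|)$ neighbors in $S_{\low}$ (combining $|N_x| < k + \ell$, all but $\ell$ of $S_{\low}$ lies in $N_x$, and $v$ has $\ge k - \ell\polylog n$ neighbors in $N_x$), and $S_{\low}$-vertices carry only $O(k|S_{\low}|)$ edges.

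The subtle part of the time bound is that we cannot afford to \emph{construct} $H$ explicitly ($N(x)$ alone touches $\Omega(k^2)$ edges), so I would run the ``sketchy search'' BFS: preprocess $\sketch_s(\one_{N(v)})$ and $\sketch_s(\one_{N[v]})$ and $\sketch_{\ell_2}(\one_{N(v)})$ for all $v$ with $s = \Theta(\ell\polylog n)$, in $\sum_v \ot(\deg v) = \ot(m)$ total time (amortized over all repetitions and all guesses of $\ell$, which costs only another $\polylog n$ factor). Starting the queue from $N_x$, for each popped $v$: use the norm-estimation sketch on $\sketch_{\ell_2}(\one_{N(v)} - \one_{N[x]})$ — whose $\ell_2$ norm is $\Theta(\sqrt{|N(v)\setminus N[x]| + |L|})$ — to decide in $O(\log n)$ time whether $|N(v)\setminus N[x]|$ is large; if large, declare $v \in N_t$; if small, use the sparse-recovery sketch to read off $N(v) \triangle N[x] \supseteq N(v)\setminus N[x]$ in $\ot(\ell)$ time, check whether $t_x \in$ it (i.e. whether $v$ has a neighbor in $T_x$, handled by also maintaining which sampled vertices survive in $T_x$) to catch the $O(k)$ low-degree members of $N_t$, and otherwise push $N(v)\setminus N[x]$ onto the queue. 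Over $O(k)$ non-$N_t$ vertices and $\ot(k\ell)$ edges this is $\ot(k\ell)$ time; add back $E(x, N_x)$ and $E(N_t, t_x)$ and call maxflow on the resulting $\ot(k\ell)$-edge graph. Summing over the $\ot(n/\ell)$ choices of $x$ gives $\ot(nk) = \ot(m)$ maxflow size (using $\deg \ge k$), and the $\polylog$ repetitions and $\log n$ guesses of $\ell$ only cost polylog factors.

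\textbf{Main obstacle.} I expect the main difficulty to be the correctness of the kernel reduction together with the sketchy-search implementation matching it exactly: one must verify that the graph actually built (a \emph{subgraph} of $H$, namely the part reachable from $N_x$ by the BFS, with the terminal edges added) has the same $(x,t_x)$-vertex connectivity as $H$, that the deletions of intra-$N_x$/intra-$N_t$ edges and common neighbors provably preserve the $(x,t_x)$-min-separator (this needs the structural fact that there are $k$ vertex-disjoint $x$–$t_x$ paths each using exactly one vertex of $N_x$ and one of $N_t$, i.e. a careful Menger/augmenting-path argument), and that the norm-estimation threshold is set so that the $O(\log n)$-time test never misclassifies a relevant vertex whp while the sparse-recovery fallback catches all $O(k)$ exceptional members of $N_t$. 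Bounding $|F'|$ and the per-vertex edge charge rigorously — rather than up to unspecified $\polylog$ — is where the $k$-scratch constants ($8k$, $100\log n$, $300\ln n$) have to be tracked, and that bookkeeping is the most error-prone step.
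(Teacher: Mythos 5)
Your proposal follows essentially the same path as the paper's proof: the same sampling scheme over dyadic guesses of $|L|$, the same definition of $T_x = T \setminus N[x]$ and the argument that $\emptyset \neq T_x \subseteq R$ with constant probability, the same kernel reductions (contract $T_x$, delete $N(x)\cap N(t_x)$, delete intra-$N_x$/intra-$N_t$ edges, prune unreached vertices), the same charging argument bounding $|F'|=O(k)$ via the $k$-scratch hypothesis on $|S_{\low}|$, the same sketch-based BFS to build the kernel in $\ot(k\ell)$ time, and the same fallback/verification to ensure a valid cut is always returned. One small note: the paper proves the filter rules directly by path-replacement (Menger), not by submodularity — you correctly flag the path-surgery argument as the key, so the passing mention of ``submodularity'' is just a harmless misattribution; likewise, $T_x\neq\emptyset$ holds only with constant probability (not whp), but your $\polylog n$ repetitions already account for this.
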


Throughout this section, we assume that minimum degree of $G$ is at least $k$, otherwise the lemma is trivial.
The rest of this section is for proving the above lemma. Assume that
a $k$-scratch exists, let $(L,S,R)$ be an arbitrary $k$-scratch.
We start with a simple observation which says that, given a vertex
$x\in L$, the remaining part of $L\cup S$ outside $N[x]$ has size at most $|L|$ which is potentially much smaller than $k$. 
\begin{prop}
\label{prop:outside Nx}For any $x\in L$, $|(L\cup S)\setminus N[x]|<|L|$. 
\end{prop}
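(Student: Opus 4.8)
This is essentially equation \eqref{eq:overview:small S_x} from the technical overview, made precise, and the argument is a one-line counting estimate. The plan is to combine three facts: (i) $N[x]\subseteq L\cup S$, (ii) $|N[x]|>k$, and (iii) $|L\cup S|<|L|+k$.

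\textbf{Key steps.} First I would observe that since $x\in L$ and $(L,S,R)$ is a vertex cut, there are no edges between $L$ and $R$; hence every neighbor of $x$ lies in $L\cup S$, so $N[x]=N(x)\cup\{x\}\subseteq L\cup S$. Second, I would use the standing assumption of this section that the minimum degree of $G$ is at least $k$, which gives $|N[x]|=\deg(x)+1\ge k+1>k$. Third, since $(L,S,R)$ is a $k$-scratch we have $|S|<k$, so $|L\cup S|=|L|+|S|<|L|+k$. Finally, because $N[x]\subseteq L\cup S$,
\[
|(L\cup S)\setminus N[x]| \;=\; |L\cup S|-|N[x]| \;<\; (|L|+k)-k \;=\; |L|,
\]
which is the claimed bound.

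\textbf{Main obstacle.} There is no real obstacle here; the only thing to be careful about is invoking the correct hypotheses — the $|S|<k$ condition from the definition of $k$-scratch and the minimum-degree assumption made at the start of the section — and noting that the containment $N[x]\subseteq L\cup S$ is what lets us subtract cardinalities rather than merely bound them. The other defining properties of a $k$-scratch (the bounds on $|L|$ and $|S_{\low}|$) are not needed for this particular proposition.
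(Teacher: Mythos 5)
Your proof is correct and follows exactly the same approach as the paper's: you use $N[x]\subseteq L\cup S$, $|N[x]|>k$ (from the minimum-degree assumption), and $|L\cup S|<|L|+k$ (from $|S|<k$), and subtract cardinalities. You simply spell out the final arithmetic step slightly more explicitly than the paper does.
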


\begin{proof}
Note that $N[x]\subseteq L\cup S$ as $x\in L$. The claim follows
because $|L\cup S|<|L|+k$ and $|N[x]|>k$ as the minimum degree is
at least $k$.
\end{proof}
We will use $\ltil$ as an estimate of $|L|$ (since $|L|$ is actually
unknown to us). Let $T$ be obtained by sampling each vertex with
probability $1/(8\ltil)$. Let $T_{x}\defeq T\setminus N[x]$ for
any $x\in V$. 
Below, we show two basic properties of $T$. 
\begin{prop}\label{prop:low deg whp}
For any $x\in V$, we have the following whp. 
\begin{equation}
\text{For every }v\notin N[T_{x}],\,|N(v)\setminus N[x]|\le40\ltil\ln n\label{eq:low out deg}
\end{equation}
\end{prop}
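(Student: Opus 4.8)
The plan is to prove the contrapositive: if a vertex $v\neq x$ has too many neighbors outside $N[x]$, then whp $v$ lands in $N[T_{x}]$. Fix $x\in V$. First I would fix an arbitrary vertex $v$ with $|N(v)\setminus N[x]|>40\ltil\ln n$ and bound the probability of the bad event that \emph{none} of the vertices in $N(v)\setminus N[x]$ is sampled into $T$. Since each vertex is included in $T$ independently with probability $1/(8\ltil)$, and using $1-y\le e^{-y}$, this probability is at most
$$\left(1-\frac{1}{8\ltil}\right)^{|N(v)\setminus N[x]|}\le \left(1-\frac{1}{8\ltil}\right)^{40\ltil\ln n}\le e^{-5\ln n}=n^{-5}.$$

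Next I would observe that if some $u\in N(v)\setminus N[x]$ is sampled into $T$, then since $u\notin N[x]$ we have $u\in T\setminus N[x]=T_{x}$; as $u$ is a neighbor of $v$, this gives $v\in N(T_{x})\subseteq N[T_{x}]$. Hence, for this fixed $v$, we have $v\in N[T_{x}]$ except with probability at most $n^{-5}$.

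Finally, taking a union bound over the at most $n$ vertices $v$ with $|N(v)\setminus N[x]|>40\ltil\ln n$, with probability at least $1-n^{-4}$ every such vertex belongs to $N[T_{x}]$. Equivalently, whp every $v\notin N[T_{x}]$ satisfies $|N(v)\setminus N[x]|\le 40\ltil\ln n$, which is exactly the claimed statement.

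There is no substantial obstacle here; the only thing to be careful about is matching the constant $40$ in the statement against the sampling rate $1/(8\ltil)$ so that the exponent $40/8=5$ produces a failure probability $n^{-5}$ that comfortably survives the union bound over $n$ vertices. Note also that the argument uses no property of $\ltil$ relative to $|L|$: it holds for the sampled set $T$ at the rate $1/(8\ltil)$ regardless, which is why the proposition can be stated for an arbitrary $x\in V$ rather than only for $x\in L$.
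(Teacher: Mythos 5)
Your proof is correct and follows essentially the same approach as the paper: both argue the contrapositive that any vertex $v$ with $|N(v)\setminus N[x]|>40\ltil\ln n$ has some neighbor sampled into $T_x$ except with probability at most $(1-\frac{1}{8\ltil})^{40\ltil\ln n}<n^{-5}$. You spell out the union bound over all $v$, which the paper leaves implicit in the ``whp'' claim; otherwise the arguments are identical.
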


\begin{proof}
It suffices to prove that, for any $v\in V$, if $|N(v) \setminus N[x]|>40\ltil\ln n$,
then $v$ is incident to $T_{x}$ whp. Indeed,
$v$ is not incident to $T_{x}$ is with probability at most $(1-\frac{1}{8\ltil})^{|N(v) \setminus N[x]|} < n^{-5}$. 
\end{proof}

\begin{prop}
\label{prop:avoid SL}Suppose  $|L|/4\le\ltil\le|L|$. For each  $x\in L$,
$\emptyset\neq T_{x}\subseteq R$ with constant probability.
\end{prop}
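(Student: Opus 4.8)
The plan is to split the desired event $\{\emptyset \neq T_x \subseteq R\}$ into two sub-events that depend on disjoint sets of vertices, bound each by a constant, and multiply using independence. Concretely, let $A$ be the event that $T$ contains no vertex of $(L\cup S)\setminus N[x]$, and let $B$ be the event that $T$ contains at least one vertex of $R$. I would first observe that $N[x]\subseteq L\cup S$ for $x\in L$, so $R\cap N[x]=\emptyset$; hence $R\subseteq V\setminus N[x]$, which means every vertex of $R$ that lands in $T$ also lands in $T_x=T\setminus N[x]$. Therefore $B$ forces $T_x\neq\emptyset$, while $A$ forces $T_x\subseteq R$ (since the only vertices outside $R$ that could survive in $T_x$ are those of $(L\cup S)\setminus N[x]$). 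Thus $A\cap B$ implies the proposition, and since $A$ depends only on the sampling of vertices in $(L\cup S)\setminus N[x]$ and $B$ only on the sampling of vertices in $R$ — two disjoint vertex sets — the events $A$ and $B$ are independent, so it suffices to lower bound $\Pr[A]$ and $\Pr[B]$ separately.

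For $\Pr[A]$: by \Cref{prop:outside Nx} we have $|(L\cup S)\setminus N[x]|<|L|$, and the sampling probability is $p=1/(8\ltil)\le 1/(2|L|)$ by the hypothesis $\ltil\ge|L|/4$. Hence $\Pr[A]=(1-p)^{|(L\cup S)\setminus N[x]|}\ge (1-1/(2|L|))^{|L|-1}$, which is at least a fixed positive constant (e.g.\ at least $1/2$ by Bernoulli's inequality for $|L|\ge 2$, and trivially $1$ when $|L|=1$ since then $(L\cup S)\setminus N[x]=\emptyset$). For $\Pr[B]$: using $|L|\le|R|$ and the hypothesis $\ltil\le|L|$ we get $|R|\ge\ltil$, so $\Pr[B]=1-(1-p)^{|R|}\ge 1-(1-1/(8\ltil))^{\ltil}\ge 1-e^{-1/8}$, again a fixed positive constant. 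Multiplying, $\Pr[\emptyset\neq T_x\subseteq R]\ge\Pr[A]\Pr[B]$ is bounded below by an absolute constant, as claimed.

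The only point requiring care — rather than a genuine obstacle — is getting the quantitative bounds right in the degenerate regimes: when $|L|$ is very small the exponent $|L|-1$ in $\Pr[A]$ can be zero, and one must invoke $|R|\ge\ltil$ (not merely $|R|\ge 1$) to keep $\Pr[B]$ bounded away from $0$ uniformly in the size of $R$. Recognizing that $A$ and $B$ are independent because they are determined by disjoint coordinate sets of the sampling process is the structural idea that makes the argument clean; everything else is a routine estimate of $(1-p)^{\cdot}$ against $e^{-1}$.
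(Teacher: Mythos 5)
Your proof is correct and follows essentially the same route as the paper: decompose the event into ``no vertex of $(L\cup S)\setminus N[x]$ sampled'' and ``some vertex of $R$ sampled,'' observe these are independent because the two vertex sets are disjoint, and bound each probability by a constant using $|(L\cup S)\setminus N[x]|<|L|$, $|R|\ge|L|$, and $1/(8\ltil)\in[1/(8|L|),1/(2|L|)]$. The only cosmetic difference is that you route the second bound through $|R|\ge\ltil$ rather than $|R|\ge|L|$, which yields the same $1-e^{-1/8}$ estimate.
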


\begin{proof}
Note that $\emptyset\neq T_{x}\subseteq R$ iff none of vertices from
$(L\cup S)\setminus N[x]$ is sampled to $T$ \emph{and} some vertex
from $R\setminus N[x]$ is sampled to $T$. Observe that $|(L\cup S)\setminus N[x]|<|L|$
by \Cref{prop:outside Nx} and $|R\setminus N[x]|=|R|\ge|L|$.

To rephrase the situation, we have two disjoint sets $A_{1}$ and
$A_{2}$ where $|A_{1}|<|L|$ and $|A_{2}|\ge|L|$ and each element
is sampled with probability $\frac{1}{8\ltil}\in[\frac{1}{8|L|},\frac{1}{2|L|}]$.
No element is $A_{1}$ is sampled with probability at least $(1-\frac{1}{2|L|})^{|L|}\ge0.5$.
Some element in $A_{2}$ is sampled with probability at least $1-(1-\frac{1}{8|L|})^{|L|}\ge1-e^{1/8}\ge0.1$.
As both events are independent, so they  happen simultaneously with probability at
least $0.05$. That is, $\emptyset\neq T_{x}\subseteq R$ with probability at least
$0.05$.
\end{proof}
For intuition, let us see why these observations above can be useful.
Suppose $\ltil\approx|L|$ and we can guess $x\in L$. Then, \Cref{prop:avoid SL}
says that $\emptyset\neq T_{x}\subseteq R$  with some chance. This
implies that any $(x,T_{x})$-vertex mincut must have size at most
$|S|<k$ and we so could return it as the answer of \Cref{lem:main k scratch}. 
However, directly computing a $(x,T_{x})$-vertex mincut in $G$ is
too expensive. One initial idea is to contract $T_{x}$ into a single
vertex $t_{x}$ (denoted the contracted graph by $G'_{x,T}$) and then compute
a $(x,t_{x})$-vertex mincut in the smaller graph $G'_{x,T}$. Now,
\Cref{eq:low out deg} precisely means that, for every vertex $v$
in $G'_{x,T}$ not incident to the sink $t_{x}$ and not $t_{x}$
itself, the neighbor set of $v$ outside $N[x]$ is at most $40\ltil\ln n$.

This fact that many vertices in $G'_{x,T}$ has ``degree outside $N[x]$'' at most $\ot(\ltil)$
is the key structural property used for constructing a small graph $G_{x,T}$
with $\ot(k\ltil)$ edges such that a $(x,t_{x})$-vertex mincut in
$G_{x,T}$ corresponds to a $(x,T_{x})$-vertex mincut in $G$. The graph $G_{x,T}$
fits into the notion of {\em kernel} in parameterized algorithms and hence we call it a \emph{kernel} graph.
The graph $G_{x,T}$ will be obtained from $G'_{x,T}$ by removing further edges and vertices. 

The following key lemma further
shows that, given a set $X$, we can build the kernel graph $G_{x,T}$ for each $x\in X$ in $\ot(k\ltil)$ time, which is sublinear time.
\begin{lem}[Sublinear-time Kernelization]
\label{lem:kernel}Let $G$ and $k$ be the input of \Cref{lem:main k scratch}.
Let $\ltil\le k/(100\log n)$. Let $X$ be a set of vertices. Let
$T$ be obtained by sampling each vertex with probability $1/(8\ltil)$ and $T_{x}\defeq T\setminus N[x]$ for any $x\in X$.
There is an algorithm that takes total $\ot(m+|X|k\ltil)$ time such
that, whp, for every $x\in X$, either 
\begin{itemize}
\item outputs a \emph{kernel} graph $G_{x,T}$ containing $x$ and $t_{x}$
as vertices where $|E(G_{x,T})|=O(k\ltil\log n)$ together with a vertex set $Z_{x,T}$ 
such that a set $Y$ is 
a $(x,t_{x})$-min-separator in $G_{x,T}$ if and only if $Y\cup Z_{x,T}$ is a $(x,T_{x})$-min-separator in $G$, or
\item certifies that $ T_{x} = \emptyset$ or that there is no $k$-scratch $(L,S,R)$ where $\emptyset \neq T_{x}\subseteq R$,
$\ltil\in[|L|/2,|L|]$ and $x\in L$.
\end{itemize}
\end{lem}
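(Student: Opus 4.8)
The goal is to prove \Cref{lem:kernel}: we must build, for every $x \in X$, either a small kernel graph $G_{x,T}$ (with $O(k\ltil\log n)$ edges) that faithfully encodes the $(x,T_x)$-vertex mincut of $G$, or a certificate that $x$ is not a useful choice (i.e. $T_x=\emptyset$ or no $k$-scratch with $x \in L$, $T_x \subseteq R$, $\ltil \approx |L|$ exists). The whole thing must run in $\ot(m + |X|k\ltil)$ time, which means a one-time $\ot(m)$ preprocessing followed by $\ot(k\ltil)$ work per $x$. The plan has two parts: (1) define the kernel $G_{x,T}$ combinatorially and prove the ``if and only if'' equivalence about separators, together with the edge bound; (2) give the \emph{sketchy search} implementation that constructs (a subgraph of) $G_{x,T}$ in sublinear time per $x$, using the sparse-recovery and norm-estimation sketches of \Cref{thm:recovery,thm:estimate}.

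**Part 1: defining the kernel and proving correctness.** Start from $G'_{x,T}$, the graph obtained by contracting $T_x$ into a single sink $t_x$ (discarding parallel edges; if $T_x = \emptyset$ output the certificate). Let $N_x = N_{G'_{x,T}}(x)$ and $N_t = N_{G'_{x,T}}(t_x)$. I would perform the three reductions sketched in the overview, being careful to record what goes into the side set $Z_{x,T}$: (i) set $Z_{x,T} \gets N_x \cap N_t$ and delete these common neighbors — any $(x,t_x)$-separator must contain all of them, so removing them and re-adding them preserves min-separators exactly; (ii) delete all edges internal to $N_x$ and all edges internal to $N_t$ — these never lie on an $x$–$t_x$ path through distinct intermediate vertices, so by Menger's theorem (max vertex-disjoint paths $=$ min separator) the $(x,t_x)$-vertex connectivity is unchanged, and moreover separators are unchanged; (iii) delete degree-one neighbors of $t_x$ and, more generally, restrict to the subgraph reachable from $N_x$ in $G'_{x,T}\setminus\{x,t_x\}$ (plus the edges $E(x,N_x)$ and $E(N_t,t_x)$) — vertices not reachable from $N_x$ cannot appear on any $x$–$t_x$ path. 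For the equivalence statement I would argue: a set $Y$ disjoint from $Z_{x,T}$ separates $x$ from $t_x$ in $G_{x,T}$ iff $Y \cup Z_{x,T}$ separates $x$ from $T_x$ in $G$; combined with the fact that every $(x,T_x)$-min-separator contains $Z_{x,T}$ (from reduction (i)) this upgrades to the claimed min-separator equivalence. The edge count follows the overview: partition the edges of $G_{x,T}$ into types (E1), (E2), (E3); bound the number of vertices in $N_x \cup F'$ by $O(k)$ using \Cref{prop:outside Nx} plus the $S_{\low}$-is-large hypothesis (encoded here in $\ltil \le k/(100\log n)$ and the structure of a $k$-scratch), and bound the per-vertex charge by $O(\ltil \log n)$ using \Cref{prop:low deg whp} (\Cref{eq:low out deg}). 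I would state the $|F'| = O(k)$ bound as an internal claim with the arithmetic of the overview's displayed inequality.

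**Part 2: sublinear-time construction via sketchy search.** Preprocessing ($\ot(m)$ total): fix $s = 40\ltil\ln n$ (or a slightly larger polylog multiple to absorb constants); using \Cref{thm:recovery} compute $\sketch_s(\one_{N(v)})$ and $\sketch_s(\one_{N[v]})$ for every $v$, and using \Cref{thm:estimate} compute $\sketch_{\ltwo}(\one_{N(v)})$, $\sketch_{\ltwo}(\one_{N[v]})$; also sample $T$ and build $\one_{T_x}$-membership queries. Total cost $\sum_v \ot(\deg v) = \ot(m)$. Per $x \in X$: run a BFS starting from the queue $N_x = N(x)\setminus N_G[x]$ (list it from $\one_{N(x)}$ directly in $\deg(x) \le 2k$ time — wait, we want $\ot(k\ltil)$, and $\deg(x) \le |L|+k < 2k$ is fine). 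When a vertex $v$ is dequeued, use norm estimation on $\sketch_{\ltwo}(\one_{N(v)}) - \sketch_{\ltwo}(\one_{N[x]})$ to estimate $\|\one_{N(v)} - \one_{N[x]}\|_2$, which by the identity $|N(v) \triangle N[x]| = \Theta(|N(v)\setminus N[x]| + |L|)$ (I would isolate this as \Cref{eq:sym}, using $|N[x]\setminus N(v)| \le |N[x]| $ and the two-sided comparison in the overview) tells us in $O(\log n)$ time whether $|N(v)\setminus N[x]| \le s$; if the estimate is large we declare $v \in N_t$ (correct whp by \Cref{prop:low deg whp}), record the edge $(v,t_x)$, and do not expand $v$; if the estimate is small, run sparse recovery on $\sketch_s(\one_{N(v)}) - \sketch_s(\one_{N[x]}) = \sketch_s(\one_{N(v)}-\one_{N[x]})$ to obtain $N(v)\triangle N[x] \supseteq N(v)\setminus N[x]$ in $\ot(s)$ time, check whether $t_x$-membership is among the recovered indices (i.e. whether some recovered vertex lies in $T_x$) — if so mark $v \in N_t$ and add the edge $(v,t_x)$ — and enqueue the not-yet-visited vertices of $N(v)\setminus N[x]$. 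After the BFS, add back $E(x,N_x)$ and $E(N_t, t_x)$ and set $Z_{x,T} = N_x \cap N_t$ (also detectable from the recovered sets). If at any point the total number of explored vertices or edges exceeds the $O(k\ltil\log n)$ bound — which, whp, only happens when the structural hypothesis fails — we abort for this $x$ and output the certificate; correctness of the certificate is exactly the contrapositive of the edge bound proved in Part 1.

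**Main obstacle.** The delicate part is not any single estimate but the \emph{interaction} between the randomness of $T$ and the correctness of the sublinear-time search, handled uniformly over all $x \in X$. Concretely: (i) the norm-estimation test has a two-sided error and only distinguishes $|N(v)\setminus N[x]| \le s$ from $|N(v)\setminus N[x]| \ge cs$ for some constant gap $c$, so I must set $s$ with enough slack that the ``ambiguous'' band is still cheap to resolve by sparse recovery, and must ensure the recovery sketch's sparsity threshold is chosen consistently with that band; (ii) \Cref{prop:low deg whp} is a whp statement for a \emph{fixed} $x$ — to use it for all $x \in X$ simultaneously (and across all $\ot(k\ltil)$ vertices visited) I need the failure probability $n^{-5}$ times a $\poly(n)$ union bound, which is fine, but I must state the sketch guarantees (which are also ``whp as long as the number of operations is $\poly(n)$'') in a way that composes — the total number of sketch operations is $\ot(m + |X| k\ltil) = \poly(n)$, so this is OK but needs to be said; (iii) the equivalence of min-separators in Part 1 must survive the fact that we build only a \emph{subgraph} of $G_{x,T}$ (the reachable part) — I must argue that discarding the unreachable part changes neither the value nor the set of $(x,t_x)$-min-separators, because any min-separator can be taken within the reachable part. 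Getting these three alignments right — sketch sparsity parameter, union-bound bookkeeping, and ``subgraph suffices'' — is where the real care goes; the rest is the routine arithmetic already laid out in the technical overview.
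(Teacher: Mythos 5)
Your proposal takes essentially the same route as the paper: define $G_{x,T}$ by contracting $T_x$ and applying the Identify rule (\Cref{prop:rule identify}) then the Filter rule (\Cref{prop:rule filter}); bound $|E(G_{x,T})|$ via the $|F'_{\rlx}|\le 16k$ argument (\Cref{lem:small P}); and construct $G_{x,T}$ by a BFS driven by $\ell_2$-norm estimation and $s$-sparse recovery applied to $\one_{N(v)}-\one_{N[x]}$, exactly as in \Cref{lem:list neighbor,lem:BFS}. The three ``obstacles'' you flag (sparsity calibration, uniformity of the whp guarantee, subgraph-suffices) are all resolved by the paper in the way you anticipate.

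One step that would fail as written: in your BFS you say ``check whether some recovered vertex lies in $T_x$ --- if so mark $v\in N_t$ and add the edge $(v,t_x)$ --- and enqueue the not-yet-visited vertices of $N(v)\setminus N[x]$.'' The enqueue here is unconditional, i.e., it happens even when $v$ is found to lie in $N_t$. In the paper's \Cref{alg:build GxT} the enqueue is the \emph{else} branch of the $N_t$-membership test (Step~\ref{enu:P}); that guard is load-bearing, not cosmetic. If you expand out of a vertex $v\in N_t$, the set $N(v)\setminus N[x]$ can contain vertices of $T_x$ itself, and the queue spills into $R$; then the charging argument of \Cref{prop:time BFS} (slow iterations $\le |N(x)|+\countList\le O(k)$, each spawning $\le 100\ltil\ln n$ children, giving $\ot(k\ltil)$ total) collapses, the explored set is no longer contained in $N_x\cup N_t\cup F$, and your abort condition can fire even when the structural hypothesis holds, so the certificate branch becomes incorrect. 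Make the enqueue conditional on $v\notin N_t$ and the rest of your write-up aligns with the paper's proof.
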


Below, we prove the main result of this section using the key lemma (\Cref{lem:kernel}) above. 

\paragraph{Proof of \Cref{lem:main k scratch}.}
For each $i=1,\dots,\lg(k/(100\log n))$, let $\ltil^{(i)}=2^{i}$.
Let $T^{(i,1)},\dots,T^{(i,O(\log n))}$ be independently obtained
by sampling each vertex with probability $1/(8\ltil^{(i)})$ and let
$X^{(i)}$ be a set of $O(n\log n/\ltil^{(i)})$ random vertices. We
invoke \Cref{lem:kernel} with parameters $(\ltil^{(i)},X^{(i)},T^{(i,j)})$
for each $j=1,\dots,O(\log n)$. 
For each $x\in X^{(i)}$ where the kernel graph $G_{x,T^{(i,j)}}$ is returned, we find $(x,t_{x})$-min-separator in $G_{x,T^{(i,j)}}$
by calling the maxflow subroutine and obtain a $(x,T_{x}^{(i,j)})$-min-separator
in $G$ by combining it with $Z_{x,T^{(i,j)}}$. Among all obtained $(x,T_{x}^{(i,j)})$-min-separators
(over all $i,j,x$), we return the one with minimum size as the answer
of \Cref{lem:main k scratch}.
Before returning such cut, we verify in $O(m)$ time that it is indeed a vertex cut in $G$ (as \Cref{lem:kernel} is only correct whp.). If not, we return an arbitrary vertex
cut of $G$ (e.g., $N_G(v)$ where $v$ is a minimum degree vertex). Also, if there is no graph $G_{x,T^{(i,j)}}$ returned
from \Cref{lem:kernel} at all, then we return an arbitrary vertex
cut of $G$ as well.

For correctness, it is clear that the algorithm always returns some vertex cut of $G$ with certainty. Now, suppose that $G$ has a $k$-scratch $(L,S,R)$.
Consider $i$ such that $\ltil^{(i)}\in[|L|/2,|L|]$. Then, there
exists $x\in X^{(i)}$ where $x\in L$ whp. Also, by \Cref{prop:avoid SL},
there is $j$ where $\emptyset \neq T^{(i,j)}_x\subseteq R$ whp. Therefore, a $(x,T_{x}^{(i,j)})$-min-separator
must have size less than $k$ and we must obtain it by \Cref{lem:kernel}. 

Finally, we bound the running time. As we call \Cref{lem:kernel} $O(\log^{2}n)$
times, this takes $\ot(\sum_i (m+|X_i|k\ltil^{(i)})) = \ot(m + \sum_i \frac{n}{\ltil^{(i)}} k\ltil^{(i)}) = \ot(m)$ time. 
The total size of maxflow instances is at most $\sum_{i,j}\sum_{x\in X^{(i)}}|E(G_{x,T^{(i,j)}})|=\sum_{i}\ot(k\ltil^{(i)}\cdot(n/\ltil^{(i)}))=\ot(m)$.
This completes the proof.

\paragraph{Organization of this section.}
We formally
show the existence of $G_{x,T}$ in \Cref{sec:Exist Kernel} (using
the help of reduction rules shown in \Cref{sec:reduc rule}). Next, we give efficient data structures for efficiently building each $G_{x,T}$ in \Cref{sec:Fast Kernel} and then use them to finally prove \Cref{lem:kernel} in \Cref{sec:kernel proof}.

\subsection{Reduction Rules for $(s,t)$-vertex Mincut}
\label{sec:reduc rule}

In this section, we describe a simple and generic ``reduction rules''
for reducing the instance size of the $(s,t)$-vertex mincut problem.
We will apply these rules in \Cref{sec:Exist Kernel}. Let $H=(V,E)$
be an arbitrary simple graph with source $s$ and sink $t$ where $(s,t)\notin E$.

The first rule helps us identify vertices that must be in every mincut and hence we can remove them. More specifically, we can always remove common neighbors of both source $s$ and sink $t$ and work on the smaller graph.
\begin{prop}
[Identify rule]\label{prop:rule identify}Let $H'=H\setminus N(s)\cap N(t)$.
Then, $S'$ is an $(s,t)$-min-separator in $H'$ iff $S=S'\cup(N(s)\cap N(t))$
is an $(s,t)$-min-separator in $H'$.
\end{prop}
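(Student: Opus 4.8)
The plan is to build a cardinality-shifting bijection between the $(s,t)$-separators of $H$ and those of $H'$, where the shift is the fixed quantity $|N(s)\cap N(t)|$; since min-separators are exactly the cardinality-minimizing separators, such a bijection immediately transfers minimality between the two graphs. (I read the displayed statement with the evident typo corrected: the second ``$H'$'' should be ``$H$'', since $S=S'\cup(N(s)\cap N(t))$ contains vertices deleted from $H'$ and hence cannot be a separator \emph{in} $H'$.)

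Write $C \defeq N(s)\cap N(t)$, so $H' = H\setminus C$. The first step is to observe two basic facts. (i) Every $(s,t)$-separator $S$ of $H$ satisfies $C\subseteq S$: for each $v\in C$ the triple $s,v,t$ is an $s$-$t$ path via the edges $(s,v),(v,t)\in E$, and since a separator contains neither $s$ nor $t$, it must contain $v$. (ii) $s,t\notin C$, because $(s,t)\notin E$ and $H$ is simple; hence $s,t\in V(H')$ and ``$(s,t)$-separator of $H'$'' is meaningful.

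The core step is the bijection. I would show that $S\mapsto S\setminus C$ maps $(s,t)$-separators of $H$ to $(s,t)$-separators of $H'$, with two-sided inverse $S'\mapsto S'\cup C$. For the forward direction: if $S$ is an $(s,t)$-separator of $H$ then $C\subseteq S$ by (i), so $H'\setminus(S\setminus C) = (H\setminus C)\setminus(S\setminus C) = H\setminus S$, in which $s$ and $t$ lie in distinct components; hence $S\setminus C$ is an $(s,t)$-separator of $H'$ (it avoids $s,t$ since $S$ does). For the reverse direction: if $S'$ is an $(s,t)$-separator of $H'$ then $S'\cap C=\emptyset$ (as $C\cap V(H')=\emptyset$), so $H\setminus(S'\cup C) = (H\setminus C)\setminus S' = H'\setminus S'$, again disconnecting $s$ from $t$, and $S'\cup C$ avoids $s,t$ by (ii); so $S'\cup C$ is an $(s,t)$-separator of $H$. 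These two maps are mutually inverse on the respective separator families.

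Finally, for every $(s,t)$-separator $S$ of $H$ we have $|S| = |S\setminus C| + |C|$, so the bijection decreases every cardinality by exactly $|C|$. Consequently $S$ minimizes $|S|$ over $(s,t)$-separators of $H$ iff $S\setminus C$ minimizes size over $(s,t)$-separators of $H'$; taking $S' = S\setminus C$ (equivalently $S = S'\cup C$) gives precisely the stated equivalence. I do not anticipate a genuine obstacle here; the only point requiring a moment's care is the degenerate situation in which $s$ and $t$ are already disconnected in $H'$ (so $\emptyset$ is an $(s,t)$-separator of $H'$ and $C$ is one of $H$), but the argument above applies verbatim in that case as well.
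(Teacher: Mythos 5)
Your proof is correct and rests on the same key observation as the paper's: every vertex in $N(s)\cap N(t)$ lies on the length-two path $s,v,t$ and hence must belong to every $(s,t)$-separator. The paper removes these vertices one at a time and invokes an implicit induction (``repeating for all vertices''), whereas you package the same idea as a single cardinality-shifting bijection $S\mapsto S\setminus C$; this is a somewhat tighter write-up of the same argument, and you also correctly flagged the typo in the proposition (the second ``$H'$'' should read ``$H$'').
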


\begin{proof}
Let $v\in N(s)\cap N(t)$. Observe that $v$ is contained in \emph{every}
$(s,t)$-separator in $H$. So $S'$ is an $(s,t)$-min-separator
in $H\setminus\{v\}$ iff $S'\cup\{v\}$ is an $(s,t)$-min-separator
in $H$. The claim follows by applying the same argument on another
vertex $v'\in N(s)\cap N(t)\setminus\{v\}$ in $H\setminus\{v\}$
and repeating for all vertices in $N(s)\cap N(t)$.
\end{proof}
The second rule helps us ``filter'' useless edges and vertices w.r.t.
$(s,t)$-vertex connectivity.
\begin{prop}
[Filter rule]\label{prop:rule filter}
There exists a maximum set of  $(s,t)$-vertex-disjoint paths $P_1,\dots,P_z$ in $H$ such that no path contains edges/vertices
that satisfies any of the following properties.
\begin{enumerate}[noitemsep,nolistsep]
\item an edge $e$ with both endpoints in $N(s)$ or both in $N(t)$.
\item a vertex $v$ where $t\in N(v)\subseteq N[t]$.
\item a vertex $v$ where $s$ cannot reach $v$ in $H\setminus N[t]$.
\end{enumerate}
Therefore, by maxflow-mincut theorem, the size of $(s,t)$-vertex mincut in $H$ stays the same
even after we remove these edges and vertices from $H$.
\end{prop}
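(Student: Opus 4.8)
The plan is to first establish the existence of a maximum $(s,t)$-vertex-disjoint path family avoiding all three kinds of edges/vertices, and then read off the mincut claim from Menger's theorem. For the first part I would use an extremal argument: among all maximum sets $\{P_1,\dots,P_z\}$ of $(s,t)$-vertex-disjoint paths (pairwise meeting only in $s$ and $t$; I may assume each path is simple, so a minimizer over this finite collection exists), I fix one minimizing the total number of edges $\sum_i |E(P_i)|$. I claim this family contains no forbidden edge or vertex. If it did, I would shortcut the offending path $P_i = (s=x_0,x_1,\dots,x_\ell=t)$ into a shorter simple $s$-$t$ path whose internal vertices form a subset of those of $P_i$ --- leaving the other $P_j$ untouched and hence still vertex-disjoint --- contradicting minimality.

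The three shortcuts are local: (i) if an edge $(x_j,x_{j+1})$ of $P_i$ has both endpoints in $N(s)$, then $j\ge 1$ (as $s\notin N(s)$) and the edge $(s,x_{j+1})$ exists, so I replace $P_i$ by $(s,x_{j+1},\dots,t)$; symmetrically, if both endpoints lie in $N(t)$ I splice in the edge $(x_j,t)$ and use $(s,\dots,x_j,t)$. (ii) If $v=x_j$ satisfies $t\in N(v)\subseteq N[t]$, then $x_{j-1}\in N(v)\subseteq N[t]$, and $x_{j-1}\ne t$ by position while $x_{j-1}=s$ is impossible (it would force $(s,t)\in E$), so $x_{j-1}\in N(t)$ and I replace $P_i$ by $(s,\dots,x_{j-1},t)$, which deletes at least $v$. (iii) If $v=x_j\notin N[t]$ is not reachable from $s$ in $H\setminus N[t]$, then the prefix $x_0,\dots,x_j$ cannot lie entirely in $H\setminus N[t]$ (otherwise it witnesses reachability), so some $x_i$ with $1\le i\le j-1$ lies in $N[t]$, hence in $N(t)$ (as $x_i\ne t$), and I replace $P_i$ by $(s,\dots,x_i,t)$, again deleting $v$. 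In each case the edge count strictly decreases, so the minimizer indeed avoids all three types.

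It remains to deduce the mincut statement. Let $H'$ be $H$ with all type-1 edges removed and all type-2 and type-3 vertices deleted. No type-1 edge is incident to $s$ or $t$ (since $s,t\notin N(s)\cup N(t)$), and neither $s$ nor $t$ is a type-2 or type-3 vertex, so $H'$ still contains $s$ and $t$ with $(s,t)\notin E(H')$. The extremal path family $P_1,\dots,P_z$ lies entirely inside $H'$, so $H'$ admits $z$ vertex-disjoint $s$-$t$ paths; conversely any such family in the subgraph $H'$ is one in $H$, so $H'$ admits at most $z$. Thus the maximum number of $(s,t)$-vertex-disjoint paths --- equivalently, by the maxflow-mincut theorem since $(s,t)$ is a non-edge in both graphs, the $(s,t)$-vertex mincut size --- is the same in $H$ and $H'$.

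The step I expect to be the main obstacle is the case analysis for the shortcuts: verifying in every subcase that the pivot vertex ($x_i$, $x_{j-1}$, or $x_{j+1}$) differs from both $s$ and $t$ and that the new path is strictly shorter (equivalently, that at least one internal vertex is dropped). Tied to this is pinning down the intended meaning of ``$s$ cannot reach $v$ in $H\setminus N[t]$'': I would read it as ``$v\notin N[t]$ and $v$ lies outside the connected component of $s$ in $H\setminus N[t]$'', which is precisely what makes shortcut (iii) valid and guarantees that deleting such vertices does not disconnect $t$ and thereby collapse the mincut.
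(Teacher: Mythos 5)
Your proof is correct and follows essentially the same local-shortcut strategy as the paper, but you package it more carefully as an extremal argument: you pick, among all maximum vertex-disjoint path families, one minimizing $\sum_i|E(P_i)|$, and show that any forbidden edge/vertex yields a strictly shorter replacement, contradicting minimality. The paper performs the same shortcuts but leaves termination and the fixed-point existence implicit, and it handles rule~(2) by a small indirection (observe that after applying rule~(1) a type-2 vertex $v$ has $N(v)=\{t\}$, so no simple $s$-$t$ path can pass through $v$), whereas you shortcut type-2 vertices directly via $x_{j-1}\in N(t)$. Your reading of condition~(3) as requiring $v\notin N[t]$ is the intended one and is consistent with how the rule is later used in \Cref{lem:structure GxT}; your case analysis for (i)--(iii), including the checks that the pivot vertex is neither $s$ nor $t$ (using $(s,t)\notin E$ and $s,t\notin N(s)\cup N(t)$) and that the new path is strictly shorter, is sound, as is your final application of Menger's theorem to conclude the mincut sizes agree.
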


\begin{proof}
(1): Suppose there exists $P_i = (s,\dots,u_{1},u_{2},\dots,t)$ where $(u_{1},u_{2})\in N(s)\times N(s)$.
We can replace $P_i$ with $P'_i=(s,u_{2},\dots,t)$ which is disjoint
from other paths $P_j$.   The argument is symmetric for $N(t)$.

(2): Let $v$ be such that $t\in N(v)\subseteq N[t]$. We first apply rule (1). This means that $N(v)=\{t\}$.
It is clear that there is no simple $s$-$t$ path through $v$.

(3): Suppose $v\in P_i$. There must exist $t'\in N(t)$ where
$P_i=(s,\dots,t',\dots,v,\dots,t)$ because
$s$ could not reach $v$ if $N[t]$ was removed. Then, we can replace
$P_i$ with $P'_i=(s,\dots,t',t)$ which does not contain $v$ and is still disjoint from other paths $P_j$.
\end{proof}

\subsection{Structure of Kernel $G_{x,T}$}

\label{sec:Exist Kernel}
Let $G$ and $k$ be the input of \Cref{lem:main k scratch}. Throughout
this section, we fix a vertex $x$ and a vertex set $T\neq\emptyset$.
The goal of this section is to show the existence of the graph $G_{x,T}$
as needed in \Cref{lem:kernel} and state its structural properties
which will be used later in \Cref{sec:Fast Kernel,sec:kernel proof}. 

Recall that $T_{x}\defeq T\setminus N[x]$ and also the graph $G'_{x,T}$ is obtained from $G$ by contracting
$T_{x}$ into a \emph{sink} $t_{x}$. We call $x$ a \emph{source}.
Clearly, every $(x,t_{x})$-vertex cut in $G'_{x,T}$ is a $(x,T_{x})$-vertex
cut in $G$.

Let $G_{x,T}$ be obtained from $G'_{x,T}$ by first applying Identify rule from \Cref{prop:rule identify}. Let
$Z_{x,T}=N_{G'_{x,T}}(x)\cap N_{G'_{x,T}}(t_{x})$ be the set removed
from $G'_{x,T}$ by Identify rule. We also write
$Z=Z_{x,T}$ for convenience. After removing $Z_{x,T}$, we apply
Filter rule from \Cref{prop:rule filter}. We call the resulting graph the \emph{kernel} graph $G_{x,T}$.
The reduction rules from Propositions \ref{prop:rule identify} and \ref{prop:rule filter} immediately
imply the following.
\begin{lem}
\label{lem:preserve cut}Any set $Y$ is a $(x,t_{x})$-min-separator
in $G_{x,T}$ iff $Y\cup Z_{x,T}$ is a $(x,T_{x})$-min-separator
in $G$.
\end{lem}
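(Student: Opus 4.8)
The statement to prove is \Cref{lem:preserve cut}: any set $Y$ is a $(x,t_x)$-min-separator in $G_{x,T}$ if and only if $Y \cup Z_{x,T}$ is a $(x,T_x)$-min-separator in $G$.

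The plan is to chain together the two reduction rules already established, passing through the intermediate graphs $G'_{x,T}$ and the graph obtained after applying only the Identify rule. First I would recall the setup: $G'_{x,T}$ is $G$ with $T_x$ contracted into $t_x$, so that $(x,t_x)$-vertex cuts in $G'_{x,T}$ are exactly $(x,T_x)$-vertex cuts in $G$, and in particular $(x,t_x)$-min-separators in $G'_{x,T}$ correspond bijectively to $(x,T_x)$-min-separators in $G$ (contraction of a set on one side of the cut does not change the family of separators or their sizes, since we keep the graph simple and parallel edges are irrelevant to vertex connectivity).

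Next I would apply \Cref{prop:rule identify} with $H = G'_{x,T}$, $s = x$, $t = t_x$. Note the hypothesis $(s,t)\notin E$ is satisfied because $Z_{x,T}$ consists precisely of the common neighbors, and after we are reasoning about separators we may assume $x$ and $t_x$ are non-adjacent (if they were adjacent there is no finite separator; in our application $T_x \subseteq R$ guarantees non-adjacency, but more carefully the Identify rule as stated already presupposes this). The rule gives: $S'$ is an $(x,t_x)$-min-separator in $G'_{x,T} \setminus Z_{x,T}$ iff $S' \cup Z_{x,T}$ is an $(x,t_x)$-min-separator in $G'_{x,T}$. Then I would apply \Cref{prop:rule filter} to the graph $H' := G'_{x,T}\setminus Z_{x,T}$: the Filter rule removes certain edges and vertices and, by the max-flow/min-cut argument in its proof, the value of the $(x,t_x)$-vertex mincut is unchanged, and moreover the set of min-separators is unchanged — here I should note that the removed vertices (types 2 and 3 in \Cref{prop:rule filter}) are never part of any minimum separator, since a minimum separator corresponds to a minimum vertex cut whose separator vertices all lie on minimum-length disjoint paths, or more simply: adding back a removed edge or vertex cannot decrease the connectivity and cannot force a min-separator to use it. Hence $Y$ is an $(x,t_x)$-min-separator in $G_{x,T}$ iff $Y$ is an $(x,t_x)$-min-separator in $H' = G'_{x,T}\setminus Z_{x,T}$.

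Composing the three equivalences: $Y$ is an $(x,t_x)$-min-separator in $G_{x,T}$ $\iff$ $Y$ is an $(x,t_x)$-min-separator in $G'_{x,T}\setminus Z_{x,T}$ $\iff$ $Y \cup Z_{x,T}$ is an $(x,t_x)$-min-separator in $G'_{x,T}$ $\iff$ $Y \cup Z_{x,T}$ is a $(x,T_x)$-min-separator in $G$, which is exactly the claim. The main obstacle I anticipate is the bookkeeping about \emph{min}-separators as opposed to arbitrary separators: \Cref{prop:rule filter} as literally stated only asserts that the mincut \emph{value} is preserved, so I would need to additionally argue that no min-separator in the reduced graph uses a filtered edge/vertex and, conversely, that every min-separator in $H'$ remains one after adding the edges and vertices back. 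This follows because filtering only removes things (so any separator in $G_{x,T}$ is a separator in $H'$ of the same size, and the mincut values agree), and conversely any $(x,t_x)$-separator of $H'$ is still an $(x,t_x)$-separator of $G_{x,T}$ — wait, that direction needs the removed edges to not create new $x$-$t_x$ connectivity bypassing $Y$; type-1 edges have both endpoints in $N(x)$ or both in $N(t_x)$ so any path using them can be rerouted to avoid them without touching new vertices, and type-2/3 vertices have no incident structure that creates a bypass. I would spell this out carefully but keep it brief, as the essential content is already in the proofs of Propositions \ref{prop:rule identify} and \ref{prop:rule filter}.
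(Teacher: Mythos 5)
Your proof takes essentially the same route as the paper (chain $G \to G'_{x,T} \to G'_{x,T}\setminus Z_{x,T} \to G_{x,T}$ through the Identify and Filter rules; the paper's proof is just the one-line assertion that the two reduction propositions ``immediately imply'' the lemma). Moreover, you correctly spot a gap the paper glosses over: \Cref{prop:rule filter} as stated only preserves the mincut \emph{value}, not the family of min-separators, so the ``min-separator in $G_{x,T}$ $\iff$ min-separator in $G'_{x,T}\setminus Z_{x,T}$'' step needs the extra argument you sketch. The clean way to close it is via Menger: the Filter rule produces a maximum collection of $z$ vertex-disjoint $(x,t_x)$-paths avoiding all filtered edges and vertices; any $z$-element separator must place exactly one vertex on each of these $z$ disjoint paths, so no filtered vertex can appear in a min-separator of $G'_{x,T}\setminus Z_{x,T}$ (this replaces your slightly off phrase about ``minimum-length disjoint paths''). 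Conversely, if $Y$ is a min-separator in $G_{x,T}$, the path-rerouting arguments in the proof of \Cref{prop:rule filter} show that any $(x,t_x)$-path in $(G'_{x,T}\setminus Z_{x,T})\setminus Y$ can be rerouted using only a subset of its own vertices into $G_{x,T}\setminus Y$, so $Y$ still separates upstream; since mincut values agree, $Y$ is a min-separator there too. With that tightening your argument is complete and matches the paper's intent.
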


Let us partition vertices of $G_{x,T}$ as follows. Let $N_x=N_{G_{x,T}}(x)$
be the neighborhood of source $x$. Let $N_t=N_{G_{x,T}}(t_{x})$ be the
neighbor of sink $t_{x}$. Note that $N_x$ and $N_t$ are disjoint by Identify rule.
Let $F=V(G_{x,T})\setminus(N_x \cup N_t\cup\{x,t_{x}\})$
be the rest of vertices, which is ``far'' from both $x$ and $t_x$.
By Filter rule(1), $G_{x,T}$ has no internal edges inside 
$N_x$ nor $N_t$. So edges of $G_{x,T}$ can be partitioned to
\begin{equation}
E(G_{x,T})=E_{G_{x,T}}(x,N_x)\cup E_{G_{x,T}}(N_x,F\cup N_t)\cup E_{G_{x,T}}(F,F\cup N_t)\cup E_{G_{x,T}}(N_t,t_{x}).\label{eq:edge of GxT}
\end{equation}
Below, we further characterize each part in $G_{x,T}$ in term of sets
in $G = (V,E)$. See \Cref{fig:kernelization} for illustration.  %

	\begin{figure}[!h] 
\centering
\includegraphics[width=0.6\textwidth ]{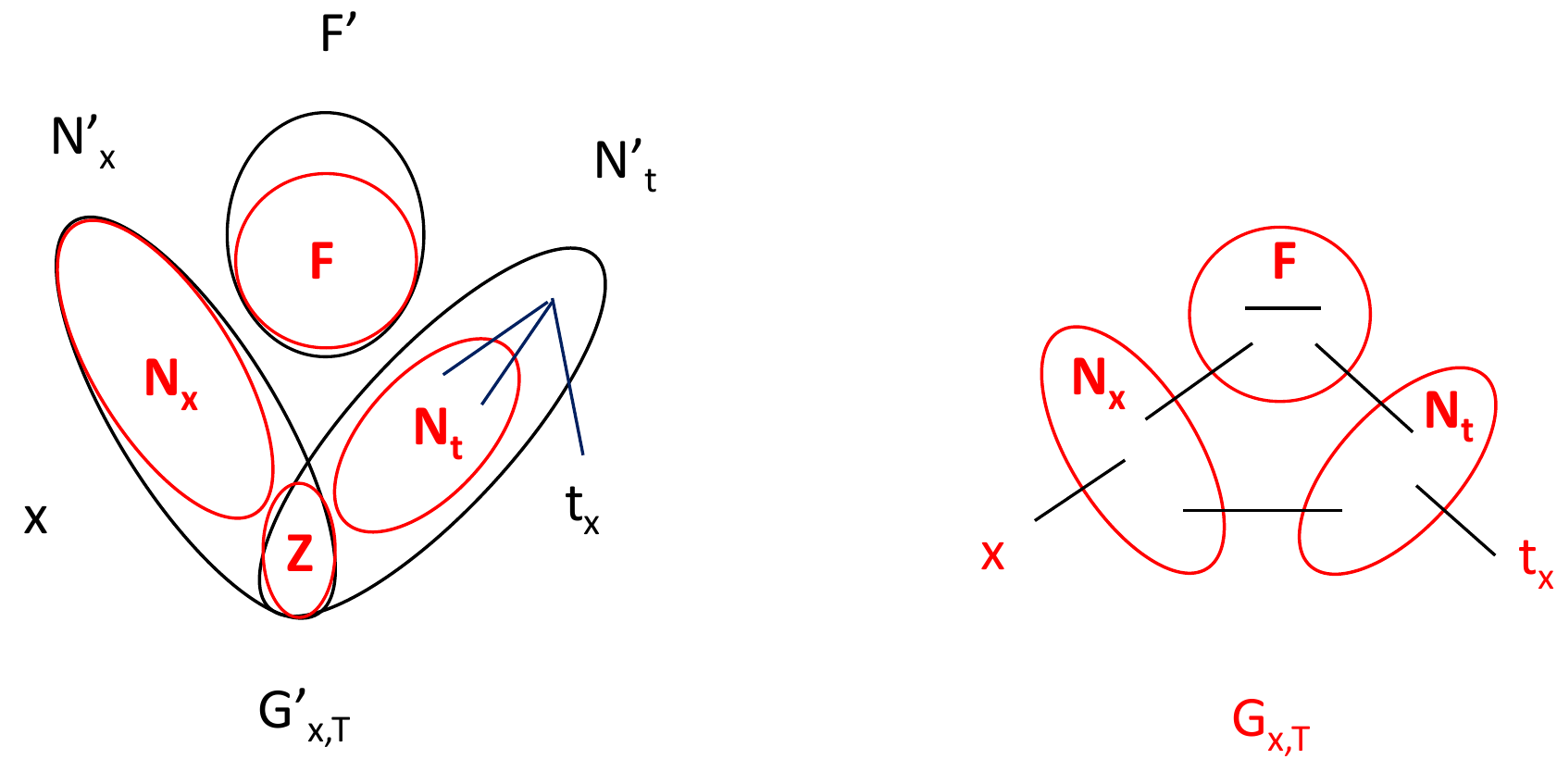} 
\caption{ $G'_{x,T}$ (left) is obtained from $G$ by contracting $T_x$. $G_{x,T}$ (right) is obtained from $G'_{x,T}$ by applying Identify rule and Filter rule, respectively.
The set $Z$ is identified using Identify rule.   
Note that $N_x = N'_x \setminus Z$. The vertices in $F' \setminus F$ cannot be reached from $N_x$ in $G'_{x,T} \setminus N_{G'_{x,T}}(t_x)$. The vertices in $N'_t \setminus (N_t \cup Z)$ have edges only to $N_t$ or $t_x$.
} %
\label{fig:kernelization}
\end{figure}

\begin{lem}
\label{lem:structure GxT}We have the following:
\begin{enumerate}
\item $Z=N(x)\cap N(T_{x})$ and $N_x=N(x)\setminus N(T_{x})$. So, $Z$ and
$N_x$ partition $N(x)$.
\item $F=\{v\in V\setminus(N[x]\cup N[T_{x}])\mid v$ is reachable from
$N_x$ in $G\setminus N[T_{x}]\}$.
\item $N_t=\{v\in N(T_{x})\setminus N[x]\mid v$ is incident to $F$ or $N_x \}$
\end{enumerate}
\end{lem}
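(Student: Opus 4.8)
The goal is to prove Lemma~\ref{lem:structure GxT}, which characterizes the three vertex classes $Z$, $F$, and $N_t$ of the kernel $G_{x,T}$ in terms of sets in the original graph $G$. The plan is to unwind the two-step construction ($G \to G'_{x,T} \to G_{x,T}$) carefully, tracking which vertices and edges survive each reduction rule.

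\textbf{Part 1 ($Z$ and $N_x$).} First I would recall that $G'_{x,T}$ is $G$ with $T_x$ contracted into $t_x$, so in $G'_{x,T}$ the neighborhood of $t_x$ is exactly $N(T_x)\setminus T_x$ and the neighborhood of $x$ is $N_G(x)$ minus anything swallowed into $t_x$, i.e.\ $N_{G'_{x,T}}(x) = N_G(x)\setminus T_x$. Since $Z = N_{G'_{x,T}}(x)\cap N_{G'_{x,T}}(t_x)$ by definition, and $t_x\notin N_G[x]$ forces $T_x\cap N_G[x]=\emptyset$, I would verify $Z = (N_G(x)\setminus T_x)\cap (N_G(T_x)\setminus T_x) = N_G(x)\cap N_G(T_x)$; here $N_G(x)\cap T_x=\emptyset$ so removing $T_x$ changes nothing on the $x$ side. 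Then $N_x = N_{G_{x,T}}(x)$; Identify rule deletes exactly $Z$, and Filter rule deletes no vertex of $N_x$ (Filter~(2) needs $N(v)\subseteq N[t_x]$, impossible for $v\in N_x$ since $v\sim x$ and $x\neq t_x$, $x\notin N[t_x]$; Filter~(3) keeps all of $N_x$ since each is directly reachable from $N_x$). Also Filter rule never deletes the edges $x$–$N_x$ or introduces isolated behavior here. So $N_x = N_{G'_{x,T}}(x)\setminus Z = N_G(x)\setminus N_G(T_x)$, and $Z\sqcup N_x = N_G(x)$.

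\textbf{Part 2 ($F$).} By definition $F = V(G_{x,T})\setminus (N_x\cup N_t\cup\{x,t_x\})$, so $F$ consists of surviving vertices outside $N_G[x]$ and outside $N_G[T_x]$ (in $G'_{x,T}$ terms, outside $N[x]\cup N[t_x]$). A vertex $v$ outside these sets survives the Identify rule automatically ($v\notin Z$), so the only filter is Filter~(3): $v$ survives iff $x$ reaches $v$ in $G'_{x,T}\setminus N[t_x]$. I would then argue that reachability in $G'_{x,T}\setminus N[t_x]$ from $x$ is the same as reachability from $N_x$ in $G\setminus N[T_x]$ restricted to vertices outside $N[x]\cup N[T_x]$: removing $N[t_x]$ from $G'_{x,T}$ corresponds to removing $N[T_x]$ from $G$ (the contracted node and its neighbors), and any path from $x$ that stays outside $N[T_x]$ must leave $x$ through some vertex of $N_x$ (its first step cannot go to $Z\subseteq N[T_x]$). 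One subtlety: Filter~(3) deletions can cascade — deleting a vertex might disconnect another — but since the rule is ``$x$ cannot reach $v$ in $H\setminus N[t]$'' evaluated in the original $H=G'_{x,T}$ after Identify, and Filter~(1) only removes edges internal to $N_x$ or $N_t$ (which don't lie on the reachability structure outside $N_x\cup N_t$ anyway), I should check the order of rule application does not matter for the far vertices; Filter~(1)/(2)/(3) are stated as simultaneous properties of one path decomposition, so this is consistent.

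\textbf{Part 3 ($N_t$).} Finally $N_t = N_{G_{x,T}}(t_x)$. In $G'_{x,T}$, $N_{G'_{x,T}}(t_x) = N_G(T_x)\setminus T_x$; Identify removes $Z = N(x)\cap N(T_x)$ from this set, leaving $N_G(T_x)\setminus N_G[x]$ candidates. Then Filter~(2) removes precisely those $v$ with $N(v)\subseteq N[t_x]$ after Filter~(1), i.e.\ those adjacent only to $t_x$ (degree-one into the sink) — equivalently $v$ with no neighbor in $F\cup N_x$; and Filter~(3) removes $v$ not reachable from $x$ off $N[t_x]$, but such a $v\in N(T_x)$ is reachable from $x$ iff it has a neighbor in $F\cup N_x$ as well (any surviving path to $v$ must enter it from a non-$N[t_x]$ vertex, which lies in $F\cup N_x$). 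Combining, the surviving elements of $N_t$ are exactly $\{v\in N_G(T_x)\setminus N_G[x] : v \text{ incident to } F \text{ or } N_x\}$. I would present this as: lower bound (such $v$ survives both filters) and upper bound (any other $v$ is killed by Filter~(2) or Filter~(3)).

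\textbf{Main obstacle.} The delicate point is the interaction and ordering of the Filter rules and the possibility of cascading deletions — e.g.\ whether a vertex in $F$ could be retroactively disconnected once some $N_t$ vertex is removed, or whether the ``incident to $F$ or $N_x$'' condition for $N_t$ is self-referential with the definition of $F$. I expect to resolve this by noting that Proposition~\ref{prop:rule filter} is phrased non-iteratively: it asserts the existence of a single maximum set of vertex-disjoint $s$–$t$ paths avoiding all three types of objects simultaneously, so $G_{x,T}$ is the subgraph induced by the union of those paths plus the terminals, and the characterizations of $F$ and $N_t$ should be read as a fixed-point description that I verify is consistent (each claimed-surviving vertex lies on some such path, each claimed-removed vertex lies on none). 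Making that consistency argument airtight — rather than the individual set manipulations, which are routine — is where the real work lies.
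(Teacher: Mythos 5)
Your Parts 1--3 unwind the two-step construction $G \to G'_{x,T} \to G_{x,T}$ rule by rule, just as the paper does, and the characterizations you derive are correct. The problem is in your concluding paragraph: the resolution you propose for the flagged ``main obstacle'' rests on a misreading of what $G_{x,T}$ is. The kernel $G_{x,T}$ is \emph{not} the subgraph induced by the union of a maximum $(x,t_x)$-vertex-disjoint path set plus the terminals; it is the graph obtained from $G'_{x,T}$ by deleting the specific vertices and edges named by the Identify and Filter rules. The path set in \Cref{prop:rule filter} is only a \emph{witness} that these deletions preserve the $(x,t_x)$-mincut value. In general $G_{x,T}$ contains vertices that lie on no maximum path set at all (for instance, a vertex of $F$ reachable from $N_x$ but sitting in a dead end with no route onward to $N_t$), so the plan to ``verify that each claimed-surviving vertex lies on some such path'' is not just unnecessary but false. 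Moreover, if $G_{x,T}$ really were a path union, building it would already require a max-flow computation, which is exactly what the kernel is designed to avoid.

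The ordering/cascade concern you raise is real but has a much lighter resolution, which is the one the paper uses (implicitly). Let $F' := V\setminus(N[x]\cup N[T_x])$ and $N''_t := N(T_x)\setminus N[x]$. (i) Identify, Filter(1), and Filter(2) never delete a vertex of $F'$: Identify removes a subset of $N(x)\cap N(T_x)$, Filter(1) removes only edges, and Filter(2) removes only neighbors of $t_x$. Hence $F$ is determined by Filter(3) alone, exactly as in your Part~2. (ii) Filter(2), evaluated after Filter(1) and Filter(3), deletes precisely those $v\in N''_t$ whose only remaining neighbor is $t_x$, i.e.\ those with no neighbor in $N_x\cup F$ (note $F$, not $F'$; this is the step where the post-(3) evaluation matters). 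You already observed that your reading of Filter(3) removes the same set of vertices from $N''_t$; this agreement is exactly why the order of (2) and (3) is immaterial. (iii) Deleting such a $v\in N[T_x]$ cannot change reachability in $G\setminus N[T_x]$, so there is no further cascade. Replacing your path-union paragraph by observations (i)--(iii) turns your Parts 1--3 into a complete proof, and it is then the same proof as the paper's.
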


\begin{proof}
(1): Observe that $N'_t \defeq  N_{G'_{x,T}}(t_{x}) = N(T_{x})$ and $N'_x \defeq N_{G'_{x,T}}(x) = N(x)$ because $G'_{x,T}$ is simply $G$ after contracting $T_{x}$.
So $Z=N(x)\cap N(T_{x})$. After removing $Z$ from $G'_{x,T}$ via Identify rule, the
remaining neighbor set of $x$ is $N(x)\setminus N(T_{x})$. Since Filter rule never further removes any neighbor
of the source $x$, we have $N_x=N(x)\setminus N(T_{x})$. 

(2): Let $F'=V\setminus(N[x]\cup N[T_{x}])$. Note that $F'$ is precisely the set of vertices in $G'_{x,T}$ that is not dominated by source $x$ or sink $t_{x}$. 
As $F$ is an analogous set for $G_{x,T}$ and $G_{x,T}$ is a subgraph of $G'_{x,T}$, we have $F\subseteq F'$. Observe that only Filter rule(3) may remove vertices from $F'$. (Identify rule and Filter rule(1,2) do not affect $F'$). Now, Filter rule(3)
precisely removes vertices in $F'$ that are not reachable from source
$x$ in $G'_{x,T}\setminus N_{G'_{x,T}}[t_{x}]$. Equivalently, it
removes those that are not reachable from $N_x$ in $G\setminus N[T_{x}]$.
Hence, the remaining part of $F'$ in $G_{x,T}$ is exactly $F$.

(3): Let $N''_t=N(T_{x})\setminus N[x]$. $N''_t$ precisely contains neighbors
of sink $t_{x}$ in $G'_{x,T}$ outside $N[x]$. As $N_t$ is the neighbor set of $t_{x}$ in $G_{x,T}$ and $Z=N(x)\cap N(T_{x})$ is removed
from $G_{x,T}$, we have that $N_t \subseteq N''_t$. Now, only Filter rule(2)
may remove vertices from $N''_t$, and it precisely removes those that
are not incident to $F$ or $N_x$. Therefore, the remaining part of $N''_t$
in $G_{x,T}$ is exactly $N_t$.
\end{proof}
Next, we show we bound the size of $|E(G_{x,T})|$. 
\begin{lem}
\label{lem:small GxT}Suppose \Cref{eq:low out deg} holds. Then, $|E(G_{x,T})|=O((k+|F|)\ltil\log n)$.
\end{lem}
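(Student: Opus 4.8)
The plan is to bound the number of edges in each of the four groups in the partition \eqref{eq:edge of GxT} separately, charging edges to endpoints and using \eqref{eq:low out deg} to control per-vertex degrees. The structural descriptions from \Cref{lem:structure GxT} are the key inputs: in particular, $N_x = N(x)\setminus N(T_x) \subseteq N[x]$, $F \subseteq V\setminus(N[x]\cup N[T_x])$, and every vertex of $F$ is reachable from $N_x$ in $G\setminus N[T_x]$, so in particular $F \cap N[T_x] = \emptyset$.

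First I would handle groups (E1) $= E_{G_{x,T}}(N_x, F\cup N_t)$ and (E2) $= E_{G_{x,T}}(F, F\cup N_t)$ together by charging each such edge to its endpoint in $N_x \cup F$. The crucial point is that every vertex $v \in N_x \cup F$ lies outside $N[T_x]$: for $v \in N_x$ this is because $N_x = N(x)\setminus N(T_x)$ and $v \neq x$ (and $v \notin N[T_x]$ since $N_x \cap N(T_x) = \emptyset$ and $v$ is not a contracted vertex); for $v \in F$ it is immediate from \Cref{lem:structure GxT}(2). Hence $v \notin N[T_x]$, so \eqref{eq:low out deg} applies and gives $|N_G(v)\setminus N_G[x]| \le 40\ltil\ln n$. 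Now every edge charged to $v$ has its other endpoint in $F\cup N_t$, and all vertices of $F\cup N_t$ lie outside $N_G[x]$ (for $F$ by \Cref{lem:structure GxT}(2), for $N_t$ by \Cref{lem:structure GxT}(3) which gives $N_t \subseteq N(T_x)\setminus N[x]$) — with the caveat that I must track whether the ``other endpoint'' could be a vertex of $G$ that got identified with something; but since $G_{x,T}$ is just $G$ with $T_x$ contracted and then some vertices/edges deleted, every vertex of $N_x\cup F\cup N_t$ corresponds to a genuine vertex of $G$ and edges among them are genuine edges of $G$. So the number of edges charged to $v$ is at most $|N_G(v)\setminus N_G[x]| \le 40\ltil\ln n$. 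Summing over the $|N_x| + |F| \le 2k + |F|$ vertices in $N_x\cup F$ (using $|N_x| \le |N(x)| < |L|+|S| < 2k$ via minimum degree $\ge k$ — wait, more directly $|N_x|\le \deg(x)$, but I want a clean bound; since we are already assuming a $k$-scratch context with $x\in L$ giving $N(x)\subseteq L\cup S$ and $|L\cup S| < |L|+k \le 2k$... actually for this lemma I should just use $|N_x| \le |N(x)|$ and note $|N(x)| = O(k)$ only holds under scratch assumptions, so more safely I bound $|N_x| + |F| = O(k + |F|)$ — let me keep it as: the number of (E1)+(E2) edges is at most $(|N_x| + |F|)\cdot 40\ltil\ln n$, and separately observe $|N_x| \le k + |L| = O(k)$), giving $O((k+|F|)\ltil\log n)$ edges in (E1)$\cup$(E2).

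For group (E3) $= E_{G_{x,T}}(x,N_x) \cup E_{G_{x,T}}(N_t,t_x)$: the edges incident to $x$ number exactly $|N_x| = O(k)$. For edges incident to $t_x$, each vertex of $N_t$ contributes exactly one such edge, so I need $|N_t| = O((k+|F|)\ltil\log n)$, or something absorbable. Here I use \Cref{lem:structure GxT}(3): every $v\in N_t$ is incident to $F$ or $N_x$ in $G_{x,T}$, i.e., $v$ is an endpoint of some edge in (E1)$\cup$(E2). Since we just bounded the total number of (E1)$\cup$(E2) edges by $O((k+|F|)\ltil\log n)$, and each such edge has at most one endpoint in $N_t$, we get $|N_t| = O((k+|F|)\ltil\log n)$. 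Combining all four groups yields $|E(G_{x,T})| = O((k+|F|)\ltil\log n)$.

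The main obstacle I anticipate is bookkeeping around the contraction: making sure that when I invoke \eqref{eq:low out deg} (a statement about $G$ and $T_x$) for a vertex $v \in N_x \cup F$, I have correctly verified $v \notin N[T_x]$ and that edges of $G_{x,T}$ incident to $v$ correspond to edges of $G$ from $v$ to $V\setminus N[x]$ — the deletion of parallel edges on contraction and the Identify/Filter deletions mean $G_{x,T}$ has only \emph{fewer} edges than the relevant slice of $G$, so inequalities go the right way, but this direction needs to be stated carefully. A secondary point is getting a clean bound $|N_x| = O(k)$; I will use $N_x \subseteq N_G(x)$ and, since the lemma is applied only when a $k$-scratch exists with $x\in L$ (so $N_G(x)\subseteq L\cup S$ and $|L\cup S| < 2k$), conclude $|N_x| < 2k$, and note that if $x\notin L$ the kernel's correctness is vacuous anyway — but to keep the statement self-contained I may instead just absorb $|N_x|$ into the $O(k)$ term by the minimum-degree-free bound $|N_x|\le |N(x)|$ and defer the $O(k)$ claim to where $x\in L$ is used. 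Everything else is routine summation.
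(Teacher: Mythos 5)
Your proof is correct and follows essentially the same route as the paper: bound the edges in $E_{G_{x,T}}(N_x,F\cup N_t)\cup E_{G_{x,T}}(F,F\cup N_t)$ by charging each to its endpoint in $N_x\cup F$ and invoking \eqref{eq:low out deg}, then bound $|E(x,N_x)|$ by $|N_x|\le |L\cup S|\le 2k$, and bound $|E(N_t,t_x)|=|N_t|$ by the total $(N_x\cup F,F\cup N_t)$-edge count using \Cref{lem:structure GxT}(3). Your extra care in verifying that each $v\in N_x\cup F$ lies outside $N[T_x]$ (so that \eqref{eq:low out deg} actually applies) is a valid and slightly more explicit version of what the paper glosses as ``for any $v\in V(G_{x,T})\setminus N_t$''; and the $|N_x|=O(k)$ bound you worried about is resolved exactly as in the paper (via $N_x\subseteq N_G(x)\subseteq L\cup S$ with $x\in L$), an implicit-but-harmless assumption that is always satisfied in the contexts where the lemma is invoked (the sketchy-search algorithm returns $\bot$ unless $|N[x]|\le k+2\ltil$).
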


\begin{proof}
We bound $|E(G_{x,T})|$ by bounding each term in \Cref{eq:edge of GxT}.
First, $|E(x,N_x)|\le|N_x|\le|L\cup S|\le2k$. Next, for any $v\in V(G_{x,T})\setminus N_t$,
we have $|E_{G_{x,T}}(v,F\cup N_t)|\le40\ltil\ln n$ by \Cref{eq:low out deg}.
So $|E_{G_{x,T}}(N_x,F\cup N_t)\cup E_{G_{x,T}}(F,F\cup N_t)|\le(|N_x|+|F|)\cdot40\ltil\ln n=O((k+|F|)\ltil\log n)$
because $|N_x|\le2k$. Lastly, each vertex in $N_t$ must have a neighbor in either $N_x$
or $F$ by \Cref{lem:structure GxT}(3). So $|E_{G_{x,T}}(N_t,t_{x})|\le|N_t|\le|E_{G_{x,T}}(N_t,N_x\cup F)|$
which can be charged to either $E_{G_{x,T}}(N_x,F\cup N_t)$ or $E_{G_{x,T}}(F,F\cup N_t)$
whose size are $O((k+|F|)\ltil\log n)$. To conclude, $|E(G_{x,T})|=O((k+|F|)\ltil\log n)$.
\end{proof}
Since $|E(G_{x,T})|$ depends on $|F|$, we will bound $|F|$ as follows. We show that the set $F'_{\rlx}=\{v\in V\setminus N[x]\mid|N(v)\setminus N[x]|\le100\ltil\log n\}$ is a superset of $F$ and then bound $|F'_{\rlx}|$.
The bound on $|F'_{\rlx}|$ will be also used later in \Cref{prop:time BFS} for proving efficiency of our algorithm.
\begin{lem}
\label{lem:small P}Suppose \Cref{eq:low out deg} holds and there
is a $k$-scratch $(L,S,R)$ where $\emptyset\neq T_{x}\subseteq R$,
$\ltil\in[|L|/2,|L|]$, and $x\in L$. Then, we have that $F\subseteq F'_{\rlx}$
and $|F'_{\rlx}|\le16k$. 
\end{lem}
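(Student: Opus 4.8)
The plan is to establish the two claims separately. For $F \subseteq F'_{\rlx}$, I would argue that every $v \in F$ has few neighbors outside $N[x]$. By Lemma \ref{lem:structure GxT}(2), $F \subseteq V \setminus (N[x] \cup N[T_x])$, so in particular $v \notin N[T_x]$. Then \Cref{eq:low out deg} (which holds by hypothesis) gives $|N(v) \setminus N[x]| \le 40\ltil\ln n \le 100\ltil\log n$, so $v \in F'_{\rlx}$. This direction is essentially immediate from the structural description we already have.

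For the bound $|F'_{\rlx}| \le 16k$, I would carry out the counting sketched in Case 3 of the overview, formalizing \eqref{eq:overview:bound F}: show that every $v \in F'_{\rlx}$ has $\Omega(|S_\low|)$ neighbors in $S_\low$, and then charge. Concretely, fix $v \in F'_{\rlx}$. We know (i) $|N_G(v) \cap (V \setminus N[x])| \le 100\ltil\log n$ by definition of $F'_{\rlx}$, so $v$ has at least $\deg(v) - 100\ltil\log n \ge k - 100\ltil\log n$ neighbors inside $N[x] \subseteq L \cup S$ (using $\deg(v) \ge k$). Next, (ii) by \Cref{prop:outside Nx}, $|(L \cup S) \setminus N[x]| < |L|$, hence all but fewer than $|L|$ vertices of $S_\low$ lie in $N(x)$; combined with $|N_G(x)| \le |L \cup S| < |L| + k$ (as $x \in L$), the set $N(x) \setminus S_\low$ has size less than $|L| + k - (|S_\low| - |L|) = k + 2|L| - |S_\low|$. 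Now $v$'s neighbors in $L \cup S$ that are \emph{not} in $S_\low$ number at most $|N(x) \setminus S_\low| + |(L\cup S)\setminus N(x)| \le (k + 2|L| - |S_\low|) + (|L| + |L|)$ roughly; so the number of neighbors of $v$ in $S_\low$ is at least
\[
k - 100\ltil\log n - \bigl(k + 2|L| - |S_\low|\bigr) - O(|L|) \;=\; |S_\low| - O(|L| + \ltil\log n).
\]
Since $\ltil \le k/(100\log n)$ and $\ltil \le |L|$, the term $\ltil\log n \le k/100$ is small, and since the $k$-scratch condition gives $|S_\low| \ge 300|L|\ln n$, the $O(|L|)$ term is dominated: we get $|N_G(v) \cap S_\low| \ge |S_\low|/2$, say. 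Finally, every edge from $v \in F'_{\rlx}$ to $S_\low$ is an edge incident to $S_\low$, and vertices in $S_\low$ have degree at most $8k$ each, so there are at most $8k|S_\low|$ such edges in total; hence $|F'_{\rlx}| \cdot (|S_\low|/2) \le 8k|S_\low|$, giving $|F'_{\rlx}| \le 16k$.

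The main obstacle is getting the constants to line up in the inequality bounding $|N_G(v) \cap S_\low|$ from below: I need the slack in "$\deg(v) - 100\ltil\log n$", the "$<|L|$" from \Cref{prop:outside Nx}, and the $|L| + k - (|S_\low| - |L|)$ bound on $N(x) \setminus S_\low$ to combine so that the negative $-k$ cancels the $+k$ and $|S_\low|$ genuinely dominates the leftover $O(|L| + \ltil\log n)$ error — this is exactly where the $k$-scratch hypotheses $|L| \le k/(100\log n)$ and $|S_\low| \ge 300|L|\ln n$ are used, and where the choice of the threshold $100\ltil\log n$ in the definition of $F'_{\rlx}$ (versus the $40\ltil\ln n$ in \Cref{eq:low out deg}) matters. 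I expect the bookkeeping to be slightly delicate but entirely routine once the right grouping of terms is fixed; no new idea beyond what the overview already lays out should be needed.
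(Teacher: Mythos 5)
Your proposal is correct and follows essentially the same route as the paper's proof: the inclusion $F \subseteq F'_{\rlx}$ comes from the structural characterization of $F$ together with \Cref{eq:low out deg}, and the bound $|F'_{\rlx}| \le 16k$ is obtained by the same charging argument, namely showing each $v \in F'_{\rlx}$ has $\ge |S_\low|/2$ neighbors in $N(x)\cap S_\low$ and then bounding $|E_G(N(x)\cap S_\low, F'_{\rlx})|$ from above by $8k|S_\low|$ via the degree cap on $S_\low$. The paper's bookkeeping is a bit tighter (it works directly with $N_\low = N(x)\cap S_\low$ rather than passing through $N(x)\setminus S_\low$ and $(L\cup S)\setminus N(x)$), but the quantities, the use of \Cref{prop:outside Nx}, and the places where the $k$-scratch hypotheses enter are all identical to what you describe.
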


\begin{proof}
Let $F'=V\setminus(N[x]\cup N[T_{x}])$, which is precisely the set
of vertices in $G'_{x,T}$ that is not dominated by source $x$ or
sink $t_{x}$. We have $F\subseteq F'$ because $G_{x,T}$ is a subgraph of $G'_{x,T}$ and we have $F'\subseteq F'_{\rlx}$ because of \Cref{eq:low out deg}.
Now, we bound $|F'_{\rlx}|$. Recall that the definition of a $k$-scratch
$(L,S,R)$ says that $|S_{\low}|\ge300|L|\ln n$ where $|S_{\low}|=\{v\in S\mid\deg(v)\le8k\}$.
Let $N_{\low}=N(x)\cap S_{\low}$. We will show that $|F'_{\rlx}|\cdot|S_{\low}|/2\le|E_{G}(N_{\low},F'_{\rlx})|\le 8k|S_{\low}|.$
This would imply that $|F'_{\rlx}|\le16k$ and complete the proof
of the claim. 

The upper bound on $|E_{G}(N_{\low},F'_{\rlx})|$ follows because
$N_{\low}\subseteq S_{\low}$ and each vertex in $S_{\low}$ has degree
at most $8k$. To prove the lower bound on $|E_{G}(N_{\low},F'_{\rlx})|$,
we will actually show that for every $v\in F'_{\rlx}$, $|E_{G}(v,N_{\low})|\ge|S_{\low}|/2$.
We have
\begin{align*}
|E_{G}(v,N_{\low})| & \ge|E_{G}(v,N(x))|-|N(x)\setminus N_{\low}|\\
 & \ge k-100\ltil\ln n-(k+|L|-|N_{\low}|)\\
 & =|N_{\low}|-100\ltil\ln n-|L|\\
 & \ge|S_{\low}|-|L|-100\ltil\ln n-|L|\\
 & \ge|S_{\low}|-102|L|\ln n\ge|S_{\low}|/2.
\end{align*}
To see the second inequality, we have $|E_{G}(v,N(x))|\ge k-100\ltil\ln n$
because $\deg_{G}(v)\ge k$ but $|E_{G}(v,V\setminus N(x))|\le100\ltil\ln n$
by definition of $F'_{\rlx}$. Also, $|N(x)|\le|L\cup S|\le k+|L|$.
The third inequality follows because $|S_{\low}|\le|N_{\low}|+|S_{\low}\setminus N_{\low}|\le|N_{\low}|+|L|$
by \Cref{prop:outside Nx} (the part of $S$ outside $N[x]$ has size
less than $L$, and so the part of $S_{\low}$ outside $N_{\low}$
has size less than $L$ as well). This completes the proof of the
claim.
\end{proof}

\subsection{Data Structures}

\label{sec:Fast Kernel}

In this section, we show fast data structures needed for proving \Cref{lem:kernel}.
Throughout this section, let $(G,k,\ltil,T)$ denote the input given
to \Cref{lem:kernel}. We treat them as global variables in this
section. Moreover, as the guarantee from \Cref{eq:low out deg} holds
whp, \textbf{we will assume that \Cref{eq:low out deg} holds in this section.}

There are two steps. First, we build an oracle that, given any vertices
$x$ and $v$, lists all neighbors of $v$ outside $N[x]$ if the
set is small. Second, given an arbitrary vertex $x$, we use this
oracle to perform a BFS-like process that allows us to gradually build
$G_{x,T}$ without having an explicit representation of $G_{x,T}$
in the beginning. We show how to solves these tasks respectively
in the subsections below. 

\subsubsection{An Oracle for Listing Neighbors Outside $N[x]$}

In this section, we show the following data structure.
\begin{lem}
[Neighbor Oracle]\label{lem:list neighbor}There is an algorithm
that preprocesses $(G=(V,E),k,\ltil)$ in $\ot(m)$ time and supports
queries $\outneigh(x,v)$ for any vertex $x$ where $|N[x]|\le k+2\ltil$
and $v\in V\setminus\{x\}$.

$\outneigh(x,v)$ either returns the neighbor set of $v$ outside
$N[x]$, i.e.~$N(v)\setminus N[x]$, in $\ot(\ltil)$ time, or report
``too big'' in $O(\log n)$ time. If $|N(v)\setminus N[x]|\le40\ltil\ln n$, then $N(v)\setminus N[x]$
is returned. If $|N(v)\setminus N[x]|>100\ltil\ln n$, then ``too
big'' is reported. Whp, every query is answered correctly.
\end{lem}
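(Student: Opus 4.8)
The plan is to build two linear sketches per vertex, one for $\one_{N(v)}$ and one for $\one_{N[v]}$, during preprocessing, and then answer $\outneigh(x,v)$ by combining sketches of $v$ with a sketch of $N[x]$, using the linearity of the sketches. Concretely, in the preprocessing phase, for every vertex $v$ I would compute the sparse-recovery sketch $\sketch_s(\one_{N(v)})$ and $\sketch_s(\one_{N[v]})$ of \Cref{thm:recovery} with sparsity parameter $s = 100\ltil\ln n$ (a $\Theta(\ltil\log n)$ bound), as well as the norm-estimation sketch $\sketch_{\ltwo}(\one_{N(v)})$ of \Cref{thm:estimate}. Computing each sketch of $v$ takes $\ot(\|\one_{N(v)}\|_0)=\ot(\deg(v))$ time, so the total preprocessing over all $v$ is $\sum_v \ot(\deg(v)) = \ot(m)$, plus the $\ot(n)+\ot(s)$ one-time setup costs of the two sketch primitives. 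I would also store the adjacency list so that $N[x]$ can be listed in $O(|N[x]|)=O(k+2\ltil)$ time when a query for $x$ arrives (this is within the $\ot(m)$ budget since it is just reading the input, and can even be computed once at query time as the query $x$ varies).

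For a query $\outneigh(x,v)$: first, using the promise $|N[x]|\le k+2\ltil$, I list $N[x]$, form the sparse vector $\one_{N[x]}$ explicitly (it has at most $k+2\ltil$ nonzero entries), and compute $\sketch_{\ltwo}(\one_{N[x]})$ in $\ot(k+\ltil)$ time — wait, that is too slow if $k$ is large. The fix is the key algebraic identity: $\one_{N(v)} - \one_{N[x]}$ has support exactly $N(v)\triangle N[x]$, and I want this sketch, but I cannot afford to touch all of $N[x]$ for every one of the $\ot(k\ltil)$ queries. Instead I note that $\one_{N[x]}$ is the \emph{same vector for all queries sharing the same $x$}, so in the BFS for a fixed $x$ I compute $\sketch_{\ltwo}(\one_{N[x]})$ and $\sketch_s(\one_{N[x]})$ \emph{once} in $\ot(k+\ltil)$ time and reuse them; this one-time cost per $x$ is absorbed into the $\ot(k\ltil)$ budget of \Cref{lem:kernel} since $k \le k\ltil$. (If the lemma is stated so that each query is self-contained, I would instead phrase the oracle as taking $x$ in a "set query" phase that precomputes these two sketches of $N[x]$.) Given $\sketch_{\ltwo}(\one_{N[x]})$, by linearity $\sketch_{\ltwo}(\one_{N(v)})-\sketch_{\ltwo}(\one_{N[x]}) = \sketch_{\ltwo}(\one_{N(v)}-\one_{N[x]})$, whose $\ell_2$-norm is a $1.1$-approximation of $\|\one_{N(v)}-\one_{N[x]}\|_2 = \sqrt{|N(v)\triangle N[x]|}$; this is computed in $O(\log n)$ time.

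Now I do a two-threshold test. Recall (formalized as \eqref{eq:sym} in the paper) that $|N(v)\triangle N[x]| = \Theta(|N(v)\setminus N[x]| + |L|)$ — more precisely $|N(v)\setminus N[x]| \le |N(v)\triangle N[x]| \le |N(v)\setminus N[x]| + |N[x]| \le |N(v)\setminus N[x]| + k + 2\ltil$, and using $\ltil \le k/(100\log n)$ one relates these to a clean threshold in terms of $\ltil$. If the estimated norm certifies $|N(v)\triangle N[x]| > s = 100\ltil\ln n$ (using the $1.1$ slack so the true value exceeds $40\ltil\ln n$, say), I report "too big" in $O(\log n)$ time; this is correct because $|N(v)\triangle N[x]| \ge |N(v)\setminus N[x]|$ so if $|N(v)\setminus N[x]| \le 40\ltil\ln n$ we would not have triggered this branch (choosing the numeric constants so the $1.1$ approximation never misclassifies the $\le 40\ltil\ln n$ case). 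Otherwise, the estimate guarantees $|N(v)\triangle N[x]| \le s$, so $\one_{N(v)}-\one_{N[x]}$ has at most $s$ nonzero entries; I compute $\sketch_s(\one_{N(v)})-\sketch_s(\one_{N[x]}) = \sketch_s(\one_{N(v)}-\one_{N[x]})$ by linearity and invoke the recovery procedure of \Cref{thm:recovery} to get all of $N(v)\triangle N[x]$ (with indices) in $\ot(s)=\ot(\ltil)$ time; intersecting with the complement of $N[x]$ (which I have as a hash set, $O(1)$ per element) yields exactly $N(v)\setminus N[x]$. The correctness of the "too big $\Rightarrow$ reported" and "$\le 40\ltil\ln n \Rightarrow$ returned" guarantees follows by chaining the norm-estimation whp bound with the symmetric-difference inequality; the "whp every query answered correctly" follows from the whp guarantees of both sketch primitives together with a union bound over the $\poly(n)$ queries (the recovery primitive explicitly allows $\poly(n)$ recoveries, and the norm-estimation primitive can be amplified by $O(\log n)$ parallel copies and a median). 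The main obstacle is the budget issue flagged above — forming $\one_{N[x]}$ and its sketches costs $\Omega(k)$, which is affordable only if amortized over the $\Omega(\ltil)$ queries per $x$ in the kernel-building BFS, or by restructuring the oracle interface so $x$ is fixed across a batch of queries; everything else is a routine combination of linearity, the two sketch theorems, and the elementary symmetric-difference size bound.
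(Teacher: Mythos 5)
Your approach is essentially the paper's: precompute linear sketches per vertex, exploit linearity to form $\sketch(\one_{N(v)}-\one_{N[x]})$, do a norm-estimation threshold test to decide ``too big'' versus ``recover,'' and use the bound $|N(v)\setminus N[x]|\le\|\one_{N(v)}-\one_{N[x]}\|_0\le 2|N(v)\setminus N[x]|+2\ltil$ to relate the symmetric-difference size to the quantity you actually want. The correctness chain (norm estimation whp, constant slack, sparse recovery whp, union bound over $\poly(n)$ queries) matches the paper.

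The ``main obstacle'' you flag, however, is self-inflicted and does not exist in the paper's version. You precompute $\sketch_s(\one_{N(v)})$, $\sketch_s(\one_{N[v]})$, and $\sketch_{\ltwo}(\one_{N(v)})$ for all $v$, but omit $\sketch_{\ltwo}(\one_{N[v]})$. If you simply add it to the preprocessing list --- as the paper does --- then $\sketch_{\ltwo}(\one_{N[x]})$ is a constant-time table lookup at query time, there is no need to ever materialize $\one_{N[x]}$, and the oracle remains self-contained per query with no amortization over a batch sharing the same $x$. The cost is still $\sum_v \ot(\deg(v))=\ot(m)$ since $\|\one_{N[v]}\|_0=\deg(v)+1$. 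Your amortization workaround (compute $\sketch(\one_{N[x]})$ once per BFS and reuse it) would also work and stays within the $\ot(k\ltil)$ budget of the calling routine, but it changes the oracle interface from what the lemma states and is strictly more complicated than the fix above. A second, smaller simplification: you propose a hash set over $N[x]$ to filter $N(v)\triangle N[x]$ down to $N(v)\setminus N[x]$, which again costs $\Omega(k)$ per $x$; but the recovered vector is $\one_{N(v)}-\one_{N[x]}\in\{-1,0,1\}^V$ and \Cref{thm:recovery} returns signed values, so the entries with value $+1$ are exactly $N(v)\setminus N[x]$ --- no auxiliary data structure on $N[x]$ is needed at all.
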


For any vertex set $V'\subseteq V$, let the indicator vector $\one_{V'}\in\{0,1\}^{V}$
of $V'$ be the vector where $\one_{V'}(u)=1$ iff $u\in V'$. In this
section, we always use sparse representation of vectors, i.e.~a list
of (index,value) of non-zero entries of the vector. 

The algorithm preprocesses as follows. Set $s\gets100\ltil\ln n$.
For every vertex $v\in V$, we compute the sketches $\sketch_{s}(\one_{N(v)})$,
$\sketch_{s}(\one_{N[v]})$, $\sketch_{\ltwo}(\one_{N(v)})$, and
$\sketch_{\ltwo}(\one_{N[v]})$ using Theorems  \ref{thm:estimate} and \ref{thm:recovery}.
Observe the following.
\begin{prop}
The preprocessing time is $\ot(m)$.
\end{prop}

\begin{proof}
Theorems \ref{thm:estimate} and \ref{thm:recovery} preprocess in  $\ot(n)$ time. The total time to compute the sketches is $\sum_{v\in V}\ot(\deg(v))=\ot(m)$.
\end{proof}
Now, given a vertex $x$ where $|N[x]|\le k+2\ltil$ and $v\in V\setminus\{x\}$,
observe that the non-zero entries of $\one_{N(v)}-\one_{N[x]}\in\{-1,0,1\}^{V}$
corresponds to the symmetric difference $(N(v)\setminus N[x])\cup(N[x]\setminus N(v))$.
We will bound the size of $N[x]\setminus N(v)$ as follows:
\begin{align}\label{eq:sym}
|N[x]\setminus N(v)|\le|N[x]|-|N[x]\cap N(v)|\le k+2\ltil-(k-|N(v)\setminus N[x]|)=|N(v)\setminus N[x]|+2\ltil
\end{align}
where the second inequality is because $k\le|N(v)|=|N(v)\cap N[x]|+|N(v)\setminus N[x]|$.
Therefore, we have
\[
|N(v)\setminus N[x]|\le\|\one_{N(v)}-\one_{N[x]}\|_{0} \overset{(\ref{eq:sym})}{\le}2|N(v)\setminus N[x]|+2\ltil.
\]
Since  $\one_{N(v)}-\one_{N[x]}\in\{-1,0,1\}^{V}$,   we have 
\begin{align} \label{eq:non-zero} |N(v)\setminus N[x]|\le\|\one_{N(v)}-\one_{N[x]}\|_{2} \le2|N(v)\setminus N[x]|+2\ltil.
\end{align} 

Now, we describe how to answer the query. First, we compute $\sketch_{\ltwo}(\one_{N(v)})-\sketch_{\ltwo}(\one_{N[x]})=\sketch_{\ltwo}(\one_{N(v)}-\one_{N[x]})$
in $O(\log n)$ time. If $\|\sketch_{\ltwo}(\one_{N(v)}-\one_{N[x]})\|_{2}>s$,
then we report ``too big''. Otherwise, we have $s\ge\|\sketch_{\ltwo}(\one_{N(v)}-\one_{N[x]})\|_{2}\ge\|\one_{N(v)}-\one_{N[x]}\|_{2}=\|\one_{N(v)}-\one_{N[x]}\|_{0}$
by \Cref{thm:estimate} and because $\one_{N(v)}-\one_{N[x]}\in\{-1,0,1\}^{V}$.
So, we can compute $\sketch_{s}(\one_{N(v)})-\sketch_{s}(\one_{N[x]})=\sketch_{s}(\one_{N(v)}-\one_{N[x]})$
and obtain the set $N(v)\setminus N[x]$ inside $(N(v)\setminus N[x])\cup(N[x]\setminus N(v))$
 using \Cref{thm:recovery} in $\ot(s)=\ot(\ltil)$ time.

To see the correctness, if $|N(v)\setminus N[x]|\le40\ltil\ln n$,
then 
\[
\|\sketch_{\ltwo}(\one_{N(v)}-\one_{N[x]})\|_{2}\le1.1\|\one_{N(v)}-\one_{N[x]}\|_{2} \overset{(\ref{eq:non-zero})}{\le}1.1\cdot(2|N(v)\setminus N[x]|+2\ltil)\le100\ltil\ln n=s.
\]
So the set $N(v)\setminus N[x]$ must be returned. If $|N(v)\setminus N[x]|>100\ltil\ln n$,  
then $$\|\sketch_{\ltwo}(\one_{N(v)}-\one_{N[x]})\|_{2}\geq \|\one_{N(v)}-\one_{N[x]}\|_{2} \overset{(\ref{eq:non-zero})}{\ge}|N(v)\setminus N[x]|>100\ltil\ln n.$$
and so ``too big'' is reported in $O(\log n)$ time.
Every query
is correct whp because of the whp guarantees from \Cref{thm:estimate,thm:recovery}. This completes the proof of \Cref{lem:list neighbor}.

\subsubsection{Building $G_{x,T}$ by Sketchy Search}

In this section, we show how to use the oracle from \Cref{lem:list neighbor}
to return the kernel graph $G_{x,T}$. 
As the oracle is based on linear sketching and we use it in a BFS-like process, we call this algorithm \emph{sketchy search}.
\begin{lem}
[Sketchy Search]\label{lem:BFS}There is an algorithm that preprocesses
$(G,k,\ltil,T)$ in $O(m)$ time and guarantees the following whp.

Given a query vertex $x\in X$, by calling the oracle from \Cref{lem:list neighbor}, return either $\bot$ or the kernel
graph $G_{x,T}$ with $O(k\ltil\log n)$ edges together with the set
$Z_{x,T}$  (defined in the beginning of \Cref{sec:Exist Kernel}) in
$\ot(k\ltil)$ time.
If $T_{x}\neq\emptyset$ and there is a $k$-scratch $(L,S,R)$ where
$T_{x}\subseteq R$, $\ltil\in[|L|/2,|L|]$, and $x\in L$,
then the algorithm must return $G_{x,T}$ and $Z_{x,T}$. 
\end{lem}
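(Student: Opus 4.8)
\textbf{Proof proposal for Lemma~\ref{lem:BFS} (Sketchy Search).}

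The plan is to implement the BFS-like exploration described in the technical overview using the Neighbor Oracle from \Cref{lem:list neighbor} as the workhorse, and to prove that (i) the explored subgraph together with the terminal edges is exactly the kernel $G_{x,T}$ guaranteed by \Cref{lem:kernel} (under the stated $k$-scratch hypothesis), and (ii) the whole process runs in $\ot(k\ltil)$ time. First I would set up the preprocessing: run the $\ot(m)$-time preprocessing of \Cref{lem:list neighbor}, and additionally precompute enough information to recognize membership in $N[x]$ and to detect when $v$ is a neighbor of the sink; concretely, for the sink-detection we note that $v\in N_t$ iff $N(v)\setminus N[x]$ intersects $T_x = T\setminus N[x]$, and the oracle already either returns $N(v)\setminus N[x]$ explicitly (so we can test intersection with $T$ directly, $T$ being stored as a hash set) or reports ``too big'', in which case by \Cref{prop:low deg whp}/\Cref{eq:low out deg} the vertex $v$ is incident to $T_x$ and hence $v\in N_t$. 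That is exactly the dichotomy we need for challenge (c2).

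Next I would describe the search itself. Initialize the queue with $N_x = N(x)\setminus N(T_x)$; to compute $N_x$ we list $N(x)$ (allowed since $|N[x]|\le k+2\ltil$ by the $k$-scratch structure, so this is cheap relative to $\ot(k\ltil)$), and for each $u\in N(x)$ call $\outneigh(x,u)$: if the oracle reports ``too big'' then $u\in N_t$ and $u$ is discarded from $N_x$ (it goes into $N_t$ via the Identify rule, contributing to $Z$ only if it is also adjacent to $t_x$ — but membership in $Z$ is decided separately by checking $u\in N(T_x)$, equivalently whether $N(u)$ hits $T_x$); otherwise $N(u)\setminus N[x]$ is returned and we check whether it contains a vertex of $T_x$. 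Then repeat: pop $v$ from the queue; if $v\in N_t$ (decided as above), stop expanding at $v$; else add the new vertices of $N(v)\setminus N[x]$ (returned by the oracle, which must succeed since $|N(v)\setminus N[x]|\le 40\ltil\ln n$ by \Cref{eq:low out deg} whenever $v\notin N[T_x]$) to the queue, recording the discovered edges. This realizes exactly the set $F$ of \Cref{lem:structure GxT}(2) as the non-sink vertices reached, and $N_t$ as the sink-neighbors incident to $N_x\cup F$ by \Cref{lem:structure GxT}(3); finally add back the edges $E(x,N_x)$ and $E(N_t,t_x)$ and the set $Z_{x,T}$ to obtain $G_{x,T}$. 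Correctness of the kernel identity is then immediate from \Cref{lem:preserve cut} once we have verified the explored graph equals $G_{x,T}$; if at any point the sink-detection fails to fire when it should, or if the oracle reports ``too big'' on more vertices than the structure permits, we can safely return $\bot$, because by \Cref{lem:small P} (which applies under the $k$-scratch hypothesis) the set $F\subseteq F'_{\rlx}$ has size $O(k)$, so the number of expanded vertices is $O(k)$ and the number of edges is $O(k\ltil\log n)$ by \Cref{lem:small GxT}; exceeding these budgets certifies that the hypothesis of \Cref{lem:kernel}'s first bullet fails.

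For the running time: the preprocessing is $O(m)$. Per query, we expand $O(k)$ vertices (all of $N_x$, of size $\le 2k$, plus all of $F$, of size $O(k)$ by \Cref{lem:small P}, plus the $O(k)$ vertices of $N_t$ with small out-degree that are expanded before being recognized as sink-neighbors — exactly the situation flagged in (c2) of the overview). Each expansion is one oracle call, costing $\ot(\ltil)$ if it returns a set and $O(\log n)$ if it reports ``too big''; each returned set has size $\ot(\ltil)$, so inserting its elements into the queue and the hash-set of discovered vertices costs $\ot(\ltil)$. Hence the total is $O(k)\cdot\ot(\ltil)=\ot(k\ltil)$, plus the $\ot(k\ltil)$ edges recorded, plus the $\le k+2\ltil$ work to list $N(x)$. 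To enforce termination even when the hypothesis fails, we simply halt and output $\bot$ the moment the number of expanded vertices exceeds the $O(k)$ bound or the number of recorded edges exceeds $O(k\ltil\log n)$; under the $k$-scratch hypothesis these caps are never hit (whp), so the promised output is produced.

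\textbf{Main obstacle.} The delicate point is the sink-detection step (c2): we must test $v\in N_t$ for \emph{every} expanded vertex, including the $\ot(k\ltil)$-many neighbors of already-expanded vertices, in $\polylog(n)$ amortized time, and we must get it right both for the ``too big'' vertices (where we infer $v\in N_t$ without ever seeing $N(v)\setminus N[x]$) and for the genuine sink-neighbors of small out-degree (where we must explicitly check $T_x\cap(N(v)\setminus N[x])\neq\emptyset$). Pinning down that these two cases are exhaustive — i.e. that a vertex with $|N(v)\setminus N[x]|\le 100\ltil\ln n$ and $N(v)\cap T_x=\emptyset$ genuinely has $v\notin N_t$, and conversely — and that the count of the second type is $O(k)$ (reusing the $|F'_{\rlx}|=O(k)$ argument of \Cref{lem:small P}, now applied to $N(T_x)\setminus N[x]$ vertices of bounded out-degree) is where the real care is needed; everything else is bookkeeping on top of \Cref{lem:list neighbor}, \Cref{lem:structure GxT}, \Cref{lem:small GxT}, and \Cref{lem:small P}.
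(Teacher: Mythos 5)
Your proposal is essentially the same BFS-via-oracle approach the paper uses, and the key ideas are all present: initialize from $N(x)$, expand via $\outneigh$, classify each visited vertex as being in $Z$, $N_x$, $N_t$, or $F$ by using the ``too big'' report (which forces membership in $N(T_x)$ by \Cref{eq:low out deg}) or by explicitly intersecting $N(v)\setminus N[x]$ with $T$, cap the number of slow expansions at $O(k)$ using the $|F'_{\rlx}|=O(k)$ bound from \Cref{lem:small P} to bail out with $\bot$ when the $k$-scratch hypothesis fails, and charge the queue growth to slow iterations to get the $\ot(k\ltil)$ bound. This is precisely the structure of \Cref{alg:build GxT} and \Cref{prop:correct for,prop:correct while,prop:time BFS,prop:bot only if bad} in the paper.

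A few places are looser than they need to be and would require tightening for a full proof. First, for $u\in N(x)$ with a ``too big'' report you write that $u\in N_t$ and is ``discarded from $N_x$''; by \Cref{lem:structure GxT}(1) such a $u$ belongs to $Z$ (it is a common neighbor of $x$ and $t_x$), and $N_t$ is by definition contained in $V\setminus N[x]$ — your parenthetical eventually routes $u$ to $Z$, but the initial claim is wrong as stated. Second, the paper guards the query in two cheap ways that you omit: it precomputes $V_{\bad}=\{v:T\subseteq N[v]\}$ so that $T_x=\emptyset$ (where $G_{x,T}$ is not even defined) is detected in $O(1)$ per query rather than relying on the cap to eventually fire, and it returns $\bot$ immediately when $|N[x]|>k+2\ltil$ so that the assumption underlying \Cref{lem:list neighbor} actually holds before any oracle call is made — your argument assumes $|N[x]|\le k+2\ltil$ ``by the $k$-scratch structure'' without checking it, which breaks when $x$ is not in $L$. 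Third, your running-time accounting blurs the distinction the paper makes between slow iterations (oracle lists a set, $\ot(\ltil)$ each, at most $O(k)$ of them by the cap) and fast iterations (oracle reports ``too big'', $O(\log n)$ each, at most $\countList\cdot 100\ltil\ln n=\ot(k\ltil)$ of them since each is a child of some slow iteration); you say ``we expand $O(k)$ vertices,'' but the queue can hold $\ot(k\ltil)$ vertices and each pop must still be charged — the bound goes through, but the two-tier accounting needs to be made explicit. None of these is a wrong idea, just bookkeeping that the final write-up must nail down.
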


The remaining part of this section is for proving \Cref{lem:BFS}. 
In the preprocessing step, we just compute $V_{\bad}=\{v\mid
T\subseteq N[v]\}$ by trivially checking if $T\subseteq N[v]$ on each
vertex $v$ using total $\sum_{v}\deg(v)=O(m)$ time. Observe that $x\in
V_{\bad}$ iff $T_{x}=\emptyset$.

  Next, if there is a $k$-scratch $(L,S,R)$ where $x\in L$ and $\ltil\in[|L|/2,|L|]$, then we must have $|N[x]|\le k+|L|\le k+2\ltil$. So, given a query vertex $x$, if $x\in V_{\bad}$ or $|N[x]|>k+2\ltil$, we can just return $\bot$. From now, we assume that $T_{x}\neq\emptyset$
and $|N[x]|\le k+2\ltil$. %

Before showing how to construct $G_{x,T}$, we recall the definitions
of $G_{x,T}$ and $Z_{x,T}$ from \Cref{sec:Exist Kernel}. \Cref{eq:edge of GxT}
says that edges of $G_{x,T}$ can be partitioned as
$$E(G_{x,T})=E_{G_{x,T}}(x,N_x)\cup E_{G_{x,T}}(N_x,F\cup N_t)\cup E_{G_{x,T}}(F,F\cup N_t)\cup E_{G_{x,T}}(N_t,t_{x})$$
where $N_x=N_{G_{x,T}}(x)$, $N_t=N_{G_{x,T}}(t_{x})$, and $F=V(G_{x,T})\setminus(N_x\cup N_t\cup\{x,t_{x}\})$.
We write $Z\defeq Z_{x,T}=N_{G'_{x,T}}(x)\cap N_{G'_{x,T}}(t_{x})$
where $G'_{x,T}$ is obtained from $G$ by contracting $T_{x}$ into a single vertex $t_{x}$. 

Our strategy is to exploit $\outneigh$ queries from \Cref{lem:list neighbor}
to perform a BFS-like process on $G_{x,T}$  that allows us to gradually
identify $G_{x,T}$ and $Z_{x,T}$ without having an explicit representation
of $G_{x,T}$ in the beginning. The algorithm initializes 
$\Ztil,\Ntil_x,\Etil_{N_x},\Ntil_t,\Ftil,\Etil_{F} =\emptyset$.
At the end of the algorithm, these sets will become
$Z,N_x, E_{G_{x,T}}(N_x,F\cup N_t),N_t,F,E_{G_{x,T}}(F,F\cup N_t)$
respectively. 

Observe that once we know all these sets we can immediately deduce  $E_{G_{x,T}}(x,N_x)$ and $E_{G_{x,T}}(N_t,t_{x})$, and hence we
obtain all parts in $E(G_{x,T})$. So we can return $G_{x,T}$ and $Z_{x,T}$ as desired. 

The algorithm has two main loops. After the first loop, $\Ztil,\Ntil_x$, and $\Etil_{N_x}$
become $Z,N_x$, and $E_{G_{x,T}}(N_x,F\cup N_t)$ respectively. After the second
loop, $\Ntil_t,\Ftil$, and $\Etil_{F}$ become $N_t,F$, and $E_{G_{x,T}}(F,F\cup N_t)$
respectively. Let $\countList=0$ initially. We use $\countList$
to count the number of times that $\outneigh(x,v)$ lists neighbors
of $v$ (not just reports ``too big''). In \Cref{alg:build GxT}, we describe this BFS-like
process in details.   %

\begin{algorithm}
\begin{enumerate}
\item \label{enu:forloop}For each $v\in N(x)$, 
\begin{enumerate}
\item Set $\textsc{visit}(v)=\textsc{true}$.
\item \label{enu:Z toobig}If $\outneigh(x,v)$ returns ``too big'', then
add $v$ to $\Ztil$.
\item Else,  $\outneigh(x,v)$ returns the set $N(v)\setminus N[x]$. 
\begin{enumerate}
\item \label{enu:Z incident}If $N(v)\setminus N[x]$ intersects $T_{x}$,
then add $v$ to $\Ztil$. 
\item \label{enu:N}Else, (1) add $v$ to $\Ntil_x$ and edges between $v$
and $N(v)\setminus N[x]$ to $\Etil_{N_x}$, and (2) add $\{w\in N(v)\setminus N[x]\mid\textsc{visit}(w)\neq\textsc{true}\}$
to $\queue$.
\end{enumerate}
\end{enumerate}
\item \label{enu:whileloop}While $\exists v\in\queue$,
\begin{enumerate}
\item Remove $v$ from $\queue$. Set $\textsc{visit}(v)=\textsc{true}$. 
\item \label{enu:D too big}If $\outneigh(x,v)$ returns ``too big'',
then add $v$ to $\Ntil_t$.
\item Else, $\outneigh(x,v)$ returns the set $N(v)\setminus N[x]$. 
\begin{enumerate}
\item $\countList\gets\countList+1$.
\item \label{enu:D incident}If $N(v)\setminus N[x]$ intersects $T_{x}$,
then add $v$ to $\Ntil_t$.
\item \label{enu:P}Else, (1) add $v$ to $\Ftil$ and edges between $v$
and $N(v)\setminus N[x]$ to $\Etil_{F}$, and (2) add $\{w\in N(v)\setminus N[x]\mid\textsc{visit}(w)\neq\textsc{true}\}$
to $\queue$.
\item \label{enu:big graph}If $\countList>16k$, return $\bot$ and
  terminate. 
\end{enumerate}
\end{enumerate}
\end{enumerate}
\caption{\label{alg:build GxT}An algorithm for building $G_{x,T}$}

\end{algorithm}

Before prove the correctness of \Cref{alg:build GxT}, we observe the
following simple fact.
\begin{fact}
\label{fact:check incident}$N(v)\setminus N[x]$ intersects $T_{x}$
iff $v$ is incident to $T_{x}$.
\end{fact}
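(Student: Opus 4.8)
The plan is to prove the biconditional directly from the definition $T_{x}\defeq T\setminus N[x]$, whose only relevant consequence is the disjointness $T_{x}\cap N[x]=\emptyset$. I do not expect any obstacle here: the statement is essentially a repackaging of this disjointness together with the meaning of the phrase ``$v$ is incident to $T_{x}$'', which I read as $N(v)\cap T_{x}\neq\emptyset$ (the graph is simple, so there are no self-loops to worry about). The content of the fact is that intersecting $T_{x}$ with $N(v)$ is the same as intersecting it with $N(v)\setminus N[x]$, precisely because $T_{x}$ lives entirely outside $N[x]$.

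For the forward direction I would assume $(N(v)\setminus N[x])\cap T_{x}\neq\emptyset$, pick a witness $u$ in this intersection, and observe that $u\in N(v)$ already shows $v$ is adjacent to a vertex of $T_{x}$, i.e. $v$ is incident to $T_{x}$. For the reverse direction I would assume $v$ is incident to $T_{x}$, pick $u\in N(v)\cap T_{x}$, and note that $u\in T_{x}\subseteq V\setminus N[x]$ forces $u\in N(v)\setminus N[x]$, so $u\in(N(v)\setminus N[x])\cap T_{x}$ and hence $N(v)\setminus N[x]$ intersects $T_{x}$. Both implications use nothing beyond $T_{x}\cap N[x]=\emptyset$, so no properties of $T$, of $x$, or of the rest of the graph enter; this is why the fact can be stated and used with no hypotheses in \Cref{alg:build GxT}, where it justifies reading off ``$v$ incident to $T_{x}$'' from the list $N(v)\setminus N[x]$ returned by $\outneigh(x,v)$.
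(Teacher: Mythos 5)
Your proof is correct and follows essentially the same approach as the paper: both rest entirely on the disjointness $T_x\cap N[x]=\emptyset$, which shows that intersecting $T_x$ against $N(v)$ is equivalent to intersecting it against $N(v)\setminus N[x]$; you simply spell out the two directions that the paper compresses into a one-line chain of equivalences.
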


\begin{proof}
As $T_{x}\cap N[x]=\emptyset$, we have $N(v)\setminus N[x]$
intersects $T_{x}$ iff $N(v)$ intersects $T_{x}$ iff $v\in N(T_{x})$.
\end{proof}
That is, the condition in Steps \ref{enu:Z incident} and \ref{enu:D incident}
is equivalent to checking if $v$ is incident to $T_{x}$. %
Now, we prove the correctness of the first loop.
\begin{prop}
\label{prop:correct for}After the for loop in Step \ref{enu:forloop},
$\Ztil$, $\Ntil_x$, and $\Etil_{N_x}$ become $Z$, $N_x$, and $E_{G_{x,T}}(N_x,F\cup N_t)$,
respectively.
\end{prop}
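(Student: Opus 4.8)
The plan is to walk through the for loop one vertex $v\in N(x)$ at a time and check it is placed in the right bucket, using two ingredients already at hand: \Cref{fact:check incident}, by which ``$N(v)\setminus N[x]$ intersects $T_{x}$'' is equivalent to ``$v\in N(T_{x})$''; and the section's blanket assumption that \eqref{eq:low out deg} holds, together with the standing hypothesis $|N[x]|\le k+2\ltil$ that makes the oracle of \Cref{lem:list neighbor} applicable. I will also use \Cref{lem:structure GxT}(1), which gives the partition $N(x)=Z\sqcup N_x$ with $Z=N(x)\cap N(T_{x})$ and $N_x=N(x)\setminus N(T_{x})$. Throughout we condition on the high-probability events that \eqref{eq:low out deg} holds and that every oracle query is answered correctly.

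First I would settle $\Ztil$ and $\Ntil_x$. Fix $v\in N(x)$. If $v\in N(T_{x})$ (i.e.\ $v\in Z$), then however $\outneigh(x,v)$ behaves --- reporting ``too big'' in Step~\ref{enu:Z toobig}, or returning $N(v)\setminus N[x]$, which by \Cref{fact:check incident} then intersects $T_{x}$ and triggers Step~\ref{enu:Z incident} --- the vertex $v$ lands in $\Ztil$ and not in $\Ntil_x$. If $v\notin N(T_{x})$ (i.e.\ $v\in N_x$), then since $v\in N(x)$ we have $v\notin T_{x}$, hence $v\notin N[T_{x}]$, so \eqref{eq:low out deg} yields $|N(v)\setminus N[x]|\le 40\ltil\ln n$; by \Cref{lem:list neighbor}, $\outneigh(x,v)$ then returns the set $N(v)\setminus N[x]$, which by \Cref{fact:check incident} does not intersect $T_{x}$, so Step~\ref{enu:N} fires and $v$ goes to $\Ntil_x$. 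Hence after the loop $\Ztil=\{v\in N(x): v\in N(T_{x})\}=Z$ and $\Ntil_x=\{v\in N(x): v\notin N(T_{x})\}=N_x$.

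It remains to identify $\Etil_{N_x}$. By the above, Step~\ref{enu:N} is executed exactly for $v\in N_x$, each time adding $\{(v,u): u\in N_G(v)\setminus N[x]\}$, so $\Etil_{N_x}=\bigcup_{v\in N_x}\{(v,u): u\in N_G(v)\setminus N[x]\}$. To match this with $E_{G_{x,T}}(N_x,F\cup N_t)$ I would fix $v\in N_x$ and trace the reduction rules: since $v\notin N[T_{x}]$, contracting $T_{x}$ leaves $v$'s edges intact; the Identify rule deletes the vertices of $Z\subseteq N(x)$ and \Cref{prop:rule filter}(1) deletes all edges inside $N_x=N(x)\setminus Z$, so together all edges from $v$ into $N(x)$ vanish; and for any $u\in N_G(v)\setminus N[x]$, \Cref{lem:structure GxT}(2)--(3) forces $u$ into $F\cup N_t$ (if $u\in N(T_{x})$ then $u\in N(T_{x})\setminus N[x]$ is incident to $v\in N_x$, so $u\in N_t$; otherwise $u\in V\setminus(N[x]\cup N[T_{x}])$ is adjacent to $v\in N_x$ with $u,v\notin N[T_{x}]$, so $u$ is reachable from $N_x$ in $G\setminus N[T_{x}]$ and $u\in F$), while the edge $(v,u)$ survives all rules (it is internal to neither $N_x$ nor $N_t$, and neither endpoint is deleted). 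Thus for each $v\in N_x$ the set $\{(v,u): u\in N_G(v)\setminus N[x]\}$ equals the set of $G_{x,T}$-edges from $v$ to $F\cup N_t$; since $N_x\cap(F\cup N_t)=\emptyset$, the union over $v\in N_x$ gives $\Etil_{N_x}=E_{G_{x,T}}(N_x,F\cup N_t)$.

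The hard part is this last step: one must verify that the local set $N_G(v)\setminus N[x]$ seen by the oracle coincides exactly with $v$'s $G_{x,T}$-neighborhood other than $x$ --- no reduction rule deletes an edge at a surviving $v\in N_x$ except those going into $N(x)$, and every surviving $G$-neighbor of $v$ outside $N[x]$ is pushed into $F\cup N_t$ (never into $Z$, $N_x$, or a deleted set). The $\Ztil$/$\Ntil_x$ case analysis, by contrast, is essentially immediate from \Cref{fact:check incident} and \eqref{eq:low out deg}.
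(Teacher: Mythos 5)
Your proof is correct and follows essentially the same approach as the paper's: you use \Cref{lem:structure GxT}(1) for the disjoint partition $N(x)=Z\sqcup N_x$, and \Cref{fact:check incident} together with \eqref{eq:low out deg} and the oracle's guarantees to place each $v\in N(x)$ in the right bucket. The only substantive difference is that where the paper simply asserts $E_G(v,V\setminus N[x])=E_{G_{x,T}}(v,F\cup N_t)$ in one line, you carefully trace the Identify/Filter rules and invoke \Cref{lem:structure GxT}(2)--(3) to verify both inclusions, which is a useful clarification but not a different route.
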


\begin{proof}
By \Cref{lem:structure GxT}(1), $N(x)=Z\dot{\cup}N_x$ where $Z=N(x)\cap N(T_{x})$
and $N_x=N(x)\setminus N(T_{x})$. After the for loop, every $v\in N(x)$
is added to either $\Ztil$ or $\Ntil_x$. If $v$ is added to $\Ztil$
in Step \ref{enu:Z incident}, then \Cref{lem:list neighbor} implies that $|N(v)\setminus N[x]|>40\ltil\ln n$
and so $v\in N(T_{x})$ by \Cref{eq:low out deg}, which means $v\in Z$.
If $v$ is added to $\Ztil$ in Step \ref{enu:Z incident}, then we
directly verify that $v\in N(T_{x})$ (see \Cref{fact:check incident})
and so $v\in Z$ again. Lastly, if $v$ is added to $\Ntil_x$ in Step
\Cref{enu:N}, then $v\notin N(T_{x})$ and so $v\in N_x$. This means
that indeed $\Ztil=Z$ and $\Ntil_x=N_x$ after the for loop. Lastly,
every time $v$ is added to $\Ntil_x$, we add $E_{G}(v,V\setminus N[x])=E_{G_{x,T}}(v,F\cup N_x)$
into $\Etil_{N_x}$. So $\Etil_{N_x}$ also collects all edges in $E_{G_{x,T}}(N_x,F\cup N_t)$
after the for loop.
\end{proof}
Next, we prove the correctness of the second loop. The proof is similar to the first one
but more complicated.

\begin{prop}
\label{prop:correct while}Suppose $\bot$ is not returned by \Cref{alg:build GxT}.
Then, at end of the while loop in Step \ref{enu:whileloop}, $\Ntil_t$,
$\Ftil$ and $\Etil_{F}$ become $N_t$, $F$ and $E_{G_{x,T}}(F,F\cup N_t)$
respectively.
\end{prop}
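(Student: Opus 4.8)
The plan is to mirror the structure of the proof of \Cref{prop:correct for}, but now tracking which vertices enter the BFS queue and proving inductively that the while loop discovers exactly the set $F$ together with exactly the sink-neighbors $N_t$ that are ``reachable'' in the relevant sense. First I would set up the right invariant: at any point during the while loop, every vertex that has ever been placed in $\queue$ lies in $F \cup N_t$, and conversely the loop will eventually pop every vertex of $F$. The forward direction of the first half follows from \Cref{lem:structure GxT}(2),(3): a vertex $w$ is added to $\queue$ only as an element of $N(v)\setminus N[x]$ for some already-processed $v \in N_x$ (from Step~\ref{enu:N}) or $v \in F$ (from Step~\ref{enu:P}); in either case $w \notin N[x]$ and $w$ is incident to $N_x \cup F$, so $w$ is reachable from $N_x$ in $G \setminus N[x]$ along a path avoiding $N[T_x]$ — hence $w \in F$ — unless $w \in N(T_x)$, in which case $w \in N_t$ by \Cref{lem:structure GxT}(3).

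Next I would classify what happens when a popped vertex $v$ is processed. As in the first loop, I use \Cref{lem:list neighbor} together with \Cref{eq:low out deg}: if $\outneigh(x,v)$ reports ``too big'' (Step~\ref{enu:D too big}) then $|N(v)\setminus N[x]| > 40\ltil\ln n$, so by \Cref{eq:low out deg} $v \in N[T_x]$, and since $v \notin N[x]$ we get $v \in N(T_x)$, i.e.\ $v \in N_t$; if instead the neighbor set is returned and intersects $T_x$ (Step~\ref{enu:D incident}), then by \Cref{fact:check incident} $v \in N(T_x)$, so again $v \in N_t$ (it is incident to $N_x \cup F$ because that is how $v$ got into the queue, so it survives Filter rule(2), invoking \Cref{lem:structure GxT}(3)); otherwise $v \notin N(T_x)$ and, being reachable from $N_x$ in $G \setminus N[T_x]$, we have $v \in F$, and it is correctly added to $\Ftil$ with its outgoing edges $E_G(v, V \setminus N[x]) = E_{G_{x,T}}(v, F \cup N_t)$ added to $\Etil_F$. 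This shows $\Ntil_t \subseteq N_t$, $\Ftil \subseteq F$, and $\Etil_F \subseteq E_{G_{x,T}}(F, F\cup N_t)$ at termination (assuming $\bot$ is not returned).

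For the reverse inclusions I would argue that the BFS is exhaustive: every vertex of $F$ is reachable from $N_x$ in $G \setminus N[T_x]$ by a path all of whose internal vertices lie in $F$ (by definition of $F$ via \Cref{lem:structure GxT}(2) and Filter rule(3) — a shortest such path cannot pass through $N[x]$, through $N_t$, or through a vertex not reachable from $N_x$), and the first loop seeds $\queue$ with every vertex of $N(v)\setminus N[x]$ for each $v \in N_x$, while the second loop, whenever it pops a vertex $v \in F$, enqueues all of $N(v)\setminus N[x]$ not yet visited. A straightforward induction on path length then shows every vertex of $F$ is eventually popped and placed in $\Ftil$, hence also every edge of $E_{G_{x,T}}(F, F\cup N_t)$ is collected, and every vertex of $N_t$ — which by \Cref{lem:structure GxT}(3) is incident to some $u \in N_x \cup F$ — is discovered as a queued vertex and, upon being popped, correctly diagnosed (via ``too big'' or intersecting $T_x$) and added to $\Ntil_t$. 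Combined with the forward inclusions this gives $\Ntil_t = N_t$, $\Ftil = F$, $\Etil_F = E_{G_{x,T}}(F, F\cup N_t)$.

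The main obstacle I anticipate is the subtle point that the algorithm processes $G_{x,T}$ \emph{implicitly}: it never builds $N(T_x)$ and only ever tests incidence to $T_x$ via $\outneigh(x,v)$ and \Cref{fact:check incident}, or infers it from the ``too big'' verdict via \Cref{eq:low out deg}. One must be careful that these two diagnostic routes between them correctly separate $F$ from $N_t$ on \emph{every} popped vertex — in particular that no vertex of $N_t$ is mistakenly treated as an $F$-vertex and used to expand the search (which would be correct anyway since such a vertex has no further relevant neighbors, but the bookkeeping must be checked), and that no vertex of $F$ is mistakenly dropped into $\Ntil_t$. The hypothesis that $\bot$ is not returned is exactly what lets us assume the $\countList \le 16k$ guard never fires, so the loop runs to genuine exhaustion; under that assumption the induction closes. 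The rest is the routine verification that the enqueue rules in Steps~\ref{enu:N} and~\ref{enu:P} never skip a relevant vertex because of the $\textsc{visit}$ flag, which I would handle by noting that a vertex marked visited has either already been fully processed or is about to be, so its neighborhood is (or will be) accounted for.
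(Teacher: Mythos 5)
Your proposal follows essentially the same strategy as the paper's proof: an induction maintaining the invariant that every queued vertex lies in $N_t \cup F$ (using \Cref{lem:structure GxT}(2),(3) for both the base case seeded by the first loop and the inductive step via path concatenation), the same case analysis on the oracle verdict (``too big'' $\Rightarrow$ $v \in N(T_x)$ via \Cref{eq:low out deg}, else direct intersection test via \Cref{fact:check incident}) to classify popped vertices into $\Ntil_t$ versus $\Ftil$, and the same exhaustiveness argument that the BFS from $N_x$ inside $G \setminus N[T_x]$ reaches all of $F$ and hence discovers all of $N_t$. The bookkeeping concerns you flag at the end are real but resolve exactly as you anticipate, and the paper handles them at the same level of implicitness (e.g.\ neither spells out that a queued $w$ cannot lie in $T_x$ itself, which follows since its enqueuing parent $v \in N_x \cup F$ avoids $N(T_x)$).
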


\begin{proof}
We will prove by induction on time that (1) $\Ntil_t \subseteq N_t$, (2)
$\Ftil\subseteq F$, and (3) if $w\in\textsc{Queue}$ at some point
of time, then $w\in N_t\cup F$.

For the base case, consider the time before the while loop is executed.
We have $\Ntil_t=\emptyset$ and $\Ftil=\emptyset$. If $w\in\queue$,
then $w\in N(v)\setminus N[x]$ for some $v\in N_x$. There are two
cases: if $w\in N(T_{x})$, then $w\in N(T_{x})\setminus N[x]$ and
$w$ is incident to $v\in N_x$, which means that $w\in N_t$ by \Cref{lem:structure GxT}(3).
Otherwise, if $w\notin N(T_{x})$, then $w\notin N[x]\cup N(T_{x})$
and $(v,w)$ is a path from $N_x$ to $w$ in $G\setminus N[T_{x}]$,
which means that $w\in F$ by \Cref{lem:structure GxT}(2). 

For the inductive step, consider that iteration where we visit $v$.
We prove the three statements below one by one. 
\begin{enumerate}
	\item Suppose $v$
	is added to $\Ntil_t$. If $v$ is added at Step \ref{enu:D too big},
	then \Cref{lem:list neighbor} implies that $|N(v)\setminus N[x]|>40\ltil\ln n$ and so $v\in N(T_{x})$
	by \Cref{eq:low out deg}. If $v$ is added at Step \ref{enu:D incident},
	then we directly verify that $v\in N(T_{x})$ (see \Cref{fact:check incident}).
	In both cases, $v\in N(T_{x})$. As $v\in N_t\cup F$ by induction,
	$v$ must be in $N_t$. So $\Ntil_t\subseteq N_t$ holds.
	\item  Suppose 	$v$ is added to $\Ftil$, which only happens at Step \ref{enu:P}.
	We directly verify that $v\notin N(T_{x})$. As $v\in N_t\cup F$ by
	induction, $v$ must be in $F$ and so $\Ftil \subseteq F$ holds.
	\item Suppose
	$w$ is added into $\queue$ at Step \ref{enu:P}. There are two cases. If $w\in N(T_{x})$,
	then $w\in N(T_{x})\setminus N[x]$ and $w$ is incident to $v\in F$,
	which means that $w\in N_t$ by \Cref{lem:structure GxT}(3). Otherwise,
	if $w\notin N(T_{x})$, then $w\notin N[x]\cup N(T_{x})$. As $v\in F$,
	there exists a path $p_{v}$ from $N_x$ to $v$ in $G\setminus N[T_{x}]$.
	Now, observe that the concatenated path $p_{w}=p_{v}\circ(v,w)$
        is a path from
	$N_x$ to $w$ in $G\setminus N[T_{x}]$. So, $w\in F$ by \Cref{lem:structure GxT}(2). In either case, we have $w\in N_t \cup F$.
\end{enumerate}

To show that $\Ntil_t=N_t$ and $\Ftil=F$ at the end, we argue that all
vertices in $N_t \cup F$ must be visited at some point. Observe that our
algorithm simulate a BFS algorithm on $G\setminus Z$ when we start
the search from vertices in $N_x$. Moreover, it never continues the
search once it reaches vertices in $N_t$. By \Cref{lem:structure GxT}(2), vertices
in $F$ are reachable from $N_x$ in $G\setminus N[T_{x}]\subseteq G\setminus Z$.
So all vertices from $F$ must be visited. Also, because $N_t \subseteq N(T_{x})\setminus N[x]$
and every vertex in $N_t$ is incident to $F$ or $N_x$, all vertices
from $N_t$ must be visited as well. This completes the proof that $\Ntil_t=N_t$
and $\Ftil=F$ at the end of the while loop. 

Finally, every time $v$ is added to $\Ftil$, we add $E_{G}(v,V\setminus N[x])=E_{G_{x,T}}(v,F\cup N_t)$
into $\Etil_{F}$. So $\Etil_{F}$ collects all edges in $E_{G_{x,T}}(F,F\cup N_t)$
after the while loop.
\end{proof}
Let $v$ be a visited vertex in some iteration of the for loop or
the while loop. We say that $v$'s iteration is \emph{fast} if $\outneigh(x,v)$
returns ``too big'', otherwise we say that $v$'s iteration is
\emph{slow}.

\begin{prop}
\label{prop:time BFS}\Cref{alg:build GxT} takes $\ot(k\ltil)$ time.
\end{prop}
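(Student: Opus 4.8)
The plan is to account separately for the three sources of cost: the preprocessing, the "fast" iterations (where $\outneigh(x,v)$ reports ``too big'' in $O(\log n)$ time), and the "slow" iterations (where $\outneigh(x,v)$ actually lists $N(v)\setminus N[x]$ in $\ot(\ltil)$ time). The preprocessing of Step~0 costs $O(m)$, but the statement of \Cref{lem:BFS} hides this under an $O(m)$ preprocessing bound; the per-query work is what must be bounded by $\ot(k\ltil)$, so I would focus only on the two main loops.

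First I would bound the number of \emph{slow} iterations. Each slow iteration increments $\countList$ (Step~2(c)i, and implicitly in the for loop the analogous count can be folded in), and Step~\ref{enu:big graph} forces termination as soon as $\countList > 16k$; in the for loop there are at most $|N(x)|\le k+2\ltil = O(k)$ iterations total regardless. So the total number of slow iterations across both loops is $O(k)$. Each slow iteration costs $\ot(\ltil)$ for the $\outneigh$ call by \Cref{lem:list neighbor}, plus $O(|N(v)\setminus N[x]|) = \ot(\ltil)$ additional bookkeeping to scan the returned list, test intersection with $T_x$, and push un-visited neighbors onto $\queue$ (here one must note $|N(v)\setminus N[x]| \le 100\ltil\ln n$ whenever the set is returned, by \Cref{lem:list neighbor}). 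Hence slow iterations contribute $O(k)\cdot\ot(\ltil) = \ot(k\ltil)$ in total.

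Next I would bound the number of \emph{fast} iterations. A fast iteration costs only $O(\log n)$. The subtle point is that a vertex is visited (and thus an iteration is spent on it) only if it was pushed onto $\queue$, and vertices are pushed onto $\queue$ only as elements of some set $N(v)\setminus N[x]$ returned during a \emph{slow} iteration. Since there are $O(k)$ slow iterations and each returns a set of size $\le 100\ltil\ln n = \ot(\ltil)$, the total number of vertices ever pushed onto $\queue$ is $O(k)\cdot\ot(\ltil) = \ot(k\ltil)$; adding the $\le|N(x)| = O(k)$ vertices of the for loop, the total number of iterations — fast or slow — is $\ot(k\ltil)$. So fast iterations contribute $\ot(k\ltil)\cdot O(\log n) = \ot(k\ltil)$.

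Finally I would assemble: preprocessing $O(m)$ (charged to \Cref{lem:BFS}'s $O(m)$ preprocessing), plus $\ot(k\ltil)$ from all slow iterations and $\ot(k\ltil)$ from all fast iterations, plus the $O(|E(G_{x,T})|) = O(k\ltil\log n)$ work to assemble the output graph from the collected sets $\Ztil, \Ntil_x, \Etil_{N_x}, \Ntil_t, \Ftil, \Etil_F$ (deducing $E_{G_{x,T}}(x,N_x)$ and $E_{G_{x,T}}(N_t,t_x)$), which is again $\ot(k\ltil)$. This gives the claimed $\ot(k\ltil)$ per-query bound. The main obstacle is the bookkeeping that ties the number of fast iterations to the number of slow iterations: one must be careful that $\queue$ never receives a vertex except through a slow iteration's returned list, so that the early-termination guard on $\countList$ (which only counts slow iterations) in fact caps the total queue size, and hence the total number of iterations, at $\ot(k\ltil)$.
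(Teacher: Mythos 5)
Your proof is correct and follows essentially the same route as the paper's: bound slow iterations by $O(k)$ (via $|N(x)|$ and the $\countList > 16k$ guard), charge each slow iteration $\ot(\ltil)$, and bound the number of fast (while-loop) iterations by observing that each such vertex was pushed onto $\queue$ during some slow iteration, of which there are $O(k)$, each pushing at most $100\ltil\ln n$ vertices. Your bookkeeping is, if anything, slightly more careful than the paper's — the paper's stated bound of $\countList\cdot 100\ltil\ln n$ on fast iterations omits the $O(k)$ slow for-loop iterations that also push children onto $\queue$ (since $\countList$ is only incremented in the while loop), whereas you explicitly fold that contribution in; the asymptotics are unaffected either way.
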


\begin{proof}

By \Cref{lem:list neighbor}, each fast iteration takes $O(\log n)$
time. 
For each slow iteration, the bottle necks are (i) listing vertices in $N(v) \setminus N[x]$, and (ii) checking if $N(v) \setminus N[x]$ intersects $T_x$.
The former takes $\ot(\ltil)$ time by \Cref{lem:list neighbor}. The latter also takes $|N(v) \setminus N[x]| = O(\ltil)$ time because we can simply check, for every $w \in N(v) \setminus N[x]$, if $w \in T_x$ which happens iff $w \in T$. 

Observe the number of slow iterations
is at most $|N(x)|+\countList\le2k+16k$ by the condition in \Cref{enu:big graph}.
So the total time on slow iterations is at most $\ot(k\ltil)$. We
claim the number of fast iterations is most $\countList\cdot100\ltil\ln n$,
which would imply that the total running time  is $\ot(k\ltil)$.

To prove that claim, we say that $w$ is a \emph{child} of $v$ if
$w$ is added to $\queue$ at $v$'s iteration. If $v$'s iteration
is fast, then $v\in \Ztil\cup \Ntil_t$ and so $v$ has no child. If $v$'s
iteration is slow, then $v$ has at most $|N(v)\setminus N[x]|\le100\ltil\ln n$
children by \Cref{lem:list neighbor}. This implies that there are
at most $\countList\cdot100\ltil\ln n$ fast iterations as desired.
\end{proof}
\begin{prop}\label{prop:bot only if bad}
If \Cref{alg:build GxT} returns $\bot$, then there is no $k$-scratch
$(L,S,R)$ where $\emptyset\neq T_{x}\subseteq R$, $\ltil\in[|L|/2,|L|]$,
and $x\in L$.
\end{prop}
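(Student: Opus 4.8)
The plan is to prove the contrapositive: assuming there \emph{is} a $k$-scratch $(L,S,R)$ with $\emptyset \neq T_x \subseteq R$, $\ltil \in [|L|/2,|L|]$ and $x \in L$, I will show that \Cref{alg:build GxT} never returns $\bot$. There are only two places where $\bot$ can be returned: (i) in the preprocessing step, if $x \in V_{\bad}$ (i.e.\ $T_x = \emptyset$) or $|N[x]| > k + 2\ltil$; and (ii) at Step~\ref{enu:big graph}, if $\countList > 16k$. So it suffices to rule out each of these three possibilities under the stated hypothesis.

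For case (i), the first sub-case $T_x = \emptyset$ is excluded by assumption. For the second sub-case, since $x \in L$ we have $N[x] \subseteq L \cup S$ by \Cref{prop:outside Nx}'s reasoning, so $|N[x]| \le |L| + |S| < |L| + k \le 2\ltil + k$ (using $\ltil \ge |L|/2$), which is exactly the threshold, so $\bot$ is not returned there either. For case (ii), I need to bound $\countList$, which counts the slow iterations of the while loop, i.e.\ the number of vertices $v$ put into $\Ftil$ at Step~\ref{enu:P}. By \Cref{prop:correct while}'s invariant (2) (which holds up to the point of any termination, since the induction proceeds one iteration at a time), every vertex ever added to $\Ftil$ lies in $F$; and more to the point, every slow while-iteration that increments $\countList$ corresponds to a vertex $v$ with $\outneigh(x,v) \neq$ ``too big'' and $N(v)\setminus N[x]$ not intersecting $T_x$, hence $|N(v)\setminus N[x]| \le 100\ltil\ln n$ and $v \notin N(T_x)$; such a $v$ lies in $F'_{\rlx} = \{v \in V \setminus N[x] \mid |N(v)\setminus N[x]| \le 100\ltil\log n\}$. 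Since no vertex is visited twice (each visited vertex is marked $\textsc{visit}(v) = \textsc{true}$), the number of such increments is at most $|F'_{\rlx}|$.

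The key step is therefore to invoke \Cref{lem:small P}: under the stated hypothesis (a $k$-scratch with $\emptyset \neq T_x \subseteq R$, $\ltil \in [|L|/2,|L|] \subseteq [|L|/2,|L|]$, $x \in L$, and \Cref{eq:low out deg} holding whp as assumed throughout this section), we get $|F'_{\rlx}| \le 16k$. Hence $\countList$ never exceeds $16k$, so the test ``$\countList > 16k$'' at Step~\ref{enu:big graph} never fires, and $\bot$ is not returned. I expect the main (minor) obstacle to be the bookkeeping of \emph{when} the invariants of \Cref{prop:correct while} are available: they are proved by induction on time, so they hold at every intermediate iteration, not merely at the end — I should state this explicitly so that the bound $\countList \le |F'_{\rlx}|$ is valid even along a run that would otherwise terminate early. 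With that care, the proof is a short chain: rule out the preprocessing $\bot$ via \Cref{prop:outside Nx}, then bound $\countList$ by $|F'_{\rlx}| \le 16k$ via \Cref{lem:small P}.
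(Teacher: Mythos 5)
Your proposal is essentially the paper's argument, wrapped in a contrapositive framing; the decisive step (bound $\countList$ by $|F'_{\rlx}|$, then invoke \Cref{lem:small P} to get $|F'_{\rlx}| \le 16k$) is exactly what the paper does. Two remarks. First, you mischaracterize when $\countList$ is incremented: you write that slow while-iterations incrementing $\countList$ are those where $N(v)\setminus N[x]$ does \emph{not} intersect $T_x$, but the increment at the step numbered (c)(i) of \Cref{alg:build GxT} occurs \emph{before} the intersection test, so $\countList$ counts every slow while-iteration, including the ones that add $v$ to $\Ntil_t$. This inaccuracy happens to be harmless: membership of $v$ in $F'_{\rlx}$ needs only $v \in V\setminus N[x]$ (true because $v$ came from $\queue$) and $|N(v)\setminus N[x]| \le 100\ltil\log n$ (true because $\outneigh$ did not report ``too big''), neither of which involves $T_x$, so the bound $\countList \le |F'_{\rlx}|$ still holds. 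But as written the sentence asserts a false property of the algorithm and should be dropped. Second, \Cref{prop:bot only if bad} concerns $\bot$ being returned \emph{by \Cref{alg:build GxT}}, whose only exit returning $\bot$ is the step checking $\countList > 16k$; the preprocessing checks $x\in V_{\bad}$ and $|N[x]| > k + 2\ltil$ occur before \Cref{alg:build GxT} is invoked and are handled separately in the proof of \Cref{lem:BFS}. Your discussion of them is correct but out of scope for this proposition.
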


\begin{proof}
Recall $F'_{\rlx}=\{v\in V\setminus N[x]\mid|N(v)\setminus N[x]|\le100\ltil\log n\}$
defined above \Cref{lem:small P}.
Observe that if $\countList$ is incremented in $v$'s iteration,
then $|N(v)\setminus N[x]|\le100\ltil\log n$ by \Cref{lem:list neighbor}. As 
 $v\in V\setminus N[x]$, we have $v\in F'_{\rlx}$. 
So, if $\countList>16k$, then $|F'_{\rlx}|>16k$.
As we assume that \Cref{eq:low out deg} holds, \Cref{lem:small P}
implies that there is no $k$-scratch $(L,S,R)$ where $\emptyset\neq T_{x}\subseteq R$,
$\ltil\in[|L|/2,|L|]$, and $x\in L$.
\end{proof}
Now, we conclude with the proof of \Cref{lem:BFS}.

\paragraph{Proof of \Cref{lem:BFS}.}

Let $(G,k,\ltil,T)$ be given. In the preprocessing step, we  compute
$V_{\bad}$ which takes $O(m)$ time.
Given a query $x \in V$, if $x\in V_{\bad}$ or $|N[x]|>k+2\ltil$,
we return $\bot$ and we are done. %
  Otherwise, we execute
\Cref{alg:build GxT} which takes $\ot(k\ltil)$ time by \Cref{prop:time
  BFS}. 
  The algorithm either returns
$\bot$ and otherwise correctly constructs all parts of $G_{x,T}$ by Propositions \ref{prop:correct for} and  \ref{prop:correct while} whp. Using these sets, we can build $G_{x,T}$
via \Cref{eq:edge of GxT} and obtain $Z_{x,T}=Z$ in $\ot(k\ltil)$ time. %
Note that $|F|=|\Ftil|\le16k$ by the condition in Step \ref{enu:big graph}.
So $|E(G_{x,T})|=O(k\ltil\log n)$ by \Cref{lem:small GxT}. 

Finally, if $T_{x}\neq\emptyset$ and there is a $k$-scratch $(L,S,R)$
where $\emptyset\neq T_{x}\subseteq R$, $\ltil\in[|L|/2,|L|]$, and
$x\in L$, then we have $x\notin V_{\bad}$ and $N[x]\le k+2\ltil$,
so $\bot$ is not returned before running \Cref{alg:build GxT}. By \Cref{prop:bot only if bad},
\Cref{alg:build GxT} cannot return $\bot$ as well. So $G_{x,T}$
and $Z_{x,T}$ must be returned.

\subsection{Proof of \Cref{lem:kernel} (Sublinear-time Kernelization)}

\label{sec:kernel proof}

Let $(G,k,\ltil,T,X)$ be given as input. We first initialize the oracle from \Cref{lem:list neighbor} and the BFS-like process from \Cref{lem:BFS}. This takes $\ot(m)$ time. For each
$x\in X$, we query $x$ to the algorithm from \Cref{lem:BFS}. \Cref{lem:BFS}
guarantees that each query takes $\ot(k\ltil)$ time and returns either
$\bot$ or $(G_{x,T},Z_{x,T})$. Therefore, the total running time
is $\ot(m+|X|k\ltil)$. 

For each query $x\in X$, \Cref{eq:low out deg} holds whp by \Cref{prop:low deg whp}. So we will
assume it and conclude the following whp. By \Cref{lem:BFS}, if $\bot$
is returned, then we can correctly certify that $T_{x}=\emptyset$
or there is no $k$-scratch $(L,S,R)$ where $\emptyset\neq T_{x}\subseteq R$,
$\ltil\in[|L|/2,|L|]$, and $x\in L$. If $(G_{x,T},Z_{x,T})$ is
returned, then we have that $|E(G_{x,T})|=O(k\ltil\log n)$. %
By \Cref{lem:preserve cut}, any set $Y$ is a $(x,t_{x})$-min-separator in $G_{x,T}$ iff
$Y\cup Z_{x,T}$ is a $(x,T_{x})$-min-separator in $G$.
as desired.

        \section{Using Isolating Cuts Lemma}\label{sec:using isolating}

We say that a vertex cut $(L,S,R)$ is a \emph{$k$-non-scratch} if it has size less than $k$ but it is not a $k$-scratch. That is,
$(L,S,R)$ is such that (1) $|S|<k$ and $|L|>k/100\log n$, or (2)
$|S|<k$, $|L|\le k/100\log n$, and $|S_{\low}| < 300|L|\ln n$. Recall
that $S_{\low}=S\cap V_{\low}$ and $V_{\low}=\{v\mid\deg(v)\le8k\}$.
In the previous section, we can report that a mincut has size less
than $k$ if a graph contains a $k$-scratch. In this section, we
solves the opposite case; we will report that a mincut has size less
than $k$ if a graph contains a $k$-non-scratch.
More formally, we prove the following.

\begin{lem}
\label{lem:main k nonscratch}There is an algorithm that, given an
undirected graph $G$ with $n$ vertices and $m$ edges and a parameter
$k$ where $m\le nk$, returns a vertex cut $(L,S,R)$ in $G$. If
$G$ has a $k$-non-scratch, then $|S|<k$ w.h.p. The algorithm makes
$s$-$t$ maxflow calls on unit-vertex-capacity graphs with $O(m\log^5 n)$
total number of vertices and edges and takes $\ot(m)$ additional
time.
\end{lem}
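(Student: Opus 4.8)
\textbf{Proof plan for \Cref{lem:main k nonscratch}.}

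The plan is to reduce the $k$-non-scratch case to the vertex-cut version of the isolating cuts lemma, following the two sub-cases in the definition of a $k$-non-scratch. In both sub-cases the core idea is the same: produce a random terminal set $T$ that, with probability at least $1/\polylog(n)$, contains exactly one vertex $x$ of $L$, no vertex of $S$, and all its other vertices in $R$; then an $(x, T\setminus\{x\})$-vertex mincut is a vertex cut of size $|S| < k$. We repeat the sampling $\polylog(n)$ times so that the good event occurs whp, and we feed each sampled $T$ into the isolating-cut routine, which by the adaptation of Li--Panigrahi~\cite{LiP20deterministic} to vertex cuts (using submodularity of the separator function) returns, for each $t \in T$, the smallest vertex cut separating $t$ from $T\setminus\{t\}$, using maxflow calls on graphs of total size $\ot(m)$. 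Among all cuts returned over all trials we output the one with the fewest separator vertices; if it turns out not to be a genuine vertex cut (the algorithm is only correct whp) we fall back to an arbitrary vertex cut such as $N_G(v)$ for a minimum-degree vertex $v$, so that the algorithm always returns \emph{some} vertex cut with certainty.

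First I would handle sub-case (1), $|L| > k/(100\log n)$. Here I sample each vertex of $V$ independently with probability $1/k'$ for an appropriate guess $k' = \Theta(k)$ of $k$ (or just use $p = c\log n / k$ so that $\Pr[|T\cap L| \ge 1]$ is constant while $\Pr[T \cap S = \emptyset]$ is at least a constant, using $|S| < k$ and $|L| = \Omega(k/\log n)$; one has to be slightly careful and either guess $|L|$ up to a factor of $2$ or accept a $\polylog$-factor loss in success probability, which is fine). Conditioned on $T\cap S = \emptyset$ and $T\cap L \neq \emptyset$, the partition of $T$ between $L$ and $R$ certifies that any terminal $x \in T\cap L$ has an $(x, T\setminus\{x\})$ vertex mincut of size $\le |S| < k$, because $(L,S,R)$ itself separates $x$ from $T\setminus\{x\}$. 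The isolating cuts lemma returns exactly such a cut for that $x$ (among others), so the minimum over all returned cuts has size $< k$. Since we do not know which trial is good, the whp guarantee comes from $\polylog(n)$ independent repetitions. The size bound $O(m\log^5 n)$ on the cumulative maxflow instances follows from the $O(m\log|T|) = \ot(m)$ bound per invocation times the $\polylog(n)$ repetitions (and over the $O(\log n)$ guesses of $|L|$, if used).

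Next I would handle sub-case (2), $|L| \le k/(100\log n)$ and $|S_{\low}| < 300|L|\ln n$. The key point, already stated in the overview, is that every $x \in L$ has $\deg(x) \le |L| + k < 8k$, so $L \subseteq V_{\low}$; moreover by the helper claim $|R \cap V_{\low}| \ge |L|$ (one shows this by a counting/degree argument: the low-degree side is ``witnessed'' symmetrically, cf.\ \Cref{prop:outside Nx}-style reasoning), and $|S \cap V_{\low}| = |S_{\low}| < 300|L|\ln n$. Therefore if we sample each vertex of $V_{\low}$ (not $V$) with probability $\Theta(1/(|L|\polylog n))$ — again guessing $|L|$ up to a factor of $2$ via $O(\log n)$ values $\ltil = 2^i$ — then with probability $1/\polylog(n)$ the sample $T \subseteq V_{\low}$ contains exactly one vertex of $L$, no vertex of $S_{\low} = S\cap V_{\low}$, hence no vertex of $S$ at all, and some vertices of $R\cap V_{\low}$; the computation is analogous to \Cref{prop:avoid SL}, using $|S_{\low}| < 300|L|\ln n$ to keep the ``avoid $S$'' event at probability $\ge 1/\polylog(n)$ and $|R\cap V_{\low}| \ge |L|$ to keep the ``hit $L$ vs.\ $R$'' events balanced. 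From there the argument is identical to sub-case (1): run the vertex isolating-cut routine on $G$ with terminal set $T$, collect the returned cuts, and output the smallest. Correctness whp follows from $\polylog(n)$ repetitions across the $O(\log n)$ guesses of $\ltil$, and the running time and total maxflow-instance size are $\ot(m)$ and $O(m\log^5 n)$ respectively, using $m \le nk$ only to absorb lower-order terms.

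The main obstacle I anticipate is establishing the vertex-cut version of the isolating cuts lemma with the claimed $\ot(m)$ total maxflow size: the original statement is for edge cuts, and adapting it requires a submodularity argument on the function $f(A) = |N_G(A)|$ (the neighborhood/separator function) together with the standard uncrossing that shows the ``isolating'' cuts can be computed via a divide-and-conquer over the terminals, each level contributing $O(m)$ to the total instance size and $O(\log|T|)$ levels. The second, more minor, obstacle is the bookkeeping around not knowing $|L|$: one must verify that guessing $\ltil \in \{2^i\}$ and repeating $O(\log n)$ times per guess, each $\polylog(n)$ times, still fits within the $O(m\log^5 n)$ budget — this is a direct but slightly tedious accounting. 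Everything else (the helper claim $|R\cap V_{\low}| \ge |L|$, the probability computations mirroring \Cref{prop:avoid SL}, and the fall-back-to-arbitrary-cut wrapper that guarantees correctness with certainty) is routine.
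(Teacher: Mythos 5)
Your overall plan matches the paper's: split into the two sub-cases of a $k$-non-scratch, sample a terminal set from $V$ (sub-case 1) or from $V_{\low}$ (sub-case 2) so that with probability $1/\polylog(n)$ the sample contains exactly one vertex of $L$, none of $S$, and at least one of $R$, and then feed the sample into a vertex-cut version of the isolating cuts lemma. The probability computations, the helper fact $|R\cap V_{\low}|\ge|L|$, the $O(\log n)$ guesses of $|L|$, the $\polylog(n)$ repetitions, and the fall-back-to-an-arbitrary-cut wrapper are all present in the paper.

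However, there is a genuine gap: the vertex-cut isolating cuts lemma (\Cref{lem:isolator}) requires its terminal set $I$ to be an \emph{independent set}, and your sampled set $T$ generally is not. The independence is not a cosmetic hypothesis — it is needed so that for every bit-partition $(A_i,B_i)$ of the terminals, the set $V\setminus(A_i\cup B_i)$ is an $(A_i,B_i)$-vertex-separator, which guarantees that the $(A_i,B_i)$-min-separator the algorithm is asked to compute actually exists. More basically, if two terminals $t,t'\in T$ are adjacent, then no $(t,T\setminus\{t\})$-vertex-separator exists at all, so the statement ``returns for each $t\in T$ the smallest vertex cut separating $t$ from $T\setminus\{t\}$'' is false as you wrote it. (This is a genuine difference from the edge-cut setting of Li--Panigrahi, where no such independence assumption is needed, which may be why you overlooked it.) The paper fixes this in \Cref{prop:isolate MIS as well}: it replaces $T$ by a maximal independent set $I$ of $T$ and shows that the good event survives the replacement — the isolated vertex $x\in L\cap T$ has $N[x]\subseteq L\cup S$, hence is not adjacent to any other vertex of $T$ (since $T\cap L=\{x\}$ and $T\cap S=\emptyset$) and so must lie in $I$; and since $S\cap T=\emptyset$, any vertex of $R\cap T$ dropped when forming $I$ is dropped only because of another vertex of $R\cap T$, so $R\cap I\neq\emptyset$. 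Without this MIS step your algorithm is not well-defined, so you should add it.

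A smaller point: your suggested sampling rate $p=c\log n/k$ for sub-case (1) does not by itself give $\Pr[|T\cap L|=1]=\Omega(1/\polylog n)$ uniformly over $|L|\in(k/(100\log n),\,n]$, because $|L|p$ can be as large as $\Theta(n\log n/k)$, making it overwhelmingly likely to hit $L$ more than once. You do flag that one should instead guess $|L|$ up to a factor of $2$; the paper makes this precise by choosing $p=1/2^i$ with $p(2|L|+|S|)\in[1,2]$ (and similarly with $|S_{\low}|$ in sub-case 2), which is the version you should actually use.
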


Note that the lemma above only applies on graphs with at most $nk$ edges, but
we can easily and will ensure this when we use the lemma in \Cref{sec:together}.
The rest of this section is for proving \Cref{lem:main k nonscratch}.
The key tool in this section is the \emph{isolating cuts lemma} which was introduced in \cite{LiP20deterministic}. We show how to adapt it for vertex connectivity as follows.
\begin{lem}
[Isolating Cuts Lemma] \label{lem:isolator}There exists an algorithm
that takes as inputs $G=(V,E)$ and an independent set $I\subset V$
of size at least $2$, and outputs, for each vertex $v\in I$, a $(v,I\setminus v)$-min-separator
$C_{v}$. The algorithm makes $s$-$t$ maxflow calls on unit-vertex-capacity
graphs with $O(m\log|I|)$ total number of vertices and edges and
takes $O(m)$ additional time.
\end{lem}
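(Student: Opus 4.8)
The plan is to mimic the edge-version isolating cuts argument of Li--Panigrahi, but replace every use of submodularity of the cut function for edge cuts by the analogous submodularity/uncrossing property of vertex separators. Recall the standard fact: if $(L_1,S_1,R_1)$ and $(L_2,S_2,R_2)$ are vertex cuts, then there are ``uncrossed'' vertex cuts separating $L_1\cap L_2$ (resp.\ $L_1\cup L_2$) from the complementary sides, whose separators have total size at most $|S_1|+|S_2|$; equivalently the function $f(A)=|N(A)|$ (for $A\subsetneq V$) is submodular. The key consequence we will extract is: for two terminals $u,v\in I$, an arbitrary $(u,I\setminus u)$-min-separator $C_u$ and an arbitrary $(v,I\setminus v)$-min-separator $C_v$ can be taken to be ``non-crossing'', i.e.\ the side of $C_u$ containing $u$ is disjoint from the side of $C_v$ containing $v$. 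This is what lets a single maxflow computation on a contracted graph recover the correct answer for many terminals at once.

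The algorithm itself: pick a random function $h\colon I\to\{0,1\}^{\lceil\log|I|\rceil}$, i.e.\ assign each terminal a uniformly random $O(\log|I|)$-bit label. For each bit position $i$, form two terminal groups $I_0^{(i)}$ and $I_1^{(i)}$ according to bit $i$ of the label, contract each group into a single supernode (removing parallel edges, as per the Preliminaries convention), and compute an $s$-$t$ vertex mincut between the two supernodes via one maxflow call on a unit-vertex-capacity graph. This gives $O(\log|I|)$ maxflow instances, each of size $O(m)$, for a total of $O(m\log|I|)$ vertices and edges; everything outside the maxflow calls is clearly $O(m)$. For each $v\in I$, among the $O(\log|I|)$ cuts we look at those coming from bit positions $i$ where $v$'s side of the cut is a ``small'' side not containing any other terminal; we then recurse / take the minimum. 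More precisely, following Li--Panigrahi, one shows that for each terminal $v$, with the random labeling there is whp a scale at which $v$ is separated from all other terminals by one of these computed cuts restricted to $v$'s connected component, and the non-crossing property guarantees that the minimum over the relevant computed cuts actually equals the true $(v,I\setminus v)$-min-separator value. To make this deterministic (as in \cite{LiP20deterministic}) one replaces the random labeling by going scale-by-scale, at each scale $2^j$ splitting $I$ into two halves, contracting, and computing a vertex mincut; I would present the randomized version first since it is cleaner and then remark the derandomization carries over verbatim.

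The main obstacle is establishing the non-crossing / uncrossing step for \emph{vertex} separators and making sure the recursion on connected components is valid in the vertex-cut setting: when we delete a separator, the graph breaks into pieces, and we must argue that the sub-instance induced on $v$'s piece (with the separator vertices either discarded or handled carefully) still has its $(v,\cdot)$-min-separator equal to the global one. Here the subtlety specific to vertex connectivity is that separator vertices are shared between sides rather than cleanly partitioned as edges are, so the ``charging'' that bounds recursion size and the argument that uncrossed cuts remain feasible need the submodularity of $A\mapsto|N_G(A)|$ applied with some care about which vertices land in $A$ versus $N_G(A)$. Once that lemma is in hand, the rest is a routine adaptation of the edge-connectivity proof, and the stated bounds ($O(m\log|I|)$ total maxflow size, $O(m)$ extra time) follow immediately by counting the $O(\log|I|)$ levels.
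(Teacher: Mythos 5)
Your proposal correctly identifies the crucial structural fact, the submodularity of $A\mapsto|N_G(A)|$ (Claim~\ref{clm:submod} in the paper), and correctly proposes the first phase: assign each terminal in $I$ an $O(\log|I|)$-bit label, and for each bit position compute a min-separator between the two halves of $I$. (A minor difference: the paper does not randomize the labels and does not contract; it just fixes an arbitrary ordering of $I$ and asks for an $(A_i,B_i)$-min-separator directly.)

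The genuine gap is in what you do with these $O(\log|I|)$ cuts. You write that, after restricting attention to $v$'s connected component, ``the minimum over the relevant computed cuts actually equals the true $(v,I\setminus v)$-min-separator value.'' That is false: the $C_i$ are $(A_i,B_i)$-min-separators for a global bipartition of $I$ and have no reason to match $\lambda_v$ (they are typically much larger). The paper's algorithm has a second phase that you are missing. One forms the union $\bigcup_i C_i$, removes it from $G$, and lets $U_v$ be the component containing $v$. Claim~\ref{clm:v} shows each $U_v$ contains exactly one terminal. Then the submodularity argument is used not to say ``the minimum of the $C_i$ works,'' but to prove Claim~\ref{clm:Uv}: the minimal $L$-side $S^*_v$ of the true $(v,I\setminus v)$-min-separator is contained in $U_v$. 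Because of this containment, one can build a small auxiliary graph $G_v$ on $U_v\cup N_G(U_v)$ (with internal edges of $N_G(U_v)$ deleted and a fresh sink $t$ attached to $N_G(U_v)$) and run one more $v$--$t$ maxflow there; that maxflow correctly computes $C_v$ with $|C_v|=\lambda_v$. Since the $U_v$ are vertex-disjoint and every edge of $G$ lands in at most one $G_v$, this second batch of maxflows has total size $O(m)$, which is why the overall bound is $O(m\log|I|)$ and not worse. Finally, there is no recursion in the paper's proof, and no randomness to derandomize; the two-phase structure (bit-cuts, then component-wise cleanup flows) is the whole algorithm.
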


We will prove \Cref{lem:isolator} at the end of this section in \Cref{sec:isolating}. Below, we set up the
stage so that we can use it to prove \Cref{lem:main k nonscratch}.
First, we need the following concept:
\begin{definition}
For any vertex set $T$, a vertex cut $(L,S,R)$ \emph{isolates} a
vertex $x$ in $T$ if 
$$L\cap T=\{x\}, S\cap T=\emptyset, \text{and } R\cap T\neq\emptyset.$$\end{definition}

For any $p\in[0,1]$, we let $V(p)$ be obtained by sampling each
vertex in $V$ with probability $p$. Similarly, let $V_{\low}(p)$
be obtained by sampling each vertex in $V_{\low}$ with probability
$p$. The following observation says that, for any a $k$-non-scratch
$(L,S,R)$, we can obtain a random set that $(L,S,R)$ isolates a
vertex in it with good probability.
\begin{prop}
\label{prop:should isolate}Suppose that $G$ has a $k$-non-scratch
$(L,S,R)$. Then, with probability $\Omega(1/\log^{2}n)$, there is
$i\in\{1,\dots,\log n\}$ where $(L,S,R)$ isolates a vertex in $V(\frac{1}{2^{i}})$
or isolates a vertex in $V_{\low}(\frac{1}{2^{i}})$. 
\end{prop}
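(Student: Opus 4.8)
Proof proposal for Proposition~\ref{prop:should isolate}.

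The plan is to split into the two defining cases of a $k$-non-scratch and, in each, exhibit a single scale $2^{i}$ at which the desired isolation event happens with probability $\Omega(1/\log n)$; summing over the (at most $\log n$) scales then only costs another $1/\log n$ factor and an application of $|L|\le |R|$. Recall that to isolate a vertex we need the sampled set to hit $L$ in exactly one vertex, miss $S$ entirely, and hit $R$ in at least one vertex. The standard way to get ``exactly one out of $\le N$ and at least one out of $\ge N$'' with constant probability is to sample each relevant vertex with probability $\Theta(1/N)$: then no $S$-vertex is sampled with probability $(1-p)^{|S|}\ge(1-p)^{k}=\Omega(1)$ when $p=\Theta(1/k)$ and $|L|$ is comparably sized; exactly one $L$-vertex is sampled with probability $\binom{|L|}{1}p(1-p)^{|L|-1}=\Omega(|L|p)=\Omega(1)$ when $|L|p=\Theta(1)$; and at least one $R$-vertex is sampled with probability $1-(1-p)^{|R|}\ge 1-(1-p)^{|L|}=\Omega(1)$. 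These three events are independent since they concern disjoint vertex subsets, so their conjunction has probability $\Omega(1)$.

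In Case~(1), $|L|>k/(100\log n)$ and $|S|<k$. Here I would sample from all of $V$ at the scale $2^{i}$ with $1/2^{i}\in[1/(2|L|),1/|L|]$, which exists among $i=1,\dots,\log n$ since $|L|\le n$. With $p=1/2^{i}$ we have $|L|p\in[1/2,1]$ so $\Pr[|V(p)\cap L|=1]=\Omega(1)$; we have $|S|p\le kp\le k/|L|\le 100\log n$, which is \emph{not} $O(1)$, so naively $\Pr[V(p)\cap S=\emptyset]$ could be as small as $n^{-\Theta(1)}$ --- this is the main obstacle, see below. In Case~(2), $|L|\le k/(100\log n)$ and $|S_{\low}|<300|L|\ln n$. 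The point of the $S_{\low}$ definition is that $L\subseteq V_{\low}$ (every neighbour of an $L$-vertex lies in $L\cup S$, so $\deg(x)\le |L|+k<2k<8k$), that $|R\cap V_{\low}|\ge|L|$ (the claimed helper claim), and that $S\cap V_{\low}=S_{\low}$ has size $<300|L|\ln n=O(|L|\log n)$. So here I would sample from $V_{\low}$ only, at the scale $1/2^{i}\in[1/(2|L|),1/|L|]$: then $L$ and $R\cap V_{\low}$ behave exactly as above, and the ``bad'' set to avoid is $S_{\low}$, of size only $O(|L|\log n)$, not $k$.

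The main obstacle is exactly the $(1-p)^{|\text{bad set}|}$ term when the bad set is larger than $1/p$, i.e.\ in Case~(1) where $|S|$ can be up to $k$ but $p\approx 1/|L|$ can be as large as $100\log n/k$, making $(1-p)^{|S|}$ potentially as small as $e^{-100\log n}$. I do not think the statement as literally written ($\Omega(1/\log^2 n)$ with plain sampling of $V$) can be salvaged for Case~(1) without a different choice of scale; the natural fix --- and, I suspect, what the authors intend --- is to sample $V$ at the \emph{larger} scale $p=1/2^{i}$ with $1/2^{i}\in[1/(2k),1/k]$ instead. At that scale $|S|p\le 1$ so $\Pr[V(p)\cap S=\emptyset]=\Omega(1)$; and since in Case~(1) we have $|L|>k/(100\log n)$, we get $|L|p\ge 1/(200\log n)$, so $\Pr[|V(p)\cap L|=1]=\Omega(|L|p)=\Omega(1/\log n)$; and $\Pr[V(p)\cap R\ne\emptyset]\ge 1-(1-p)^{|R|}\ge 1-(1-p)^{|L|}=\Omega(|L|p)=\Omega(1/\log n)$ as well (or $\Omega(1)$ once $|R|\ge k$, which holds since the minimum degree is $\ge k$). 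Multiplying, Case~(1) contributes $\Omega(1/\log^2 n)$ at that single scale. Combining the two cases: whichever type of $k$-non-scratch $G$ has, there is a scale $i\in\{1,\dots,\log n\}$ at which the corresponding sampled set ($V(1/2^i)$ in Case~(1), $V_{\low}(1/2^i)$ in Case~(2)) is isolated by $(L,S,R)$ with probability $\Omega(1/\log^2 n)$, which is what the proposition asserts. I would write the proof as two short lemmas (one per case) each producing the scale explicitly, then conclude in one line.
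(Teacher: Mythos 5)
You correctly identified the obstacle --- that $(1-p)^{|\text{bad set}|}$ degenerates when the bad set is much larger than $1/p$ --- but the repair you propose does not close the gap, and you leave Case~2 exposed to exactly the same problem.

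In Case~1, after choosing $p=1/2^i\in[1/(2k),1/k]$, you assert $\Pr[|V(p)\cap L|=1]=\Omega(|L|p)$. This step requires $|L|p=O(1)$, i.e.\ $|L|=O(k)$, since $\Pr[|V(p)\cap L|=1]=|L|p(1-p)^{|L|-1}$ and the second factor is $e^{-\Theta(|L|p)}$. But Case~1 only gives $|L|>k/(100\log n)$, with no upper bound on $|L|$ in terms of $k$; $|L|$ could be $\Theta(n)$ while $k$ is, say, $\polylog(n)$. When $|L|\gg k$, $(1-p)^{|L|-1}$ is exponentially small and your bound collapses. (Ironically, in that regime your \emph{original} scale $p\approx 1/|L|$ is the one that works, since then $|S|p\le k/|L|\ll 1$ is fine; the scale $p\approx 1/k$ and the scale $p\approx 1/|L|$ are each correct on complementary sub-ranges of $|L|$, and neither alone covers all of Case~1.) In Case~2, you keep $p\approx 1/|L|$ and avoid only $S_{\low}$, but $|S_{\low}|$ can be as large as $300|L|\ln n$, so $|S_{\low}|p=\Theta(\log n)$ and $(1-p)^{|S_{\low}|}$ can be $n^{-\Theta(1)}$ --- this is precisely the obstacle you flagged for Case~1, and you never address it here.

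The paper's choice is $p$ with $1\le p(2|L|+|S|)\le 2$ in Case~1 (resp.\ $1\le p(2|L|+|S_{\low}|)\le 2$ in Case~2). This simultaneously gives $|L|p\le 1$, $|S|p\le 2$ (so $(1-p)^{2|L|+|S|}=\Omega(1)$), and, using $|S|<100|L|\log n$ (resp.\ $|S_{\low}|<300|L|\ln n$), the lower bound $|L|p=\Omega(1/\log n)$. Lower-bounding $\Pr[|R\cap V(p)|\ge1]$ by $\Pr[|L\cap V(p)|=1]$ (valid since $|R|\ge|L|$) then yields the $\Omega(1/\log^2 n)$ bound at a single scale. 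In short: the scale must adapt to $\max(|L|,|S|)$ (or $\max(|L|,|S_{\low}|)$), not to $|L|$ alone and not to $k$ alone.

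Two minor remarks: the ``summing over scales costs $1/\log n$'' framing in your opening is not needed --- showing the event at one well-chosen $i$ with probability $\Omega(1/\log^2 n)$ already proves the proposition, since it asserts existence of some $i$. And your parenthetical ``or $\Omega(1)$ once $|R|\ge k$, which holds since the minimum degree is $\ge k$'' should be checked: minimum degree $\ge k$ forces $|V|\ge k+1$ but not directly $|R|\ge k$; the paper sidesteps this via $|R|\ge|L|$.
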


\begin{proof}
There are two cases. Suppose that $|S|<k$ and $|L|>k/100\log n$.
Consider $p=1/2^{i}$ such that $1\le p(2|L|+|S|)\le2$. As $|S|<100|L|\log n$,
we have $|L|p=\Omega(1/\log n)$. Therefore, $(L,S,R)$ isolates a
vertex in $V(p)$ with probability 
\begin{align*}
 & \Pr[|L\cap V(p)|=1]\cdot\Pr[|S\cap V(p)|=0]\cdot\Pr[|R\cap V(p)|\ge1]\\
 & \ge\Pr[|L\cap V(p)|=1]^{2}\cdot\Pr[|S\cap V(p)|=0]\\
 & =(|L|p\cdot(1-p)^{|L|-1})^{2}(1-p)^{|S|}\\
 & \ge(|L|p)^{2}(1-p)^{2|L|+|S|}=\Omega(1/\log^{2}n)
\end{align*}
where the first inequality is because $|R|\ge|L|$ and the last inequality
follows because $p(2|L|+|S|)=\Theta(1)$ and $|L|p=\Omega(1/\log n)$.

Consider another case where $|S|<k$, $|L|\le k/100\log n$, and $|S_{\low}|\le300|L|\ln n$.
The argument is similar to the previous case, but we first need this claim:
\begin{claim}
\label{claim:helper low}Let $L_{\low}=L\cap V_{\low}$ and $R_{\low}=R\cap V_{\low}$.
We have $L_{\low}=L$ and $|R_{\low}|\ge|L_{\low}|$. 
\end{claim}

\begin{proof}
For each $x\in L$, $N[x]\subseteq L\cup S$. So $\deg(x)\le|L|+|S|\le2k$
and thus $x\in L_{\low}$. So $L_{\low}=L$. To see why $|R_{\low}|\ge|L_{\low}|$,
if $k\ge n/8$, then $V_{\low}=V$ and so $|R_{\low}|=|R|\ge|L|=|L_{\low}|$.
Otherwise, $k<n/8$. So $|L\cup S|\le2k\le n/4$ and then $|R|\ge3n/4$.
As $8k|V\setminus V_{\low}|\le\sum_{v}\deg(v)\le2nk$, we have $|V\setminus V_{\low}|\le n/4$ and so $|V_{\low}|\ge3n/4$.
Therefore, $|R_{\low}|=|R\cap V_{\low}|\ge n/2\ge|L_{\low}|$. 
\end{proof}
Consider $p=1/2^{i}$ such that $1\le p(2|L|+|S_{\low}|)\le2$. As $|S_{\low}|<300|L|\ln n$,
we have $|L|p=\Omega(1/\log n)$. Therefore, $(L,S,R)$ isolates a
vertex in $V_{\low}(p)$ with probability 
\begin{align*}
 & \Pr[|L\cap V_{\low}(p)|=1]\cdot\Pr[|S\cap V_{\low}(p)|=0]\cdot\Pr[|R\cap V_{\low}(p)|\ge1]\\
 & =\Pr[|L_{\low}\cap V(p)|=1]\cdot\Pr[|S_{\low}\cap V(p)|=0]\cdot\Pr[|R_{\low}\cap V(p)|\ge1]\\
 & \ge\Pr[|L\cap V(p)|=1]^{2}\cdot\Pr[|S_{\low}\cap V(p)|=0]\\
 & =(|L|p\cdot(1-p)^{|L|-1})^{2}(1-p)^{|S_{\low}|}\\
 & \ge(|L|p)^{2}(1-p)^{2|L|+|S_{\low}|}=\Omega(1/\log^{2}n)
\end{align*}
where the first inequality by \Cref{claim:helper low} and the last
inequality follows because $p(2|L|+|S_{\low}|)=\Theta(1)$ and $|L|p=\Omega(1/\log n)$.
\end{proof}
The last observation we need is about maximal independent sets of an isolated set.
\begin{prop}
\label{prop:isolate MIS as well}Suppose that a vertex cut $(L,S,R)$
isolates a vertex $x$ in a set $T$. Let $I$ be an maximal independent
set of $T$. Then $(L,S,R)$ also isolates $x$ in $I$.
\end{prop}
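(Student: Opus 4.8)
The statement to prove is \Cref{prop:isolate MIS as well}: if a vertex cut $(L,S,R)$ isolates a vertex $x$ in a set $T$, and $I$ is a maximal independent set of $T$, then $(L,S,R)$ also isolates $x$ in $I$. Recall that ``isolates $x$ in $T$'' means $L\cap T=\{x\}$, $S\cap T=\emptyset$, and $R\cap T\neq\emptyset$.

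\medskip

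The plan is to verify the three defining conditions of isolation for $I$ in place of $T$, using that $I\subseteq T$ together with the maximality of $I$. The first two conditions are essentially immediate from $I\subseteq T$: since $S\cap T=\emptyset$, certainly $S\cap I=\emptyset$; and since $L\cap T=\{x\}$, we have $L\cap I\subseteq\{x\}$. To upgrade $L\cap I\subseteq\{x\}$ to $L\cap I=\{x\}$, I need to argue $x\in I$. This is where maximality first enters: $x\in T$, and $\{x\}$ is trivially independent (a single vertex), so if $x\notin I$ then $I\cup\{x\}$ would still be an independent subset of $T$ --- wait, that is not quite right, since $x$ might have a neighbor in $I$. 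The correct argument: if $x\notin I$, then by maximality of $I$, $x$ must have a neighbor $y\in I\subseteq T$; but then $x,y\in T$ and $xy\in E$. Hmm, that does not immediately contradict anything about the cut. Let me reconsider --- actually the cleaner route is: $x\in L$ and $x$'s neighbor $y$ lies in $I\subseteq T$; since $L\cap T=\{x\}$, we have $y\notin L$, so $y\in S\cup R$. But $S\cap T=\emptyset$ forces $y\notin S$, so $y\in R$. Then $x\in L$ and $y\in R$ with $xy\in E$ contradicts $E_G(L,R)=\emptyset$, which is part of the definition of a vertex cut. Hence $x\in I$, giving $L\cap I=\{x\}$.

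\medskip

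The third condition, $R\cap I\neq\emptyset$, is the main obstacle and again uses maximality, though in a slightly more delicate way. We know $R\cap T\neq\emptyset$; pick any $z\in R\cap T$. If $z\in I$ we are done. Otherwise, by maximality $z$ has a neighbor $w\in I\subseteq T$. Now $w\in T$, so $w\notin S$ (as $S\cap T=\emptyset$); and $z\in R$ with $zw\in E$ and $E_G(L,R)=\emptyset$ forces $w\notin L$; hence $w\in R\cap I$. This handles a single $z$, but note the argument actually shows: every vertex of $R\cap T$ is either in $I$ or adjacent to a vertex of $R\cap I$, and since $R\cap T\neq\emptyset$ this already yields $R\cap I\neq\emptyset$. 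So picking one arbitrary $z\in R\cap T$ suffices.

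\medskip

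In summary, I would write the proof in three short paragraphs, one per condition, each a two- or three-line deduction from $I\subseteq T$, the maximality of $I$, and the defining property $E_G(L,R)=\emptyset$ of a vertex cut. I do not anticipate any real difficulty; the only place to be careful is not to conflate ``$I$ independent in $G$'' with ``$I$ independent as a set of terminals'' --- here $I$ is a maximal independent set in the graph-theoretic sense (no two vertices of $I$ are adjacent in $G$), which is exactly what is needed so that the later application of the isolating cuts lemma (\Cref{lem:isolator}), which requires an independent set, goes through.
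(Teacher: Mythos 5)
Your proof is correct and follows the same underlying reasoning as the paper's: both show $x\in I$ by noting $x$ has no neighbors in $T$ (you phrase this as a contradiction via $E_G(L,R)=\emptyset$, the paper observes it directly from $N(x)\subseteq L\cup S$), and both obtain $R\cap I\neq\emptyset$ by arguing that a vertex of $R\cap T$ can only be excluded from $I$ by another vertex of $R\cap T$. Your write-up is just more explicit about the case analysis than the paper's terse version.
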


\begin{proof}
Note that $|T|\ge2$ because $L\cap T=\{x\}$ and $R\cap T\neq\emptyset$.
As $x$ is not incident to any other vertex in $T$, we have $x\in I$.
So $L\cap I=\{x\}$. Also some vertex in $R\cap T$ must remain in
$I$ because $S\cap T=\emptyset$. So $R\cap I\neq\emptyset$. This
means that $(L,S,R)$ isolates $x$ in $I$.
\end{proof}
Now, we are ready to prove \Cref{lem:main k nonscratch}. 

\paragraph{Proof of \Cref{lem:main k nonscratch}. }

The algorithm for \Cref{lem:main k nonscratch} is as follows. For
each $i\in\{1,\dots,\log n\}$ and $j\in\{1,\dots,O(\log^{3}n)\}$,
we independently sample $T^{(i,j)}=V(\frac{1}{2^{i}})$ and $T_{\low}^{(i,j)}=V_{\low}(\frac{1}{2^{i}})$
and compute maximal independent sets $I^{(i,j)}$ of $T^{(i,j)}$
and $I_{\low}^{(i,j)}$ of $T_{\low}^{(i,j)}$ respectively. Next,
we invoke \Cref{lem:isolator} on $(G,I^{(i,j)})$ if $|I^{(i,j)}|\ge2$
and on $(G,I_{\low}^{(i,j)})$ if $|I_{\low}^{(i,j)}|\ge2$. Among
all separators that \Cref{lem:isolator} returns, we return the one
with minimum size and its corresponding vertex cut. If $|I^{(i,j)}|,|I_{\low}^{(i,j)}|<2$
for all $i,j$, we return an arbitrary vertex cut. 

It is clear the algorithm makes $s$-$t$ maxflow calls on unit-vertex-capacity
graphs with $O(m\log^{5}n)$ total number of vertices and edges and
takes $\ot(m)$ additional time because we invoke \Cref{lem:isolator}
$O(\log^{4}n)$ times. 

To see the correctness, suppose there is a $k$-non-scratch $(L,S,R)$,
then by \Cref{prop:should isolate}, there exist $i$ and $j$ such
that $(L,S,R)$ isolates a vertex in either $T^{(i,j)}$ or $T_{\low}^{(i,j)}$
whp. By \Cref{prop:isolate MIS as well}, $(L,S,R)$ must also isolate
a vertex in either $I^{(i,j)}$ or $I_{\low}^{(i,j)}$ whp. Suppose
that $(L,S,R)$ isolates a vertex $x$ in $I^{(i,j)}$. Then, $(L,S,R)$
is a $(x,I^{(i,j)}\setminus x)$-separator. So the call of \Cref{lem:isolator}
on $(G,I^{(i,j)})$ must return a separator of size at most $|S|<k$.
The argument is the same if $(L,S,R)$ isolates a vertex $x$ in $I_{\low}^{(i,j)}$.

\subsection{Proof of \Cref{lem:isolator} (Isolating Cuts Lemma)}
\label{sec:isolating}

The goal of this section is to prove \Cref{lem:isolator}.
We follow the proof of Theorem~II.2 of \cite{LiP20deterministic}. Order the vertices in $I$ arbitrarily from $1$ to $|I|$, and let the {\em label} of each $v\in I$ be its position in the ordering, a number from $1$ to $|I|$ that is denoted by a unique binary string of length $\lceil\lg|I|\rceil$.
Let us repeat the following procedure for each $i=1,2,\ldots,\lceil\lg|I|\rceil$. Let $A_i\subseteq I$ be the vertices in $I$ whose label's $i$'th bit is $0$, and let $B_i\subseteq I$ be the vertices whose label's $i$'th bit is $1$. Compute a $(A_i,B_i)$-min-separator $C_i\subseteq V$ (for iteration $i$). Note that since $I=A_i\cup B_i$ is an independent set in $G$, the set $V\setminus(A_i\cup B_i)$ is an $(A_i,B_i)$-separator, so an $(A_i,B_i)$-min-separator exists.

First, we show that $G \setminus \bigcup_i C_i$ partitions the set of vertices into connected components each of which contains at most one vertex of $I$. Let $U_v$ be the connected component in $G \setminus \bigcup_i C_i$ containing $v\in I$. Then:
\begin{claim}\label{clm:v}
$U_v\cap I=\{v\}$ for all $v\in I$.
\end{claim}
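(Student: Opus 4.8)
The plan is to show that for each $v \in I$, the component $U_v$ in $G \setminus \bigcup_i C_i$ satisfies $U_v \cap I = \{v\}$, using the defining property that each $C_i$ is an $(A_i, B_i)$-min-separator together with the submodularity of vertex separators (which is what powers the vertex version of the isolating cuts lemma).

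First I would note that $v \in U_v$ and $v$ is not removed by any $C_i$: indeed, the binary-labelling scheme ensures that for every other vertex $u \in I$ with $u \neq v$, the labels of $u$ and $v$ differ in some bit $i$, so one of them lies in $A_i$ and the other in $B_i$; since $C_i$ is an $(A_i, B_i)$-separator, it is disjoint from both $A_i$ and $B_i$, hence disjoint from $I$ entirely. Thus no $C_i$ removes any vertex of $I$, so all of $I$ survives in $G \setminus \bigcup_i C_i$, and in particular $v \in U_v \cap I$. It remains to show that no other vertex $u \in I$ lies in $U_v$. Fix such a $u$ and the index $i$ where their labels' $i$th bits differ, say $v \in A_i$ and $u \in B_i$. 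Since $C_i$ is an $(A_i, B_i)$-separator, removing $C_i$ already disconnects $v$ from $u$; a fortiori, removing the larger set $\bigcup_i C_i \supseteq C_i$ keeps them disconnected, so $u \notin U_v$. This establishes $U_v \cap I = \{v\}$.

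The main obstacle — and the part that actually uses the submodularity hinted at in Case 1 — is not this claim itself but the accompanying fact (which I expect the paper proves next) that each $C_v$ can be extracted from the $U_v$'s as a genuine $(v, I \setminus v)$-min-separator, and that these separators can all be produced with only $O(\log|I|)$ maxflow calls of total size $O(m\log|I|)$ rather than $O(|I|)$ calls. For the claim as stated here, though, the argument is purely about reachability after vertex deletions and requires no counting or submodularity — just the observation that an $(A_i,B_i)$-separator survives vertex additions to the cut set. The one subtlety worth flagging is that we genuinely need $C_i$ to avoid $I$ (so that the terminals themselves are never deleted); this is exactly why $I$ is assumed to be an \emph{independent} set, ensuring $V \setminus (A_i \cup B_i) = V \setminus I$ is a valid separator and hence a min-separator can be chosen disjoint from $I$.

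So the proof structure is: (1) via the labelling scheme, show $C_i \cap I = \emptyset$ for all $i$, hence $I$ survives in $G \setminus \bigcup_i C_i$ and $v \in U_v$; (2) for any $u \in I \setminus \{v\}$, pick the distinguishing bit $i$, observe $v$ and $u$ are separated by $C_i$ alone and therefore by $\bigcup_i C_i$, so $u \notin U_v$; (3) conclude $U_v \cap I = \{v\}$. I would write steps (1) and (2) as two short paragraphs and close with (3).
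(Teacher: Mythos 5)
Your proposal is correct and takes essentially the same approach as the paper: use the distinguishing bit $j$ of the labels of $u$ and $v$, observe that $C_j$ separates $A_j$ from $B_j$, and conclude that removing $\bigcup_i C_i \supseteq C_j$ still keeps $u$ and $v$ in different components. Your explicit remark that $C_i$ is disjoint from $I$ (which the paper leaves implicit in "by definition, $v \in U_v \cap I$") is a welcome clarification, and you are right that submodularity plays no role in this particular claim.
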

\begin{proof}
By definition, $v\in U_v\cap I$. Suppose for contradiction that $U_v\cap I$ contains another vertex $u\ne v$. %
Since the binary strings assigned to $u$ and $v$ are distinct, they differ in their $j$'th bit for some $j$. Assume without loss of generality that $u\in A_j$ and $b\in B_j$. Since $C_j\subseteq V$ is a $(A_j,B_j)$-min-separator, there cannot be a $u$-$v$ path whose vertices are disjoint from $C_j$, contradicting the assumption that $u$ and $v$ belong in the same connected component of $G \setminus \bigcup_iC_i$.
\end{proof}

\begin{claim}[Submodularity of vertex cuts]\label{clm:submod}
For any subsets $A,B\subseteq V$, we have
\[ |N(A)| + |N(B)| \ge |N(A\cup B)| + |N(A\cap B)| .\]
\end{claim}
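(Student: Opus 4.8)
The plan is to prove the submodularity of the vertex neighborhood function $N(\cdot)$ by a direct element-chasing argument, classifying each vertex of $V$ according to its membership in $A$, $B$, and the neighborhoods involved. Fix $A, B \subseteq V$. For a vertex $v$ to be counted on the right-hand side, it must lie in $N(A\cup B) \cup N(A\cap B)$, and in particular $v \notin A\cup B$ in the first case and $v \notin A\cap B$ in the second. I would show that the indicator contribution of each such $v$ to the right-hand side is at most its contribution to the left-hand side $|N(A)| + |N(B)|$.

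Concretely, first I would dispose of vertices in $A\cup B$: such a vertex contributes $0$ to $|N(A\cup B)|$ (since $N(\cdot)$ excludes the set itself) and at most $1$ to $|N(A\cap B)|$; I need to check it also contributes at least that much to the left side. If $v \in A\cap B$ it contributes $0$ to both $|N(A)|$ and $|N(B)|$ and also $0$ to $|N(A\cap B)|$, so that is fine. If $v \in A \setminus B$ (symmetrically $B\setminus A$), then $v$ contributes $0$ to $N(A)$ but possibly $1$ to $N(B)$; meanwhile on the right it contributes $0$ to $|N(A\cup B)|$ and at most $1$ to $|N(A\cap B)|$ — and if it contributes $1$ to $|N(A\cap B)|$ then $v$ has a neighbor in $A\cap B \subseteq B$, so $v \in N(B)$, matching. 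For $v \notin A\cup B$: if $v$ has a neighbor in $A\cup B$ it contributes $1$ to $|N(A\cup B)|$; but a neighbor in $A\cup B$ is a neighbor in $A$ or in $B$, so $v$ contributes $1$ to $|N(A)|$ or $|N(B)|$. If additionally $v$ has a neighbor in $A\cap B$, it contributes a further $1$ to $|N(A\cap B)|$, but that neighbor lies in both $A$ and $B$, so $v$ contributes $1$ to \emph{both} $|N(A)|$ and $|N(B)|$, i.e. $2$ on the left, again matching. Summing the per-vertex inequalities over all $v\in V$ gives the claim.

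The cleanest way to package this is to define, for each $v$, $\ell(v) = \one[v\in N(A)] + \one[v\in N(B)]$ and $r(v) = \one[v\in N(A\cup B)] + \one[v\in N(A\cap B)]$ and prove $r(v)\le \ell(v)$ in all cases; the case split above is exhaustive. Alternatively one can phrase it via the symmetric observation that $N(A\cup B) \subseteq (N(A)\cup N(B))$ and $N(A\cap B)\setminus(A\cup B) \subseteq N(A)\cap N(B)$, plus careful bookkeeping of the vertices in $(A\cup B)\setminus(A\cap B)$ that may lie in $N(A\cap B)$ but not in $N(A\cup B)$ — these are exactly the vertices that force the ``$\le$'' rather than equality, and each is absorbed because being in $N(A\cap B)$ while sitting in, say, $A\setminus B$ forces membership in $N(B)$.

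I do not expect a serious obstacle here; the only place to be careful is the treatment of vertices lying \emph{inside} $A\cup B$, since $N(\cdot)$ is defined to exclude the argument set, which makes $N(A\cup B)$ potentially much smaller than $N(A)\cup N(B)$ on those vertices — but this only helps the inequality, and the $N(A\cap B)$ term on those vertices is controlled exactly as in the case analysis above. So the main (minor) subtlety is bookkeeping the asymmetry in the definition of $N$, not any real mathematical difficulty.
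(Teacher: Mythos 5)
Your proof is correct and, like the paper's, is a per-vertex contribution argument: one shows for each $v\in V$ that the contribution of $v$ to the right-hand side is at most its contribution to the left-hand side, then sums over $v$. The difference is in how the cases are organized. The paper partitions $V$ by membership in $N(A)$ and $N(B)$ (both, only $N(A)$, only $N(B)$, neither), while you partition by membership in $A$ and $B$ and then chase which of the four neighborhood sets $v$ lands in. Your split is actually a bit more careful on precisely the subtlety you flag, namely vertices lying inside $A\cup B$. In the paper's case ``$v\in N(A)\setminus N(B)$'' the stated justification is that $v\in N(A\cup B)$ and $v\notin N(A\cap B)$; this fails when $v\in B$, since then $v\in A\cup B$ forces $v\notin N(A\cup B)$, and $v$ may nonetheless lie in $N(A\cap B)$. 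Your analysis handles exactly that configuration: a vertex $v\in A\setminus B$ (resp.\ $B\setminus A$) contributing to $|N(A\cap B)|$ has a neighbor in $A\cap B\subseteq B$ (resp.\ $\subseteq A$), so it also contributes to $|N(B)|$ (resp.\ $|N(A)|$) on the left. Both proofs yield the same pointwise inequality $r(v)\le\ell(v)$; yours simply closes the small bookkeeping gap in the paper's one-line treatment of the middle case, at the cost of a slightly longer case analysis.
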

\begin{proof}
We consider the contribution of each vertex $v\in V$ to the LHS $|N(A)|+|N(B)|$ and the RHS $|N(A\cup B)| + |N(A\cap B)|$ separately.
Each vertex $v\in N(A)\cap N(B)$ contributes $2$ to the LHS and at most $2$ to the RHS. Each vertex $v\in N(A)\setminus N(B)$ contributes $1$ to the LHS, and $1$ to the RHS because $v\in N(A\cup B)$ and $v\notin N(A\cap B)$. A symmetric case covers each vertex $v\in N(B)\setminus N(A)$. Finally, each vertex $v\notin N(A)\cup N(B)$ contributes $0$ to both sides.
\end{proof}

Now, for each vertex $v\in I$, let $\lambda_v$ be the size of a $(v,I\setminus v)$-min-separator. For each  $(v,I\setminus v)$-min-separator $C$, we can consider the set $S\subseteq V$ of vertices in the connected component of $G \setminus C$ containing $v$, which necessarily satisfies $N(S)=C$. Let $S^*_v\subseteq V$ be an inclusion-wise minimal set such that $N(S^*_v)$ is a $(v,I\setminus v)$-separator. Then, we claim the following:
\begin{claim}\label{clm:Uv}
$U_v \supseteq S^*_v$ for all $v\in I$.
\end{claim}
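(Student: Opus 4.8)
**Proof plan for Claim~\ref{clm:Uv} ($U_v \supseteq S^*_v$).**

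The plan is to argue by contradiction using the submodularity of vertex cuts (\Cref{clm:submod}) to show that $S^*_v$ cannot ``escape'' the component $U_v$. Suppose $S^*_v \not\subseteq U_v$. Since $v \in S^*_v$ and $v \in U_v$, the set $S^*_v$ must straddle the boundary $\bigcup_i C_i$; concretely, there is some iteration $i$ such that $C_i$ separates part of $S^*_v$ from $v$. I would like to compare $S^*_v$ against the set $S_i$ defined as the connected component of $G \setminus C_i$ containing $v$ (so $N(S_i) = C_i$, and $S_i$ is one side of the $(A_i,B_i)$-min-separator $C_i$ — note $v$ lies on one side since $v \in I = A_i \cup B_i$). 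The key point is that for the ``bad'' iteration $i$ we have $S^*_v \cap S_i \subsetneq S^*_v$, i.e. the intersection is a proper subset.

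The main step is then the submodular exchange argument. Apply \Cref{clm:submod} with $A = S^*_v$ and $B = S_i$:
\[
|N(S^*_v)| + |N(S_i)| \ge |N(S^*_v \cup S_i)| + |N(S^*_v \cap S_i)|.
\]
Now I claim both $N(S^*_v \cup S_i)$ and $N(S^*_v \cap S_i)$ are themselves ``valid'' separators for the two relevant instances: $N(S^*_v \cap S_i)$ is a $(v, I \setminus v)$-separator (since $v \in S^*_v \cap S_i$ and the intersection, being contained in $S^*_v$, still excludes $I \setminus v$ from its interior — here I need that $S_i \cap (I\setminus v)$ behaves correctly, which follows because $S_i$ is a component of $G\setminus C_i$ and the labels of $v$ and any $u \in I\setminus v$ differ in some bit; I must be slightly careful whether that bit is exactly $i$ or another iteration), and $N(S^*_v \cup S_i)$ is an $(A_i, B_i)$-separator. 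Using $|N(S^*_v)| = \lambda_v$ is the $(v,I\setminus v)$-minimum, $|N(S_i)| = |C_i|$ is the $(A_i,B_i)$-minimum, and the two new cuts being valid for the respective problems, we get $|N(S^*_v \cap S_i)| \ge \lambda_v$ and $|N(S^*_v \cup S_i)| \ge |C_i|$, so equality holds throughout. Hence $N(S^*_v \cap S_i)$ is also a $(v,I\setminus v)$-min-separator, and $S^*_v \cap S_i$ is a strictly smaller set whose neighborhood is a $(v,I\setminus v)$-separator, contradicting the inclusion-wise minimality of $S^*_v$.

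I expect the main obstacle to be the bookkeeping in identifying the correct iteration $i$ and verifying that $N(S^*_v \cap S_i)$ really is a $(v, I\setminus v)$-separator and $N(S^*_v \cup S_i)$ really is an $(A_i, B_i)$-separator. The subtlety: if $S^*_v \not\subseteq U_v$, some vertex $w \in S^*_v$ lies outside $U_v$, meaning every $v$–$w$ path hits $\bigcup_i C_i$; I need to pick the specific $C_i$ that does the separating and check that $w \notin S_i$ while $v \in S_i$, so that $S^*_v \cap S_i \subsetneq S^*_v$ as required. One also needs that $S_i$ (the $v$-side component of $G \setminus C_i$) contains no vertex of $I \setminus v$ — this holds because $v$ is on one of the two sides $A_i$ or $B_i$, and $C_i$ separates $A_i$ from $B_i$, so the $v$-side component meets $I$ only in vertices on $v$'s side of bit $i$; combined with \Cref{clm:v}'s style of reasoning across all bits this pins down that $S^*_v \cap S_i$ excludes $I\setminus v$ from its interior. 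Once these membership facts are nailed down, the submodularity inequality collapses immediately and the minimality contradiction is clean.
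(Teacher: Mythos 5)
Your overall strategy---submodularity of vertex neighborhoods combined with the inclusion-wise minimality of $S^*_v$---is exactly the paper's strategy, but the specific set you cross $S^*_v$ with is wrong, and this creates a genuine gap.

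You take $S_i$ to be \emph{the single connected component of $G\setminus C_i$ containing $v$}, and you assert that $N(S_i)=C_i$ and that $N(S^*_v\cup S_i)$ is an $(A_i,B_i)$-separator. Neither claim is justified. When the vertices of $v$'s color class (say $A_i$) are split across several components of $G\setminus C_i$, the $v$-side component $S_i$ contains only some of $A_i$, and its boundary $N(S_i)$ is in general a \emph{proper} subset of $C_i$. For example, if $C_i=\{d_1,d_2\}$ and after removing $C_i$ the vertex $v$ sits with a non-terminal $c$ in $\{v,c\}$ while another $A_i$-vertex $v'$ is isolated, then $N(S_i)=\{d_1,d_2\}$ may happen to equal $C_i$, but it is easy to arrange $N(S_i)\subsetneq C_i$; and more importantly $A_i\not\subseteq S_i$. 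In that situation $S^*_v\cup S_i$ need not contain $A_i$, so $N(S^*_v\cup S_i)$ has no reason to be an $(A_i,B_i)$-separator, and the lower bound $|N(S^*_v\cup S_i)|\ge|C_i|$---on which your ``equality throughout'' hinges---does not follow. Likewise, $N(S_i)$ is not a minimum separator for any pair you control, so you cannot play it off against $N(S^*_v\cup S_i)$ in the submodularity inequality.

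The fix is precisely the paper's choice of auxiliary set. Replace $S_i$ by $T^i_v$, the \emph{union of all} connected components of $G\setminus C_i$ that contain at least one vertex of $v$'s color. Then $A_i\subseteq T^i_v$ and $B_i\cap N[T^i_v]=\emptyset$, so $N(S^*_v\cup T^i_v)$ genuinely is an $(A_i,B_i)$-separator, and $N(T^i_v)\subseteq C_i$ gives $|N(T^i_v)|\le|C_i|$ (in fact equality, since $N(T^i_v)$ is itself an $(A_i,B_i)$-separator, but the paper only needs the inequality). The paper then uses the \emph{strict} inequality $|N(S^*_v\cap T^i_v)|>\lambda_v$ (from inclusion-wise minimality of $S^*_v$) together with submodularity to get $|N(S^*_v\cup T^i_v)|<|N(T^i_v)|\le|C_i|$, contradicting minimality of $C_i$. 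Your ``equality throughout'' phrasing also works once $S_i$ is replaced by $T^i_v$. Finally, note that the conclusion ``$S^*_v\subseteq T^i_v$ for all $i$ implies $S^*_v\subseteq U_v$'' still requires the observation that $G[S^*_v]$ is connected (so a set disjoint from $\bigcup_i C_i$ and containing $v$ sits inside $v$'s component); your write-up gestures at this but should state it explicitly.
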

\begin{proof}
Fix a vertex $v\in I$ and an iteration $i$. Let $T^i_v\subseteq V$ be the vertices in the connected components of $G \setminus C_i$ that contain at least one vertex the same color as $v$ (on iteration $i$). By construction of $C_i$, the set $T^i_v$ does not contain any vertex of the opposite color. We now claim that $S^*_v\subseteq T_v^i$. Suppose for contradiction that $S^*_v\setminus T_v^i \neq \emptyset$. 
Note that $(S^*_v\cap T_v^i)\cap I=\{v\}$ and 
\[ N(S^*_v\cap T^i_v) \cap I \subseteq (N(S^*_v)\cup N(T^i_v))\cap I \subseteq (N(S^*_v)\cup C_i)\cap I=\emptyset ,\]
where the first inclusion holds because $N(S\cap T) \subseteq N(S) \cup N(T)$ for any $S,T\subseteq V$, and the second inclusion holds because $N(T^i_v)\subseteq C_i$ by construction of $T^i_v$. Therefore,
\[ |N(S^*_v\cap T_v^i)| \ge \lambda_v= |N(S^*_v)|.\]
Indeed, by our choice of $S^*_v$ to be inclusion-wise minimal, we can claim the strict inequality:
\[ |N(S^*_v\cap T_v^i)|>\lambda_v= |N(S^*_v)|. \]
But, by \Cref{clm:submod} we have:
\[  |N( S^*_v\cup T_v^i)| +  |N(S^*_v\cap T_v^i)| \le  |N( S^*_v)| +  |N( T_v^i)|.\]
Therefore, we get:
\[  |N( S^*_v\cup T_v^i)| <  |N( T_v^i)|  .\]
Now observe that $(S^*_v\cup T_v^i)\cap I = T_v^i\cap I$ since $(S^*_v\setminus T_v^i)\cap I = \emptyset$.  In particular, $S^*_v\cup T_v^i$ contains all vertices in $A_i$ and no vertices in $B_i$. Also, since $N(S^*_v)\cap I=\emptyset$ and $N(T^i_v)\cap I=\emptyset$, we also have $N(S^*_v\cup T^i_v)\cap I=\emptyset$. Then, 
\[ |N(S^*_v\cup T^i_v)|<|N(T^i_v)|\le |C_i| ,\]
so $N(S^*_v\cup T^i_v)$ is a smaller $(A_i,B_i)$-separator than $C_i$, a contradiction.

For each iteration $i$, since $S^*_v\subseteq T^i_v$, none of the vertices in $S^*_v$ are present in $C_i$. Note that $G[S^*_v]$ is a connected subgraph; therefore, it is a subgraph of the connected component $U_v$ of $G \setminus \bigcup_i C_i$ containing $v$. This concludes the proof of \Cref{clm:Uv}.
\end{proof}

\begin{fact}\label{fact:flow}
	Given a graph $G=(V,E)$ and distinct vertices $s,t\in V$, and given a $s$-$t$ vertex maxflow, we can compute in $O(|V|+|E|)$ time a set $S\subseteq V$ with $S\cap\{s,t\}=\{s\}$ such that $N(S)$ is a $(s,t)$-min-separator. 
\end{fact}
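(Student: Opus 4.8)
The plan is to run the textbook vertex-splitting reduction that turns $s$-$t$ vertex connectivity into $s$-$t$ edge connectivity, and then recover the separator from the residual graph of the given flow exactly as one recovers a minimum edge cut from a maximum flow. Build the directed graph $D$ with node set $\{s,t\}\cup\{v_{\textin},v_{\textout}\mid v\in V\setminus\{s,t\}\}$: a unit-capacity arc $v_{\textin}\to v_{\textout}$ for each internal vertex $v$, and for each edge of $G$ a pair of infinite-capacity arcs between the appropriate $\textout$/$\textin$ copies, with $s$ acting as $s_{\textout}$ and $t$ as $t_{\textin}$. We may assume $(s,t)\notin E$ (otherwise no $(s,t)$-separator exists and the claim is vacuous), so every finite $s$-$t$ cut in $D$ uses only unit arcs and the maxflow value in $D$ equals $\lambda_{s,t}$, the minimum $(s,t)$-separator size. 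The given vertex maxflow of $G$ is read as a maximum flow $f$ in $D$ in $O(|V|+|E|)$ time.

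Then I would compute the residual graph $D_f$, let $A$ be the set of nodes reachable from $s$ in $D_f$ (one BFS, $O(|V|+|E|)$ time), and observe that $t\notin A$ and that no infinite-capacity arc leaves $A$, so the only arcs from $A$ to its complement are unit arcs $v_{\textin}\to v_{\textout}$ with $v_{\textin}\in A$, $v_{\textout}\notin A$; let $C$ be the set of such vertices $v$, so $|C|=\mathrm{val}(f)=\lambda_{s,t}$. Finally set $S\defeq\{s\}\cup\{v\in V\setminus\{s,t\}\mid v_{\textout}\in A\}$, which is written down in linear time and satisfies $S\cap\{s,t\}=\{s\}$ since $s$ is unsplit and $t\notin A$.

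The remaining work is to show $N_G(S)=C$; together with $|C|=\lambda_{s,t}$ this is exactly the statement, because $N_G(S)$ then separates $s$ from $t$ (the component of $s$ in $G\setminus N_G(S)$ lies inside $S$, which excludes $t$) and has minimum possible size. For $C\subseteq N_G(S)$: if $v\in C$ then $v_{\textin}\in A$, so some residual arc enters $v_{\textin}$ either from $s$, from a copy $u_{\textout}$ with $u\in S$, or from $v_{\textout}$; the last is ruled out by $v_{\textout}\notin A$, so $v$ is $G$-adjacent to a vertex of $S$, and $v\notin S$ because $v_{\textout}\notin A$. For $N_G(S)\subseteq C$: if $w\in N_G(S)$ is adjacent in $G$ to $v\in S$, then an infinite arc of $D$ runs into $w_{\textin}$ from $s$ or from $v_{\textout}\in A$, so $w_{\textin}\in A$, while $w\notin S$ forces $w_{\textout}\notin A$, giving $w\in C$; the same bookkeeping shows $t\notin N_G(S)$ (an infinite arc into $t$ from $A$ would put $t\in A$) and $s\notin N_G(S)$ holds trivially. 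Every step is a single pass over $D$, so the total time is $O(|V|+|E|)$.

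I expect the only delicate point to be the case analysis of which residual arcs can enter $v_{\textin}$ versus $v_{\textout}$ in the gadget — this is what pins down $N_G(S)$ as precisely the set of ``half-crossing'' split vertices and keeps $s,t$ out of it; the rest is the standard max-flow/min-cut cut-recovery argument.
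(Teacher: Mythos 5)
Your proof is correct. The paper states this as a \emph{Fact} without proof, treating it as a standard consequence of the vertex-splitting reduction plus max-flow/min-cut cut recovery, which is exactly the argument you give. Your case analysis of which residual arcs can enter $v_{\textin}$ (only from $s$, from an $u_{\textout}$ copy, or backward from $v_{\textout}$) is the right way to pin down $N_G(S)=C$, and the handling of the degenerate case $(s,t)\in E$ and the verification that $s,t\notin N_G(S)$ are both sound. One could streamline the linear-time claim slightly by noting that $D$ has $O(|V|)$ nodes and $O(|V|+|E|)$ arcs so a single BFS suffices, but your write-up already makes this clear.
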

It remains to compute the desired set $S_v$ given the property that $U_v\supseteq S_v$. 
Construct the graph $G_v$ as follows. Start from the induced graph $G[U_v\cup N_G(U_v)]$, remove all edges with both endpoints in $N_G(U_v)$, and then add a vertex $t$ connected to all vertices in $N_G(U_v)$.  We compute a $v$-$t$ vertex maxflow in $G_v$ and then apply \Cref{fact:flow}, obtaining a set $S_v$ such that $N_{G_v}(S_v)$ is a $(v,t)$-min-separator. Since $t\notin N_{G_v}(S_v)$, we must have $S_v\cap N_{G_v}(U_v)=\emptyset$, so by construction of $G_v$, we have $N_{G_v}(S_v)=N_G(S_v)$. In particular, $N_G(S_v)=N_{G_v}(S_v)\subseteq U_v\cup N_G(U_v)$, and along with $v\in S_v$, we obtain $N_G(S_v)\cap I=\emptyset$.

\Cref{clm:Uv} implies that $N_{G_v}(S^*_v)\subseteq U_v\cup N_{G_v}(U_v)$, so $N_{G_v}(S^*_v)=N_G(S^*_v)$ and $t\notin N_{G_v}(S^*_v)$. Therefore, $N_{G_v}(S^*_v)$ is a $(v,t)$-separator in $G_v$ of size $\lambda_v$. Since $N_{G_v}(S_v)$ is a $(v,t)$-\emph{min}-separator in $G_v$, we have $|N_G(S_v)|=|N_{G_v}(S_v)|\le |N_{G_v}(S^*_v)|=\lambda_v$.  Define $C_v=N(S_v)$, which satisfies the desired properties in the statement of the lemma.

We now bound the total size of the graphs $G_v$ over all $v\in I$.
By construction of the graphs $G_v$, each edge in $E$ joins at most one graph $G_v$. Each graph $G_v$ has $|N_G(U_v)|$ additional edges adjacent to $t$, but since each vertex in $N_G(U_v)$ is adjacent to some vertex in $U_v$ via an edge originally in $E$, we can charge the edges in $G_v$ adjacent to $t$ to the edges originating from $E$. Therefore, the total number of edges over all graphs $G_v$ is $O(m)$. 
Each of the graphs $G_v$ is connected, so the total number of vertices is also $O(m)$. 
Finally, to compute $(A_i,B_i)$-min-separator for all $i$, the total size of the maxflow instances is $O(m\log |I|)$. 
To bound the additional time, by \Cref{fact:flow}, recovering the sets $S_v$ and the values $|N(S_v)|$ takes time linear in the number of edges of $G_v$, which is $O(m)$ time over all $v\in I$.
This completes the proof of \Cref{lem:isolator}.

\section{Putting Everything Together}
\label{sec:together}

For any $k$, we can detect if $G$ has vertex mincut of size less than $k$ as follows. 
First, compute a $k$-connectivity certificate $H$ of $G$ which preserves all vertex cuts of size less than $k$ and $H$ has at most $nk$ edges (so $H$ is applicable for \Cref{lem:main k nonscratch}). This can be done in linear time using the algorithm by Nagamochi and Ibaraki \cite{NagamochiI92}.
Then, we apply Lemmas \ref{lem:main k scratch} and \ref{lem:main k nonscratch} on $H$ with parameter $k$. If $H$ has a vertex cut of size less than $k$, that cut is either a $k$-scratch or $k$-non-scratch, and so one of the algorithms of Lemmas \ref{lem:main k scratch} or \ref{lem:main k nonscratch} must return a vertex cut of size less than $k$ whp.  If vertex mincut of $G$ is at least $k$, then any of the algorithms in \Cref{lem:main k scratch} and \Cref{lem:main k nonscratch} always returns a vertex cut of size at least $k$. 
\Cref{thm:main} follows immediately by a binary search on $k$.

        \section*{Acknowledgement}  
        This project has received funding from the European Research
        Council (ERC) under the European Union's Horizon 2020 research
        and innovation programme under grant agreement No
        715672 and No 759557. Nanongkai was also partially supported by the Swedish
        Research Council (Reg. No. 2019-05622). Panigrahi has been supported in part
        by NSF Awards CCF 1750140 and CCF 1955703.

        \bibliographystyle{alpha}
        \bibliography{references,dp-refs} 

        \appendix

\section{Proofs of Linear Sketching}

\paragraph{Proof of \Cref{thm:estimate}.} 
We can use $F_2$-moment frequency estimation by \cite{AlonMS99}. Although their work focus on estimating on positive entries, their algorithm is linear, and thus it is possible to estimate norm of the difference between two vectors $x,y$:  $\norm{x-y}_2$.%

Given a vector $v \in \mathbb{R}^n$, we compute sketch of $v$ by viewing it in a streaming setting as follows.  We start with a zero vector $x = 0$, and feed a sequence of update $(i,v_i)$ for each non-empty entry in $v$ of total $\norm{v}_0$ updates. Each update can be performed in $\log^{O(1)}(n)$ time.

\paragraph{Proof of \Cref{thm:recovery}.}
The sparse recovery algorithm is described in Section 2.3 in \cite{CormodeF14} (Section 2.3.1 and Section 2.3.2 in particular). In order for their algorithm to work efficiently, we need a standard assumption that the vector $x$ that we compute the sketch from satisfies $x_i \in \mathbb{Z} \cap [-n^{O(1)},n^{O(1)}]$ for all $i \in [n]$ so that all arithmetic operations in this algorithm can be computed in $O(\log n)$ time. %

Given a vector $v \in {\{-1,0,1\}}^n$, we compute sketch of $v$ by viewing it in a streaming setting as follows.  We start with a zero vector $x = 0$, and feed a sequence of update $(i,v_i)$ for each non-zero entry in $v$ of total $\norm{v}_0$ updates. Each update can be performed in $\log^{O(1)}(n)$ time according to their sparse recovery algorithm. 

        \section{Directed Vertex Connectivity}

\label{sec:main directed} The goal of this section is to prove \Cref{thm:main
directed intro}. We first set up our notations on directed graphs. 

\paragraph{Preliminaries.}

Let $G=(V,E)$ be a directed graph. For any set $T$ of vertices,
we let $N_{G}^{\textout}(T)=\{v\notin T\mid\exists u\in T$ and $(u,v)\in E\}$
and $N_{G}^{\textout}[T]=T\cup N_{G}^{\textout}(T)$. Similarly, we
denote $N_{G}^{\textin}(T)=\{v\notin T\mid\exists w\in T$ and $(v,w)\in E\}$
and $N_{G}^{\textin}[T]=T\cup N_{G}^{\textin}(T)$. If $T=\{v\}$,
we also write $N_{G}^{\textout}(v),N^{\textout}[v],N_{G}^{\textin}(v)$
and $N_{G}^{\textin}[v]$. The set $E_{G}(A,B)$ denote the set of
edges $(u,v)$ where $u\in A$ and $v\in B$. If $A=\{v\}$, we write
$E_{G}(v,B)$. We let $\delta_{G}^{\textout}(T)=E_{G}(T,V\setminus T)$
and $\delta_{G}^{\textin}(T)=E_{G}(V\setminus T,T)$.

A \emph{(directed)} \emph{vertex cut} $(L,S,R)$ of a graph $G=(V,E)$
is partition of $V$ such that $L,R\neq\emptyset$ and $E_{G}(L,R)=\emptyset$.
Note again that $E_{G}(L,R)$ is the set of directed edges from $L$
to $R$. We call $S$ the corresponding \emph{(out-)separator} of
$(L,S,R)$. The size of a vertex cut is the size of its separator
$|S|$.%
Let $\kappa_{G}$ denote the size of the directed vertex mincut in
$G$ and we call $\kappa_{G}$ vertex connectivity of $G$.

The \emph{directed vertex connectivity} problem is to find a minimum
vertex cut in a directed graph. In other words, we ask how many vertices
we need to delete so that the resulting graph is not strongly connected.
We show the following directed vertex connectivity algorithm:
\begin{theorem}
\label{thm:main directed}Given an $m$-edge $n$-vertex directed
graph and a parameter $2\leq\ell\leq n/10$, there is a randomized
Monte Carlo vertex connectivity algorithm that runs in$\ot(\frac{n}{\ell}(t_{\textflow}(m,n)+t_{\textflow}(n\ell,n\ell))+t_{\textflow}(m,m))$
time. 
\end{theorem}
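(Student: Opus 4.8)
The plan is to prove \Cref{thm:main directed} by transplanting the case analysis behind Lemmas \ref{lem:main k scratch} and \ref{lem:main k nonscratch} to directed graphs, but \emph{without} the sublinear-time sketching layer of \Cref{sec:Fast Kernel} --- this is exactly the ``less efficient'' use of kernelization promised in the introduction. As in \Cref{sec:together}, binary searching on the target value $k$ reduces everything to a routine that, given $(G,k)$, always returns \emph{some} vertex cut and returns one of size $<k$ whp whenever $\kappa_G<k$ (the final algorithm outputs the smallest candidate, each verified in $O(m)$ time). We may assume the minimum out-degree of $G$ is at least $k$, else $N^{\textout}(v)$ for a minimum-out-degree $v$ is a cut of size $<k$. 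Fix a directed vertex mincut $(L,S,R)$ with $|L|\le|R|$. Since $E_G(L,R)=\emptyset$, every $x\in L$ has $N^{\textout}[x]\subseteq L\cup S$, and as $|N^{\textout}[x]|\ge k$ while $|L\cup S|<|L|+k$ we obtain the directed analogue of \Cref{prop:outside Nx}: $|(L\cup S)\setminus N^{\textout}[x]|<|L|$.

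\textbf{Sampling and the two regimes.} For $i=1,\dots,\lceil\lg n\rceil$, let $\ltil=2^i$ be a guess for $|L|$. Sample $X\subseteq V$ of size $\ot(n/\ltil)$ --- it meets $L$ whp once $\ltil\in[|L|/2,|L|]$ --- and, repeated $\polylog n$ times, sample $T$ by including each vertex with probability $\Theta(1/\ltil)$ and put $T_x=T\setminus N^{\textout}[x]$. Using the bound above and $|R\setminus N^{\textout}[x]|=|R|\ge|L|$, for $x\in L$ we get $\emptyset\neq T_x\subseteq R$ with constant probability; in that event any $(x,t_x)$-vertex mincut of the graph $G'_{x,T}$ obtained from $G$ by contracting $T_x$ into a sink $t_x$ is a global vertex mincut of $G$, hence of size $<k$. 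Now split on $\ltil$ versus $\ell$. If $\ltil>\ell$, solve the $(x,t_x)$-vertex maxflow on $G'_{x,T}$ directly: there are $\sum_{2^i>\ell}\ot(n/2^i)=\ot(n/\ell)$ such calls, each on a graph with $O(n)$ vertices and $O(m)$ edges after the standard reduction of $s$--$t$ vertex connectivity to maxflow, and building $G'_{x,T}$ costs $O(m)\le t_{\textflow}(m,n)$; this yields the $\tfrac{n}{\ell}\,t_{\textflow}(m,n)$ term. If $\ltil\le\ell$, first shrink $G'_{x,T}$ to a kernel $G_{x,T}$ with $\ot(n\ltil)$ edges and solve the maxflow there: there are $\sum_{2^i\le\ell}\ot(n/2^i)$ such calls, each of cost $t_{\textflow}(n2^i,n2^i)\le t_{\textflow}(n\ell,n\ell)$. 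Both sums are geometric and, whenever $t_{\textflow}$ is at least linear, dominated by $2^i=\ell$, so together they are $\ot(\tfrac{n}{\ell}(t_{\textflow}(m,n)+t_{\textflow}(n\ell,n\ell)))$.

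\textbf{The directed kernel.} I would prove directed versions of the reduction rules of \Cref{sec:reduc rule}: an \emph{Identify rule} removing (and recording in a set $Z_{x,T}$) every $v$ with $x\to v$ and $v\to t_x$, since any such $v$ lies in every $(x,t_x)$-separator; and a \emph{Filter rule} deleting (1) edges between two out-neighbours of $x$, (2) edges between two in-neighbours of $t_x$, (3) vertices not reachable from $N^{\textout}(x)$ in $G'_{x,T}\setminus N^{\textin}[t_x]$, and (4) in-neighbours of $t_x$ whose only out-edge goes to $t_x$; each rule preserves the $(x,t_x)$-vertex-mincut size by the same augmenting-path-replacement arguments as in \Cref{prop:rule filter}, now with directions respected. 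For the size bound, the directed analogue of \Cref{prop:low deg whp} says that any $v$ with more than $\Theta(\ltil\log n)$ out-neighbours outside $N^{\textout}[x]$ has one of them sampled into $T_x$ whp, so $v\in N^{\textin}(T_x)$; hence in $G_{x,T}$ every vertex other than $x$, $t_x$, and the surviving in-neighbours of $t_x$ emits only $\ot(\ltil)$ edges, and the counting of \Cref{lem:small GxT} --- used with the \emph{trivial} bound $|F|\le n$ in place of the $O(k)$ bound that the $S_{\low}$ machinery provides in the undirected case --- gives $|E(G_{x,T})|=\ot(n\ltil)$. Because we are content with $|F|\le n$, there is no need for the small-$L$/large-$S_{\low}$ dichotomy of \Cref{sec:kernel}; and, since $\ot(n\ltil)$ is still small when $\ltil\le\ell$, we may build each $G_{x,T}$ naively by one BFS/reachability pass over $G'_{x,T}$ in $O(m)$ time, subsumed in the $\tfrac{n}{\ell}t_{\textflow}(m,n)$ budget --- this is precisely where we forgo the sketching speedup.

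\textbf{The $t_{\textflow}(m,m)$ term and the main obstacle.} The remaining additive $\ot(t_{\textflow}(m,m))$ accounts for a directed analogue of the isolating-cuts lemma (\Cref{lem:isolator}), invoked $\polylog n$ times per value of $k$ to cover the configurations not reached by the sampling above; its proof follows \Cref{sec:isolating} once one checks that $|N^{\textout}(\cdot)|$ is submodular --- the vertex-by-vertex accounting of \Cref{clm:submod} goes through verbatim for out-neighbourhoods --- and that the statement ``each piece of $G'\setminus\bigcup_iC_i$ contains at most one terminal and contains the minimal minimizer'' survives when ``connected component'' is replaced by ``set reachable from the terminal'', so that after the standard vertex split each maxflow instance has $O(m)$ vertices and edges. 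I expect the main obstacle to be exactly this last point together with the correctness of Filter rule (3): in a directed graph, reachability from $N^{\textout}(x)$ while avoiding $N^{\textin}[t_x]$ is asymmetric, and one must check with care that deleting an unreachable vertex can never shrink a maximum family of internally vertex-disjoint $x$--$t_x$ paths --- every other ingredient (the sampling bounds, the geometric summation, the kernel-size count) is a routine adaptation of arguments already present in \Cref{sec:kernel,sec:using isolating}. Finally, the two explicit running times of \Cref{thm:main directed intro} follow by choosing $\ell$ and plugging into \Cref{thm:main directed} the near-linear and the current unit-capacity maxflow bounds, after first sparsifying $G$ as in \Cref{sec:together} to make $m$ comparable to $nk$.
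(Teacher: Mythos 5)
Your $\ltil$-versus-$\ell$ case split is a legitimate alternative to the paper's three-case decomposition (unbalanced, extreme, balanced) in the proof of \Cref{thm:main directed}; calibrating the $T$-sampling rate to $\ltil$ even sidesteps the paper's separate ``extreme'' case, since $\emptyset\neq T_x\subseteq R$ holds with constant probability using only $|R|\ge|L|$ rather than $|R|=\Omega(n)$. But the proposal breaks on time accounting in the $\ltil\le\ell$ regime: you cannot forgo the sketching. Building each kernel $G_{x,T}$ naively by a BFS over $G'_{x,T}$ takes $O(m)$ per call, and the number of such calls over all $\ltil\le\ell$ is $\sum_{2^i\le\ell}|X^{(i)}|\cdot\polylog n=\sum_{2^i\le\ell}\ot(n/2^i)=\ot(n)$; this sum is dominated by the \emph{smallest} guesses $2^i$, not by $2^i=\ell$. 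Naive construction therefore costs $\ot(nm)$ in total, which is not bounded by the claimed budget: taking $t_{\textflow}$ near-linear the budget becomes $\ot(nm/\ell+n^2)$, so you are short by roughly a factor $\min(\ell,m/n)$. The paper carries the directed sketching machinery over precisely for this reason (\Cref{lem:list neighbor directed}, \Cref{lem:BFS directed}, culminating in \Cref{lem:kernel directed}), so that all kernels for a fixed $\ltil$ are built in $\ot(m+|X|n\ltil)=\ot(m+n^2)$ total time, which is absorbed by $\frac{n}{\ell}t_{\textflow}(n\ell,n\ell)\ge n^2$. The sketching speedup is load-bearing, not an optimization to discard.

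Separately, the paper does not prove or use a directed isolating-cuts lemma. Its $t_{\textflow}(m,m)$ term comes from invoking the key kernel lemma (\Cref{lem:main unbalanced directed}) once more with $a=n/10$ to cover the extreme regime $k\ge n/2$, $|L|\le n/10$ (where $m\ge nk$ gives $t_{\textflow}(n^2,n^2)=\ot(t_{\textflow}(m,m))$), and its balanced case uses only uniform random pair sampling followed by ordinary maxflow. You attribute $t_{\textflow}(m,m)$ to a directed analogue of \Cref{lem:isolator}, but the undirected proof's construction of the pieces $U_v$, the inclusion $U_v\supseteq S^*_v$, and the charging of edges to the graphs $G_v$ all exploit undirected connected components; replacing ``connected component'' by ``set reachable from the terminal'' does not obviously preserve the uncrossing against $T^i_v$ once reachability is one-sided, and you flag but do not resolve this. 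Given your own uniform case split, that lemma is also unnecessary: the real obstruction in your argument is the kernel-construction cost, not the absence of directed isolating cuts.
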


The function $t_{\textflow}(m,n)$ above denotes the time to compute
$s,t$-vertex connectivity of an $m$-edge $n$-vertex graph. To get
the above bound, we naturally assume that $x\cdot t_{\textflow}(m,m)\le t_{\textflow}(x\cdot m,x\cdot m)$
for any $x\ge1$. That is, $t_{\textflow}(m,m)$ grows at least linearly
in $m$. Before proving \Cref{thm:main directed}, we show that by plugging in the fastest max flow algorithms, \Cref{thm:main directed intro} follows as a corollary.

\paragraph{Proof of \Cref{thm:main directed intro}:}
Let $\kappa_{G}$ be the
directed vertex connectivity of $G$. We will use the following algorithms
as blackboxes:
\begin{enumerate}
\item \label{enu:alg1}A $\ot(mk^{2})$-time algorithm by \cite{ForsterNYSY20} for computing
vertex connectivity (and its corresponding cut) on any directed graph,
or reporting that $\kappa_{G}>k$.
\item \label{enu:alg2}A $\ot(m+n^{1.5})$-time algorithm by \cite{BrandLNPSS0W20,BrandLLSSW21} for computing
$s$-$t$ max flow on any directed graph.
\item \label{enu:alg3}$A$ $O(m^{4/3+o(1)})$-time algorithm by \cite{KathuriaLS20} for computing $s$-$t$ max flow on any directed unit-capacity graph.
\end{enumerate}
Let $k=n^{0.5-\epsilon}$ and we will set $\epsilon=1/24$ after optimizing
parameters. We first check if $\kappa_{G}\le k$ using \Cref{enu:alg1}
in $\ot(mk^{2})=\ot(mn^{1-2\epsilon})$ time and we assume from now
that $\kappa_{G}>k$. In particular, $m\ge nk=n^{1.5-\epsilon}$.
There are two remaining cases.

First, if $n^{1.5-\epsilon}\le m\le n^{1.5}$, we claim that \Cref{thm:main directed}
implies, by setting $\ell=n^{1/8}$, that there is a vertex connectivity
algorithm with running time 
\begin{align*}
 & \Otil(\frac{n}{\ell}(t_{\textflow}(m,n)+t_{\textflow}(n\ell,n\ell))+t_{\textflow}(m,m))\\
 & =O(\frac{n}{\ell}(n^{1.5}+(n\ell)^{4/3+o(1)})+m^{4/3+o(1)}) & \text{by \Cref{enu:alg2} and \Cref{enu:alg3}}\\
 & =O(n^{19/8+o(1)}) & \text{as }\ell=n^{1/8}\text{ and }m\le n^{1.5}\\
 & =O(mn^{1-2\epsilon+o(1)}) & \text{as }m\ge n^{1.5-\epsilon}\text{ and }\epsilon=\frac{1}{24}
\end{align*}
as desired.

Second, if $m\ge n^{1.5}$, we claim that \Cref{thm:main directed}
implies, by setting $\ell=m^{3/4}/n$, that there is a vertex connectivity
algorithm with running time 
\begin{align*}
 & \Otil(\frac{n}{\ell}(t_{\textflow}(m,n)+t_{\textflow}(n\ell,n\ell))+t_{\textflow}(m,m))\\
 & =O(\frac{n}{\ell}(m+(n\ell)^{4/3+o(1)})+m^{4/3+o(1)}) & \text{by \Cref{enu:alg2} and \Cref{enu:alg3}}\\
 & =O(m^{1/4}n^{2+o(1)}+m^{4/3+o(1)}) & \text{as }\ell=m^{3/4}/n\\
 & =O(mn^{1-2\epsilon+o(1)}) & \text{as }m\ge n^{1.5}\text{ and }\epsilon=\frac{1}{24}.
\end{align*}
In any case, we have obtained a $\ot(mn^{1-1/12+o(1)})$-time algorithm. 

Lastly, if there exists a near-linear time max flow algorithm, then
we have that, by setting $\ell=\Omega(n)$, \Cref{thm:main directed}
implies an algorithm with $\Otil(\frac{n}{\ell}(t_{\textflow}(m,n)+t_{\textflow}(n\ell,n\ell))+t_{\textflow}(m,m))=\Otil(n^{2})$
time.\qed

The rest of this section is for proving \Cref{thm:main directed}.

\paragraph{Proof of \Cref{thm:main directed}:}
By binary search, it suffices to show an algorithm with the following
guarantee. Given a graph $G=(V,E)$ with vertex connectivity $\kappa_{G}$,
the algorithm, with a given parameter $k$, returns some vertex cut
$(L',S',R')$ of $G$ such that, if $\kappa_{G}<k$, then $|S'|<k$
whp. 

Suppose that $G$ has a vertex cut $(L,S,R)$ where $|S|<k$. Our
goal now is to find some cut $(L',S',R')$ of size $|S'|<k$ whp.
We will assume $m\ge nk$, otherwise the problem is trivial because
there is a vertex with degree less than $k$. We also assume w.l.o.g.
that $|L|\le|R|$ by running the algorithm on both $G$ and on the
reverse graph $G^{R}=(V,E^{R})$ where $E^{R}=\{(v,u)\mid(u,v)\in E\}$. 

Let $\ell$ be the parameter given in \Cref{thm:main directed}. There
are three cases. Firstly, we handle the \emph{unbalanced} case when
$|L|\le\ell$ by directly applying the following key lemma in $\Otil(\frac{n}{\ell}\cdot t_{\textflow}(n\ell,n\ell))$
time by setting $a\gets\ell$.
\begin{restatable}[Key Lemma]{lem}{directedKernel}
	\label{lem:main unbalanced directed} There is an algorithm that,
	given an $m$-edge $n$-vertex directed graph $G$ and two parameters
	$a\leq n/10$, and $k\leq n$, returns in $\Otil(\frac{n}{a}\cdot t_{\textflow}(na,na))$
	time a vertex cut $(L',S',R')$. Suppose $G$ contains a vertex cut
	$(L,S,R)$ such that 
	\begin{align}
		|L|\leq a,|S|<k,\mbox{ and }|R|\geq|L|.\label{eq:balanced cut cond}
	\end{align}
	Then, $|S'|<k$ whp.
\end{restatable}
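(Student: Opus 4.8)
The plan is to adapt the sublinear-time kernelization of \Cref{sec:kernel} to directed graphs, used in a coarser way that produces larger kernels but still meets the stated budget; the two places where the directed case differs are an in/out asymmetry in the ``expensive vertex'' criterion and the fact that, without the $S_{\low}$-refinement, the kernel size degrades to $\ot(n\ltil)$ rather than $\ot(k\ltil)$. We may assume $G$ has minimum out-degree at least $k$, since otherwise the out-neighborhood of a minimum out-degree vertex is already a separator of size less than $k$. Then for every $x\in L$ we have $N^{\textout}[x]\subseteq L\cup S$, hence $k<|N^{\textout}[x]|\le|L|+|S|<|L|+k$, and so $|(L\cup S)\setminus N^{\textout}[x]|<|L|$, the directed analogue of \Cref{prop:outside Nx}.

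First I would port the kernel construction. Fix a scale $\ltil\in\{1,2,4,\dots,a\}$, sample $T\subseteq V$ at rate $\Theta(1/\ltil)$, and for a candidate $x$ set $T_x\defeq T\setminus N^{\textout}[x]$ and contract $T_x$ into a sink $t_x$; every $(x,t_x)$-out-cut of the contracted graph is an $(x,T_x)$-out-cut of $G$. Repeating the computation of \Cref{prop:avoid SL} (now using $R\setminus N^{\textout}[x]=R$ and $|R|\ge|L|$) shows that, if $\ltil\in[|L|/2,|L|]$ and $x\in L$, then $\emptyset\ne T_x\subseteq R$ with constant probability, in which case an $(x,t_x)$-out-min-separator has size at most $|S|<k$. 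To shrink the contracted graph I would reprove the reduction rules of \Cref{sec:reduc rule} for directed $(x,t)$-out-cuts: delete $Z=N^{\textout}(x)\cap N^{\textin}(t_x)$ (each $z\in Z$ lies on the path $x\to z\to t_x$, so is in every out-separator, cf.\ \Cref{prop:rule identify}); delete edges internal to $N^{\textout}(x)$ or to $N^{\textin}(t_x)$; and prune vertices unreachable from $N^{\textout}(x)$ after deleting $N^{\textin}[t_x]$ together with redundant in-neighbors of $t_x$ — each justified by the same path-shortcutting arguments as \Cref{prop:rule filter}. The resulting kernel $G_{x,T}$ then satisfies: $Y$ is an $(x,t_x)$-min-separator in $G_{x,T}$ iff $Y\cup Z$ is an $(x,T_x)$-min-separator in $G$.

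Next I would bound the kernel and its construction time. The sampling argument of \Cref{prop:low deg whp} still gives, whp, that every vertex $v$ that is not an in-neighbor of any vertex of $T_x$ has $|N^{\textout}(v)\setminus N^{\textout}[x]|\le O(\ltil\ln n)$. In $G_{x,T}$ the near layer $N_x\subseteq N^{\textout}(x)$ has size at most $k+2\ltil$, the far layer $F$ has size at most $n$, and by the above every vertex other than $t_x$ and its in-neighbors has $O(\ltil\ln n)$ surviving out-edges; charging edges into $t_x$ to these as in \Cref{lem:small GxT} gives $|E(G_{x,T})|=O((k+\ltil+n)\ltil\ln n)=\ot(n\ltil)$ (this is exactly where dropping the $|S_{\low}|$-bound of \Cref{lem:small P} costs us the factor $n/k$). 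For the running time I would run the directed analogue of the sketchy search of \Cref{sec:Fast Kernel}: precompute, for every $v$, the sparse-recovery and norm-estimation sketches of $\one_{N^{\textout}(v)}$ and $\one_{N^{\textout}[v]}$ with sparsity $s=\Theta(\ltil\ln n)$ in $\ot(m)$ time, then run a BFS from $N^{\textout}(x)$ along out-edges, using the oracle of \Cref{lem:list neighbor} (verbatim with $N^{\textout}$ in place of $N$) to list $N^{\textout}(v)\setminus N^{\textout}[x]$ in $\ot(\ltil)$ time or report ``too big'', and stopping at vertices incident to $T_x$. The iterations that actually list an out-neighborhood are $O(n)$ in number (the near and far layers), each costing $\ot(\ltil)$, and the remaining iterations cost $O(\log n)$ each, so each kernel is built in $\ot(n\ltil)$ time.

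Finally, the overall algorithm runs the above for every $\ltil\in\{1,2,4,\dots,a\}$, with $\ot(1)$ independent samples $T$ per scale and a set $X^{(\ltil)}$ of $\ot(n/\ltil)$ uniformly random candidates: for each $x\in X^{(\ltil)}$ with $|N^{\textout}[x]|\le k+2\ltil$ it builds $G_{x,T}$, runs the maxflow subroutine on $G_{x,T}$ to obtain an $(x,t_x)$-min-separator, adds $Z$, and finally outputs the smallest recovered set that is genuinely a vertex cut of $G$ (verified in $O(m)$ time), or an arbitrary vertex cut otherwise. If a cut as in \eqref{eq:balanced cut cond} exists, then at the scale $\ltil\in[|L|/2,|L|]$ some candidate lies in $L$ whp and some sample $T$ has $\emptyset\ne T_x\subseteq R$ whp, so a separator of size $|S|<k$ is returned. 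The construction cost over all scales is $\ot(m+\sum_\ltil (n/\ltil)\cdot n\ltil)=\ot(n^2)$, while the maxflow calls cost $\sum_{\ltil\le a}\ot(n/\ltil)\cdot t_{\textflow}(n\ltil,n\ltil)$; by the superlinearity assumption $\tfrac{n}{\ltil}t_{\textflow}(n\ltil,n\ltil)$ is nondecreasing in $\ltil$, so this sum is $\ot(\tfrac{n}{a}t_{\textflow}(na,na))$, which also dominates the construction cost since $\tfrac{n}{a}t_{\textflow}(na,na)\ge n^2$. The step I expect to be the main obstacle is verifying the directed kernel carefully: that ``expensive'' vertices are exactly the in-neighbors of $T_x$, that the three directed reduction rules really preserve the $(x,t_x)$-out-min-separator, and that removing the $S_{\low}$-structure inflates the kernel only to $\ot(n\ltil)$ — just small enough that, after summing the geometric series of maxflow calls, the target budget is met.
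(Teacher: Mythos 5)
Your proposal follows the paper's proof essentially step by step: minimum out-degree reduction, the containment $|(L\cup S)\setminus N^{\textout}[x]|<|L|$, sampling $T$ at rate $\Theta(1/\ltil)$ and contracting $T_x=T\setminus N^{\textout}[x]$ into a sink, directed reduction rules, the $\ot(n\ltil)$ kernel bound, the sketchy-search implementation, and finally the scale loop with $\ot(n/\ltil)$ candidates per scale and the geometric-series argument for the maxflow cost. The overall plan is right and matches the paper.

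There is, however, one concrete gap, and it sits exactly where you flagged uncertainty. Your directed Filter rule is stated as the verbatim translation of the undirected one — ``delete edges internal to $N^{\textout}(x)$ or to $N^{\textin}(t_x)$'' — but in the directed setting that is too weak to make the kernel bound go through. The degree guarantee from the sampling (the analogue of \Cref{prop:low deg whp}) only controls $|N^{\textout}(v)\setminus N^{\textout}[x]|$, i.e.\ out-edges of $v$ that leave $N^{\textout}[x]$. It says nothing about out-edges of a far vertex $v\in F$ that point back into $N_x\subseteq N^{\textout}[x]$, nor about out-edges of vertices in $N_t$ (which lie in $N^{\textin}(T_x)$ and hence escape the sampling guarantee entirely and may have out-degree $\Theta(n)$), nor about in-edges of $x$. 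With only ``internal'' edges deleted, edges of the form $F\to N_x$, $N_t\to N_x$, $N_t\to F$ all survive, and the edge count of $G_{x,T}$ can be as large as $\Theta(nk)$ or even $\Theta(m)$ rather than $\ot(n\ltil)$, which breaks both the kernel-size bound and the per-query construction time. The paper's directed Filter rule (1) is strictly stronger: it deletes every edge in $\bigcup_{v\in N^{\textout}[s]}\delta^{\textin}(v)\setminus\delta^{\textout}(s)$ and every edge in $\bigcup_{v\in N^{\textin}[t]}\delta^{\textout}(v)\setminus\delta^{\textin}(t)$, i.e.\ every in-edge of $N^{\textout}[s]$ except those leaving $s$, and every out-edge of $N^{\textin}[t]$ except those entering $t$. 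This is what gives the clean edge partition $E(G_{x,T})=E(x,N_x)\cup E(N_x\cup F,\,F\cup N_t)\cup E(N_t,t_x)$ used in the analogue of \Cref{lem:small GxT}, under which the only edges that need bounding are out-edges of $N_x\cup F$ landing outside $N^{\textout}[x]$, and those are exactly the ones the sampling guarantee controls. The path-shortcutting justification you gesture at does work for the stronger rule (if a disjoint path uses $(u_1,u_2)$ with $u_2\in N^{\textout}[s]$ and $u_1\neq s$, re-route to $(s,u_2,\dots,t)$), so the fix is to state and use that stronger rule; with it, the rest of your argument closes.
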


Secondly, we handle the \emph{extreme} case when $k\ge n/2$ and $|L|\le n/10$.
To do this, we just invoke the above lemma to find a vertex cut of
size less than $k$, by setting $a=n/10$, in time $\Otil(t_{\textflow}(n^{2},n^{2}))=\Otil(t_{\textflow}(m,m))$
because $m\ge nk$. So from now, we can assume that 
\begin{equation}
\text{either }k<n/2\text{ or }|L|>n/10.\label{eq:helping}
\end{equation}

Lastly, we can handle the the remaining \emph{balanced} case where
$|L|\ge\ell$ as follows. Independently sample $p=\Otil(n/\ell)$
random pairs $(s_{1},t_{1}),\dots,(s_{p},t_{p})$ of vertices. For
each pair $(s_{i},t_{i})$, if $(s_{i},t_{i})$-vertex connectivity
in $G$ is less than $k$, use the max flow algorithm to return a
vertex cut $(L',S',R')$ where $|S'|<k$. This takes $\Otil(\frac{n}{\ell}(t_{\textflow}(m,n))$
total running time. Observe that if $s_{i}\in L$ and $t_{i}\in R$,
then $(s_{i},t_{i})$-vertex connectivity must be less than $k$ and
we would be done. We claim that this event happens whp:
\begin{prop}
Suppose $|L|\ge\ell$. There exists $1\le i\le p$ where $s_{i}\in L$
and $t_{i}\in R$ whp.
\end{prop}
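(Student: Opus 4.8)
The plan is to lower-bound the probability that a single random pair $(s_i,t_i)$ satisfies $s_i\in L$ and $t_i\in R$, and then exploit independence across the $p=\ot(n/\ell)$ pairs. Since each pair is drawn uniformly at random, $\Pr[s_i\in L\text{ and }t_i\in R]\ge \frac{|L|}{n}\cdot\frac{|R|}{n}$ (up to a negligible correction if the two endpoints are required to be distinct). We are given $|L|\ge\ell$ in this case, so the only thing that remains is to show $|R|=\Omega(n)$.

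To obtain that bound I would invoke \eqref{eq:helping}, which states that either $k<n/2$ or $|L|>n/10$. In the first sub-case, $(L,S,R)$ is a vertex cut with $|S|<k<n/2$, so $|L|+|R|=n-|S|>n/2$, and since $|L|\le|R|$ this gives $|R|>n/4$. In the second sub-case we trivially have $|R|\ge|L|>n/10$. Either way $|R|\ge n/10$, hence $\Pr[s_i\in L\text{ and }t_i\in R]\ge \frac{\ell}{n}\cdot\frac{1}{10}=\frac{\ell}{10n}$.

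Finally I would use independence of the $p$ sampled pairs: the probability that \emph{no} pair has $s_i\in L$ and $t_i\in R$ is at most $\bigl(1-\frac{\ell}{10n}\bigr)^{p}\le\exp\!\bigl(-\frac{p\ell}{10n}\bigr)$. Taking the hidden constant in $p=\Theta((n/\ell)\log n)$ large enough makes this at most $n^{-c}$ for any desired constant $c$, which proves that with high probability some pair $(s_i,t_i)$ has $s_i\in L$ and $t_i\in R$, as claimed.

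I do not expect a genuine obstacle here — the argument is a routine independence/union-bound estimate. The only step requiring a little care is extracting the lower bound $|R|=\Omega(n)$, and the case split \eqref{eq:helping} was arranged precisely so that both the ``small $k$'' regime and the ``large $L$'' regime deliver such a bound; without it, a cut with tiny $L$ and $R$ but huge $S$ could make the per-pair success probability too small to amplify with only $\ot(n/\ell)$ samples.
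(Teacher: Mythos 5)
Your proof is correct and follows essentially the same route as the paper's: lower-bound the per-pair success probability by $\Omega(\ell/n)$ using the case split from \eqref{eq:helping} to get $|R|=\Omega(n)$, then amplify over the $p=\ot(n/\ell)$ independent pairs. The only cosmetic difference is how you organize the two sub-cases of \eqref{eq:helping}, but the underlying estimate and the role of the case split are identical.
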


\begin{proof}
It suffices to prove that, for each $i$, $s_{i}\in L$ and $t_{i}\in R$
with probability $\Omega(\ell/n)$. Given that, there is no such pair
$(s_{i},t_{i})$ with probability at most $(1-\Omega(\frac{\ell}{n}))^{p}\le e^{-\Omega(p\ell/n)}\le n^{-10}$
because $p=\Otil(n/\ell)$. 

If $k\ge n/2$, then $n/10<|L|\le|R|$ by \Cref{eq:helping}. So, for
each $i$, $s_{i}\in L$ and $t_{i}\in R$ with probability $\Omega(1)$.
If $k<n/2$, then $|R|=\Omega(n)$ because $2|R|\ge|L|+|R|\ge n-k\ge n/2$.
So $t_{i}\in R$ with probability $\Omega(1)$ and $s_{i}\in L$ with
probability $|L|/n\ge\ell/n$. So, both happens with probability $\Omega(\ell/n)$.
\end{proof}
By running all three algorithms in time $\Otil(\frac{n}{\ell}(t_{\textflow}(m,n)+t_{\textflow}(n\ell,n\ell))+t_{\textflow}(m,m))$,
if there exists a vertex cut $(L,S,R)$ in $G$ where $|S|<k$, one
of the three algorithms must return a cut $(L',S',R')$ where $|S'|<k$
as desired. This concludes the proof of \Cref{thm:main directed}.
It remains prove \Cref{lem:main unbalanced directed} which we do in
the next section.
        
\section{Using Fast Kernelization for Directed Graphs} \label{sec:main unbalanced directed} 
We prove \Cref{lem:main unbalanced directed} in this section. 

\paragraph{Comparison to the Undirected Case.}
This is the main technical lemma which illustrates that the fast kernelization technique used for proving \Cref{lem:main k scratch} in undirected graphs can be useful in directed graphs too. 
In fact, the proof will follow exactly the same template used for proving \Cref{lem:main k scratch}.
As we need to replace all notations for undirected graphs with the ones for directed graphs, we repeat the whole proof for the ease of verification. 

The main difference is that our technique in directed graphs is not as strong. More specifically, in the \Cref{lem:kernel directed}, we can only bound the size of the kernel graph to be $\Otil(n\ltil)$ instead of $\Otil(k\ltil)$ as in \Cref{lem:kernel}, the analogous lemma for undirected graphs. The running time is also $\Otil(m+|X|n\ltil)$ instead of $\Otil(m+|X|k\ltil)$. As we aim for weaker bounds, the proofs actually simplify a bit more.  

Now, we proceed with the proof of \Cref{lem:main unbalanced directed}.
Throughout this section, we assume that minimum out-degree of $G$ is
at least $k$, otherwise the lemma is trivial. Now, we assume that $G$ contains a vertex cut
$(L,S,R)$ satisfying \Cref{eq:balanced cut cond}. %
We start with a simple observation which says that, given a vertex
$x\in L$, the remaining part of $L\cup S$ outside $N^{\textout}[x]$
has size at most $|L|$.
\begin{prop}
\label{prop:outside Nx directed}For any $x\in L$, $|(L\cup S)\setminus N^{\textout}[x]|<|L|$. 
\end{prop}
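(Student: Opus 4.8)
The plan is to reproduce the undirected bookkeeping argument of \Cref{prop:outside Nx}, with closed neighborhoods replaced by closed out-neighborhoods. The two facts I will combine are: (i) $N^{\textout}[x] \subseteq L \cup S$ for every $x \in L$; and (ii) a size comparison between $L \cup S$ and $N^{\textout}[x]$ coming from $|S| < k$ and the standing minimum out-degree assumption.

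For (i), I will use that $(L,S,R)$ being a (directed) vertex cut means $E_G(L,R) = \emptyset$, i.e.\ there is no directed edge from $L$ to $R$; hence every out-neighbor of $x \in L$ must lie in $L \cup S$, so $N^{\textout}(x) \subseteq L \cup S$ and therefore $N^{\textout}[x] = \{x\} \cup N^{\textout}(x) \subseteq L \cup S$. For (ii), since $L,S,R$ partition $V$ we have $|L \cup S| = |L| + |S| < |L| + k$ using the hypothesis $|S| < k$ from \Cref{eq:balanced cut cond}, while the assumption that $G$ has minimum out-degree at least $k$ gives $|N^{\textout}(x)| \geq k$, hence $|N^{\textout}[x]| \geq k+1 > k$. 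Putting these together via $|(L \cup S) \setminus N^{\textout}[x]| = |L \cup S| - |N^{\textout}[x]|$ (valid by (i)), I get
\[
|(L \cup S) \setminus N^{\textout}[x]| = |L \cup S| - |N^{\textout}[x]| < (|L| + k) - k = |L|,
\]
which is exactly the claimed bound.

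I do not anticipate any real obstacle here: this is a one-line counting statement. The only point requiring a moment of care is orienting the cut correctly — it is the absence of edges from $L$ to $R$ that forces the \emph{out}-neighbors of vertices of $L$ into $L \cup S$ (the in-neighbors need not behave this way), which is precisely why the statement is phrased with $N^{\textout}[x]$ rather than $N^{\textin}[x]$.
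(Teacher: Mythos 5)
Your proof is correct and follows the same route as the paper's: establish $N^{\textout}[x]\subseteq L\cup S$ (because there are no directed edges from $L$ to $R$), then compare $|L\cup S|<|L|+k$ against $|N^{\textout}[x]|>k$ coming from the minimum out-degree assumption. The only difference is that you spell out the subset inclusion and the counting identity $|(L\cup S)\setminus N^{\textout}[x]|=|L\cup S|-|N^{\textout}[x]|$ explicitly, whereas the paper states them tersely.
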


\begin{proof}
Note that $N^{\textout}[x]\subseteq L\cup S$ as $x\in L$. The claim follows
because $|L\cup S|<|L|+k$ and $|N^{\textout}[x]|>k$ as the minimum out-degree is
at least $k$.
\end{proof}
We will use $\ltil$ as an estimate of $|L|$ (since $|L|$ is actually
unknown to us). Let $T$ be obtained by sampling each vertex with
probability $1/(8\ltil)$. Let $T_{x}\defeq T\setminus N^{\textout}[x]$ for
any $x\in V$. 
Below, we show two basic properties of $T$. 
\begin{prop}\label{prop:low deg whp directed}
For any $x\in V$, we have the following whp. 
\begin{equation}
\text{For every }v\notin N^{\textin}[T_{x}],\,|N^{\textout}(v)\setminus N^{\textout}[x]|\le40\ltil\ln
n\label{eq:low out deg directed}
\end{equation}
\end{prop}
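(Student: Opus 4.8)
The plan is to mimic the proof of \Cref{prop:low deg whp} line by line, substituting out\nobreakdash-/in\nobreakdash-neighborhoods where appropriate. By a union bound over the at most $n$ choices of $v$, it suffices to prove the following per-vertex statement: for each fixed $v\in V$, if $|N^{\textout}(v)\setminus N^{\textout}[x]|>40\ltil\ln n$, then $v\in N^{\textin}[T_{x}]$ with probability at least $1-n^{-5}$. Applying the contrapositive to every $v$ then yields \eqref{eq:low out deg directed}, namely that whp every $v\notin N^{\textin}[T_{x}]$ has $|N^{\textout}(v)\setminus N^{\textout}[x]|\le40\ltil\ln n$.

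So fix such a $v$ and set $W\defeq N^{\textout}(v)\setminus N^{\textout}[x]$, so that $|W|>40\ltil\ln n$. The key observation I would record is that it is enough for a single vertex of $W$ to land in the sample $T$: if $w\in W$ is sampled into $T$, then since $w\notin N^{\textout}[x]$ we have $w\in T\setminus N^{\textout}[x]=T_{x}$, and because $(v,w)\in E$ this gives $v\in N^{\textin}(T_{x})\subseteq N^{\textin}[T_{x}]$. Consequently the bad event ``$v\notin N^{\textin}[T_{x}]$'' is contained in the event that no vertex of $W$ is sampled into $T$, which by independence of the sampling has probability at most $(1-\tfrac{1}{8\ltil})^{|W|}\le e^{-|W|/(8\ltil)}<e^{-5\ln n}=n^{-5}$, using $|W|>40\ltil\ln n$. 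This is exactly the same tail estimate as in the undirected case.

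Taking a union bound over all $v\in V$ then shows that, whp simultaneously, every $v$ with $|N^{\textout}(v)\setminus N^{\textout}[x]|>40\ltil\ln n$ lies in $N^{\textin}[T_{x}]$, which is precisely \eqref{eq:low out deg directed}. I do not expect a genuine obstacle here; the only point that needs care is the directional bookkeeping — that an \emph{out}-neighbor of $v$ lying inside $T_{x}$ certifies membership of $v$ in the \emph{in}-neighborhood $N^{\textin}[T_{x}]$, and that the vertices of $W$, being outside $N^{\textout}[x]$, survive the deletion $T\mapsto T\setminus N^{\textout}[x]=T_{x}$ once sampled. Everything else is identical to \Cref{prop:low deg whp}.
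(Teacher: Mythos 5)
Your proof is correct and follows exactly the same approach as the paper: reduce to the per-vertex claim, observe that any sampled vertex of $N^{\textout}(v)\setminus N^{\textout}[x]$ survives in $T_x$ and certifies $v\in N^{\textin}(T_x)$, and bound the failure probability by $(1-\tfrac{1}{8\ltil})^{|W|}<n^{-5}$. The only difference is that you make the union bound over $v$ explicit where the paper leaves it implicit.
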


\begin{proof}
It suffices to prove that, for any $v\in V$, if $|N^{\textout}(v) \setminus N^{\textout}[x]|>40\ltil\ln n$,
then $v$ has an edge to $T_{x}$ whp. Indeed, the probability that 
$v$ does not have an edge to $T_{x}$ is at most $(1-\frac{1}{8\ltil})^{|N^{\textout}(v) \setminus N^{\textout}[x]|} < n^{-5}$. 
\end{proof}
 
\begin{prop}
\label{prop:avoid SL directed}Suppose $|L|/4\le\ltil\le|L|$. For each  $x\in L$,
$\emptyset\neq T_{x}\subseteq R$ with constant probability.
\end{prop}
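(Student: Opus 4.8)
The plan is to follow the proof of the undirected analogue \Cref{prop:avoid SL} almost verbatim, replacing closed neighborhoods by closed out-neighborhoods. The first step is to decompose the target event: $\emptyset\neq T_{x}\subseteq R$ holds whenever both (i) no vertex of $(L\cup S)\setminus N^{\textout}[x]$ is sampled into $T$, and (ii) some vertex of $R\setminus N^{\textout}[x]$ is sampled into $T$. Indeed, (i) forces $T_{x}\subseteq R$ because $T_{x}$ already avoids $N^{\textout}[x]$ and $V\setminus R=L\cup S$, so $T_{x}\cap(L\cup S)=T\cap\bigl((L\cup S)\setminus N^{\textout}[x]\bigr)=\emptyset$; and (ii) forces $T_{x}\neq\emptyset$ since $R\setminus N^{\textout}[x]\subseteq V\setminus N^{\textout}[x]$.

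Next I would record the two cardinality facts needed. On the one hand, \Cref{prop:outside Nx directed} gives $|(L\cup S)\setminus N^{\textout}[x]|<|L|$. On the other hand, the one genuinely directed observation is that $R\setminus N^{\textout}[x]=R$: since $x\in L$ and $(L,S,R)$ is a directed vertex cut we have $E_{G}(L,R)=\emptyset$, hence no edge leaves $x$ into $R$, i.e.\ $N^{\textout}(x)\cap R=\emptyset$, and $x\notin R$; therefore $N^{\textout}[x]\cap R=\emptyset$. Together with $|R|\ge|L|$ this yields $|R\setminus N^{\textout}[x]|=|R|\ge|L|$.

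Finally I would carry out exactly the two-set sampling computation from \Cref{prop:avoid SL}. Put $A_{1}=(L\cup S)\setminus N^{\textout}[x]$ and $A_{2}=R$; these are disjoint, $|A_{1}|<|L|\le|A_{2}|$, and under the hypothesis $|L|/4\le\ltil\le|L|$ each vertex is sampled independently with probability $\tfrac{1}{8\ltil}\in\bigl[\tfrac{1}{8|L|},\tfrac{1}{2|L|}\bigr]$. Then $\Pr[T\cap A_{1}=\emptyset]\ge\bigl(1-\tfrac{1}{2|L|}\bigr)^{|L|}\ge\tfrac12$, and $\Pr[T\cap A_{2}\neq\emptyset]\ge1-\bigl(1-\tfrac{1}{8|L|}\bigr)^{|L|}\ge1-e^{-1/8}\ge\tfrac1{10}$; since these two events depend on disjoint vertex sets they are independent, so both occur simultaneously with probability at least $\tfrac1{20}$, which is the claimed constant probability. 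I do not expect any real obstacle: the sole step that is not a mechanical translation of the undirected argument is the identity $N^{\textout}[x]\cap R=\emptyset$, which is precisely where the directedness of the cut (absence of $L$-to-$R$ edges) enters.
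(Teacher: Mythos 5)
Your proof is correct and follows essentially the same approach as the paper: decompose the event into the two independent sub-events on the disjoint sets $A_1=(L\cup S)\setminus N^{\textout}[x]$ and $A_2=R\setminus N^{\textout}[x]=R$, bound their sizes via \Cref{prop:outside Nx directed} and the fact that the directed cut forbids $N^{\textout}[x]$ from touching $R$, and then apply the same two-set sampling calculation (you also correctly write $1-e^{-1/8}$ where the paper has a typo $1-e^{1/8}$).
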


\begin{proof}
Note that $\emptyset\neq T_{x}\subseteq R$ iff none of vertices from
$(L\cup S)\setminus N^{\textout}[x]$ is sampled to $T$ \emph{and} some vertex
from $R\setminus N^{\textout}[x]$ is sampled to $T$. Observe that $|(L\cup S)\setminus N^{\textout}[x]|<|L|$
by \Cref{prop:outside Nx directed} and $|R\setminus N^{\textout}[x]|=|R|\ge|L|$.

To rephrase the situation, we have two disjoint sets $A_{1}$ and
$A_{2}$ where $|A_{1}|<|L|$ and $|A_{2}|\ge|L|$ and each element
is sampled with probability $\frac{1}{8\ltil}\in[\frac{1}{8|L|},\frac{1}{2|L|}]$.
No element is $A_{1}$ is sampled with probability at least $(1-\frac{1}{2|L|})^{|L|}\ge0.5$.
Some element in $A_{2}$ is sampled with probability at least $1-(1-\frac{1}{8|L|})^{|L|}\ge1-e^{1/8}\ge0.1$.
As both events are independent, so they  happen simultaneously with probability at
least $0.05$. That is, $\emptyset\neq T_{x}\subseteq R$ with probability at least
$0.05$.
\end{proof}

The following key lemma further shows that, given a set $X$, we can build the kernel graph $G_{x,T}$ for each $x\in X$ in $\ot(n\ltil)$ time.
\begin{lem}[Fast Kernelization]
\label{lem:kernel directed}Let $G$ and $k$ be the input of \Cref{lem:main unbalanced directed}.  Let $X$ be a set of vertices. Let
$T$ be obtained by sampling each vertex with probability $1/(8\ltil)$ and $T_{x}\defeq T\setminus N^{\textout}[x]$ for any $x\in X$.
There is an algorithm that takes total $\ot(m+|X| n\ltil)$ time such
that, whp, for every $x\in X$, either 
\begin{itemize}
\item outputs a \emph{kernel} graph $G_{x,T}$ containing $x$ and $t_{x}$
as vertices where $|E(G_{x,T})|= \ot( n\ltil)$ together with a vertex set $Z_{x,T}$ of size $O(k)$
such that a set $Y$ is 
a $(x,t_{x})$-min-separator in $G_{x,T}$ iff $Y\cup Z_{x,T}$ is a $(x,T_{x})$-min-separator in $G$, or
\item certifies that $ T_{x} = \emptyset$ or that there is no
  $(L,S,R)$ satisfying \Cref{eq:balanced cut cond} where $\emptyset \neq T_{x}\subseteq R$,
$\ltil\in[|L|/2,|L|]$ and $x\in L$.
\end{itemize}
\end{lem}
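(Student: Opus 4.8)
The plan is to follow the proof of \Cref{lem:kernel} essentially verbatim, replacing each undirected neighborhood by the appropriate directed one: out-neighborhoods $N^{\textout}$ on the source side of $x$, in-neighborhoods $N^{\textin}$ on the sink side of $t_{x}$. First I would record directed versions of the Identify and Filter rules (\Cref{prop:rule identify}, \Cref{prop:rule filter}) for the directed $(s,t)$-vertex mincut problem: the Identify rule removes every vertex $v$ with $s\to v$ and $v\to t$; the Filter rule deletes an edge $u\to w$ whenever $w\in N^{\textout}(s)$ or $u\in N^{\textin}(t)$ (such an edge can always be shortcut out of a canonical maximum family of vertex-disjoint $s$--$t$ paths), deletes a vertex of $N^{\textin}(t)$ all of whose incoming edges come from $\{t\}\cup N^{\textin}(t)$, and deletes a vertex not reachable from $s$ by a directed path in $H\setminus N^{\textin}[t]$. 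Validity of these rules rests on submodularity of the map $A\mapsto|N^{\textout}(A)|$, proved by the same vertex-by-vertex contribution count as \Cref{clm:submod}, together with Menger / max-flow--min-cut.

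\paragraph{Kernel and its size.}
As in \Cref{sec:Exist Kernel}, let $G'_{x,T}$ be $G$ with $T_{x}\defeq T\setminus N^{\textout}[x]$ contracted into a sink $t_{x}$, so every $(x,t_{x})$-vertex cut of $G'_{x,T}$ is an $(x,T_{x})$-vertex cut of $G$; set $Z_{x,T}=N^{\textout}_{G'_{x,T}}(x)\cap N^{\textin}_{G'_{x,T}}(t_{x})$, remove it by the Identify rule, then apply the Filter rule to obtain the kernel $G_{x,T}$. The directed rules give the analogue of \Cref{lem:preserve cut}: $Y$ is an $(x,t_{x})$-min-separator of $G_{x,T}$ iff $Y\cup Z_{x,T}$ is an $(x,T_{x})$-min-separator of $G$. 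Partitioning $V(G_{x,T})$ into $x,t_{x}$, $N_{x}=N^{\textout}_{G_{x,T}}(x)$, $N_{t}=N^{\textin}_{G_{x,T}}(t_{x})$, and the far set $F$, the directed analogue of \Cref{lem:structure GxT} reads: $Z_{x,T},N_{x}$ partition $N^{\textout}_{G}(x)$; $F$ consists of the vertices outside $N^{\textout}_{G}[x]\cup N^{\textin}_{G}[T_{x}]$ reachable from $N_{x}$ in $G\setminus N^{\textin}_{G}[T_{x}]$; $N_{t}$ consists of the vertices of $N^{\textin}_{G}(T_{x})\setminus N^{\textout}_{G}[x]$ having an in-edge from $N_{x}\cup F$ (these hold for the same reasons as in the undirected case, using that in the relevant case $x$ has no out-edge into $T_{x}$). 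The edges of $G_{x,T}$ split into $E(x,N_{x}),E(N_{x},F\cup N_{t}),E(F,F\cup N_{t}),E(N_{t},t_{x})$. The key point: $N_{x}\cup F$ is disjoint from $N^{\textin}_{G}[T_{x}]$, so by \Cref{eq:low out deg directed} every such vertex has $O(\ltil\log n)$ out-edges leaving $N^{\textout}_{G}[x]$; hence the two middle groups contribute $O((|N_{x}|+|F|)\ltil\log n)$ edges, and using only $|N_{x}|\le|N^{\textout}_{G}(x)|\le n$ and $|F|\le n$ this is $\ot(n\ltil)$; also $|E(x,N_{x})|\le|N^{\textout}_{G}(x)|\le n$, and each $N_{t}$-vertex has an in-edge from $N_{x}\cup F$, so $|E(N_{t},t_{x})|=|N_{t}|$ is charged to the middle groups. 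Thus $|E(G_{x,T})|=\ot(n\ltil)$. This is exactly where the directed argument is \emph{easier}: \Cref{lem:kernel} needs the $S_{\low}$-based estimate $|F|\le 16k$ to get $\ot(k\ltil)$, whereas here the trivial $|F|\le n$ suffices. For the $O(k)$ bound on $Z_{x,T}$: $Z_{x,T}\subseteq N^{\textout}_{G'_{x,T}}(x)$, and the algorithm only proceeds when $|N^{\textout}[x]|\le k+2\ltil$; in the regime where a cut $(L,S,R)$ with $x\in L$, $\emptyset\ne T_{x}\subseteq R$, $\ltil\le|L|$ exists, the stronger $Z_{x,T}\subseteq S$ holds because $N^{\textout}_{G}(x)\subseteq L\cup S$ and $N^{\textin}_{G}(T_{x})\subseteq S\cup R$.

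\paragraph{Fast construction by directed sketchy search.}
I would build the directed Neighbor Oracle exactly as in \Cref{lem:list neighbor}: with $s=100\ltil\ln n$, precompute $\sketch_{s}(\one_{N^{\textout}(v)}),\sketch_{s}(\one_{N^{\textout}[v]}),\sketch_{\ltwo}(\one_{N^{\textout}(v)}),\sketch_{\ltwo}(\one_{N^{\textout}[v]})$ for all $v$, in total time $\sum_{v}\ot(|N^{\textout}(v)|)=\ot(m)$. Given $x$ with $|N^{\textout}[x]|\le k+2\ltil$ and any $v$, linearity gives $\sketch(\one_{N^{\textout}(v)}-\one_{N^{\textout}[x]})$, whose support is $N^{\textout}(v)\triangle N^{\textout}[x]$; since $|N^{\textout}(v)|\ge k$ (minimum out-degree) and $|N^{\textout}[x]|\le k+2\ltil$, we get $|N^{\textout}[x]\setminus N^{\textout}(v)|\le|N^{\textout}(v)\setminus N^{\textout}[x]|+2\ltil$, so the same norm-estimate-then-sparse-recover routine returns $N^{\textout}(v)\setminus N^{\textout}[x]$ in $\ot(\ltil)$ time when it has size $\le 40\ltil\ln n$ and reports ``too big'' in $O(\log n)$ time when it exceeds $100\ltil\ln n$. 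On top of this I run \Cref{alg:build GxT} verbatim, reading $N(\cdot)$ as $N^{\textout}(\cdot)$ and ``$v$ incident to $T_{x}$'' as ``$N^{\textout}(v)\cap T_{x}\ne\emptyset$'', starting the BFS-like process from $N^{\textout}(x)$; the first loop sorts $N^{\textout}(x)$ into $Z_{x,T}$ and $N_{x}$, the second discovers $N_{t}$, $F$, and the incident edges by the same inductive invariant as \Cref{prop:correct while} (with the directed \Cref{lem:structure GxT}). The only change is that the guard $\countList>16k$ becomes $\countList>n$, which never fires: every vertex on which $\outneigh$ lists a neighbor set lies in $V\setminus N^{\textout}[x]$, so there are at most $n$ such (slow) iterations, each costing $\ot(\ltil)$, and at most $n\cdot 100\ltil\ln n$ fast iterations, each costing $O(\log n)$; the per-query cost is $\ot(n\ltil)$ and the total is $\ot(m+|X|n\ltil)$. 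Finally, as with \Cref{lem:BFS}, the algorithm returns $\bot$ only when $T_{x}=\emptyset$ or $|N^{\textout}[x]|>k+2\ltil$; the latter is impossible if a cut $(L,S,R)$ satisfying \Cref{eq:balanced cut cond} with $x\in L$ and $\ltil\in[|L|/2,|L|]$ exists, since then $|N^{\textout}[x]|\le|L|+|S|<2\ltil+k$. This establishes the second bullet of the lemma.

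\paragraph{Main obstacle.}
There is no new idea here; the real work is getting the edge directions right in the reduction rules and in the characterization of $N_{x},N_{t},F$ --- i.e.\ checking that the ``shortcut'' arguments behind the directed Filter rule are sound and that the kernel still preserves $(x,t_{x})$-vertex connectivity, and that \Cref{eq:low out deg directed} bounds the out-degree outside $N^{\textout}[x]$ of \emph{precisely} the vertices $N_{x}\cup F$ we charge edges to. The one place calling for extra care is the $|Z_{x,T}|=O(k)$ claim, which I would justify via $Z_{x,T}\subseteq S$ in the good case and via the $|N^{\textout}[x]|\le k+2\ltil$ guard otherwise.
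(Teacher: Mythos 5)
Your proposal is correct and follows the paper's own proof essentially verbatim: directionalize the Identify and Filter rules, reuse the structural characterization of $Z,N_x,N_t,F$ (cf.\ \Cref{lem:structure GxT directed}), bound $|E(G_{x,T})|$ by $(|N_x|+|F|)\cdot O(\ltil\log n)=\ot(n\ltil)$ using only the trivial $|N_x|,|F|\le n$ in place of the undirected $S_{\low}$-based bound $|F|\le 16k$, and run the same sketch oracle plus sketchy search with the per-query budget loosened to $\ot(n\ltil)$ since at most $n$ slow iterations occur. Two small slips to fix: (i) as literally phrased, your directed Filter rule (``delete $u\to w$ whenever $w\in N^{\textout}(s)$'') would also delete the essential edges $s\to w$ and $u\to t$; you must carve out $\delta^{\textout}(s)$ and $\delta^{\textin}(t)$ exactly as in \Cref{prop:rule filter directed}. (ii) The reduction rules do \emph{not} rest on submodularity of $A\mapsto|N^{\textout}(A)|$ --- the paper justifies them purely by path-shortcutting together with max-flow--min-cut; submodularity (\Cref{clm:submod}) is used only in the undirected isolating-cuts lemma, which plays no role in this directed argument.
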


\paragraph{Proof of \Cref{lem:main unbalanced directed}.}
For each $i=1,\dots,\lg(a)$, let $\ltil^{(i)}=2^{i}$.
Let $T^{(i,1)},\dots,T^{(i,O(\log n))}$ be independently obtained
by sampling each vertex with probability $1/(8\ltil^{(i)})$ and let
$X^{(i)}$ be a set of $O(n\log n/\ltil^{(i)})$ random vertices. We
invoke \Cref{lem:kernel directed} with parameters $(\ltil^{(i)},X^{(i)},T^{(i,j)})$
for each $j=1,\dots,O(\log n)$. For each returned graph $G_{x,T^{(i,j)}}$
for some $x\in X_{i}$, we find $(x,t_{x})$-min-separator in $G_{x,T^{(i,j)}}$
by calling the maxflow subroutine and obtain a $(x,T_{x}^{(i,j)})$-min-separator
in $G$ by combining it with $Z_{x,T^{(i,j)}}$. Among all obtained $(x,T_{x}^{(i,j)})$-min-separators
(over all $i,j,x$), we return the one with minimum size as the answer
of \Cref{lem:main unbalanced directed}. If there is no graph $G_{x,T^{(i,j)}}$ returned
from \Cref{lem:kernel directed} at all, then we return an arbitrary vertex
cut. 

Now, we bound the running time. As we call \Cref{lem:kernel directed} $O(\log^{2}n)$ times, this takes  $\ot(\sum_i (m+|X_i|n\ltil^{(i)})) = \ot(m + \sum_i \frac{n}{\ltil^{(i)}} n\ltil^{(i)}) = \ot(n^2)$ time outside max-flow calls. The total time due to max-flow computation is 
\begin{align*} & \sum_{i,j}\sum_{x\in X^{(i)}}t_{\textflow}(|E(G_{x,T^{(i,j)}})|,|E(G_{x,T^{(i,j)}})|)\\
	& =\sum_{i}\ot((n/\ltil^{(i)})\cdot t_{\textflow}(n\ltil^{(i)},n\ltil^{(i)})\cdot)\\
	& =\ot((n/a)\cdot t_{\textflow}(na,na))
\end{align*}
where the last equality is because we assume that $x\cdot t_{\textflow}(m,m)\le t_{\textflow}(xm,xm)$
for any $m,x\ge1$.

To prove correctness, suppose that $G$ has a vertex cut $(L,S,R)$ satisfying \Cref{eq:balanced cut cond}. Consider $i$ such that $\ltil^{(i)}\in[|L|/2,|L|]$. Then, there
exists $x\in X^{(i)}$ where $x\in L$ whp. Also, by \Cref{prop:avoid SL directed},
there is $j$ where $\emptyset \neq T^{(i,j)}_x\subseteq R$ whp. Therefore, a $(x,T_{x}^{(i,j)})$-min-separator
must have size less than $k$ and we must obtain it by \Cref{lem:kernel directed}.  We remark that if the guarantee from \Cref{lem:kernel directed} fails (as it happens with low probability), then the $G_{x,T}$ can be an arbitrary graph, and its cut may not correspond to the actual cut in $G$. To handle this situation,  we add one more checking step to the above algorithm: we verify all of the vertex cuts obtained from the above algorithm. If there exists one that does not correspond to a vertex-cut in $G$, we can return an arbitrary vertex-cut in $G$. The total extra time for verification is $\ot(m)$. Therefore, the algorithm always return a vertex cut of $G$.
This completes the proof.

\paragraph{Organization.}
We formally
show the existence of $G_{x,T}$ in \Cref{sec:Exist Kernel directed} (using
the help of reduction rules shown in \Cref{sec:reduc rule directed}). Next, we give efficient data structures for efficiently building each $G_{x,T}$ in \Cref{sec:Fast Kernel directed} and then use them to finally prove \Cref{lem:kernel directed} in \Cref{sec:kernel proof directed}.

\subsection{Reduction Rules for $(s,t)$-vertex Mincut}
\label{sec:reduc rule directed}

In this section, we describe a simple and generic ``reduction rules''
for reducing the instance size of the $(s,t)$-vertex mincut problem.
We will apply these rules in \Cref{sec:Exist Kernel directed}. Let $H=(V,E)$
be an arbitrary simple directed graph with source $s$ and sink $t$ where $(s,t)\notin E$.

The first rule helps us identify vertices that must be in every mincut and hence we can remove them. %
\begin{prop}
[Identify rule]\label{prop:rule identify directed}Let $H'=H\setminus (N^{\textout}(s)\cap N^{\textin}(t))$.
Then, $S'$ is an $(s,t)$-min-separator in $H'$ if and only if $S=S'\cup(N^{\textout}(s)\cap N^{\textin}(t))$
is an $(s,t)$-min-separator in $H'$.
\end{prop}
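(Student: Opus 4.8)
The plan is to follow the template of the undirected Identify rule (\Cref{prop:rule identify}) verbatim, replacing undirected neighborhoods by the appropriate directed ones. Write $W\defeq N^{\textout}(s)\cap N^{\textin}(t)$ for the set of vertices removed, so $H'=H\setminus W$. The single observation driving everything is that each $w\in W$ must belong to \emph{every} $(s,t)$-separator of $H$: since $(s,w)\in E$ and $(w,t)\in E$, the sequence $s,w,t$ is an $s$-$t$ path whose only internal vertex is $w$, so any vertex set avoiding $w$ fails to separate $s$ from $t$. Note also $w\notin\{s,t\}$ by definition of the out-/in-neighborhoods, so deleting $w$ leaves a simple graph in which $(s,t)\notin E$ still holds.

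First I would prove the one-vertex version: for a single $w\in W$, a set $S''$ is an $(s,t)$-min-separator in $H\setminus\{w\}$ if and only if $S''\cup\{w\}$ is an $(s,t)$-min-separator in $H$. The correspondence of \emph{all} $(s,t)$-separators is immediate, since a path in $(H\setminus\{w\})\setminus S''$ is exactly a path in $H\setminus(S''\cup\{w\})$ and conversely; hence $S\mapsto S\setminus\{w\}$ is a bijection, decreasing size by exactly one, from $(s,t)$-separators of $H$ that contain $w$ to $(s,t)$-separators of $H\setminus\{w\}$. By the observation above every $(s,t)$-separator of $H$ contains $w$, so this bijection carries minimum separators to minimum separators, which is exactly the one-vertex claim.

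Then I would iterate. Order $W=\{w_1,\dots,w_q\}$, set $H_0=H$ and $H_j=H_{j-1}\setminus\{w_j\}$, so $H_q=H'$. The point making the induction go through is that deleting vertices never creates new out-neighbors of $s$ nor new in-neighbors of $t$: concretely $N^{\textout}_{H_j}(s)=N^{\textout}_H(s)\setminus\{w_1,\dots,w_j\}$ and likewise for $N^{\textin}_{H_j}(t)$, so $N^{\textout}_{H_j}(s)\cap N^{\textin}_{H_j}(t)=W\setminus\{w_1,\dots,w_j\}$, which contains $w_{j+1}$. Thus the one-vertex claim applies to $w_{j+1}$ inside $H_j$, and chaining the $q$ equivalences yields: $S'$ is an $(s,t)$-min-separator in $H'$ if and only if $S'\cup W$ is an $(s,t)$-min-separator in $H$, i.e.\ the statement with $S=S'\cup W$.

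I do not anticipate a genuine obstacle; the argument is routine given \Cref{prop:rule identify}. The only things to be careful about are the stability of $W$ under the successive deletions (so that each remaining $w_j$ is still a common out/in neighbor of $s$ and $t$ in the current graph) and the trivial bookkeeping that $w_j\notin\{s,t\}$ and $(s,t)\notin E(H_j)$ throughout — all of which hold because removing vertices can only remove edges and neighbors, never add them.
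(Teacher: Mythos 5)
Your proof is correct and follows essentially the same approach as the paper's: observe that every vertex of $W=N^{\textout}(s)\cap N^{\textin}(t)$ lies in every $(s,t)$-separator of $H$ (via the length-two path $s,w,t$), establish the one-vertex equivalence, and iterate over $W$. You are slightly more careful than the paper in spelling out why the iteration is sound (that the remaining vertices of $W$ stay common out/in neighbors in each intermediate graph $H_j$), but this is a detail the paper's terse "repeating for all remaining vertices" implicitly relies on, not a different route. (You also correctly read the second occurrence of $H'$ in the statement as a typo for $H$, matching the paper's own proof.)
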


\begin{proof}
Let $v\in N^{\textout}(s)\cap N^{\textin}(t)$. Observe that $v$ is contained in \emph{every}
$(s,t)$-separator in $H$. So $S'$ is an $(s,t)$-min-separator
in $H\setminus\{v\}$ iff $S'\cup\{v\}$ is an $(s,t)$-min-separator
in $H$. The claim follows by applying the same argument on another
vertex $v'\in N^{\textout}(s)\cap N^{\textin}(t)\setminus\{v\}$ in $H\setminus\{v\}$
and repeating for all remaining vertices in $N^{\textout}(s)\cap N^{\textin}(t)$.
\end{proof}
The second rule helps us ``filter'' useless edges and vertices w.r.t.
$(s,t)$-vertex connectivity.
\begin{prop}
[Filter rule]\label{prop:rule filter directed}
There exists a maximum set of  $(s,t)$-vertex-disjoint paths $P_1,\dots,P_z$ in $H$ such that no path contains edges/vertices
that satisfies any of the following properties.
\begin{enumerate} [nolistsep,noitemsep]
\item an edge in $\bigcup_{v \in N^{\textout}[s]} \delta^{\textin}(v) \setminus \delta^{\textout}(s)$ or $\bigcup_{v \in N^{\textin}[t]} \delta^{\textout}(v) \setminus \delta^{\textin}(t)$.%
\item a vertex $v$ where $N^{\textout}(v) \ni t$ and
  $N^{\textin}(v) \subseteq N^{\textin}[t]$. %
\item a vertex $v$ where $s$ cannot reach $v$ in $H\setminus N^{\textin}[t]$.
\end{enumerate}
Therefore, by maxflow-mincut theorem, the size of $(s,t)$-vertex mincut in $H$ stays the same
even after we remove these edges and vertices from $H$.
\end{prop}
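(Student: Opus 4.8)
The plan is to follow the undirected argument (the proof of \Cref{prop:rule filter}) step for step, replacing neighbourhoods by out-/in-neighbourhoods and keeping track of orientations. The uniform point that makes the three parts compose is this: every rerouting operation I apply to a path only ever \emph{deletes} vertices from it, so the paths stay pairwise $(s,t)$-vertex-disjoint, the family stays maximum, and any property already achieved (``this family uses none of those edges / those vertices'') is monotone and survives the next step. Throughout I use $(s,t)\notin E$ and simplicity of $H$; in particular $s$ and $t$ are never among the deleted vertices, and every $s$-$t$ path has length at least $2$.

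For item (1), fix a path $P$ from $s$ to $t$ in a maximum family of $(s,t)$-vertex-disjoint paths. Let $u$ be the last vertex of $P$ lying in $N^{\textout}(s)$ (the second vertex of $P$ qualifies, and $u\neq t$ since $(s,t)\notin E$), and replace the portion of $P$ from $s$ to $u$ by the single edge $(s,u)\in\delta^{\textout}(s)$. On the new path the only vertices in $N^{\textout}[s]$ are $s$ and $u$, the unique edge entering $u$ is $(s,u)$, and in-edges of $s$ never occur on an $s$-$t$ path; hence $P$ uses no edge of $\bigcup_{v\in N^{\textout}[s]}\delta^{\textin}(v)\setminus\delta^{\textout}(s)$. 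Symmetrically, taking $w$ to be the first vertex of the current path lying in $N^{\textin}(t)$ and replacing the portion from $w$ to $t$ by $(w,t)\in\delta^{\textin}(t)$, which only trims a suffix and so keeps the previous property, removes every edge of $\bigcup_{v\in N^{\textin}[t]}\delta^{\textout}(v)\setminus\delta^{\textin}(t)$. For item (2): if a path of a family satisfying (1) passed through a vertex $v$ with $t\in N^{\textout}(v)$ and $N^{\textin}(v)\subseteq N^{\textin}[t]$, then $v\neq s,t$, its predecessor $u$ on the path lies in $N^{\textin}(v)\subseteq N^{\textin}[t]$, so $(u,v)\in\delta^{\textout}(u)\setminus\delta^{\textin}(t)$ is an edge forbidden by (1), a contradiction; hence no such $v$ is used and all of them may be deleted. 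For item (3), read as quantified over $v\notin N^{\textin}[t]$ (the only case in which the condition is not vacuous): if a path $P=(s=p_0,\dots,p_\ell=t)$ of a family satisfying (1) and (2) meets such a vertex $v=p_a$, then, since $s$ cannot reach $v$ inside $H\setminus N^{\textin}[t]$, the prefix $p_0,\dots,p_a$ must hit $N^{\textin}[t]$; let $p_b$ be the first such vertex, so $1\le b<a$, the vertices $p_0,\dots,p_{b-1}$ avoid $N^{\textin}[t]$ and hence are not deleted by (3) (that prefix survives in $H\setminus N^{\textin}[t]$, witnessing their reachability from $s$ there), and $p_b\in N^{\textin}(t)$ since $p_b\neq t$ by simplicity. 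Replacing $P$ by $(p_0,\dots,p_b,t)$ through $(p_b,t)\in\delta^{\textin}(t)$ yields a path avoiding $v$ and every vertex deleted by (3), and, being a prefix of $P$ capped by an edge of $\delta^{\textin}(t)$, it still satisfies (1) and (2). Doing this on each path in turn gives a maximum family that meets none of the listed edges or vertices.

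To finish, let $z$ be the size of this family. By the maxflow-mincut (Menger) theorem, $z$ equals the size of an $(s,t)$-vertex mincut of $H$. Deleting from $H$ all the listed edges and vertices leaves these $z$ paths intact, so the $(s,t)$-vertex connectivity of the reduced graph is still at least $z$; it is at most $z$ because deleting edges and non-terminal vertices cannot increase it. Hence the $(s,t)$-vertex mincut size is unchanged. I expect the only real obstacle to be the bookkeeping that the three steps compose without any of them recreating a configuration ruled out by an earlier one, which is exactly what the ``deletions only'' observation handles; the remaining care is in the boundary cases ($v=s$ or $v=t$, and the use of $(s,t)\notin E$) and in the convention that item (3) ranges over $v\notin N^{\textin}[t]$.
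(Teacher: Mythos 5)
Your proof is correct and follows essentially the same route as the paper's: handle (1) by rerouting within each path, observe that (2) follows automatically once (1) holds, and handle (3) by truncating the path at the first vertex of $N^{\textin}(t)$ on its prefix, appending a direct edge to $t$. The one cosmetic difference is in step (1), where you jump directly to the \emph{last} vertex of the path in $N^{\textout}(s)$ and replace the whole prefix in one shot, rather than the paper's one-edge-at-a-time reroute; both terminate and achieve the same thing. Your explicit convention that item (3) ranges over $v\notin N^{\textin}[t]$ is a correct reading of what the paper intends (otherwise the rule would delete $t$ itself), and your ``reroutes only ever delete vertices'' observation cleanly justifies why the steps compose and disjointness is preserved, which the paper leaves implicit.
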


\begin{proof}
(1): Suppose there is $P_i = (s,\dots,u_{1},u_{2},\dots,t)$ where $(u_{1},u_{2})\in \bigcup_{v \in N^{\textout}[s]} \delta^{\textin}(v) \setminus \delta^{\textout}(s)$. If $u_2 = s$, then we replace $P_i$ with $P'_i = (u_2,\ldots,t)$. Otherwise,  $u_2 \in N^{\textout}(s)$. Thus, we can replace $P_i$ with $P'_i=(s,u_{2},\dots,t)$. Either way, we replace $P_i$ with the new path $P'_i$ that does not use edge $(u_1,u_2)$ and is still disjoint
from other paths $P_j$.   The argument is symmetric for  the set $\bigcup_{v \in N^{\textin}[t]} \delta^{\textout}(v) \setminus \delta^{\textin}(t)$.

(2): Let $v$ be such that   $N^{\textout}(v) \ni t$ and
$N^{\textin}(v) \subseteq N^{\textin}[t]$. We first apply rule
(1). This means that $v$ is unreachable from $s$. %

(3): Suppose $v\in P_i$. There must exist $t'\in N^{\textin}(t)$ where
$P_i=(s,\dots,t',\dots,v,\dots,t)$  because
$s$ could not reach $v$ if $N^{\textin}[t]$ was removed. Then, we can replace
$P_i$ with $P'_i=(s,\dots,t',t)$ which does not contain $v$  and  is still disjoint from other
paths $P_j$. 

\end{proof}

\subsection{Structure of Kernel $G_{x,T}$} %
\label{sec:Exist Kernel directed}

Let $G$ and $k$ be the input of \Cref{lem:main unbalanced directed}.  Throughout this section, we fix a vertex $x$ and a vertex set $T\neq\emptyset$.
The goal of this section is to show the existence of the graph $G_{x,T}$
as needed in \Cref{lem:kernel directed} and state its structural properties
which will be used later in \Cref{sec:Fast Kernel directed,sec:kernel proof directed}. 

Recall that $T_{x}\defeq T\setminus N^{\textout}[x]$. Recall the graph
$G'_{x,T}$ is obtained from $G$ by contracting $T_{x}$ into a
\emph{sink} $t_{x}$. We call $x$ a \emph{source}. Clearly, every
$(x,t_{x})$-vertex cut in $G'_{x,T}$ is a $(x,T_{x})$-vertex cut in $G$.

Let $G_{x,T}$ be obtained from $G'_{x,T}$ by first applying Identify
rule from \Cref{prop:rule identify directed}. Let
$Z_{x,T}=N_{G'_{x,T}}^{\textout}(x)\cap N_{G'_{x,T}}^{\textin}(t_{x})$ be the set removed
from $G'_{x,T}$ by Identify rule. We also write
$Z=Z_{x,T}$ for convenience. After removing $Z$, we apply
Filter rule from \Cref{prop:rule filter directed}. We call the resulting graph the \emph{kernel} graph $G_{x,T}$.
The reduction rules from Propositions \ref{prop:rule identify
  directed} and \ref{prop:rule filter directed} immediately
imply the following.

\begin{lem}
\label{lem:preserve cut directed}A set $Y$ is a $(x,t_{x})$-min-separator
in $G_{x,T}$ iff $Y\cup Z_{x,T}$ is a $(x,T_{x})$-min-separator
in $G$.
\end{lem}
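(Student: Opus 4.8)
The plan is to mirror the proof of \Cref{lem:preserve cut}: the kernel $G_{x,T}$ is obtained from $G$ by three successive transformations — contracting $T_x$ into a single vertex $t_x$ to form $G'_{x,T}$, deleting the set $Z_{x,T}$ via the Identify rule (\Cref{prop:rule identify directed}), and deleting the edges and vertices prescribed by the Filter rule (\Cref{prop:rule filter directed}) — and I would argue that each transformation induces an exact correspondence on the relevant family of min-separators, so that composing the three correspondences proves the lemma.

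For the contraction, note that every $(x,T_x)$-separator $S$ of $G$ arises from a vertex cut $(L,S,R)$ with $x\in L$ and $T_x\subseteq R$, hence $S\cap T_x=\emptyset$; so $S$ disconnects $x$ from all of $T_x$ in $G$ precisely when it disconnects $x$ from the contracted vertex $t_x$ in $G'_{x,T}$ (the parallel edges introduced by contraction do not affect vertex connectivity). Thus a set $Y$ is an $(x,T_x)$-min-separator of $G$ iff it is an $(x,t_x)$-min-separator of $G'_{x,T}$. Next I would apply the Identify rule with $H=G'_{x,T}$, which is legitimate since $T_x=T\setminus N^{\textout}[x]$ contains no out-neighbor of $x$ (so $(x,t_x)\notin E(G'_{x,T})$) and $Z_{x,T}=N^{\textout}_{G'_{x,T}}(x)\cap N^{\textin}_{G'_{x,T}}(t_x)$ is exactly the set the rule removes; it gives that $Y$ is an $(x,t_x)$-min-separator of $G'_{x,T}\setminus Z_{x,T}$ iff $Y\cup Z_{x,T}$ is an $(x,t_x)$-min-separator of $G'_{x,T}$.

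The remaining and most delicate step is the Filter rule, applied to $G'_{x,T}\setminus Z_{x,T}$ to produce $G_{x,T}$: \Cref{prop:rule filter directed} as stated preserves only the mincut \emph{value}, whereas here I need the whole family of min-separators to be preserved in both directions, so I would have to reuse the rerouting arguments from its proof rather than just its conclusion. For one inclusion, $G_{x,T}$ is a subgraph of $G'_{x,T}\setminus Z_{x,T}$ and no vertex deleted by the Filter rule lies on any $x$-$t_x$ path (as shown in that proof), so every min-separator of $G'_{x,T}\setminus Z_{x,T}$ is contained in $V(G_{x,T})$, is still a separator there, and — having the unchanged mincut size — is a min-separator of $G_{x,T}$. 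For the reverse inclusion, given any $x$-$t_x$ path in $G'_{x,T}\setminus Z_{x,T}$ I would apply exactly the single-path reroutings from the proof of \Cref{prop:rule filter directed} (shortcutting into $N^{\textout}(x)$, skipping the deleted unreachable vertices, shortcutting onto $N^{\textin}(t_x)$ and then to $t_x$); each reroute only discards vertices of the path, so it turns a path avoiding a set $Y$ into a path inside $G_{x,T}$ still avoiding $Y$, whence every $(x,t_x)$-separator of $G_{x,T}$ already separates $x$ from $t_x$ in $G'_{x,T}\setminus Z_{x,T}$, and (mincut values agreeing) min-separators match on the nose. Composing the three equivalences, $Y$ is an $(x,t_x)$-min-separator of $G_{x,T}$ iff $Y\cup Z_{x,T}$ is an $(x,T_x)$-min-separator of $G$, which is the lemma.
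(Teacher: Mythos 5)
Your decomposition into contraction, Identify rule, and Filter rule is exactly what the paper's terse ``immediately imply'' is appealing to, and you correctly spot the subtlety the paper glosses over: \Cref{prop:rule filter directed} as stated only preserves the mincut \emph{value}, so one must re-enter its proof to transfer the whole family of min-separators. Your treatment of that step is sound, with one imprecise parenthetical: it is \emph{not} true that a Filter-rule-deleted vertex lies on no $x$-$t_x$ path --- a rule-(3) vertex may well lie on such a path; it is merely bypassable (rule (2) vertices, by contrast, really are isolated after rule (1)). The conclusion you want (every min-separator $Y$ of $G'_{x,T}\setminus Z_{x,T}$ avoids the deleted vertices) is nonetheless correct and should instead be justified by Menger: \Cref{prop:rule filter directed} supplies a maximum family of $\kappa$ vertex-disjoint $x$-$t_x$ paths avoiding all deleted edges and vertices, and since $|Y|=\kappa$ and each of these paths meets $Y$, they meet $Y$ in $\kappa$ distinct vertices, i.e., $Y$ is contained in their union and hence avoids everything deleted. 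With that small repair, the three equivalences compose exactly as you describe.
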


Let us partition vertices of $G_{x,T}$ as follows. Let $N_x=N_{G_{x,T}}^{\textout}(x)$
be the out-neighbor of source $x$. Let $N_t=N_{G_{x,T}}^{\textin}(t_{x})$ be the
in-neighbor of sink $t_{x}$. Note that $N_x$ and $N_t$ are disjoint by Identify rule.
Let $F=V(G_{x,T})\setminus(N_x \cup N_t \cup\{x,t_{x}\})$
be the rest of vertices, which is ``far'' from both $x$ and $t_x$.
By Filter rule(1), every vertex in $N_x$ has only one incoming edge from $x$, and every vertex in $N_t$ has only one outgoing edge to $t_x$. Also, $\delta^{\textin}_{G_{x,T}}(x) \cup \delta^{\textout}_{G_{x,T}}(t_x) = \emptyset.$  Therefore, edges of $G_{x,T}$ can be partitioned to 
\begin{align}
E(G_{x,T}) =  E_{G_{x,T}}(x,N_x)\cup E_{G_{x,T}}(N_x \cup F , F\cup N_t) 
   \cup E_{G_{x,T}}(N_t, t_{x}               ). %
 \label{eq:edge of GxT directed}
\end{align}
Below, we further characterize each part in $G_{x,T}$ in term of sets
in $G = (V,E)$. %

\begin{lem}
\label{lem:structure GxT directed}We have the following:%
\begin{enumerate}
\item $Z=N^{\textout}_G(x)\cap N^{\textin}_G(T_{x})$ and
  $N_x=N^{\textout}_G(x)\setminus N ^{\textin}_G(T_{x})$. So, $Z$ and
$N_x$ partition $N^{\textout}_G(x)$.
\item $F=\{v\in V\setminus(N_G^{\textout}[x]\cup N_G^{\textin}[T_{x}])\mid v$ is reachable from
$N_x$ in $G\setminus N^{\textin}_G[T_{x}]\}$.
\item $N_t=\{v\in N^{\textin}_G(T_{x})\setminus N^{\textout}_G[x]\mid $ there is an edge from $F\cup N_x$ to $v$ in $G \}$.
\end{enumerate}
\end{lem}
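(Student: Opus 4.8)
The plan is to follow, essentially verbatim, the template of the proof of \Cref{lem:structure GxT}, replacing each undirected neighborhood by the appropriate directed one: out-neighborhoods on the source side and in-neighborhoods on the sink side. Throughout I will use that $G'_{x,T}$ is just $G$ with $T_x$ contracted into $t_x$, that $T_x\subseteq V\setminus N^{\textout}_G[x]$ (so in particular $x$ has no edge into $T_x$ and hence $(x,t_x)\notin E(G'_{x,T})$), and that $G_{x,T}$ is obtained from $G'_{x,T}$ by first deleting the set $Z$ via the Identify rule (\Cref{prop:rule identify directed}) and then deleting further edges and vertices via the Filter rule (\Cref{prop:rule filter directed}). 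For part~(1): since no vertex of $T_x$ is an out-neighbor of $x$, contracting $T_x$ neither adds $t_x$ to nor deletes any vertex from $N^{\textout}(x)$, so $N^{\textout}_{G'_{x,T}}(x)=N^{\textout}_G(x)$, while $N^{\textin}_{G'_{x,T}}(t_x)=N^{\textin}_G(T_x)$ by definition of contraction; hence $Z=N^{\textout}_{G'_{x,T}}(x)\cap N^{\textin}_{G'_{x,T}}(t_x)=N^{\textout}_G(x)\cap N^{\textin}_G(T_x)$, and after Identify deletes $Z$ the out-neighborhood of $x$ is $N^{\textout}_G(x)\setminus N^{\textin}_G(T_x)$. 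I then check that the Filter rule deletes no out-neighbor of $x$: rule~(1) deletes only edges and, since $x\notin N^{\textin}[t_x]$, none of them is an out-edge of $x$; rule~(2) deletes a vertex $v$ only when $N^{\textin}(v)\subseteq N^{\textin}[t_x]$, which would force $x\in N^{\textin}[t_x]$, impossible; and rule~(3) deletes only vertices unreachable from $x$, whereas every out-neighbor of $x$ surviving Identify lies outside $N^{\textin}[t_x]$ and is thus reached by a single edge. This gives $N_x=N^{\textout}_G(x)\setminus N^{\textin}_G(T_x)$ and that $\{Z,N_x\}$ partitions $N^{\textout}_G(x)$.

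For part~(2): put $F'=V\setminus(N^{\textout}_G[x]\cup N^{\textin}_G[T_x])$, which is exactly the vertex set of $G'_{x,T}$ that is neither out-dominated by $x$ nor in-dominated by $t_x$; since $G_{x,T}$ arises from $G'_{x,T}$ only by deletions, $F\subseteq F'$. Among the reductions, only Filter rule~(3) can delete a vertex of $F'$: Identify deletes $Z\subseteq N^{\textout}(x)$, Filter rule~(1) deletes only edges, and Filter rule~(2) deletes only vertices $v$ with $t_x\in N^{\textout}(v)$, i.e.\ $v\in N^{\textin}(t_x)$, which is disjoint from $F'$. Filter rule~(3) deletes precisely those $v\in F'$ not reachable from $x$ in $G'_{x,T}\setminus N^{\textin}_{G'_{x,T}}[t_x]$; since after Identify the only edges leaving $x$ go to $N_x$, and since un-contracting $t_x$ turns $G'_{x,T}\setminus N^{\textin}_{G'_{x,T}}[t_x]$ into $G\setminus N^{\textin}_G[T_x]$, this reachability is exactly reachability from $N_x$ in $G\setminus N^{\textin}_G[T_x]$, which yields the stated description of $F$.

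For part~(3): put $N''_t=N^{\textin}_G(T_x)\setminus N^{\textout}_G[x]$, the set of in-neighbors of $t_x$ in $G'_{x,T}$ lying outside $N^{\textout}[x]$; after Identify removes $Z=N^{\textout}_G(x)\cap N^{\textin}_G(T_x)$, the remaining in-neighbors of $t_x$ are contained in $N''_t$, so $N_t\subseteq N''_t$. Only Filter rule~(2) can delete a vertex of $N''_t$: Filter rule~(1) deletes only edges, and Filter rule~(3) deletes only vertices lying outside $N^{\textin}_G[T_x]$, whereas $N''_t\subseteq N^{\textin}_G(T_x)$. Finally, after Filter rule~(1) every remaining in-edge of a vertex $v\in N''_t$ either comes from $t_x$ or from a vertex outside $N^{\textin}[t_x]\cup N^{\textout}[x]$, i.e.\ from $F\cup N_x$ by parts~(1) and~(2); hence the hypothesis $N^{\textin}(v)\subseteq N^{\textin}[t_x]$ of rule~(2) holds iff $v$ has no in-edge from $F\cup N_x$, so the vertices of $N''_t$ surviving into $G_{x,T}$ are exactly those receiving an edge from $F\cup N_x$, which is the claimed description of $N_t$.

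The step I expect to require the most care is the bookkeeping around Filter rule~(3) together with the orientation conventions: one must verify that the condition ``reachable from $x$ while avoiding $N^{\textin}[t_x]$'' in the contracted graph $G'_{x,T}$ translates, after un-contracting $t_x$ back to $T_x$, into ``reachable from $N_x$ in $G\setminus N^{\textin}_G[T_x]$'', and that rule~(3) only ever removes vertices outside $N^{\textin}_G[T_x]$ so that it cannot touch $N''_t$. Once these translations are pinned down, the remainder is a routine case analysis essentially identical to that in the proof of \Cref{lem:structure GxT}.
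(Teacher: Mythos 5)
Your proposal is correct and follows essentially the same approach as the paper's own proof: identify the contracted neighborhoods $N^{\textout}_{G'_{x,T}}(x)=N^{\textout}_G(x)$ and $N^{\textin}_{G'_{x,T}}(t_x)=N^{\textin}_G(T_x)$ for part~(1), bound $F\subseteq F'$ and argue only Filter rule~(3) touches $F'$ for part~(2), and bound $N_t\subseteq N''_t$ and argue only Filter rule~(2) touches $N''_t$ for part~(3). You supply somewhat more explicit case-by-case reasons (e.g., why each Filter sub-rule spares $N^{\textout}(x)$, and why rule~(3) cannot delete from $N''_t$) than the paper, which states these facts without elaboration, but the decomposition and the logic are the same.
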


\begin{proof}
(1): Observe that $N'_t \defeq  N_{G'_{x,T}}^{\textin}(t_{x}) =
N^{\textin}_G(T_{x})$ and $N'_x \defeq N^{\textout}_{G'_{x,T}}(x) = N^{\textout}(x)$ because $G'_{x,T}$ is simply $G$ after contracting $T_{x}$.
So $Z=N^{\textout}_G(x)\cap N^{\textin}_G(T_{x})$. After removing $Z$ from $G'_{x,T}$ via Identify rule, the
remaining out-neighbor set of $x$ is $N_G^{\textout}(x)\setminus
N_G^{\textin}(T_{x})$.  Since Filter rule never further removes any out-neighbor
of the source $x$, we have $N_x=N^{\textout}_G(x)\setminus N^{\textin}_G(T_{x})$. 

(2): Let $F'=V\setminus(N_G^{\textout}[x]\cup
N_G^{\textin}[T_{x}])$. Note that $F'$ is precisely the set of
vertices in $G'_{x,T}$ that is not an out-neighbor of source $x$ nor an
in-neighbor of sink $t_{x}$. As $F$ is an analogous set for $G_{x,T}$ and $G_{x,T}$ is a subgraph of $G'_{x,T}$, we have $F\subseteq F'$. Observe that only Filter rule(3) may remove vertices from $F'$. (Identify rule and Filter rule(1,2) do not affect $F'$). Now, Filter rule(3)
precisely removes vertices in $F'$ that are not reachable from source
$x$ in $G'_{x,T}\setminus N^{\textin}_{G'_{x,T}}[t_{x}]$. Equivalently, it
removes those that are not reachable from $N_x$ in $G\setminus N^{\textin}[T_{x}]$.
Hence, the remaining part of $F'$ in $G_{x,T}$ is exactly
$F$.    

(3): Let $N''_t=N^{\textin}_G(T_{x})\setminus N^{\textout}_G[x]$. $N''_t$ precisely contains in-neighbors
of sink $t_{x}$ in $G'_{x,T}$ outside $N^{\textout}_G[x]$. As $N_t$ is the neighbor set of $t_{x}$ in $G_{x,T}$ and $Z=N(x)\cap N(T_{x})$ is removed
from $G_{x,T}$, we have that $N_t \subseteq N''_t$. Now, only Filter rule(2)
may remove vertices from $N''_t$, and it precisely removes those that
do not have incoming edge from  $F \cup N_x$. Therefore, the remaining part of $N''_t$
in $G_{x,T}$ is exactly $N_t$.
\end{proof}
Next, we show we bound the size of $E(G_{x,T})$. 
\begin{lem}
\label{lem:small GxT directed} Suppose \Cref{eq:low out deg directed} holds. Then, $|E(G_{x,T})|= O(n\ltil \log n).$%
\end{lem}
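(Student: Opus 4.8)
The plan is to bound $|E(G_{x,T})|$ by mirroring the proof of \Cref{lem:small GxT} from the undirected case, but using the directed decomposition of edges in \eqref{eq:edge of GxT directed}, namely
$$E(G_{x,T}) = E_{G_{x,T}}(x,N_x)\cup E_{G_{x,T}}(N_x \cup F , F\cup N_t) \cup E_{G_{x,T}}(N_t, t_{x}).$$
First I would bound $|E_{G_{x,T}}(x,N_x)|$: by \Cref{lem:structure GxT directed}(1), $N_x = N^{\textout}_G(x)\setminus N^{\textin}_G(T_x) \subseteq N^{\textout}_G(x)$, so $|E_{G_{x,T}}(x,N_x)| \le |N^{\textout}_G(x)| \le n$. (Unlike the undirected case, we do not have a clean $O(k)$ bound here because the kernelization for directed graphs is weaker; an $O(n)$ bound suffices for the claimed $\ot(n\ltil)$ total.)

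Second, I would bound the middle term $|E_{G_{x,T}}(N_x\cup F, F\cup N_t)|$. The key observation, exactly as in the undirected proof, is that every vertex $v \in V(G_{x,T})\setminus N_t$ lies outside $N^{\textin}[T_x]$ — indeed $N_x \subseteq V\setminus N^{\textin}[T_x]$ by definition (vertices in $N_x$ are not in-neighbors of $T_x$) and $F \subseteq V\setminus(N^{\textout}_G[x]\cup N^{\textin}_G[T_x])$ by \Cref{lem:structure GxT directed}(2). Hence \Cref{eq:low out deg directed} applies to every $v\in N_x\cup F$, giving $|N^{\textout}(v)\setminus N^{\textout}[x]| \le 40\ltil\ln n$; since in $G_{x,T}$ the out-neighbors of such $v$ that lie in $F\cup N_t$ are a subset of $N^{\textout}_G(v)\setminus N^{\textout}_G[x]$ (all of $F$ and $N_t$ are disjoint from $N^{\textout}_G[x]$), each $v\in N_x\cup F$ contributes at most $40\ltil\ln n$ edges to this set. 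Therefore $|E_{G_{x,T}}(N_x\cup F, F\cup N_t)| \le (|N_x| + |F|)\cdot 40\ltil\ln n$. Using $|N_x| \le n$ and $|F| \le n$ trivially, this is $O(n\ltil\log n)$.

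Third, for the last term $|E_{G_{x,T}}(N_t,t_x)|$: by Filter rule(1) each vertex in $N_t$ has exactly one outgoing edge (to $t_x$), so $|E_{G_{x,T}}(N_t,t_x)| = |N_t|$; and by \Cref{lem:structure GxT directed}(3) each vertex in $N_t$ has an incoming edge from $N_x\cup F$, so $|N_t| \le |E_{G_{x,T}}(N_x\cup F, F\cup N_t)| = O(n\ltil\log n)$. Summing the three terms gives $|E(G_{x,T})| = O(n\ltil\log n)$, as desired.

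I do not anticipate a serious obstacle here: the argument is a direct transcription of the undirected \Cref{lem:small GxT}, with the role of "neighbors" replaced by "out-neighbors" and the crude $O(n)$ bounds replacing the $O(k)$ bounds (this is precisely the point noted in the "Comparison to the Undirected Case" paragraph — the directed kernel is only $\ot(n\ltil)$, not $\ot(k\ltil)$, so we never need the $|F| = O(k)$ argument that required the $k$-scratch structure, and consequently the proof is simpler). The only mild care needed is to check that the directed in/out conventions line up — specifically that $N_x$ and $F$ both sit outside $N^{\textin}[T_x]$ so that \Cref{eq:low out deg directed} can be invoked — which follows immediately from the definitions and \Cref{lem:structure GxT directed}(1)(2).
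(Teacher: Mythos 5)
Your proof is correct and follows essentially the same approach as the paper: both bound the middle term $|E_{G_{x,T}}(N_x\cup F, F\cup N_t)|$ by $(|N_x|+|F|)\cdot 40\ltil\ln n = O(n\ltil\log n)$ via \Cref{eq:low out deg directed} applied to each $v\in N_x\cup F$ (which lies outside $N^{\textin}[T_x]$), and handle the edges incident to $x$ and $t_x$ by a crude $O(n)$ bound. The only cosmetic difference is that you bound $|E_{G_{x,T}}(N_t,t_x)|=|N_t|$ by charging each $N_t$-vertex to an incoming edge from $N_x\cup F$ via \Cref{lem:structure GxT directed}(3), whereas the paper just uses $|N_t|\le n$ directly since $G_{x,T}$ is simple; both are valid.
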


\begin{proof}
      Since $G_{x,T}$ has no parallel edges, \Cref{eq:edge of GxT directed} immediately implies  $|E(G_{x,T})| \leq |E_{G_{x,T}}(N_x \cup F, F \cup N_t)| + O(|N_x|+|N_t|) =  |E_{G_{x,T}}(N_x \cup F, F \cup N_t)| + O(n)$. Next, we claim that   $|E_{G_{x,T}}(N_x \cup F, F \cup N_t)| =  O(n\ltil \log n)$. Indeed, for any $v\in V(G_{x,T})\setminus N_t$,
we have $|E_{G_{x,T}}(v,F\cup N_t)|\le40\ltil\ln n$ by \Cref{eq:low out deg directed}.
So $|E_{G_{x,T}}(N_x \cup F, F \cup N_t)| \le(|N_x|+|F|)\cdot40\ltil\ln n = O(n \ltil \log n).$ %
 %
 %
 %
 %
 %
 %
 %
 %
 %
 %
 %
%
%
%
%
%
%
%
 
%
%
%

\end{proof}

\subsection{Data Structures}

\label{sec:Fast Kernel directed}

In this section, we show fast data structures needed for proving \Cref{lem:kernel directed}.
Throughout this section, let $(G,k,\ltil,T)$ denote the input given
to \Cref{lem:kernel directed}. We treat them as global variables in this
section. Moreover, as the guarantee from \Cref{eq:low out deg directed} holds
whp, \textbf{we will assume that \Cref{eq:low out deg directed} holds in this section.}

There are two steps. First, we build an oracle that, given any vertices
$x$ and $v$, lists all neighbors of $v$ outside $N^{\textout}[x]$ if the
set is small. Second, given an arbitrary vertex $x$, we use this
oracle to perform a BFS-like process that allows us to gradually build
$G_{x,T}$ without having an explicit representation of $G_{x,T}$
in the beginning. We show how to solves these tasks respectively
in the subsections below. 

\subsubsection{An Oracle for Listing Out-Neighbors Outside $N^{\textout}[x]$}

In this section, we show the following data structure.
\begin{lem}
[Out-Neighbor Oracle]\label{lem:list neighbor directed}There is an algorithm
that preprocesses $(G=(V,E),k,\ltil)$ in $\ot(m)$ time and supports
queries $\outneigh(x,v)$ for any vertex $x$ where $|N^{\textout}[x]|\le k+2\ltil$
and $v\in V\setminus\{x\}$.

$\outneigh(x,v)$ either returns the neighbor set of $v$ outside
$N^{\textout}[x]$, i.e.~$N^{\textout}(v)\setminus N^{\textout}[x]$, in $\ot(\ltil)$ time or report
``too big'' in $\ot(1)$ time. If $|N^{\textout}(v)\setminus N^{\textout}[x]|\le40\ltil\ln n$, then $N^{\textout}(v)\setminus N^{\textout}[x]$
is returned. If $|N^{\textout}(v)\setminus N^{\textout}[x]|>100\ltil\ln n$, then ``too
big'' is reported. Whp, every query is answered correctly.
\end{lem}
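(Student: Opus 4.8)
The plan is to adapt the proof of \Cref{lem:list neighbor} essentially verbatim, replacing every undirected neighborhood by the corresponding out-neighborhood. In the preprocessing step I will set $s\gets 100\ltil\ln n$ and, for every vertex $v\in V$, store the four linear sketches $\sketch_{s}(\one_{N^{\textout}(v)})$, $\sketch_{s}(\one_{N^{\textout}[v]})$, $\sketch_{\ltwo}(\one_{N^{\textout}(v)})$, $\sketch_{\ltwo}(\one_{N^{\textout}[v]})$ given by \Cref{thm:recovery,thm:estimate}, using the sparse (index,value) representation of the indicator vectors. Since the two sketch structures preprocess in $\ot(n)$ time and producing the sketch of $\one_{N^{\textout}(v)}$ or $\one_{N^{\textout}[v]}$ costs $\ot(|N^{\textout}(v)|)$ time, the total preprocessing time will be $\ot(n)+\sum_v\ot(|N^{\textout}(v)|)=\ot(m)$.

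To answer a query $(x,v)$ with $|N^{\textout}[x]|\le k+2\ltil$ and $v\neq x$, I will first use linearity of the sketches to form $\sketch_{\ltwo}(\one_{N^{\textout}(v)}-\one_{N^{\textout}[x]})=\sketch_{\ltwo}(\one_{N^{\textout}(v)})-\sketch_{\ltwo}(\one_{N^{\textout}[x]})$ in $O(\log n)$ time. The key structural fact is that $\one_{N^{\textout}(v)}-\one_{N^{\textout}[x]}\in\{-1,0,1\}^{V}$ takes value $+1$ exactly on $N^{\textout}(v)\setminus N^{\textout}[x]$ and $-1$ exactly on $N^{\textout}[x]\setminus N^{\textout}(v)$, together with the following size estimate, which is the one place the directed minimum-out-degree assumption (that $|N^{\textout}(v)|\ge k$ for all $v$) enters: since $|N^{\textout}(v)|=|N^{\textout}(v)\cap N^{\textout}[x]|+|N^{\textout}(v)\setminus N^{\textout}[x]|\ge k$ and $|N^{\textout}[x]|\le k+2\ltil$,
\[
|N^{\textout}[x]\setminus N^{\textout}(v)|\le(k+2\ltil)-\big(k-|N^{\textout}(v)\setminus N^{\textout}[x]|\big)=|N^{\textout}(v)\setminus N^{\textout}[x]|+2\ltil,
\]
so both $\|\one_{N^{\textout}(v)}-\one_{N^{\textout}[x]}\|_{0}$ and $\|\one_{N^{\textout}(v)}-\one_{N^{\textout}[x]}\|_{2}$ lie between $|N^{\textout}(v)\setminus N^{\textout}[x]|$ and $2|N^{\textout}(v)\setminus N^{\textout}[x]|+2\ltil$.

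With this in hand the query logic is identical to \Cref{lem:list neighbor}: if the $\ltwo$-estimate $\|\sketch_{\ltwo}(\one_{N^{\textout}(v)}-\one_{N^{\textout}[x]})\|_{2}$ exceeds $s$, output ``too big'' in $O(\log n)=\ot(1)$ time; otherwise \Cref{thm:estimate} certifies $\|\one_{N^{\textout}(v)}-\one_{N^{\textout}[x]}\|_{0}\le s$, so I can form $\sketch_{s}(\one_{N^{\textout}(v)})-\sketch_{s}(\one_{N^{\textout}[x]})=\sketch_{s}(\one_{N^{\textout}(v)}-\one_{N^{\textout}[x]})$, run the recovery of \Cref{thm:recovery} in $\ot(s)=\ot(\ltil)$ time, and return the set of recovered indices with value $+1$, which is exactly $N^{\textout}(v)\setminus N^{\textout}[x]$. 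The two threshold guarantees follow from the squeeze above: when $|N^{\textout}(v)\setminus N^{\textout}[x]|\le 40\ltil\ln n$ the estimate is at most $1.1\bigl(2\cdot 40\ltil\ln n+2\ltil\bigr)<100\ltil\ln n=s$, so the set is returned; when $|N^{\textout}(v)\setminus N^{\textout}[x]|>100\ltil\ln n$ the estimate is at least $\|\one_{N^{\textout}(v)}-\one_{N^{\textout}[x]}\|_{2}\ge|N^{\textout}(v)\setminus N^{\textout}[x]|>s$, so ``too big'' is reported. All of this holds w.h.p.\ by a union bound over the $\poly(n)$ sketch operations, using the guarantees of \Cref{thm:estimate,thm:recovery}.

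I do not expect a real obstacle here: the statement is a transcription of \Cref{lem:list neighbor}. The only points needing care are purely notational — ensuring out-neighborhoods (not in-neighborhoods) are sketched everywhere — and the single size inequality above, which relies on the standing assumption that every vertex of $G$ has out-degree at least $k$.
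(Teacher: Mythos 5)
Your proof is correct and follows exactly the same route as the paper: set $s=100\ltil\ln n$, precompute the four linear sketches per vertex in $\ot(m)$ total time, use the $\ell_2$ sketch of the (linear) difference to threshold at $s$, fall through to $s$-sparse recovery when the estimate is small, and derive both threshold guarantees from the two-sided bound $|N^{\textout}(v)\setminus N^{\textout}[x]|\le\|\one_{N^{\textout}(v)}-\one_{N^{\textout}[x]}\|_2\le 2|N^{\textout}(v)\setminus N^{\textout}[x]|+2\ltil$, whose proof uses the standing assumption that every vertex has out-degree at least $k$ exactly as the paper does. The only cosmetic difference is that you explicitly note one returns the recovered $+1$ coordinates, whereas the paper says one "obtains" $N^{\textout}(v)\setminus N^{\textout}[x]$ from the recovered symmetric difference; these are the same step.
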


For any vertex set $V'\subseteq V$, let the indicator vector $\one_{V'}\in\{0,1\}^{V}$
of $V'$ be the vector where $\one_{V'}(u)=1$ iff $u\in V'$. In this
section, we always use sparse representation of vectors, i.e.~a list
of (index,value) of non-zero entries of the vector. 

The algorithm preprocesses as follows. Set $s\gets100\ltil\ln n$.
For every vertex $v\in V$, we compute the sketches $\sketch_{s}(\one_{N^{\textout}(v)})$,
$\sketch_{s}(\one_{N^{\textout}[v]})$, $\sketch_{\ltwo}(\one_{N^{\textout}(v)})$, and
$\sketch_{\ltwo}(\one_{N^{\textout}[v]})$ using Theorems  \ref{thm:estimate} and \ref{thm:recovery}.
Observe the following.
\begin{prop}
The preprocessing time is $\ot(m)$.
\end{prop}

\begin{proof}
Theorems \ref{thm:estimate} and \ref{thm:recovery} preprocess in  $\ot(n)$ time. The total time to compute the sketches is $\sum_{v\in V}\ot(\deg^{\textout}(v))=\ot(m)$.
\end{proof}
Now, given a vertex $x$ where $|N^{\textout}[x]|\le k+2\ltil$ and $v\in V\setminus\{x\}$,
observe that the non-zero entries of $\one_{N(v)}-\one_{N[x]}\in\{-1,0,1\}^{V}$
corresponds to the symmetric difference $(N^{\textout}(v)\setminus N^{\textout}[x])\cup(N^{\textout}[x]\setminus N^{\textout}(v))$.
We will bound the size of $N^{\textout}[x]\setminus N^{\textout}(v)$ as follows:
\begin{align*}
  |N^{\textout}[x]\setminus N^{\textout}(v)| & \le|N^{\textout}[x]|-|N^{\textout}[x]\cap N^{\textout}(v)|  \\
  & \le k+2\ltil-(k-|N^{\textout}(v)\setminus N^{\textout}[x]|) \\
  &=|N^{\textout}(v)\setminus N^{\textout}[x]|+2\ltil
\end{align*}
where the second inequality is because $k\le|N^{\textout}(v)|=|N^{\textout}(v)\cap N^{\textout}[x]|+|N^{\textout}(v)\setminus N^{\textout}[x]|$.
Therefore, we have
\[
|N^{\textout}(v)\setminus N^{\textout}[x]|\le\|\one_{N^{\textout}(v)}-\one_{N^{\textout}[x]}\|_{0}\le2|N^{\textout}(v)\setminus N^{\textout}[x]|+2\ltil.
\]
Since  $\one_{N(v)}-\one_{N[x]}\in\{-1,0,1\}^{V}$, we have 
\begin{align} \label{eq:nonzero di}
|N^{\textout}(v)\setminus N^{\textout}[x]|\le\|\one_{N^{\textout}(v)}-\one_{N^{\textout}[x]}\|_{2}\le2|N^{\textout}(v)\setminus N^{\textout}[x]|+2\ltil.
\end{align}

Now, we describe how to answer the query. First, we compute $\sketch_{\ltwo}(\one_{N^{\textout}(v)})-\sketch_{\ltwo}(\one_{N^{\textout}[x]})=\sketch_{\ltwo}(\one_{N^{\textout}(v)}-\one_{N^{\textout}[x]})$
in $O(\log n)$ time. If $\|\sketch_{\ltwo}(\one_{N^{\textout}(v)}-\one_{N^{\textout}[x]})\|_{2}>s$,
then we report ``too big''. Otherwise, we have $s\ge\|\sketch_{\ltwo}(\one_{N^{\textout}(v)}-\one_{N^{\textout}[x]})\|_{2}\ge\|\one_{N^{\textout}(v)}-\one_{N^{\textout}[x]}\|_{2}=\|\one_{N^{\textout}(v)}-\one_{N^{\textout}[x]}\|_{0}$
by \Cref{thm:estimate} and because $\one_{N^{\textout}(v)}-\one_{N^{\textout}[x]}\in\{-1,0,1\}^{V}$.
So, we can compute $\sketch_{s}(\one_{N^{\textout}(v)})-\sketch_{s}(\one_{N^{\textout}[x]})=\sketch_{s}(\one_{N^{\textout}(v)}-\one_{N^{\textout}[x]})$
and obtain the set $N^{\textout}(v)\setminus N^{\textout}[x]$ inside $(N^{\textout}(v)\setminus N^{\textout}[x])\cup(N^{\textout}[x]\setminus N^{\textout}(v))$
 using \Cref{thm:recovery} in $\ot(s)=\ot(\ltil)$ time.

To see the correctness, if $|N^{\textout}(v)\setminus N^{\textout}[x]|\le40\ltil\ln n$,
then 
\[
\|\sketch_{\ltwo}(\one_{N^{\textout}(v)}-\one_{N^{\textout}[x]})\|_{2}\le1.1\|\one_{N^{\textout}(v)}-\one_{N^{\textout}[x]}\|_{2} \overset{(\ref{eq:nonzero di})}{\le}1.1\cdot(2|N^{\textout}(v)\setminus N^{\textout}[x]|+2\ltil)\le100\ltil\ln n=s.
\]
So the set $N^{\textout}(v)\setminus N^{\textout}[x]$ must be returned. If $|N^{\textout}(v)\setminus N^{\textout}[x]|>100\ltil\ln n$,
then $$ \|\sketch_{\ltwo}(\one_{N^{\textout}(v)}-\one_{N^{\textout}[x]})\|_{2} \geq \|\one_{N^{\textout}(v)}-\one_{N^{\textout}[x]}\|_{2} \overset{(\ref{eq:nonzero di})}{\ge}|N^{\textout}(v)\setminus N^{\textout}[x]|>100\ltil\ln n, $$
and so ``too big'' is reported in $\ot(1)$ time. Each query
is correct whp because of the whp guarantees from \Cref{thm:estimate,thm:recovery}. This completes the proof of \Cref{lem:list neighbor directed}.

\subsubsection{Building $G_{x,T}$ by Sketchy Search}

In this section, we show how to use the oracle from \Cref{lem:list neighbor directed}
to return the kernel graph $G_{x,T}$.
\begin{lem}
[Sketchy Search]\label{lem:BFS directed}There is an algorithm that preprocesses
$(G,k,\ltil,T)$ in $\ot(m)$ time and guarantees the following whp.

Given a query vertex $x\in X$, by calling the oracle from \Cref{lem:list neighbor directed}, return either $\bot$ or the kernel
graph $G_{x,T}$ with $O(n\ltil\log n)$ edges together with the set
$Z_{x,T}$ (defined in the beginning of \Cref{sec:Exist Kernel
  directed}) in $\ot(n\ltil)$ time. %
If there is  $(L,S,R)$ satisfying \Cref{eq:balanced cut cond} where $\emptyset \neq T_{x}\subseteq R$,
  $\ltil\in[|L|/2,|L|]$ and $x\in L$, then the algorithm must return $G_{x,T}$ and $Z_{x,T}$.
\end{lem}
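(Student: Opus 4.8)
The plan is to prove \Cref{lem:BFS directed} by transporting the undirected Sketchy Search proof (\Cref{lem:BFS}) essentially verbatim, replacing neighborhoods by out-neighborhoods and reachability by forward reachability, and invoking \Cref{lem:structure GxT directed} in place of \Cref{lem:structure GxT}. First I would describe the preprocessing: besides initializing the oracle of \Cref{lem:list neighbor directed} in $\ot(m)$ time, compute $V_{\bad}=\{v\in V\mid T\subseteq N^{\textout}[v]\}$ by scanning $N^{\textout}[v]$ for each $v$ and counting how many of its elements lie in $T$, which costs $\sum_v\deg^{\textout}(v)=O(m)$; note $x\in V_{\bad}$ iff $T_x=\emptyset$. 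On a query $x$, I would immediately return $\bot$ if $x\in V_{\bad}$ or $|N^{\textout}[x]|>k+2\ltil$. This is consistent with the second bullet of the lemma: if some $(L,S,R)$ satisfying \eqref{eq:balanced cut cond} has $\emptyset\neq T_x\subseteq R$, $\ltil\in[|L|/2,|L|]$ and $x\in L$, then $T_x\neq\emptyset$ and $N^{\textout}[x]\subseteq L\cup S$ forces $|N^{\textout}[x]|\le|L|+k\le 2\ltil+k$, so in that case the pre-check passes and the oracle is applicable.

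Next I would run the directed analogue of \Cref{alg:build GxT}: a forward BFS-like search starting from $N^{\textout}(x)$ and using $\outneigh(x,\cdot)$. The only change to the algorithm is the interpretation of its two tests. Since $T_x\cap N^{\textout}[x]=\emptyset$, the set $N^{\textout}(v)\setminus N^{\textout}[x]$ returned by $\outneigh(x,v)$ meets $T_x$ iff $v$ has an out-edge into $T_x$, i.e.\ $v\in N^{\textin}_G(T_x)$ --- the directed counterpart of \Cref{fact:check incident}; and when $\outneigh(x,v)$ reports ``too big'' we have $|N^{\textout}(v)\setminus N^{\textout}[x]|>40\ltil\ln n$, whence $v\in N^{\textin}_G(T_x)$ by \eqref{eq:low out deg directed}, which we assume holds throughout this section. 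Concretely, the first loop scans $N^{\textout}(x)$, routing each $v\in N^{\textin}_G(T_x)$ into $\Ztil$ and each remaining $v$ into $\Ntil_x$ while recording $E_G(v,N^{\textout}(v)\setminus N^{\textout}[x])$ in $\Etil_{N_x}$ and enqueueing the unvisited vertices of $N^{\textout}(v)\setminus N^{\textout}[x]$; the second loop pops $v$, routes it into $\Ntil_t$ if $v\in N^{\textin}_G(T_x)$ and otherwise into $\Ftil$, recording its out-edges outside $N^{\textout}[x]$ in $\Etil_F$ and enqueueing unvisited such out-neighbors; each vertex is visited at most once.

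The correctness argument is the technical core and the step I expect to require the most care, though it is routine given the undirected template: I would prove by induction on time, exactly as in Propositions \ref{prop:correct for} and \ref{prop:correct while}, the invariants that every vertex sent to $\Ztil$ (resp.\ $\Ntil_t$) lies in $Z=N^{\textout}_G(x)\cap N^{\textin}_G(T_x)$ (resp.\ $N_t$), every vertex sent to $\Ntil_x$ (resp.\ $\Ftil$) lies in $N_x$ (resp.\ $F$), and every queued vertex lies in $N_t\cup F$. Since $Z\subseteq N^{\textout}(x)\subseteq N^{\textout}[x]$ can never be enqueued, the search simulates a forward BFS on $G\setminus Z$ from $N_x$ that never expands past $N_t$; \Cref{lem:structure GxT directed}(2) shows $F$ is forward-reachable from $N_x$ in $G\setminus N^{\textin}_G[T_x]\subseteq G\setminus Z$, so all of $F$ is visited, and \Cref{lem:structure GxT directed}(3) shows each vertex of $N_t$ has an in-edge from $N_x\cup F$, hence is enqueued and visited. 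Thus at termination $\Ztil=Z$, $\Ntil_x=N_x$, $\Etil_{N_x}=E_{G_{x,T}}(N_x,F\cup N_t)$, $\Ntil_t=N_t$, $\Ftil=F$, and $\Etil_F=E_{G_{x,T}}(F,F\cup N_t)$; adding $E_{G_{x,T}}(x,N_x)$ and $E_{G_{x,T}}(N_t,t_x)$ (read off from $\Ntil_x$ and $\Ntil_t$) and applying \eqref{eq:edge of GxT directed} reconstructs $G_{x,T}$, which I output together with $Z_{x,T}=\Ztil$. By \Cref{lem:small GxT directed} this graph automatically has $O(n\ltil\log n)$ edges, so --- unlike the undirected case, which needs the $\countList>16k$ cutoff to secure the tighter $O(k\ltil\log n)$ bound --- no size check is required and the search is run unconditionally.

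Finally, for the running time, by \Cref{lem:list neighbor directed} an iteration where the oracle reports ``too big'' costs $\ot(1)$, while a listing iteration costs $\ot(\ltil)$ for the listing plus $O(|N^{\textout}(v)\setminus N^{\textout}[x]|)=O(\ltil\log n)$ to test membership in $T_x$ (a listed $w$ lies in $T_x$ iff $w\in T$). Since each vertex is visited at most once there are $O(n)$ iterations overall, yielding an $\ot(n\ltil)$ query time on top of the $\ot(m)$ preprocessing. The ``must return $G_{x,T}$'' guarantee then follows: when the hypothesised $(L,S,R)$ exists, the pre-check passes ($x\notin V_{\bad}$ and $|N^{\textout}[x]|\le k+2\ltil$) and the search has no $\bot$ exit, so $G_{x,T}$ and $Z_{x,T}$ are returned. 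The only delicate part is thus the inductive verification of the three invariants above, where \Cref{lem:structure GxT directed} must be applied precisely; everything else is bookkeeping carried over from the undirected proof.
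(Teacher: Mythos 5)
Your proposal is correct and mirrors the paper's own proof step for step: same preprocessing of $V_{\bad}$, same pre-checks on $x$, same forward BFS-like search via the out-neighbor oracle, same invocation of \Cref{lem:structure GxT directed} for the inductive correctness argument and of \Cref{lem:small GxT directed} for the size bound, and the same $O(n)$-iteration time analysis. Your explicit observation that the $\countList>16k$ cutoff from the undirected \Cref{alg:build GxT} is unnecessary here (because the weaker $O(n\ltil\log n)$ target makes a bound on $|F|$ superfluous) is accurate and matches the paper's \Cref{alg:build GxT directed}, which indeed omits that step.
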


The remaining part of this section is for proving \Cref{lem:BFS directed}. 
In the preprocessing step, for each $v \in V$, we just compute
$V_{\bad}=\{v\mid T\subseteq N^{\textout}[v]\}$ by trivially checking
if $T\subseteq N^{\textout}[v]$. This takes total $\sum_{v}\deg(v)= O(m)$ time. Observe that $x\in V_{\bad}$ iff $T_{x}=\emptyset$. Also, if there is  $(L,S,R)$ satisfying \Cref{eq:balanced cut cond} where $\emptyset \neq T_{x}\subseteq R$,
  $\ltil\in[|L|/2,|L|]$ and $x\in L$, then  we must have $|N[x]|\le k+|L|\le k+2\ltil$. So, given a query vertex $x$, if $x\in V_{\bad}$ or $|N^{\textout}[x]|>k+2\ltil$, we can just return $\bot$. From now, we assume that $T_{x}\neq\emptyset$
and $|N^{\textout}[x]|\le k+2\ltil$.

Before showing how to construct $G_{x,T}$, we recall the definitions
of $G_{x,T}$ and $Z_{x,T}$ from \Cref{sec:Exist Kernel directed}. \Cref{eq:edge of GxT directed}
says that edges of $G_{x,T}$ can partitioned as
\begin{align*}
E(G_{x,T}) =  E_{G_{x,T}}(x,N_x)\cup E_{G_{x,T}}(N_x \cup F , F\cup N_t) 
   \cup E_{G_{x,T}}(N_t, t_{x}               ). 
\end{align*}
where $N_x=N^{\textout}_{G_{x,T}}(x)$, $N_t=N^{\textin}_{G_{x,T}}(t_{x})$, and $F=V(G_{x,T})\setminus(N_x\cup N_t\cup\{x,t_{x}\})$.
We write $Z\defeq Z_{x,T}=N^{\textout}_{G'_{x,T}}(x)\cap N^{\textin}_{G'_{x,T}}(t_{x})$
where $G'_{x,T}$ is obtained from $G$ by contracting $T_{x}$ into a single vertex $t_{x}$. 

Our strategy is to exploit $\outneigh$ queries from \Cref{lem:list neighbor directed}
to perform a BFS-like process in $G_{x,T}$ that allows us to gradually
identify $G_{x,T}$ and $Z_{x,T}$ without having an explicit representation
of $G_{x,T}$ in the beginning. The algorithm initializes 
$\Ztil,\Ntil_x,\Etil_{N_x},\Ntil_t,\Ftil,\Etil_{F} =\emptyset$.
At the end of the algorithm, these sets will become
$Z,N_x,E_{G_{x,T}}(N_x,F\cup N_t),N_t,F,E_{G_{x,T}}(F,F\cup N_t)$ respectively. 

Observe that once we know all these sets we can immediately deduce
$E_{G_{x,T}}(x,N_x)$, $E_{G_{x,T}}(N_t,t_{x})$.
Therefore, we obtain all parts in $E(G_{x,T})$. So we can return $G_{x,T}$ and $Z_{x,T}$ as desired. 

The algorithm has two main loops. After the first loop, $\Ztil,\Ntil_x$, and $\Etil_{N_x}$
become $Z,N_x$, and $E_{G_{x,T}}(N_x,F\cup N_t)$ respectively. After the second
loop, $\Ntil_t,\Ftil$, and $\Etil_{F}$ become $N_t,F$, and $E_{G_{x,T}}(F,F\cup N_t)$
respectively. %
Below, we describe this BFS-like process in details.   %

\begin{algorithm}
\begin{enumerate}
\item \label{enu:forloop directed}For each $v\in N^{\textout}(x)$, 
\begin{enumerate}
\item Set $\textsc{visit}(v)=\textsc{true}$.
\item \label{enu:Z toobig directed}If $\outneigh(x,v)$ returns ``too big'', then
add $v$ to $\Ztil$.
\item Else, $\outneigh(x,v)$ returns the set $N^{\textout}(v)\setminus N^{\textout}[x]$. 
\begin{enumerate}
\item \label{enu:Z incident directed}If $N^{\textout}(v)\setminus N^{\textout}[x]$ intersects $T_{x}$,
then add $v$ to $\Ztil$. 
\item \label{enu:N directed}Else, (1) add $v$ to $\Ntil_x$ and edges from $v$
to $N^{\textout}(v)\setminus N^{\textout}[x]$ to $\Etil_{N_x}$, and
(2) add each element in the set $\{w\in N^{\textout}(v)\setminus N^{\textout}[x]\mid\textsc{visit}(w)\neq\textsc{true}\}$
to $\queue$.
\end{enumerate}
\end{enumerate}
\item \label{enu:whileloop directed}While $\exists v\in\queue$,
\begin{enumerate}
\item Remove $v$ from $\queue$. Set $\textsc{visit}(v)=\textsc{true}$. 
\item \label{enu:D too big directed}If $\outneigh(x,v)$ returns ``too big'',
then add $v$ to $\Ntil_t$.
\item Else, $\outneigh(x,v)$ returns the set $N^{\textout}(v)\setminus N^{\textout}[x]$. 
\begin{enumerate}
\item \label{enu:D incident directed}If $N^{\textout}(v)\setminus N^{\textout}[x]$ intersects $T_{x}$,
then add $v$ to $\Ntil_t$.
\item \label{enu:P directed}Else, (1) add $v$ to $\Ftil$ and edges from $v$
to $N^{\textout}(v)\setminus N^{\textout}[x]$ to $\Etil_{F}$, and (2)
add each element in the set $\{w\in N^{\textout}(v)\setminus N^{\textout}[x]\mid\textsc{visit}(w)\neq\textsc{true}\}$
to $\queue$.
\end{enumerate}
\end{enumerate}
\end{enumerate}
\caption{\label{alg:build GxT directed}An algorithm for building $G_{x,T}$}

\end{algorithm}

Before prove the correctness of \Cref{alg:build GxT directed}, we observe the
following simple fact.
\begin{fact}
\label{fact:check incident directed}$N^{\textout}(v)\setminus N^{\textout}[x]$ intersects $T_{x}$
if and only if there is an edge from $v$ to $T_{x}$.
\end{fact}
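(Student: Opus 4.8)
The plan is to replay the proof of \Cref{fact:check incident} from the undirected setting, replacing the neighborhood by the out-neighborhood. The only structural input needed is that $T_x$ was defined as $T\setminus N^{\textout}[x]$, so that $T_x\cap N^{\textout}[x]=\emptyset$; this makes deleting $N^{\textout}[x]$ from any vertex set harmless as far as intersections with $T_x$ are concerned.

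Concretely, I would first observe that for an arbitrary set $W\subseteq V$ one has $(W\setminus N^{\textout}[x])\cap T_x = W\cap T_x$, since $T_x$ contains no vertex of $N^{\textout}[x]$. Applying this with $W=N^{\textout}(v)$ shows that $N^{\textout}(v)\setminus N^{\textout}[x]$ intersects $T_x$ if and only if $N^{\textout}(v)$ intersects $T_x$. Then I would unfold the definition of $N^{\textout}(v)$: a vertex $w$ lies in $N^{\textout}(v)\cap T_x$ precisely when $w\neq v$, $w\in T_x$, and $(v,w)\in E$, i.e.\ precisely when $(v,w)$ is an edge from $v$ to $T_x$. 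Since $G$ is simple and in particular has no self-loop, every edge from $v$ into $T_x$ has its head distinct from $v$ and hence contributes such a $w\in N^{\textout}(v)\cap T_x$; conversely any such $w$ witnesses an edge from $v$ to $T_x$. This yields both directions of the equivalence.

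I do not anticipate a genuine obstacle: the statement is almost a restatement of the relevant definitions, and the only point deserving a moment's care is a possible self-loop at $v$, which is ruled out because the kernel construction keeps the graph simple. The payoff, used immediately afterward, is that the incidence tests in Steps~\ref{enu:Z incident directed} and~\ref{enu:D incident directed} of \Cref{alg:build GxT directed}, phrased in terms of $N^{\textout}(v)\setminus N^{\textout}[x]$, are equivalent to checking whether $v$ has an outgoing edge into $T_x$.
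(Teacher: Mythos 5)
Your proof is correct and follows the same route as the paper's: both reduce to the observation that $T_x\cap N^{\textout}[x]=\emptyset$, so deleting $N^{\textout}[x]$ cannot affect intersection with $T_x$, and then unfold the definition of $N^{\textout}(v)$. Your remark on self-loops is a harmless extra precaution; the paper phrases the conclusion as $v\in N^{\textin}(T_x)$ instead of "edge from $v$ to $T_x$," but these are the same thing in context.
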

\begin{proof}
As $T_{x}\cap N^{\textout}[x]=\emptyset$, we have $N^{\textout}(v)\setminus N^{\textout}[x]$
intersects $T_{x}$ iff $N^{\textout}(v)$ intersects $T_{x}$ iff $v\in N^{\textin}(T_{x})$.
\end{proof}
That is, the condition in Steps \ref{enu:Z incident directed} and
\ref{enu:D incident directed}
is equivalent to checking if $v$ is incident to $T_{x}$. Now, we
prove the correctness of the first loop.
\begin{prop}
\label{prop:correct for directed}After the for loop in Step
\ref{enu:forloop directed},
$\Ztil$, $\Ntil_x$, and $\Etil_{N_x}$ become $Z$, $N_x$, and $E_{G_{x,T}}(N_x,F\cup N_t)$,
respectively.
\end{prop}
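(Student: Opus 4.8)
The plan is to mirror the proof of \Cref{prop:correct for} from the undirected case, replacing every undirected neighborhood notion with its directed out-/in- analogue. The core structural fact we lean on is \Cref{lem:structure GxT directed}(1), which states $N^{\textout}_G(x) = Z \,\dot\cup\, N_x$ with $Z = N^{\textout}_G(x) \cap N^{\textin}_G(T_x)$ and $N_x = N^{\textout}_G(x) \setminus N^{\textin}_G(T_x)$. Since the for loop in Step~\ref{enu:forloop directed} iterates over exactly $v \in N^{\textout}(x)$, each such $v$ is placed in either $\Ztil$ or $\Ntil_x$; I would argue these two cases correctly correspond to $Z$ and $N_x$ respectively, and that the edge set $\Etil_{N_x}$ collects exactly $E_{G_{x,T}}(N_x, F \cup N_t)$.

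First I would handle the $\Ztil$ assignments. There are two sub-cases. If $v$ is added to $\Ztil$ at Step~\ref{enu:Z toobig directed} because $\outneigh(x,v)$ reports ``too big'', then by \Cref{lem:list neighbor directed} we have $|N^{\textout}(v) \setminus N^{\textout}[x]| > 40\ltil \ln n$, so by \Cref{eq:low out deg directed} the vertex $v$ must satisfy $v \in N^{\textin}[T_x]$; since $v \in N^{\textout}(x)$ and $N^{\textout}[x] \cap T_x = \emptyset$, in fact $v \in N^{\textin}(T_x)$, hence $v \in Z$. If instead $v$ is added at Step~\ref{enu:Z incident directed} because $N^{\textout}(v) \setminus N^{\textout}[x]$ intersects $T_x$, then by \Cref{fact:check incident directed} there is an edge from $v$ to $T_x$, i.e.\ $v \in N^{\textin}(T_x)$, so again $v \in Z$. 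Conversely, if $v$ is added to $\Ntil_x$ at Step~\ref{enu:N directed}, then neither condition held: $\outneigh(x,v)$ returned the explicit set $N^{\textout}(v) \setminus N^{\textout}[x]$ which does not intersect $T_x$, so $v \notin N^{\textin}(T_x)$, hence $v \in N_x$. Since every $v \in N^{\textout}(x) = Z \dot\cup N_x$ is classified into exactly one of the two sets and the classification respects the partition, we get $\Ztil = Z$ and $\Ntil_x = N_x$ at the end of the loop.

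Finally, for the edge set: whenever $v$ is added to $\Ntil_x$, Step~\ref{enu:N directed} adds all edges from $v$ to $N^{\textout}(v) \setminus N^{\textout}[x]$ into $\Etil_{N_x}$. Since $G_{x,T}$ has no internal edges inside $N_x$ (by Filter rule(1)), the out-edges of $v \in N_x$ landing outside $N^{\textout}[x]$ are precisely those going to $F \cup N_t$; that is, $E_G(v, V \setminus N^{\textout}[x]) = E_{G_{x,T}}(v, F \cup N_t)$ once we note that $G_{x,T}$ retains exactly these edges (edges from $N_x$ to $N^{\textout}[x] \setminus \{x\}$ would be internal to $N_x$ and are removed, and the edge $x \to v$ is not an out-edge of $v$). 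Summing over all $v \in N_x$ gives $\Etil_{N_x} = E_{G_{x,T}}(N_x, F \cup N_t)$. The only mild subtlety to be careful about is confirming that $N^{\textout}(v) \setminus N^{\textout}[x]$ for $v \in N_x$ contains no vertices of $Z$ or stray neighbors of $x$ — this follows because $Z \cup N_x = N^{\textout}(x) \subseteq N^{\textout}[x]$, so $N^{\textout}(v) \setminus N^{\textout}[x]$ is automatically disjoint from $N^{\textout}(x)$, hence from $Z$; I expect this bookkeeping (rather than any deep argument) to be the main thing to verify carefully, exactly as in the undirected proof.
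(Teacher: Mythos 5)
Your proposal is correct and follows essentially the same argument as the paper: invoke \Cref{lem:structure GxT directed}(1) to partition $N^{\textout}(x)$, show each branch of the loop places $v$ in the correct part (via \Cref{eq:low out deg directed} for the ``too big'' case and \Cref{fact:check incident directed} for the direct check), and observe that the listed out-neighbors give exactly $E_{G_{x,T}}(v, F\cup N_t)$. Your slightly more explicit bookkeeping (e.g.\ passing from $v\in N^{\textin}[T_x]$ to $v\in N^{\textin}(T_x)$, and checking disjointness from $Z$) is sound and, incidentally, also fixes two small typos in the paper's own proof, which cites Step~\ref{enu:Z incident directed} twice where the first should be Step~\ref{enu:Z toobig directed}, and writes $N^{\textout}(T_x)$ where $N^{\textin}(T_x)$ is meant.
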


\begin{proof}
By \Cref{lem:structure GxT directed}(1), $N^{\textout}(x)=Z\dot{\cup}N_x$ where $Z=N^{\textout}(x)\cap N^{\textin}(T_{x})$
and $N_x= N^{\textout}(x)\setminus N^{\textin}(T_{x})$. After the for loop, every $v\in N^{\textout}(x)$
is added to either $\Ztil$ or $\Ntil_x$. If $v$ is added to $\Ztil$
in Step \ref{enu:Z incident directed}, then \Cref{lem:list neighbor directed} implies that $|N^{\textout}(v)\setminus N^{\textout}[x]|>40\ltil\ln n$
and so $v\in N^{\textout}(T_{x})$ by \Cref{eq:low out deg directed}, which means $v\in Z$.
If $v$ is added to $\Ztil$ in Step \ref{enu:Z incident directed}, then we
directly verify that $v\in N^{\textin}(T_{x})$ (see \Cref{fact:check
  incident directed}) and so $v\in Z$ again. Lastly, if $v$ is added to $\Ntil_x$ in Step
\Cref{enu:N directed}, then $v\notin N^{\textin}(T_{x})$ and so $v\in N_x$. This means
that indeed $\Ztil=Z$ and $\Ntil_x=N_x$ after the for loop. Lastly,
every time $v$ is added to $\Ntil_x$, we add $E_{G}(v,V\setminus N^{\textout}[x])=E_{G_{x,T}}(v,F\cup N_x)$
into $\Etil_{N_x}$. So $\Etil_{N_x}$ also collects all edges in $E_{G_{x,T}}(N_x,F\cup N_t)$
after the for loop.
\end{proof}
Next, we prove the correctness of the second loop. The proof is similar to the first one
but more complicated.

\begin{prop}
\label{prop:correct while directed}Suppose $\bot$ is not returned by
\Cref{alg:build GxT directed}.
Then, at end of the while loop in Step \ref{enu:whileloop directed}, $\Ntil_t$,
$\Ftil$ and $\Etil_{F}$ become $N_t$, $F$ and $E_{G_{x,T}}(F,F\cup N_t)$
respectively.
\end{prop}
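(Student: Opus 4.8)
The plan is to transcribe the proof of \Cref{prop:correct while} into the directed setting, changing only the neighborhood notation: $N(\cdot)$ on the source side becomes $N^{\textout}(\cdot)$, $N[T_x]$ on the sink side becomes $N^{\textin}[T_x]$, the phrase ``$v$ is incident to $T_x$'' becomes ``there is an edge from $v$ into $T_x$'' (i.e.\ $v\in N^{\textin}(T_x)$, which is exactly \Cref{fact:check incident directed}), and the roles of \Cref{lem:structure GxT}, \Cref{lem:list neighbor} and \Cref{eq:low out deg} are played by \Cref{lem:structure GxT directed}, \Cref{lem:list neighbor directed} and \Cref{eq:low out deg directed}. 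Concretely, I would prove by induction on the step count of the while loop the three invariants: (i) $\Ntil_t\subseteq N_t$, (ii) $\Ftil\subseteq F$, and (iii) every vertex ever put in \queue{} lies in $N_t\cup F$.

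For the base case, just before the while loop $\Ntil_t=\Ftil=\emptyset$, and any $w\in$ \queue{} was enqueued in Step~\ref{enu:N directed}, so $w\in N^{\textout}(v)\setminus N^{\textout}[x]$ for some $v\in N_x$ (here $\Ntil_x=N_x$ by \Cref{prop:correct for directed}); if $w\in N^{\textin}(T_x)$ then $w\in N^{\textin}(T_x)\setminus N^{\textout}[x]$ has an in-edge from $v\in N_x$, so $w\in N_t$ by \Cref{lem:structure GxT directed}(3), and otherwise $w\notin N^{\textin}[T_x]$ (if $w\in T_x$ then $v\in N^{\textin}(T_x)$, contradicting $v\in N_x$), so $(v,w)$ certifies $w$ reachable from $N_x$ in $G\setminus N^{\textin}[T_x]$ and $w\in F$ by \Cref{lem:structure GxT directed}(2). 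For the inductive step, at the iteration visiting $v$ we have $v\in N_t\cup F$ by (iii); if $v$ lands in $\Ntil_t$ then either $\outneigh(x,v)$ said ``too big'' — so $|N^{\textout}(v)\setminus N^{\textout}[x]|>40\ltil\ln n$ and $v\in N^{\textin}(T_x)$ by \Cref{eq:low out deg directed} — or $N^{\textout}(v)\setminus N^{\textout}[x]$ met $T_x$ — so $v\in N^{\textin}(T_x)$ by \Cref{fact:check incident directed}; since $F\cap N^{\textin}[T_x]=\emptyset$ this forces $v\in N_t$. If $v$ lands in $\Ftil$ (Step~\ref{enu:P directed}) we verified $v\notin N^{\textin}(T_x)$, and as $N_t\subseteq N^{\textin}(T_x)$ this forces $v\in F$. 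If $w$ is enqueued at Step~\ref{enu:P directed} then $v\in F$ and $w\in N^{\textout}(v)\setminus N^{\textout}[x]$; the same dichotomy as in the base case (using that $v\in F$ gives a path from $N_x$ in $G\setminus N^{\textin}[T_x]$, extended by $(v,w)$) shows $w\in N_t\cup F$.

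To promote (i) and (ii) to equalities, I would observe that \Cref{alg:build GxT directed} performs a forward BFS on $G\setminus Z$ started from $N_x$ that never expands past a vertex of $N_t$; since $Z=N^{\textout}(x)\cap N^{\textin}(T_x)\subseteq N^{\textin}(T_x)$, \Cref{lem:structure GxT directed}(2) says every $v\in F$ is reachable from $N_x$ in $G\setminus N^{\textin}[T_x]\subseteq G\setminus Z$ via a path whose vertices all lie in $N_x\cup F$, hence $v$ is visited, and each $w\in N_t\subseteq N^{\textin}(T_x)\setminus N^{\textout}[x]$ has an in-edge from some $u\in N_x\cup F$ by \Cref{lem:structure GxT directed}(3), so $w$ is enqueued when $u$ is processed and therefore visited. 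Finally, whenever $v$ enters $\Ftil$ the algorithm adds $E_G(v,\,N^{\textout}(v)\setminus N^{\textout}[x])$ to $\Etil_F$; summing over $v\in F=\Ftil$ and checking that for such $v$ this edge set equals $E_{G_{x,T}}(v,F\cup N_t)$ — the Identify and Filter rules of \Cref{sec:reduc rule directed} delete precisely the out-edges of $v$ going into $Z$, into $N_x$, or to $t_x$, and $v\in F$ has no edge to $t_x$ since $v\notin N^{\textin}(T_x)$ — gives $\Etil_F=E_{G_{x,T}}(F,F\cup N_t)$.

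The one substantive difference from \Cref{prop:correct while} is which reduction rule removes which object: the directed Filter rule (\Cref{prop:rule filter directed}) deletes incoming edges of $N^{\textout}[x]$ rather than internal edges of a neighborhood, so the bookkeeping in the last paragraph — verifying that the discarded out-edges of a far vertex are exactly those to $Z$, to $N_x$, and to $t_x$ — is the only place that needs care, and I expect it to be the main (though still routine) obstacle. The hypothesis that $\bot$ is not returned is used, exactly as in the undirected proof, only to ensure that both loops run to completion.
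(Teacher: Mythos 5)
Your proof is correct and follows the paper's own argument essentially line for line: the same three invariants proved by induction (with the same base case, the same three-case inductive step citing \Cref{lem:list neighbor directed}, \Cref{eq:low out deg directed}, \Cref{fact:check incident directed}, and \Cref{lem:structure GxT directed}), the same BFS-completeness argument on $G\setminus Z$ to upgrade the inclusions to equalities, and the same observation that each $v\in\Ftil$ contributes exactly $E_G(v,V\setminus N^{\textout}[x])=E_{G_{x,T}}(v,F\cup N_t)$ to $\Etil_F$. Your explicit handling of the $w\in T_x$ edge case and the extra bookkeeping on which Identify/Filter rules discard which out-edges of a far vertex are minor elaborations of steps the paper treats more tersely, not a different route.
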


\begin{proof}
We will prove by induction on time that (1) $\Ntil_t \subseteq N_t$, (2)
$\Ftil\subseteq F$, and (3) if $w\in\textsc{Queue}$ at some point
of time, then $w\in N_t\cup F$.

For the base case, consider the time before the while loop is executed.
We have $\Ntil_t=\emptyset$ and $\Ftil=\emptyset$. If $w\in\queue$,
then $w\in N^{\textout}(v)\setminus N^{\textout}[x]$ for some $v\in N_x$. There are two
cases: if $w\in N^{\textin}(T_{x})$, then $w\in N^{\textin}(T_{x})\setminus N^{\textout}[x]$ and
$w$ has an edge to $v \in N_x$, which means that $w\in N_t$ by
\Cref{lem:structure GxT directed}(3).
Otherwise, if $w\notin N^{\textin}(T_{x})$, then $w\notin N^{\textout}[x]\cup N^{\textin}(T_{x})$
and $(v,w)$ is a path from $N_x$ to $w$ in $G\setminus N^{\textin}[T_{x}]$,
which means that $w\in F$ by \Cref{lem:structure GxT directed}(2). 

For the inductive step, consider that iteration where we visit $v$.
We prove the three statements below one by one. 
\begin{enumerate}
	\item Suppose $v$
	is added to $\Ntil_t$. If $v$ is added at Step \ref{enu:D too
          big directed},
	then \Cref{lem:list neighbor directed} implies that $|N^{\textout}(v)\setminus N^{\textout}[x]|>40\ltil\ln n$ and so $v\in N^{\textin}(T_{x})$
	by \Cref{eq:low out deg directed}. If $v$ is added at Step
        \ref{enu:D incident directed},
	then we directly verify that $v\in N^{\textin}(T_{x})$ (see
        \Cref{fact:check incident directed}).
	In both cases, $v\in N^{\textin}(T_{x})$. As $v\in N_t\cup F$ by induction,
	$v$ must be in $N_t$. So $\Ntil_t\subseteq N_t$ holds.
	\item  Suppose $v$ is added to $\Ftil$, which only happens at
          Step \ref{enu:P directed}.
	We directly verify that $v\notin N^{\textin}(T_{x})$. As $v\in N_t\cup F$ by
	induction, $v$ must be in $F$ and so $\Ftil \subseteq F$ holds.
	\item Suppose
	$w$ is added into $\queue$ at Step \ref{enu:P directed}. There are two cases. If $w\in N^{\textin}(T_{x})$,
	then $w\in N^{\textin}(T_{x})\setminus N^{\textout}[x]$ and
        $w$ has an incoming edge from $v\in F$,
	which means that $w\in N_t$ by \Cref{lem:structure GxT directed}(3). Otherwise,
	if $w\notin N^{\textin}(T_{x})$, then $w\notin N^{\textout}[x]\cup N^{\textin}(T_{x})$. As $v\in F$,
	there exists $p_{v}$ be a path from $N_x$ to $v$ in $G\setminus N^{\textin}[T_{x}]$.
	Now, observe that the path $p_{w}=p_{v}\circ(v,w)$ is a path from
	$N_x$ to $w$ in $G\setminus N^{\textin}[T_{x}]$. Therefore, $w\in F$ by
        \Cref{lem:structure GxT directed}(2).
\end{enumerate}

To show that $\Ntil_t=N_t$ and $\Ftil=F$ at the end, we argue that all
vertices in $N_t \cup F$ must be visited at some point. Observe that our
algorithm simulate a BFS algorithm on $G\setminus Z$ when we start
the search from vertices in $N_x$. Moreover, it never continues the
search once it reaches vertices in $N_t$. By \Cref{lem:structure GxT directed}(2), vertices
in $F$ are reachable from $N_x$ in $G\setminus N^{\textin}[T_{x}]\subseteq G\setminus Z$.
So all vertices from $F$ must be visited. Also, because $N_t \subseteq N^{\textin}(T_{x})\setminus N^{\textout}[x]$
and every vertex in $N_t$ is incident to $F$ or $N_x$, all vertices
from $N_t$ must be visited as well. This completes the proof that $\Ntil_t=N_t$
and $\Ftil=F$ at the end of the while loop. 

Finally, every time $v$ is added to $\Ftil$, we add $E_{G}(v,V\setminus N^{\textout}[x])=E_{G_{x,T}}(v,F\cup N_t)$
into $\Etil_{F}$. So $\Etil_{F}$ collects all edges in $E_{G_{x,T}}(F,F\cup N_t)$
after the while loop.
\end{proof}
Let $v$ be a visited vertex in some iteration of the for loop or
the while loop. We say that $v$'s iteration is \emph{fast} if $\outneigh(x,v)$
returns ``too big'', otherwise we say that $v$'s iteration is
\emph{slow}. 
\begin{prop}
\label{prop:time BFS directed}\Cref{alg:build GxT directed} takes $\ot(\ltil n)$ time.
\end{prop}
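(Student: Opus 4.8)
The plan is to reuse the structure of the proof of \Cref{prop:time BFS} almost verbatim, the crucial difference being that \Cref{alg:build GxT directed} carries no \countList\ cut-off, so the number of ``slow'' iterations will only be bounded by $O(n)$ rather than by $O(k)$; this is exactly what makes the directed running time $\ot(\ltil n)$ instead of $\ot(k\ltil)$. First I would classify iterations: an iteration at a visited vertex $v$ is \emph{fast} if $\outneigh(x,v)$ reports ``too big'' and \emph{slow} otherwise. By \Cref{lem:list neighbor directed} a fast iteration costs $\ot(1)$, while a slow one costs $\ot(\ltil)$ --- listing $N^{\textout}(v)\setminus N^{\textout}[x]$ is $\ot(\ltil)$, and since that set has size at most $100\ltil\ln n$, both testing whether it meets $T_x$ (check membership of each of its elements in $T$, using \Cref{fact:check incident directed}) and appending the induced edges to $\Etil_{N_x}$ or $\Etil_{F}$ cost $O(\ltil)$ up to logarithmic factors.

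Next I would count slow iterations. With the standard BFS bookkeeping (the $\textsc{visit}$ flags, so that each vertex is processed only a constant number of times), every slow iteration places its vertex into exactly one of $\Ztil,\Ntil_x$ (Steps~\ref{enu:Z incident directed}, \ref{enu:N directed} of the for loop) or $\Ntil_t,\Ftil$ (Steps~\ref{enu:D incident directed}, \ref{enu:P directed} of the while loop). Hence the slow for-loop iterations number at most $|N^{\textout}(x)|\le k+2\ltil$ (the precondition $|N^{\textout}[x]|\le k+2\ltil$ being guaranteed before the algorithm is invoked, cf.\ \Cref{lem:BFS directed}), and the slow while-loop iterations number at most $|\Ntil_t|+|\Ftil|\le |V|=n$ since $\Ntil_t$ and $\Ftil$ are disjoint sets of vertices. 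So there are $O(n)$ slow iterations, contributing $O(n)\cdot\ot(\ltil)=\ot(\ltil n)$ time in total (which also subsumes the time to assemble the edge lists, consistent with the $O(n\ltil\log n)$ bound of \Cref{lem:small GxT directed}).

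Then I would count fast iterations by the same child-counting argument as in \Cref{prop:time BFS}: call $w$ a \emph{child} of $v$ if $w$ is enqueued during $v$'s iteration. A fast iteration has no child, and a slow iteration at $v$ has at most $|N^{\textout}(v)\setminus N^{\textout}[x]|\le100\ltil\ln n$ children. Every visited vertex is either one of the at most $k+2\ltil$ vertices of $N^{\textout}(x)$ or a child of some slow iteration, so the number of visited vertices --- hence the number of iterations, hence the number of fast iterations --- is at most $(k+2\ltil)+O(n)\cdot100\ltil\ln n=O(n\ltil\log n)$. Each fast iteration costs $\ot(1)$, so fast iterations cost $\ot(\ltil n)$ in total. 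Adding the fast and slow contributions gives $\ot(\ltil n)$.

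The one point that needs genuine care --- and is where the directed argument genuinely differs from the undirected one --- is the $O(n)$ bound on slow iterations: I must verify that $\Ztil,\Ntil_x,\Ntil_t,\Ftil$ are indeed pairwise disjoint subsets of $V(G_{x,T})$ and that the $\textsc{visit}$ mechanism (possibly together with a de-duplication check when popping from $\queue$) really does keep every vertex from being processed more than $O(1)$ times; without that, the counting of both slow and fast iterations collapses. Everything else transfers mechanically from the proof of \Cref{prop:time BFS}.
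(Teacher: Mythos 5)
Your proof is correct and reaches the same bound, but you take a more elaborate route than the paper does. The paper's argument is just two sentences: each fast iteration costs $\ot(1)$, each slow iteration costs $\ot(\ltil)$, and the total number of iterations is at most $|V|=n$ because each vertex is visited at most once under the standard BFS bookkeeping; hence the total is $\ot(n\ltil)$. You instead port the child-counting argument from the undirected \Cref{prop:time BFS}, separately bounding the slow iterations (by $O(n)$, charging them to $N^{\textout}(x)$ and to the disjoint sets $\Ntil_t,\Ftil$) and the fast ones (by $O(n\ltil\log n)$ via children of slow iterations). This fast-iteration count is a substantial overcount --- with de-duplication, fast iterations are also bounded by $n$ --- but since each costs only $\ot(1)$ it still contributes $\ot(n\ltil)$, so the final bound is correct. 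The undirected proof needs the child-counting because it must avoid the trivial $n$ bound on iterations to get $\ot(k\ltil)$; here the $\countList$ cut-off is gone and the trivial bound is exactly the target, so the extra machinery buys nothing.

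Your flag about de-duplication and the pairwise disjointness of $\Ztil,\Ntil_x,\Ntil_t,\Ftil$ is a legitimate observation: the paper's ``clearly at most $|V|=n$'' also silently assumes the queue does not re-process an already-visited vertex (the algorithm as written sets $\textsc{visit}$ on pop, not on push, so a vertex can in principle be enqueued several times before its first pop). This is standard BFS plumbing that the paper, like your proof, takes for granted; it is not a genuine gap in either argument, but you are right that without it both the slow and fast counts would break.
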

\begin{proof}
By the guarantees of \Cref{thm:estimate,thm:recovery}, each
fast iteration takes $\ot(1)$ time, while each slow iteration
takes $\ot(\ltil)$ total time (where checking if $v$ has an edge to
$T_x$ can be done in $O(\log n)$ time using appropriate dictionary
data structure). The number of iterations is clearly at
most $|V| = n$. Therefore, total time is $\ot(\ltil n)$. %
\end{proof}

Now, we conclude with the proof of \Cref{lem:BFS directed}.

\paragraph{Proof of \Cref{lem:BFS directed}.}

Let $(G,k,\ltil,T)$ be given. In the preprocessing step, we only compute
$V_{\bad}$ which takes $O(m)$ time. 
Given a query $x \in V$, if $x\in V_{\bad}$ or $|N^{\textout}[x]|>k+2\ltil$,
we return $\bot$. Then, we execute \Cref{alg:build GxT directed} which takes $\ot(n\ltil)$ time by \Cref{prop:time BFS directed}. The algorithm either returns
$\bot$ and otherwise correctly constructs all parts of $G_{x,T}$ by Propositions \ref{prop:correct for directed} and  \ref{prop:correct while} whp. Using these sets, we can build $G_{x,T}$
via \Cref{eq:edge of GxT directed} and obtain $Z_{x,T}=Z$ in $\ot(n\ltil)$ time.
Note that $|E(G_{x,T})|=O(n\ltil\log n)$ by \Cref{lem:small GxT directed}. 

Finally, if there is  $(L,S,R)$ satisfying \Cref{eq:balanced cut cond} where $\emptyset \neq T_{x}\subseteq R$,
  $\ltil\in[|L|/2,|L|]$ and $x\in L$, then we have $x\notin V_{\bad}$ and $|N^{\textout}[x]|\le k+2\ltil$,
so $\bot$ is not returned before running \Cref{alg:build GxT directed}. %
Therefore,  $G_{x,T}$ and $Z_{x,T}$ must be returned.

\subsection{Proof of \Cref{lem:kernel directed} (Fast  Kernelization)}

\label{sec:kernel proof directed}

Let $(G,k,\ltil,T,X)$ be given as input. We first initialize the oracle from \Cref{lem:list neighbor directed} and the BFS-like process from \Cref{lem:BFS directed}. This takes $\ot(m)$ time. For each
$x\in X$, we query $x$ to the algorithm from \Cref{lem:BFS directed}. \Cref{lem:BFS directed}
guarantees that each query takes $\ot(n\ltil)$ time and returns either
$\bot$ or $(G_{x,T},Z_{x,T})$. Therefore, the total running time
is $\ot(m+|X|n\ltil)$. 

For each query $x\in X$, \Cref{eq:low out deg directed} holds whp by \Cref{prop:low deg whp directed}. So we will
assume it and conclude the following whp. By \Cref{lem:BFS directed}, if $\bot$
is returned, then we can correctly certify that $T_{x}=\emptyset$
or  $(L,S,R)$ satisfying \Cref{eq:balanced cut cond} where $\emptyset \neq T_{x}\subseteq R$,
  $\ltil\in[|L|/2,|L|]$ and $x\in L$. If $(G_{x,T},Z_{x,T})$ is
returned, then we have that $|E(G_{x,T})|=O(n\ltil\log n)$. By \Cref{lem:preserve cut directed}, any set $Y$ is a $(x,t_{x})$-min-separator in $G_{x,T}$ iff
$Y\cup Z_{x,T}$ is a $(x,T_{x})$-min-separator in $G$
as desired.

\end{document}